\documentclass[letterpaper,11pt]{article}
\usepackage{microtype}
\usepackage[letterpaper,margin=0.97in]{geometry}
\usepackage[numbers,sort&compress]{natbib}
\usepackage[table]{xcolor}
\usepackage{amsmath}
\usepackage{amssymb}
\usepackage{mathrsfs}
\usepackage{amsthm}
\usepackage{enumitem}
\usepackage{textgreek}
\usepackage{graphicx}
\usepackage{framed}
\usepackage[small]{caption}
\usepackage{booktabs}
\usepackage{tikz-cd}
\usepackage{hyperref}
\usepackage{doi}
\usepackage[textsize=tiny]{todonotes}
\usepackage[framemethod=tikz]{mdframed}
\usepackage{thm-restate}
\usepackage{mathtools}
\usepackage{xspace}
\usepackage[capitalize, nameinlink]{cleveref}
\usepackage{multirow}
\usepackage{cellspace}
\usepackage{array}
\usepackage{tabularx}

\Crefname{remark}{Remark}{Remarks}
\Crefname{observation}{Observation}{Observations}

\theoremstyle{plain}
\newtheorem{theorem}{Theorem}[section]
\newtheorem{lemma}[theorem]{Lemma}

\newtheorem{corollary}[theorem]{Corollary}

\newtheorem{observation}[theorem]{Observation}

\theoremstyle{definition}
\newtheorem{definition}[theorem]{Definition}

\theoremstyle{plain}

\newcounter{open}
\newtheorem{oq}[open]{Open Problem}

\theoremstyle{remark}

\newcommand{\namedref}[2]{\hyperref[#2]{#1~\ref*{#2}}}

% The space of possible arbdefects {0, 1, 2, ...}
% (-1 = color unavailable even with no neighbors of that color)
\newcommand{\ARB}{\ensuremath{\mathbb{N}_0}}

% Capacity of arbdefect vector (specifies maximum node degree that cannot be
% colored by using the given arbdefect vector)
\newcommand{\capa}{\ensuremath{\mathsf{cap}}}

\newcommand{\eps}{\varepsilon}

\newcommand{\s}{\mspace{1mu}}
\newcommand{\mybox}[1]{\mspace{2mu}{\setlength{\fboxsep}{1.5pt}\color{lightgray}\boxed{\color{black}\scriptstyle #1}}\mspace{2mu}}
\newcommand{\A}{\mathsf{A}}
\newcommand{\B}{\mathsf{B}}
\newcommand{\C}{\mathsf{C}}
\renewcommand{\L}{\mathsf{L}}
\renewcommand{\c}{\mathsf{c}}

\newcommand{\y}{\mathsf{y}}

\newcommand{\Z}{\mathsf{Z}}
\newcommand{\U}{\mathsf{U}}
\newcommand{\W}{\mathsf{W}}
\renewcommand{\S}{\mathsf{S}}
\newcommand{\M}{\mathsf{M}}
\renewcommand{\P}{\mathsf{P}}

\newcommand{\D}{\mathsf{D}}
\newcommand{\Q}{\mathsf{Q}}

\newcommand{\X}{\mathsf{X}}
\newcommand{\Y}{\mathsf{Y}}

\renewcommand{\S}{\mathsf{S}}

\newcommand{\dis}{\ensuremath{\operatorname{disj}}}
\newcommand{\newy}{\ensuremath{\operatorname{new}}}

\newcommand{\fA}{\mathcal{A}}
\newcommand{\fB}{\mathcal{B}}

\newcommand{\fE}{\mathcal{E}}
\newcommand{\fF}{\mathcal{F}}
\newcommand{\fS}{\mathcal{S}}

\DeclareMathOperator{\re}{\mathcal R}
\DeclareMathOperator{\rere}{\overline{\mathcal R}}

\newcommand{\nodeconst}{\ensuremath{\mathcal{N}}}
\newcommand{\edgeconst}{\ensuremath{\mathcal{E}}}
\newcommand{\gen}[1]{\langle #1 \rangle}

\DeclareMathOperator{\poly}{poly}

\newcommand{\LOCAL}{\ensuremath{\mathsf{LOCAL}}\xspace}
\newcommand{\CONGEST}{\ensuremath{\mathsf{CONGEST}}\xspace}

\newcommand{\set}[1]{\left\{#1\right\}}

\newcommand{\ccs}{\mathfrak{C}}
\newcommand{\ccc}{\mathcal{C}}
\newcommand{\level}{\mathrm{level}}
\newcommand{\len}{\mathrm{len}}
\newcommand{\prefix}{\mathrm{prefix}}

\newenvironment{myabstract}
{\list{}{\listparindent 1.5em%
		\itemindent    \listparindent
		\leftmargin    1cm
		\rightmargin   1cm
		\parsep        0pt}%
	\item\relax}
{\endlist}

\newenvironment{mycover}
{\list{}{\listparindent 0pt
		\itemindent    \listparindent
		\leftmargin    1cm
		\rightmargin   1cm
		\parsep        0pt}%
	\raggedright
	\item\relax}
{\endlist}

\newcommand{\myemail}[1]{\,$\cdot$\, {\small #1}}
\newcommand{\myaff}[1]{\,$\cdot$\, {\small #1}\par\smallskip}

\definecolor{darkgreen}{rgb}{0,0.5,0}
\definecolor{darkred}{rgb}{0.4,0,0}
\hypersetup{
	colorlinks=true,
	linkcolor=darkred,
	citecolor=darkgreen,
	filecolor=black,
	urlcolor=[rgb]{0,0.1,0.5},
	pdftitle={$\Delta$-Coloring Plays Hide-and-Seek},
	pdfauthor={Alkida Balliu, Sebastian Brandt, Fabian Kuhn, Dennis Olivetti}
}

\begin{document}

\begin{mycover}
	{\huge\bfseries\boldmath Distributed $\Delta$-Coloring Plays Hide-and-Seek \par}
	\bigskip
	\bigskip
	\bigskip
	
	\textbf{Alkida Balliu}
	\myemail{alkida.balliu@gssi.it}
	\myaff{Gran Sasso Science Institute}
	
	\textbf{Sebastian Brandt}
	\myemail{brandt@cispa.de}
	\myaff{CISPA Helmholtz Center for Information Security}
	
	\textbf{Fabian Kuhn}
	\myemail{kuhn@cs.uni-freiburg.de}
	\myaff{University of Freiburg}
	
	\textbf{Dennis Olivetti}
	\myemail{dennis.olivetti@gssi.it}
	\myaff{Gran Sasso Science Institute}
\end{mycover}
\bigskip

\begin{myabstract}
  We prove several new tight or near-tight distributed lower bounds
  for classic symmetry breaking problems in graphs. As a basic tool, we
  first provide a new insightful proof that any deterministic
  distributed algorithm that computes a $\Delta$-coloring on
  $\Delta$-regular trees requires $\Omega(\log_\Delta n)$ rounds and
  any randomized such algorithm requires $\Omega(\log_\Delta\log n)$
  rounds. We prove this by showing that a natural relaxation of
  the $\Delta$-coloring problem is a fixed point in the round
  elimination framework.
  
  As a first application, we show that our $\Delta$-coloring lower
  bound proof directly extends to arbdefective colorings. An
  arbdefective $c$-coloring of a graph $G=(V,E)$ is given by a
  $c$-coloring of $V$ and an orientation of $E$, where the arbdefect
  of a color $i$ is the maximum number of monochromatic outgoing edges
  of any node of color $i$. We exactly characterize which variants of
  the arbdefective coloring problem can be solved in
  $O(f(\Delta) + \log^*n)$ rounds, for some function $f$, and which of
  them instead require $\Omega(\log_\Delta n)$ rounds for
  deterministic algorithms and $\Omega(\log_\Delta \log n)$ rounds for
  randomized ones.

  As a second application, which we see as our main contribution, we
  use the structure of the fixed point as a building block to prove
  lower bounds as a function of $\Delta$ for problems that, in some
  sense, are \emph{much easier} than $\Delta$-coloring, as they can be
  solved in $O(\log^* n)$ deterministic rounds in bounded-degree
  graphs.  More specifically, we prove lower bounds as a function of
  $\Delta$ for a large class of distributed symmetry breaking
  problems, which can all be solved by a simple sequential greedy
  algorithm. For example, we obtain novel results for the fundamental problem
  of computing a $(2,\beta)$-ruling set, i.e., for computing an
  independent set $S\subseteq V$ such that every node $v\in V$ is
  within distance $\leq \beta$ of some node in $S$. We in particular
  show that $\Omega(\beta\Delta^{1/\beta})$ rounds are needed
  even if initially an $O(\Delta)$-coloring of the graph is
  given. With an initial $O(\Delta)$-coloring, this lower bound is
  tight and without, it still nearly matches the existing
  $O(\beta\Delta^{2/(\beta+1)}+\log^* n)$ upper bound. The new
  $(2,\beta)$-ruling set lower bound is an exponential improvement
  over the best existing lower bound for the problem, which was proven
  in [FOCS '20]. As a special case of the lower bound,
  we also obtain a tight linear-in-$\Delta$ lower bound for computing
  a maximal independent set (MIS) in trees. While such an MIS lower
  bound was known for general graphs, the best previous MIS lower
  bounds for trees was $\Omega(\log\Delta)$.  Our lower bound even applies to a much more general family of problems that allows for almost arbitrary combinations of natural constraints from coloring problems, orientation problems, and independent set problems, and provides a single unified proof for known and new lower bound results for these types of problems.
  
  All of our lower bounds as a function of $\Delta$ also imply substantial lower bounds
  as a function of $n$. For instance, we obtain that the maximal
  independent set problem, on trees, requires
  $\Omega(\log n / \log \log n)$ rounds for deterministic algorithms,
  which is tight.
\end{myabstract}

\clearpage
\setcounter{page}{0}
\thispagestyle{empty}
\tableofcontents

\clearpage

%!TEX root = document.tex

\section{Introduction}
\label{sec:intro}

In the area of distributed graph algorithms, we usually assume that the graph $G=(V,E)$ on which we want to solve some graph problem also represents a communication network. The nodes of $G$ are autonomous agents that communicate over the edges of $G$. Each node $v\in V$ is equipped with a unique $O(\log n)$-bit identifier (where $n=|V|$). More precisely, the nodes typically interact in synchronous rounds and in each round, every node $v\in V$ can send a message to each of its neighbors in $G$. In the standard \LOCAL model~\cite{Linial1987,Peleg2000}, those messages can be of arbitrary size. Except for information that we assume to be publicly known (such as, e.g., the family of graphs from which $G$ is chosen), at the beginning, nodes do not know anything about the graph $G$. At the end of an algorithm, every node $v\in V$ has to output its local part of the solution of the considered graph problem. The time or round complexity of such a distributed algorithm is defined as the number of rounds that are necessary until all nodes terminate and output their part of the solution.

Since the early works on distributed and parallel graph algorithms~\cite{Alon1986,Linial1987,Luby1986}, two problems that have been at the very core of the attention are the problems of computing a vertex coloring of $G$ and of computing a maximal independent set (MIS) of $G$. Both problems, as well as their many variants, have been studied intensively throughout the last more than 30 years.
A well-studied variant of MIS is the problem of computing an $(\alpha,\beta)$-ruling set, introduced in the late 1980s~\cite{Awerbuch89}. The problem asks for a subset $S$ of the nodes satisfying that nodes in $S$ are at distance at least $\alpha$ from each other, and nodes not in $S$ are at distance at most $\beta$ from a node in $S$. Note that $(2,1)$-ruling sets are equivalent to MIS, and by increasing $\beta$ we obtain relaxed variants of MIS. Interestingly, it is often the case that whenever the computation of an MIS can be used as a subroutine to solve some problem of interest, it is actually enough to compute a $(2,\beta)$-ruling set for some $\beta>1$, which is easier to compute. In fact, the fastest known distributed algorithms for several fundamental distributed problems use the computation of ruling sets as subroutines and the complexity of those algorithms critically depends on having fast $(2,\beta)$-ruling set algorithms for some values $\beta>1$. Examples are deterministic and randomized algorithms for computing $\Delta$-colorings\footnote{As common when considering the distributed $\Delta$-coloring problem, we will always implicitly assume that $\Delta \geq 3$ and that the input graph does not contain a connected component that is $\Delta$-regular (which in particular implies, by Brooks' Theorem~\cite{brooks1941colouring}, that the input graph is $\Delta$-colorable).} \cite{GhaffariHKM18,panconesi95delta}, randomized distributed algorithms for the Lov{\'{a}}sz Local Lemma~\cite{FischerG17,FOCS18-derand}, as well as randomized distributed algorithms for computing an MIS~\cite{ghaffari16improved,Barenboim2016}. In some cases, improved algorithms for ruling sets would directly lead to improved algorithms for such problems. Hence, understanding if it is possible to obtain faster algorithms for ruling sets is of great importance. The first lower bounds on computing ruling sets (except MIS) were proven in \cite{balliurules} and as one of the main contributions of our work, we improve those lower bounds exponentially and we in fact show that some existing ruling set algorithms are optimal. This for example implies that in order to improve the current best algorithm for $\Delta$-coloring~\cite{GhaffariHKM18}, we need to find a genuinely different algorithm, which is not based on ruling sets.

\paragraph{Round Elimination.} As many recent lower bounds for local distributed graph problems, the lower bounds of \cite{balliurules,BBKOmis} are based on the \emph{round elimination} technique~\cite{Brandt2016,Brandt2019}. The high-level idea of the round elimination technique is as follows. Assume that the time needed to solve some desired problem, such as computing an MIS, is $T$ rounds. Then, every algorithm that solves the problem in $T$ rounds must already have computed something non-trivial after $T-1$ rounds. And it turns out that one can exactly characterize the weakest problem that needs to be solved after $T-1$ rounds if the desired problem has to be computed after $T$ rounds. Starting from our target problem, we can then construct a sequence of problems where each problem in the sequence is exactly one round easier than the previous one. And if we can construct a sequence of length $R$ such that the last problem of the sequence still cannot be solved in $0$ rounds, then we know that our target problem cannot be solved in $R$ rounds.

In the last few years, round elimination has been used to prove an impressive number of strong new distributed lower bounds~\cite{Brandt2016,chang16exponential,chang18complexity,Brandt2019,Balliu2019,BalliuHOS19,trulytight,binary,balliurules,BBKOmis}. A highlight of this line of work certainly are the lower bounds proven in \cite{Balliu2019}, where it is shown that the number of rounds required to compute a maximal matching is at least $\Omega(\min\set{\Delta, \log n/\log\log n})$ deterministically and $\Omega(\min\set{\Delta, \log\log n/\log\log\log n})$ in the randomized \LOCAL model, where $\Delta$ is the largest degree of $G$. Because a maximal matching of a graph $G$ is an MIS of the line graph of $G$, the same bounds also hold for computing an MIS. Since an MIS can be computed deterministically in $O(\Delta+\log^* n)$ rounds~\cite{barenboim14distributed}, both bounds are tight as a function of $\Delta$. In this paper we prove new tight lower bounds for a large family of distributed graph problems. As one special case of our general lower bound result, we obtain that the same lower bounds that were proven in \cite{Balliu2019} for MIS in general graphs also hold for computing an MIS in regular trees, which is an exponential improvement over the existing lower bounds~\cite{balliurules,BBKOmis}.

\paragraph{New Technique.} 
There are some problems that behave in a special way  in the round elimination framework: if we compute the problem that is exactly one round easier, we obtain the problem itself. We call these problems \emph{fixed points}. Such a behavior is not contradictory and such problems actually exist; in fact, proving that a problem is a fixed point is a powerful tool: we know that being a fixed point implies that the problem requires $\Omega(\log_\Delta n)$ rounds for deterministic algorithms, and $\Omega(\log_\Delta \log n)$ rounds for randomized ones, in the \LOCAL model. More generally, fixed points provide a conceptually simple method to prove lower bounds: if we can show that a relaxed version of a given problem\footnote{A problem $\Pi'$ is a relaxation of $\Pi$ if any solution for $\Pi$ is also a solution for $\Pi'$.} is a fixed point, then we directly obtain a lower bound for the original problem. Understanding whether all problems that have a deterministic complexity of $\Omega(\log_\Delta n)$ rounds admit a relaxation to some fixed point is a major open question. Clearly, fixed point relaxations cannot exist for problems that can be solved deterministically in $o(\log_\Delta n)$ rounds, and in particular they cannot exist for problems that, for some function $f$, can be solved in $O(f(\Delta) + \log^* n)$ rounds.

Despite this apparent incompatibility, we show in this work that, perhaps surprisingly, fixed points can also be used for proving lower bounds for problems that can be solved in $ O(f(\Delta) + \log^* n)$ rounds. Problems of this kind are difficult to understand in the context of round elimination: typically, when we apply a round elimination step to such a problem, we obtain a problem with a description that is exponentially larger than the description of the original problem. Hence, if we apply multiple round elimination steps, we obtain a problem sequence where the description size grows like a power tower, and this makes it extremely hard to characterize this sequence, or even merely determine which problems in the sequence are $0$-round solvable. We show that we can sometimes avoid this growth by relaxing the problem of interest by \emph{embedding a fixed point} in it, and this allows us to prove lower bounds for a large family of problems whose behavior was too complex to deal with previously.
In the following we provide a high-level explanation of this concept where, for simplicity, we restrict attention to one specific problem of this family---the MIS problem.

By applying round elimination to the MIS problem for $k-1$ times, we obtain a problem $\Pi_k$ that can be described by two main parts, $\Pi^{\mathrm{good}}_k$ and $\Pi^{\mathrm{bad}}_k$. Part $\Pi^{\mathrm{good}}_k$ allows two types of nodes, colored ones and uncolored ones. The requirements are that uncolored nodes must have a colored neighbor, colored nodes have a color from a palette of size $k$, and the graph induced by colored nodes must be colored properly (note that the MIS problem is equivalent to $\Pi_1 = \Pi^{\mathrm{good}}_1$). Part $\Pi^{\mathrm{bad}}_k$ allows some other types of outputs that cannot be easily understood. The number of ``types of outputs'' of this form is roughly a power tower of height $k$.

The above directly implies an $O(\Delta + \log^* n)$-round upper bound for MIS: a $(\Delta+1)$-coloring of the nodes (which can be found in $o(\Delta) + O(\log^* n)$ rounds~\cite{fraigniaud16local}) is a solution to $\Pi^{\mathrm{good}}_{\Delta + 1}$ and therefore also to the more relaxed $\Pi_{\Delta + 1}$, which in turn has a complexity of precisely $\Delta$ rounds less than $\Pi_1 = \text{MIS}$. In order to prove that this complexity is tight by providing a matching lower bound, we need to show that none of the outputs allowed by $\Pi^{\mathrm{bad}}_k$ actually contribute to making the problem easier, but this is extremely challenging, since the number of outputs allowed by $\Pi^{\mathrm{bad}}_k$ is a power tower of height $k$.

In this work, we show how to prevent $\Pi^{\mathrm{bad}}_k$ from appearing altogether in the sequence of problems obtained by round elimination. Let $\Pi_k^{\mathrm{rel}}$ be the problem obtained by replacing the $k$-coloring part of $\Pi_k^{\mathrm{good}}$ with a fixed point relaxation of the $k$-coloring problem (for which we prove the existence). Note that $\Pi_k^{\mathrm{rel}}$ is at least as easy as $\Pi_k^{\mathrm{good}}$. We show that, by applying the round elimination technique to $\Pi_k^{\mathrm{rel}}$, we obtain a problem that is at least as hard as $\Pi_{k+1}^{\mathrm{rel}}$. This allows us to obtain a sequence of problems of length $\Omega(\Delta)$, where each problem is at least one round easier than the previous one, the last problem (and all previous ones) is not $0$-round solvable, and $\Pi^{\mathrm{bad}}_k$ does not appear at all. 

In other words, the reason why we obtain $\Pi^{\mathrm{bad}}_k$ is that the $\Delta$-coloring problem is hidden in the sequence of problems obtained by starting from the MIS problem, and 
we find it fascinating that the hardness of a hard problem (the relaxed $\Delta$-coloring problem) that requires $\Omega(\log n)$ rounds on bounded-degree graphs can be used to prove lower bounds for problems that can be solved in time $O(f(\Delta)+\log^* n)$ for some function $f$, i.e., with a much better $n$-dependency.
More generally, we believe that our new technique of embedding the $\Omega(\log n)$-round hardness in the form of a suitable fixed point into a problem of interest provides a promising (perhaps even generic) way to tackle the main challenge of round elimination---finding a method to make the obtained problem sequences understandable.
After all, on an intuitive level, the particular property of fixed points---being invariant, and in particular non-growing, under round elimination---perfectly fits the objective of keeping the problem sequence in check by restricting the growth of the problem descriptions, and it seems plausible to us that this is precisely why we obtain several tight and close-to-tight lower bounds.

\paragraph{New \boldmath$\Delta$-Coloring Fixed Point.} At the core of our argument is a new relaxation of $k$-coloring, which we prove to be a fixed point in the round elimination framework in $\Delta$-regular trees as long as $k\leq \Delta$. That is, if we perform a round elimination step for this relaxed $k$-coloring problem, then we arrive at the same relaxed $k$-coloring problem. In the relaxed $k$-coloring problem, every node $v$ is allowed to be colored with a subset $A_v$ of the $k$ colors. Nodes that choose more than one color can mark some of their edges as \emph{``don't care''} edges. Specifically, a node $v$ that chooses $|A_v|=a_v$ different colors is allowed to mark $a_v-1$ of its edges as \emph{``don't care''} edges. For every edge $\set{u,v}$ of the graph, if $u$ or $v$ mark $\set{u,v}$ as \emph{``don't care''} edge, then there are no constraints for this edge. Otherwise, the color sets $A_u$ and $A_v$ of $u$ and $v$ must be disjoint. Since this problem is a round elimination fixed point for $k=\Delta$, we directly get an alternative proof for the result of \cite{Brandt2016,chang16exponential} that $\Delta$-coloring $\Delta$-regular trees requires $\Omega(\log_\Delta n)$ rounds for deterministic and $\Omega(\log_\Delta\log n)$ rounds for randomized distributed algorithms. Further, apart from proper $\Delta$-coloring, the described relaxed $\Delta$-coloring problem is also a relaxation of many arbdefective coloring problems.

\paragraph{Arbdefective Coloring.} Arbdefective colorings were introduced in \cite{BarenboimE11} and they have since then become an indispensable tool in the development of efficient distributed coloring algorithms, e.g., \cite{BarenboimE11,barenboim16sublinear,fraigniaud16local,Kuhn20,MausTonoyan20,GhaffariKuhn20}. An arbdefective $c$-coloring of a graph $G=(V,E)$ is a coloring of the nodes $V$ with $c$ colors together with an orientation of the edges $E$. The arbdefect of a node $v$ of color $x$ is the number of outneighbors of $v$ of color $x$. Consider the set of nodes that are colored with some set $A$ of colors in the above relaxed $k$-coloring problem. The specification for the nodes with color set $A$ corresponds exactly to the specification of a color class with arbdefect at most $|A|-1$. If we want to compute a $c$-coloring of a $\Delta$-regular tree such that for each of the colors $x\in\set{1,\dots,c}$, we have a bound $d_x\geq 0$ on the arbdefect of the nodes with color $x$, then our lower bound on the relaxed $\Delta$ coloring problem directly shows that as long as $\sum_{x=1}^c(d_x+1)\leq \Delta$, this arbdefective coloring problem requires $\Omega(\log_\Delta n)$ rounds deterministically and $\Omega(\log_\Delta\log n)$ rounds with randomization. We also show that as soon as $\sum_{x=1}^c(d_x+1)\geq \Delta+1$, the problem can be solved in $O(\Delta+\log^* n)$ rounds by a simple distributed greedy algorithm. In other words, we \emph{exactly} characterize when the arbdefective coloring problem is ``easy'' and when it is ``hard''.

\paragraph{A General Problem Family.} In this work, we show that our new technique can be used to prove lower bounds for a large family of problems that contains many natural relaxations of MIS and coloring problems. For these problems we prove tight lower bounds as a function of $\Delta$ (modulo assuming an appropriate initial coloring of the nodes) and improved lower bounds as a function of $n$. 
%Based on our new round elimination fixed point for the $\Delta$-coloring problem, we develop a general problem family that contains a rich set of natural relaxations of MIS and coloring problems and for which we can prove tight lower bounds as a function of $\Delta$ (modulo assuming an appropriate initial coloring of the nodes) and improved bounds as a function of $n$. 
In this way, we obtain a unified lower bound proof for many problems for which (possibly) worse lower bounds were already known \cite{Brandt2016, chang16exponential, balliurules, BBKOmis}, as well as lower bounds for new problems.
For a relation between (a simplified version of) our problem family and various well-known problems, please see \Cref{fig:problemFamily}.

We next describe a simplified variant of our problem family. A full and formal definition of the problem family appears in \Cref{sec:family}. In the simplified variant, a problem has three integer parameters $\alpha \ge 0$, $\beta\geq 0$, and $c\geq 1$, and we call these problems $\alpha$-arbdefective $c$-colored $\beta$-ruling sets. For a graph $G=(V,E)$, the objective is to compute a set $S\subseteq V$ of nodes together with an $\alpha$-arbdefective $c$-coloring of the induced subgraph $G[S]$. Moreover, all nodes $v\in V\setminus S$ must be within distance at most $\beta$ from a node in $S$ (see \Cref{fig:2-2-2} for an example).

For $\beta =0$, we obtain the $\alpha$-arbdefective $c$-coloring problem. Starting from arbdefective coloring, if we set $\alpha=0$ and $c=\Delta$, we obtain the $\Delta$-coloring problem. If we instead set $\alpha = \Delta-1$ and $c=1$, we obtain the sinkless orientation problem.
For $\alpha = 0$ and $c=1$, we obtain the $(2,\beta)$-ruling set problem, and by setting $\beta=1$ we obtain the MIS problem.
By setting $\beta=1$ and $c=1$, we obtain the $\alpha$-outdegree dominating set problem, previously defined and discussed in \cite{BBKOmis}.

\begin{figure}[t]
	\centering
	\includegraphics[width=0.7\textwidth]{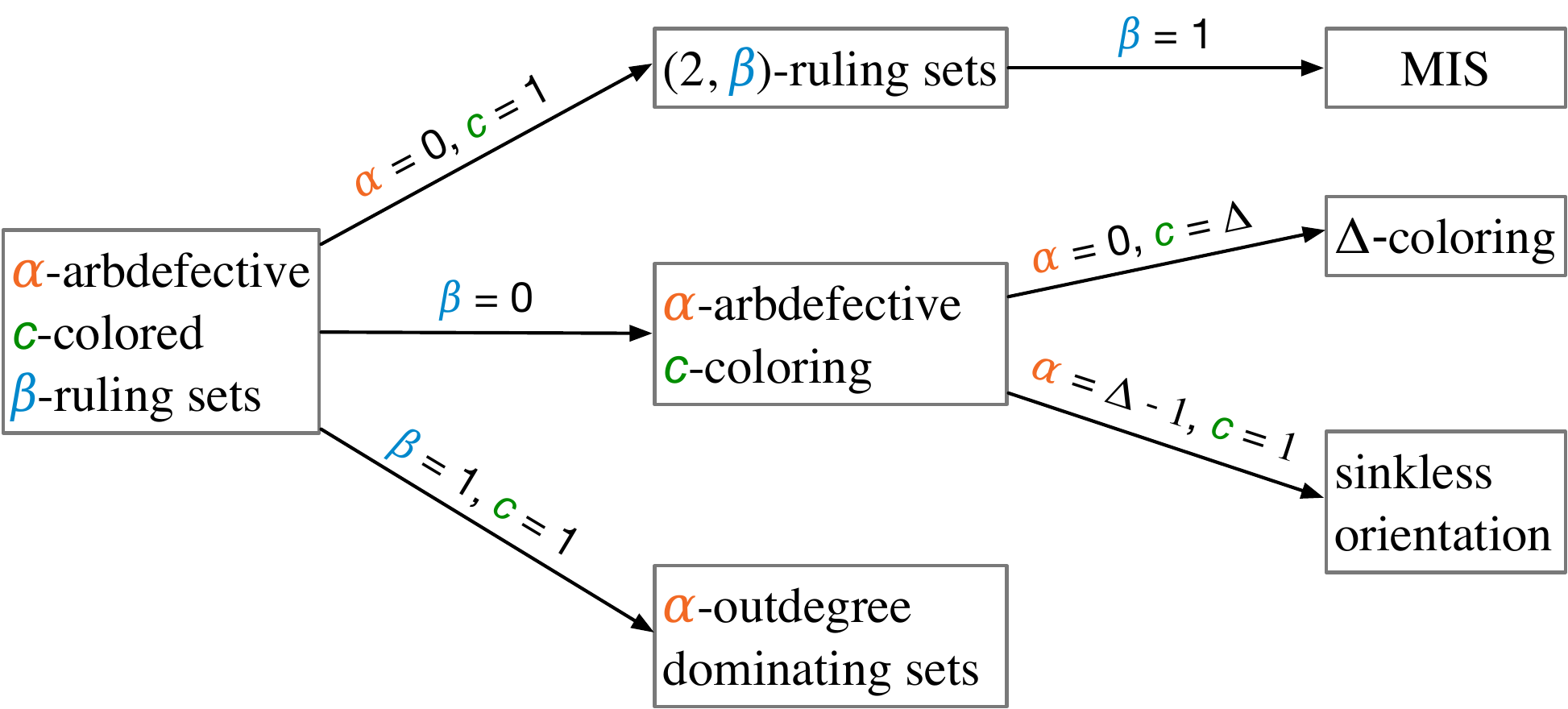}
	\caption{Relation between (a simplified version of) our problem family and concrete problems.}
	\label{fig:problemFamily}
\end{figure}

\paragraph{Main results.}
While our results apply to a large family of problems, we summarize our main results in the following.
For precise formal statements, we refer to Sections~\ref{sec:ourresults}, \ref{sec:corollaries}, and \ref{sec:algorithms}.

Our first contribution is related to the problem of computing ruling sets, which are used as a subroutine in many fundamental problems. For these problems, we obtain tight lower bounds as a function of $\Delta$, improving exponentially on the results presented in \cite{balliurules}. Also, we obtain improved lower bounds as a function of $n$. Our lower bounds hold not only on general graphs, but even on trees, and it turns out that our lower bounds as a function of $n$ are tight on trees for deterministic algorithms. This in particular implies that, for the MIS problem on trees, we now know its exact deterministic complexity as a function of $n$.

Our second contribution regards the arbdefective coloring problem, which is a powerful subroutine in many efficient coloring algorithms. We exactly characterize which variants are easy, in the sense that they can be solved in $O(f(\Delta) + \log^* n)$ rounds, and which of them are hard, in the sense that they require $\Omega(\log_\Delta n)$ rounds.
Our third contribution consists in introducing a promising new technique, i.e., using fixed points for proving lower bounds for problems that are not in $\Omega(\log_\Delta n)$.

Finally, our approach provides a unified proof for various results---both new and already present in the literature---by showing lower bounds for a problem family that interpolates between coloring problems, orientation problems, and independent set problems.
The tightness of the obtained results shows that considering these types of problems as special cases of the presented broader problem family is a very useful approach for understanding their complexities, and we hope that similar generalizations will lead to tight bounds for other problems.

\subsection{Problem Definitions}
\label{sec:problems}

We next formally define some natural problems for which we prove upper and lower bounds. As discussed above, the full general (and somewhat technical) family of problems for which we prove our lower bounds is formally defined in \Cref{sec:family}. In the following, for a positive integer $k$, we use $[k]$ to denote the set $[k]:=\set{1,\dots,k}$.

%\paragraph{The problem $\Pi_\Delta(z)$.}
%We start by giving an informal definition of the problem $\Pi_\Delta(z)$, that is the problem for which we prove lower bounds. A more formal definition is presented in \Cref{sec:family}. This problem can be seen as a relaxation of many natural problems.

%Each problem $\Pi_\Delta(z)$ is characterized by a parameter $z = [z_0,\ldots,z_\beta]$, and is informally defined as follows. Nodes can either be colored or uncolored, and colors are grouped into $\beta+1$ different layers, where layer $i$ contains $z_i$ colors.

%Uncolored nodes need to prove that they are not too far from colored nodes. To do so, they need to output pointer chains, but not all types of chains are allowed: if a chain terminates by pointing to a color of layer $i$, then the chain must have length at most $\beta - i$.

%Colored nodes need to satisfy a relaxed variant of the standard vertex coloring problem, and here we even allow nodes to have more than one color. If a node has more than one color, then all neighbors must not have any of the colors of that node. Also, we allow nodes to violate this last rule, as follows. If a node is colored with $k$ different colors, then it can mark $k-1$ of its edges, and the nodes reachable through these edges are allowed to have colors in common with the node. In other words, nodes need to satisfy the requirements of an arbdefective coloring, where the allowed arbdefect depends on how many colors are assigned to each node.

\paragraph{Generalized Arbdefective Colorings.}
We define a generalized notion of arbdefective colorings. We first
define the notion of an \emph{arbdefect vector}, which specifies the
requirements that a node has to satisfy if it is given a certain
color. If there are $C$ colors available (for some integer $C\geq 1$),
an arbdefect vector is a vector of length $C$, where the
$i^{\mathit{th}}$ coordinate indicates the allowed arbdefect of a node
having color $i$. Concretely, an arbdefect vector $\vec{d}$ of length
$C$ has the following form $\vec{d}=(d_1,\dots,d_C)\in\ARB^C$. A
\emph{generalized arbdefective coloring} of a graph is now defined as
follows.

\begin{definition}[Generalized Arbdefective Coloring]\label{def:generalarbdefect}
  Let $G=(V,E)$ be a graph, $C\geq 1$ an integer, and
  $\vec{d}=(d_1,\dots,d_C)\in \ARB^C$ an arbdefect vector of
  length $C$. An assignment $\varphi:V\to[C]$ of colors in $[C]$ to
  the nodes $V$ is called a $\vec{d}$-arbdefective $C$-coloring
  w.r.t.\ a given orientation of the edges $E$ of $G$ if for every
  node $v\in V$, if $v$ has color $\varphi(v)=x$, then $v$ has at most
  $d_x$ outneighbors of color $x$.
\end{definition}

If we have an arbdefect vector $\vec{d}=(d_1,\dots,d_C)$ with $d_i=\alpha$ for all $i\in [C]$, we call the corresponding coloring an \emph{$\alpha$-arbdefective $C$-coloring}. Note that this is the standard notion of arbdefective coloring as defined in \cite{BarenboimE11}.  When writing that an algorithm computes an
arbdefective coloring of a graph $G$, we always implicitly mean that
the algorithm computes a coloring of the nodes of $G$ \emph{and} a
corresponding orientation of the edges of $G$. We define the \emph{capacity} $\capa(\vec{d})$
of an arbdefect vector $\vec{d}$ as $\capa(\vec{d}):=\sum_{i=1}^C(d_i+1)$.

As discussed above, computing a $\vec{d}$-arbdefective coloring in a
$\Delta$-regular tree is as hard as computing a $\Delta$-coloring as
long as $\capa(\vec{d})\leq \Delta$ and it can be computed by a simple
greedy algorithm if $\capa(\vec{d})>\Delta$. For a value $\sigma>0$
and graphs of maximum degree $\Delta$, we say that an arbdefect vector
$\vec{d}$ is \emph{$\sigma$-relaxed} iff
$\capa(\vec{d})>\sigma\cdot\Delta$. Hence, a given generalized
arbdefective coloring instance can be solved by a greedy algorithm iff
the given arbdefect vector is $1$-relaxed.

\begin{figure}[t]
	\centering
	\includegraphics[width=0.6\textwidth]{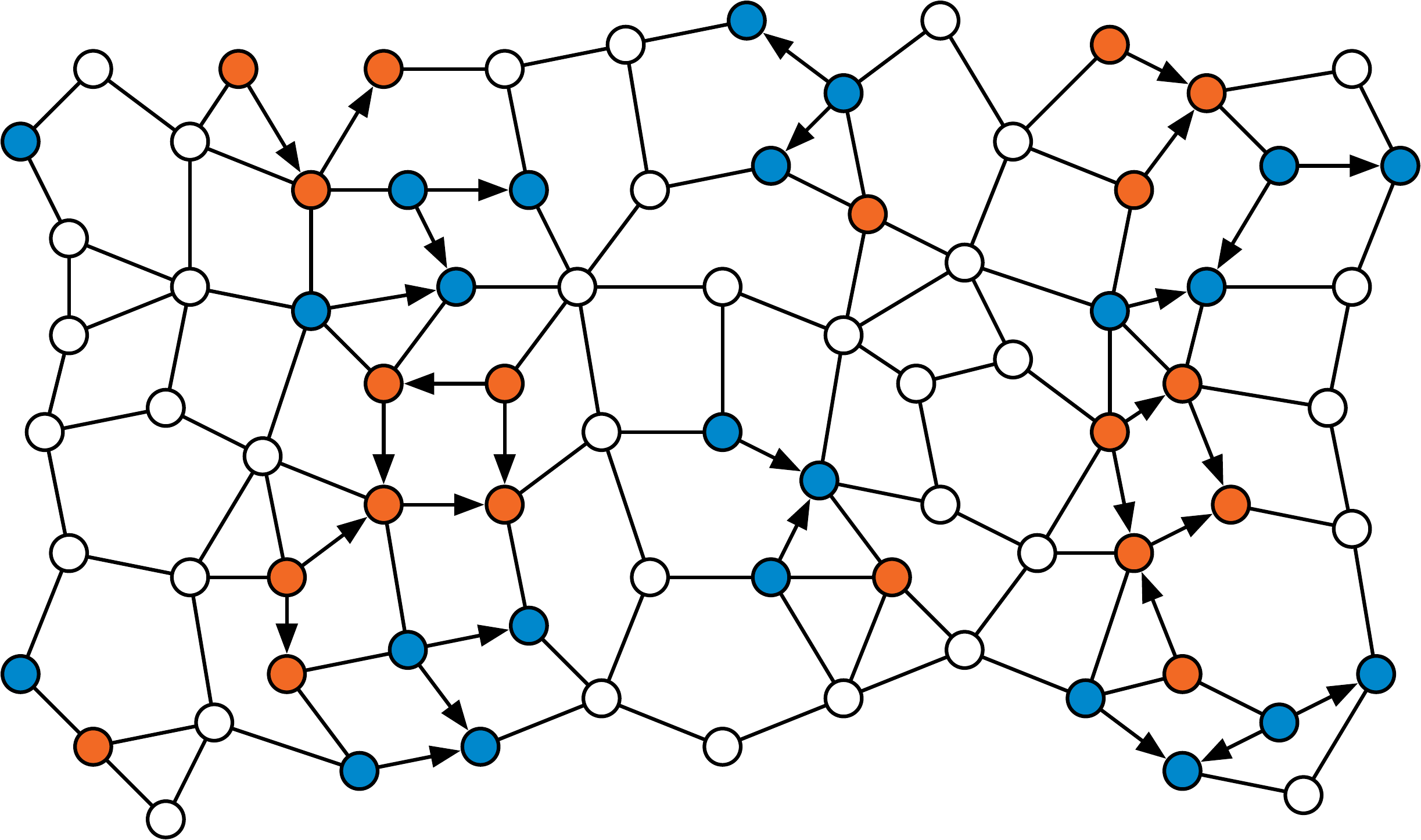}
	\caption{An example of $2$-arbdefective $2$-colored $2$-ruling set. Nodes in blue and orange are in the set, and they all have arbdefect at most $2$. White nodes are not in the set, and there is at least one node in their radius-$2$ neighborhood that is in the set.}
	\label{fig:2-2-2}
      \end{figure}
      
\paragraph{Arbdefective Colored Ruling Sets.}
We next define a family of problems that can be seen both as variants
of arbdefective colorings and of ruling sets, and they provide a way
to interpolate between them. \Cref{fig:2-2-2} provides an illustration of the following definition.

\begin{definition}[$\alpha$-Arbdefective $c$-Colored $\beta$-Ruling Set]
	Let $G = (V,E)$ be a graph, and $\alpha \ge 0$, $c \ge 1$, $\beta \ge 0$ be integers. A set $S \subseteq V$ together with an orientation of the edges between nodes in $S$ is an $\alpha$-arbdefective $c$-colored $\beta$-ruling set if it satisfies the following:
	\begin{itemize}
		\item The induced subgraph $G[S]$ is colored with an $\alpha$-arbdefective $c$-coloring.
		\item For all $v\in V\setminus S$, there is a node $u\in S$ at distance $\leq \beta$. Note that if $\beta=0$, then
                   $S=V$.
	\end{itemize}
\end{definition}

%Note that $\alpha$-arbdefective $c$-colored $\beta$-ruling sets generalize several natural graph structures. An $\alpha$-arbdefective $c$-colored $0$-ruling set of a graph $G$ is an $\alpha$-arbdefective $c$-coloring of $G$ (as defiined in \cite{BarenboimE11}). If $\beta=0$ and $\alpha=0$, then we simply get a proper $c$-vertex coloring of $G$. Further, a $0$-arbdefective $1$-colored $\beta$-ruling set $S$ is what is known as a \emph{$(2,\beta)$-ruling set}~\cite{Awerbuch89}, i.e., an independent set $S$ such that any node not in $S$ is within distance $\leq \beta$ from a node in $S$. For $\beta=1$, a $(2,\beta)$-ruling set is a \emph{maximal indepenent set (MIS)}. Further, an $\alpha$-arbdefective $1$-colored $1$-ruling set is an $\alpha$-outdegree dominating set as defined in \cite{BBKOmis}. Note also that for all $\alpha\geq 0$, $c\geq 1$, and $\beta\geq 1$, $\alpha$-arbdefective $c$-colored $\beta$-ruling sets generalize the notion of an MIS (i.e., outputting an MIS is always a valid solution when computing an $\alpha$-arbdefective $c$-colored $\beta$-ruling set).

%%% Local Variables:
%%% mode: latex
%%% TeX-master: "document"
%%% End:

\subsection{Our Results}
\label{sec:ourresults}

\definecolor{lightblue}{rgb}{0.68, 0.85, 0.9}
\newcommand{\highlight}{\cellcolor{lightblue!20}}

\begin{table*}
	\caption{Some of our results and how they compare with the state of the art. Whenever a condition contains $\eps_0$, then $\eps_0$ has to be chosen as a sufficiently small positive constant. The term ``easy'' indicates that the results follow by applying standard techniques.}
	\label{tab:results}    
	%\centering
	%\makebox[\textwidth]{
	%{\setlength{\cellspacetoplimit}{2pt}
	%	\setlength{\cellspacebottomlimit}{2pt}
\begin{tabularx}{\textwidth}{X l l Sl c c r}
	\midrule
	\small\begin{tabular}[c]{@{}l@{}}Problem\\ \\ \\\\ \end{tabular}&  &  & \small\begin{tabular}[c]{@{}l@{}}Complexity\\ \\ \\\\ \end{tabular} & \small\begin{tabular}[c]{@{}l@{}}Holds \\ on\\ trees\\\\\end{tabular} & \small\begin{tabular}[c]{@{}l@{}}Holds\\ on\\ general\\ graphs\end{tabular} & \small\begin{tabular}[c]{@{}l@{}}Ref.\\ \\ \\ \\ \end{tabular}                               \\ \midrule
	\multirow{3}{*}{\footnotesize\begin{tabular}[c]{@{}l@{}}$(2,\beta)$-ruling set,\\as a function of $n$,\\ for $\beta$ small enough\end{tabular}}                                                                                                                                                                    &                                       &  det.     & $O\big(\log^5 n\big)$                                                         & \checkmark                                                  & \checkmark                                                            & \cite{GGR2020}                          \\
	&                                           &       & $O\Big(\frac{\log n}{\beta \log\log n}\Big)$                          & \checkmark                                                  &                                                                       & \cite{BarenboimE10}                     \\
	&                                           &       & $\Omega\Big(\sqrt{\frac{\log n}{\beta\log\log n}}\Big)$               & \checkmark                                                  & \checkmark                                                            & \cite{balliurules}                      \\
	&                                           &       &\highlight{}$\Omega\Big(\frac{\log n}{\beta \log\log n}\Big)$                     &\highlight{}\checkmark                                                  &\highlight{}\checkmark                                                            &\highlight{}\emph{\textbf{new}}                     \\\cmidrule{3-7} 
	&                                      &   rand.    & $O\big(\beta\log^{\frac{1}{\beta+1}} n\big)$                                  & \checkmark                                                  &                                                                       & \begin{tabular}[c]{@{}r@{}}\cite{Barenboim2016}\\\cite{ghaffari16improved}\end{tabular} \\
	&                                           &       & $\Omega\Big(\sqrt{\frac{\log \log n}{\beta\log\log\log n}}\Big)$      & \checkmark                                                  & \checkmark                                                            & \cite{balliurules}                      \\
	&                                           &       &\highlight{}$\Omega\Big(\frac{\log\log n}{\beta \log\log\log n}\Big)$             &\highlight{}\checkmark                                                  &\highlight{}\checkmark                                                            &\highlight{}\emph{\textbf{new}}                     \\ \midrule
	\multirow{5}{*}{\begin{tabular}[c]{@{}l@{}}$(2,\beta)$-ruling set,\\as a function of $\Delta$,\\ for $\beta$ small enough,\\ given an \\$O(\Delta)$-coloring\end{tabular}}                                                                                  &                                       &   det.    & $O\big(\beta\Delta^{1/\beta}\big)$                                            & \checkmark                                                  & \checkmark                                                            & \cite{SEW13}                            \\ \cmidrule{3-7} 
	&                             &  \multirow{2}{*}{\begin{tabular}[c]{@{}l@{}}det.\\\&\\rand.\end{tabular}}      & $\Omega\Big(\frac{\log\Delta}{\beta\log\log\Delta}\Big)$              & \checkmark                                                  & \checkmark                                                            & \cite{balliurules}                      \\
	&                                           &       &\highlight{}$\Omega\big(\beta\Delta^{1/\beta}\big)$                                       &\highlight{}\checkmark                                                  &\highlight{}\checkmark                                                            &\highlight{}\emph{\textbf{new}}                     \\\\[-2pt] \midrule
	\multirow{2}{*}{\begin{tabular}[c]{@{}l@{}}$\alpha$-arbdefective\\ $c$-coloring\end{tabular}}                                                                                                                                                                                                                 & $(\alpha+1)c > \Delta$                    & det.  & $O\big(\Delta + \log^* n\big)$                                                & \checkmark                                                  & \checkmark                                                            & easy                                \\ \cmidrule{2-7} 
	& $(\alpha+1)^2 c \le \varepsilon_0 \Delta$ & det.  & $\Omega\big(\log_\Delta n\big)$                                               & \checkmark                                                  & \checkmark                                                            & easy                                \\ \cmidrule{3-7} 
	&                                           & rand. & $\Omega\big(\log_\Delta \log n\big)$                                          & \checkmark                                                  & \checkmark                                                            & easy                                \\ \cmidrule{2-7} 
	& $(\alpha+1) c \le \varepsilon_0 \Delta$   & det.  & $\Omega\big(\log_\Delta n\big)$                                               &                                                             & \checkmark                                                            & easy                                \\\cmidrule{3-7} 
	&                                           & rand. & $\Omega\big(\log_\Delta \log n\big)$                                          &                                                             & \checkmark                                                            & easy                                \\ \cmidrule{2-7} 
	& $(\alpha+1)c \le \Delta$                  & det.  &\highlight{}$\Omega\big(\log_\Delta n\big)$                                               &\highlight{}\checkmark                                                  &\highlight{}\checkmark                                                            &\highlight{}\emph{\textbf{new}}                     \\ \cmidrule{3-7} 
	&                                           & rand. &\highlight{}$\Omega\big(\log_\Delta \log n\big)$                                          &\highlight{}\checkmark                                                  &\highlight{}\checkmark                                                            &\highlight{}\emph{\textbf{new}}                     \\ \midrule
	\multirow{3}{*}{\begin{tabular}[c]{@{}l@{}}$\alpha$-arbdefective\\$c$-colored\\$\beta$-ruling set,\\as a function of $\Delta$,\\ for $\beta$ small enough,\\ given an $\alpha$-arbdef.\\ $O\big(\frac{\Delta}{\alpha+1}\big)$-coloring\end{tabular}} &                                       &  det.     &\highlight{}$O\Big(\beta\big(\frac{\Delta}{(\alpha+1) c}\big)^{1/\beta}\Big)$     &\highlight{}\checkmark                                                  &\highlight{}\checkmark                                                            &\highlight{}\emph{\textbf{new}}                     \\ \cmidrule{3-7} 
	&                              &    \multirow{2}{*}{\begin{tabular}[c]{@{}l@{}}det.\\\&\\rand.\end{tabular}} &\highlight{}$\Omega\Big(\beta\big(\frac{\Delta}{(\alpha+1)c}\big)^{1/\beta}\Big)$ &\highlight{}\checkmark                                                  &\highlight{}\checkmark                                                            &\highlight{}\emph{\textbf{new}}                     \\
	&                                           &       &                                                                       &                                                             &                                                                       &                                         \\
	&                                           &       &                                                                       &                                                             &                                                                       & \multicolumn{1}{l}{}                    \\[11pt]
	&                                           &       &                                                                       &                                                             &                                                                       & \multicolumn{1}{l}{}                    \\ \midrule
\end{tabularx}%}} 
\end{table*}

Our main lower bound result applies to an entire family of problems for which we gave a high level description before \Cref{sec:problems} and that is defined formally in \Cref{sec:family}. We now present the most important results that we obtain as corollaries for the most natural variants of our general problem family (see \Cref{tab:results} for a partial summary of our results and a comparison with the state of the art).

We first consider arbdefective colorings. We prove that we can exactly characterize which variants of arbdefective coloring can be solved in $O(f(\Delta) + O(\log^* n))$ for some function $f$, and which of them require $\Omega(\log_\Delta n)$ for deterministic algorithms and $\Omega(\log_\Delta \log n)$ for randomized ones. In particular, we show the following (follows from \Cref{thm:ubarbdef,})
\begin{theorem}
	The $\vec{d}$-arbdefective $c$-coloring problem can be solved in $O(\Delta + \log^* n)$ in the deterministic \LOCAL model if $\vec{d}$ is $1$-relaxed, while otherwise it requires $\Omega(\log_\Delta n)$ rounds for deterministic algorithms and $\Omega(\log_\Delta \log n)$ rounds for randomized algorithms.
\end{theorem}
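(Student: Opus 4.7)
The claim combines an upper bound for the case $\capa(\vec{d})>\Delta$ and lower bounds for the case $\capa(\vec{d})\leq\Delta$. My plan is to handle them separately.

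For the upper bound I would run a standard sequential greedy procedure scheduled by an initial vertex coloring. First, compute a $(\Delta+1)$-coloring in $O(\Delta+\log^*n)$ rounds using the algorithm of~\cite{barenboim14distributed}. Then iterate through the $\Delta+1$ color classes. In the round corresponding to its initial color, a node $v$ computes, for each $i\in[c]$, the count $n_i(v)$ of its already-processed neighbors that received arbdefective color $i$, picks any $i^*\in[c]$ with $n_{i^*}(v)\leq d_{i^*}$, sets $\varphi(v):=i^*$, and orients all edges from $v$ to its already-processed neighbors outward from $v$. The existence of $i^*$ is a one-line counting argument: $\sum_i n_i(v)\leq\Delta<\capa(\vec{d})=\sum_i(d_i+1)$, so some $i$ must satisfy $n_i(v)<d_i+1$. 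Upon termination, the outneighbors of $v$ are exactly the neighbors processed strictly before $v$, so the number of outneighbors of color $\varphi(v)$ equals $n_{i^*}(v)\leq d_{i^*}$, as required. The total running time is $O(\Delta+\log^*n)$.

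For the lower bound my plan is to give a local, zero-round reduction from the relaxed $\Delta$-coloring fixed point of the paper to $\vec{d}$-arbdefective $c$-coloring whenever $\capa(\vec{d})\leq\Delta$. Fix pairwise disjoint blocks $B_1,\dots,B_c\subseteq[\Delta]$ with $|B_i|=d_i+1$; this is possible because $\sum_i|B_i|=\capa(\vec{d})\leq\Delta$. Given any valid $\vec{d}$-arbdefective $c$-coloring $\varphi$ together with its edge orientation, each node $v$ outputs the color set $A_v:=B_{\varphi(v)}$ and marks as ``don't care'' exactly the monochromatic outgoing edges at $v$, i.e., those edges oriented away from $v$ to a neighbor $u$ with $\varphi(u)=\varphi(v)$. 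By the arbdefect bound there are at most $d_{\varphi(v)}=|A_v|-1$ such edges, so the don't-care budget is respected. For any edge $\{u,v\}$ that neither endpoint marked, the orientation points it in some direction, say $u\to v$; the fact that $u$ did not mark it forces $\varphi(u)\neq\varphi(v)$, hence $A_u\cap A_v=B_{\varphi(u)}\cap B_{\varphi(v)}=\emptyset$, as required by the relaxed disjointness constraint. This produces a valid relaxed $\Delta$-coloring in zero rounds, so the $\Omega(\log_\Delta n)$ deterministic and $\Omega(\log_\Delta\log n)$ randomized lower bounds established on $\Delta$-regular trees for the relaxed problem transfer directly.

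The main obstacle of this statement has already been discharged elsewhere in the paper, namely the proof that the relaxed $\Delta$-coloring problem is a round elimination fixed point on $\Delta$-regular trees. What remains in my proposal is essentially routine: verifying the counting argument for the greedy step, and observing that in the reduction only monochromatic outgoing edges need to be marked because the disjointness constraint is automatically met between distinct blocks $B_i$ (as $B_i\cap B_j=\emptyset$ for $i\neq j$). The mild case $\capa(\vec{d})<\Delta$ is handled by simply leaving the colors in $[\Delta]\setminus\bigcup_i B_i$ unused, which the relaxed $\Delta$-coloring problem allows.
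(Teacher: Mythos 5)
Your proposal is correct and follows essentially the same route as the paper: for the upper bound, a greedy pass over the color classes of a $(\Delta+1)$-coloring computed via \cite{barenboim14distributed}, with edges oriented toward earlier-processed neighbors and the same pigeonhole argument $\sum_i n_i(v)\le\Delta<\capa(\vec{d})$; for the lower bound, the same zero-round conversion of a $\vec{d}$-arbdefective $c$-coloring into the relaxed $\Delta$-coloring fixed point by assigning disjoint color blocks of sizes $d_i+1$ and marking monochromatic outgoing edges (the paper additionally pads with extra wildcard labels to meet the ``exactly $|A_v|-1$'' technicality, which is immaterial). The only cosmetic point is that your construction is a relaxation of the arbdefective coloring to the fixed-point problem (solution of the former yields solution of the latter), so the phrase ``reduction from the fixed point to arbdefective coloring'' has the direction of the wording reversed, though the argument itself is in the correct direction.
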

By allowing the same arbdefect on each color, we obtain the following corollary.
\begin{corollary}\label{cor:arbdefcol}
	The $d$-arbdefective $c$-coloring problem can be solved in $O(\Delta + \log^* n)$ rounds in the deterministic \LOCAL model if $(d+1)c > \Delta$, while if $(d+1)c \le \Delta$ then it requires $\Omega(\log_\Delta n)$ rounds for deterministic algorithms and $\Omega(\log_\Delta \log n)$ rounds for randomized algorithms.
\end{corollary}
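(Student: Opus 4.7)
The plan is to derive the corollary as a direct specialization of the preceding theorem. Given the parameters $d$ and $c$ of the corollary, I would define the uniform arbdefect vector $\vec{d} := (d,\dots,d) \in \ARB^c$, that is, the vector of length $c$ whose every coordinate equals $d$. By \Cref{def:generalarbdefect}, a $d$-arbdefective $c$-coloring (in the sense introduced right after that definition) is exactly a $\vec{d}$-arbdefective $c$-coloring for this choice of $\vec{d}$, so the two problems coincide syntactically.

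Next, I would evaluate the capacity of this vector: by definition $\capa(\vec{d}) = \sum_{i=1}^{c}(d_i + 1) = c(d+1)$. The notion of $\vec{d}$ being $1$-relaxed, which by definition means $\capa(\vec{d}) > \Delta$, therefore specializes to the inequality $c(d+1) > \Delta$, and its negation to $c(d+1) \le \Delta$. These are precisely the two cases stated in the corollary, so the translation between the two formulations is immediate.

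It then remains only to invoke the preceding theorem in each case. In the case $c(d+1) > \Delta$, the theorem provides the $O(\Delta + \log^* n)$-round deterministic \LOCAL algorithm. In the case $c(d+1) \le \Delta$, the theorem supplies the $\Omega(\log_\Delta n)$ deterministic and $\Omega(\log_\Delta \log n)$ randomized lower bounds on trees. There is no genuine technical obstacle; the only thing to verify is the equivalence of the $1$-relaxed condition with the stated inequality, which is settled by the capacity computation above.
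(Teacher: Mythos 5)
Your proposal is correct and matches the paper's own (implicit) derivation: the corollary is obtained exactly by specializing the theorem to the uniform vector $\vec{d}=(d,\dots,d)$, noting $\capa(\vec{d})=(d+1)c$, so that the $1$-relaxed condition becomes $(d+1)c>\Delta$. Nothing further is needed.
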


We now focus on ruling sets and some of their variants. Consider a variant of ruling sets where nodes are required to prove that there is a node in the set at distance at most $\beta$.
For this variant, it was shown in \cite{balliurules}  that, given a proper $c$-coloring, one can solve the problem in $t$ rounds where $t$ is the minimum value such that ${t + \beta \choose \beta} \ge c$. 
We prove that this upper bound is tight, assuming that a proper $(\Delta + 1)$-coloring is given as input to the nodes. In particular, we show that this problem requires at least $t+1$ rounds, where $t$ is the maximum value such that ${t + \beta \choose \beta} < \Delta$. 

For the standard variants of ruling sets, which do not require to
prove that there is a node in the set at distance at most $\beta$, we
lose an additive $\beta$ term in the lower bound, but we still obtain
tight results for $\beta \leq \eps_0\log \Delta$ for some constant
$\eps_0>0$.  In particular, these results exponentially improve the
lower bounds in \cite{balliurules}, and are asymptotically tight in
the case in which an $O(\Delta)$-coloring is provided to the
nodes. Notice that a stronger lower bound for the case where we are
not given an $O(\Delta)$-coloring, would directly imply an
$\omega(\log^* n)$ lower bound for $O(\Delta)$-vertex coloring (and
thus also for $(\Delta+1)$-vertex coloring), which is a long-standing
open question.

Our lower bound does not only hold for ruling sets, but for the more general  $\alpha$-arbdefective $c$-colored $\beta$-ruling sets that we defined in \Cref{sec:problems}. While the case of $\beta=0$ has been handled by \Cref{cor:arbdefcol}, in the remaining cases the following holds.
\begin{restatable}{theorem}{lbcoldom}\label{thm:lbarbcolrs}
  Let $t$ be the maximum value such that ${t + \beta \choose \beta} < \frac{\Delta}{c(\alpha+1)}$.
  The $\alpha$-arbdefective $c$-colored $\beta$-ruling set problem requires $\Omega(\min\{t, \log_\Delta n\} - \beta)$ in the deterministic \LOCAL model and $\Omega(\min\{t, \log_\Delta \log n\} - \beta)$ in the randomized \LOCAL model.
\end{restatable}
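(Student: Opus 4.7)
The plan is to encode the $\alpha$-arbdefective $c$-colored $\beta$-ruling set problem as a concrete instance of the general problem family defined in \Cref{sec:family}, and then apply the round elimination machinery developed there, with the new relaxed $\Delta$-coloring fixed point as the hardness reservoir at the core.

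First, I would encode the problem as an instance $\Pi_0$ of the general family. An $\alpha$-arbdefective color class corresponds exactly to one subset-color of size $\alpha+1$ in the relaxed coloring of the new fixed point (with the orientation of monochromatic edges captured by the don't-care edges); hence $c$ arbdefective color classes give rise to $c(\alpha+1)$ base colors assigned to the ruling-set nodes. The $\beta$-ruling constraint is encoded by the $\beta+1$ distance-class structure of the family: an uncolored node carries a label saying at which distance it sees a set-node. This yields an instance $\Pi_0$ with exactly $c(\alpha+1)$ colors in distance-class $0$.

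Next, I would apply $k$ steps of round elimination to $\Pi_0$. The core combinatorial claim, proved by tracking how labels are created in a single step, is that the resulting problem $\Pi_k$ can be coarsened into a problem in the same family with at most $c(\alpha+1)\binom{k+\beta}{\beta}$ effective colors across all distance classes (a stars-and-bars count of how labels can migrate among the $\beta+1$ distance classes in $k$ steps, with the crucial invariant that class $0$ always keeps exactly the original $c(\alpha+1)$ colors). Because every problem in this family embeds the relaxed $\Delta$-coloring on its set-part, the fixed-point result from the preceding sections guarantees that $\Pi_k$ is not $0$-round solvable on $\Delta$-regular trees as long as its effective color count is at most $\Delta$. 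Choosing $k = t$ as in the theorem gives $c(\alpha+1)\binom{t+\beta}{\beta} < \Delta$, so $\Pi_t$ is not solvable in $0$ rounds, which forces $\Pi_0$ to require $\Omega(t)$ rounds.

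The final additive loss of $\beta$ accounts for the fact that once a ruling set $S$ with its arbdefective coloring has been produced, each uncolored node can compute its distance-class label locally in $\beta$ rounds; a standard $\beta$-ruling set algorithm followed by this $\beta$-round postprocessing solves $\Pi_0$, so the lower bound transfers at a cost of $-\beta$ to the original problem. The $\min$ with $\log_\Delta n$ is the familiar girth barrier for round elimination on $\Delta$-regular trees, and the randomized $\log_\Delta \log n$ bound follows from the standard simulation turning a randomized $T$-round algorithm on $n$-node graphs into a deterministic algorithm on graphs of size $\exp(\Theta(n))$. The main obstacle will be the coarsening step: round elimination naively produces a doubly-exponentially growing family of labels, and it is precisely the fixed-point structure of the relaxed $\Delta$-coloring, together with the rigid distance-class bookkeeping of the problem family, that allows one to project each round-eliminated problem back into the clean $c(\alpha+1)\binom{k+\beta}{\beta}$-color parametrization without accidentally weakening it below the true round-eliminated problem.
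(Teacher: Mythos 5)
Your core argument is essentially the paper's own proof of \Cref{thm:lbarbcolrs}: you reduce an $\alpha$-arbdefective $c$-colored $\beta$-ruling set solution in $\beta$ extra rounds to the family instance $\Pi_\Delta([c(\alpha+1),0,\ldots,0])$ (which is where the additive $-\beta$ comes from), and you lower-bound that instance by running round elimination inside the family, using the prefix-sum/binomial growth $c(\alpha+1)\binom{k+\beta}{\beta}$ of the color space and non-$0$-round-solvability while the color count stays below $\Delta$ — exactly the route via \Cref{lem:prefix}, \Cref{lem:firstnonzero} and \Cref{thm:general}, with the hard coarsening step you defer being the content of \Cref{lem:keytec}. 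The one place your plumbing deviates is the lifting: the paper first obtains the randomized $\Omega(\min\{t,\log_\Delta\log n\})$ bound by tracking failure probabilities through the round-elimination chain (\Cref{sec:lifting}, \Cref{lem:randlocal}) and only then derives the deterministic bound by the lie-about-$n$ argument (\Cref{lem:detlocal}), whereas you propose the reverse order, deriving the randomized bound from the deterministic one via a gap-theorem-style simulation; as stated your simulation goes the wrong way (it turns a randomized algorithm on graphs of size $\exp(\Theta(n))$ into a deterministic one on $n$-node graphs, not vice versa), and, more substantively, the deterministic \LOCAL bound cannot be read off the port-numbering chain directly (``girth'' is not the issue on trees — handling IDs is), which in this framework is precisely what the randomized failure-probability analysis is for. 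So the core is right and matches the paper; the lifting paragraph would need to be rearranged along the paper's lines to actually execute.
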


\noindent As proven in \Cref{cor:arbdefrsdelta} (see also \Cref{tab:results}),
\Cref{thm:lbarbcolrs} implies that for sufficiently small $\beta$,
computing an $\alpha$-arbdefective $c$-coloring $\beta$-ruling set
requires $\Omega\big(\min\big\{\beta (\frac{\Delta}{c(1+\alpha)})^{1/\beta}, \log_\Delta n\big\}\big)$ rounds deterministically
and $\Omega\big(\min\big\{\beta (\frac{\Delta}{c(1+\alpha)})^{1/\beta}, \log_\Delta \log n\big\}\big)$ rounds with randomization. For
details and additional implications of the above theorem, we refer to
\Cref{sec:corollaries}. The theorem in particular also implies that
computing an MIS requires $\Omega\big(\frac{\log n}{\log\log n}\big)$
rounds in the deterministic \LOCAL model, even in trees. Notice that
this result is tight, as in \cite{BarenboimE10} it has been shown that
MIS on trees can be solved in $O\big(\frac{\log n}{\log\log n}\big)$
deterministic rounds. The following theorem shows that our general
lower bound for arbdefective colored ruling sets is tight if we are
initially given an appropriate coloring as input.

\begin{restatable}{theorem}{ubcoldom}\label{thm:ubarbcolrs}
   Let $G=(V,E)$ be a graph and let $\alpha\geq 0$, $c\geq 1$, $\beta\geq 1$, and $C\geq c$. If an $\alpha$-arbdefective $C$-coloring of $G$ is given, then an $\alpha$-arbdefective $c$-colored $\beta$-ruling set of $G$ can be computed deterministically in the minimum time $t$ such that ${t + \beta \choose \beta}\geq \lceil\frac{C}{c}\rceil$ in the \LOCAL model.
\end{restatable}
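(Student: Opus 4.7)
The plan is to reduce the construction of an $\alpha$-arbdefective $c$-colored $\beta$-ruling set to $c$ independent single-output-color subproblems and to solve each with an arbdefective adaptation of the classical Schneider--Elkin--Wattenhofer $(2,\beta)$-ruling-set construction from a $k$-coloring. Fix any partition $[C] = G_1 \sqcup \dots \sqcup G_c$ of the input colors into $c$ groups of size at most $k := \lceil C/c \rceil$, and let $V_j$ denote the set of nodes whose input color lies in $G_j$. The output will be $S = S_1 \sqcup \dots \sqcup S_c$ with $S_j \subseteq V_j$ and every $v \in S_j$ receiving output color $j$. Since the $V_j$'s are pairwise disjoint and every edge between two distinct $V_j$'s joins nodes of different output colors (and thus never contributes to the arbdefect of any single output color class), I would run the $c$ subroutines in parallel on the induced subgraphs $G[V_j]$, each of which inherits an $\alpha$-arbdefective $k$-coloring from the input.

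The single-group subroutine computes a $\beta$-dominating set $S_j \subseteq V_j$ whose induced subgraph $G[S_j]$ has arbdefect at most $\alpha$ (as a single color class), in $t$ rounds with $\binom{t+\beta}{\beta} \ge k$. For $\beta = 1$ I would process the $k$ colors sequentially over $k-1$ rounds: in phase $\ell$, every node of the $\ell$-th color whose already-decided neighbors are all outside $S_j$ joins $S_j$; a short case analysis shows that an outneighbor of $v \in S_j$ that is also in $S_j$ must have joined in the same phase as $v$ (earlier phases are excluded by $v$'s own rule, later phases by the analogous rule on the outneighbor), so by the input arbdefect there are at most $\alpha$ of them. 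For $\beta \ge 2$ I would use Pascal's identity $\binom{t+\beta}{\beta} = \binom{t+\beta-1}{\beta} + \binom{t+\beta-1}{\beta-1}$ to split the $k$ colors into $A$ of size at most $\binom{t-1+\beta}{\beta}$ and $B$ of size at most $\binom{t+\beta-1}{\beta-1}$, recursively compute in parallel an $\alpha$-arbdefective $(2,\beta)$-ruling set $S_A$ of $V_A$ in $t-1$ rounds and an $\alpha$-arbdefective $(2,\beta-1)$-ruling set $S_B$ of $V_B$ in $t$ rounds, and then delete from $S_B$ every node adjacent to $S_A$ before setting $S_j := S_A \cup S_B$. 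Since $S_A$ is finalized one round before $S_B$, the ``is any neighbor in $S_A$?'' information can be piggybacked on the $B$-side's last round of communication and resolved by a local decision at each $B$-node, so the total round count is exactly $t$. Preservation of $\beta$-domination is the usual one-level relaxation: a node previously dominated by a deleted $B$-node at distance $\le \beta - 1$ is now dominated at distance $\le \beta$ via that deleted node's $S_A$-neighbor.

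The main obstacle relative to the classical analysis is verifying that the arbdefect bound is preserved through the recursion; this is where the argument departs from the standard independent-set case. The inductive claim is that every node $v$ of $S_j$ has at most $\alpha$ outneighbors in $S_j$. If $v \in S_A$, any outneighbor of $v$ in $S_B$ would have been adjacent to $v \in S_A$ and hence removed by the cleanup rule, so all outneighbors of $v$ in $S_j$ lie in $S_A$ and the bound follows from the induction hypothesis on $S_A$. If $v \in S_B$ survives cleanup, then $v$ has no $S_A$-neighbor at all, so all outneighbors of $v$ in $S_j$ lie in $S_B$ and the bound follows from the induction hypothesis on $S_B$. Together with the $\beta = 1$ base case handled above (and the trivial $t = 0$, $k = 1$ case that outputs the entire input color class), this produces an $\alpha$-arbdefective single-color-class $\beta$-ruling set within each $G[V_j]$. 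Taking the disjoint union across the $c$ groups then yields an $\alpha$-arbdefective $c$-colored $\beta$-ruling set of $G$ in $t$ rounds whenever $\binom{t+\beta}{\beta} \ge \lceil C/c \rceil$, as claimed.
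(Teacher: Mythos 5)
Your proposal is correct, but it proves the theorem by a genuinely different route than the paper. The paper uses the ``transposed'' partition: it splits the $C$ input colors into $K=\lceil C/c\rceil$ groups of \emph{at most $c$ colors each}, deletes all intra-group edges to obtain an auxiliary graph $H$ in which the group index is a \emph{proper} $K$-coloring, and then invokes the known $(2,\beta)$-ruling-set algorithm for properly colored graphs (Lemma~15 of \cite{balliurules}) as a black box, in the minimum $t$ with $\binom{t+\beta}{\beta}\geq K$. Since the resulting set is independent in $H$, all edges it induces in $G$ are intra-group, so the input coloring (renamed to $[c]$ within each group) already certifies the $\alpha$-arbdefective $c$-coloring of $G[S]$, and no new algorithmic analysis is needed. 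You instead split the colors into $c$ groups of at most $\lceil C/c\rceil$ colors, one output color per group; because each $G[V_j]$ is then only \emph{arbdefectively} colored rather than properly colored, you cannot cite the existing lemma and must re-derive the binomial recursion from scratch, replacing independence by the arbdefect invariant (your sequential base case with the ``same phase'' argument, and the cleanup step $S_B\setminus N(S_A)$ with the two-case out-degree analysis). Your analysis goes through: domination, the arbdefect induction, and the round count (the cleanup information is a function of radius-$t$ views, so $t$ LOCAL rounds suffice) are all sound, and both routes yield exactly the minimum $t$ with $\binom{t+\beta}{\beta}\geq\lceil C/c\rceil$. What the paper's choice of partition buys is brevity, since the arbdefect bookkeeping is pushed entirely into the input coloring and the ruling-set computation is reused unchanged; what your version buys is self-containedness and an explicit arbdefective generalization of the Schneider--Elkin--Wattenhofer recursion, which the paper never spells out.
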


\noindent \Cref{thm:ubarbcolrs} implies that if an $\alpha$-arbdefective $C$-coloring of $G$ is given as input, then an $\alpha$-arbdefective $c$-colored $\beta$-ruling set of $G$ can be computed deterministically in $O\big(\beta\big(\frac{C}{c}\big)^{1/\beta}\big)$ rounds (cf.\ \Cref{cor:basicarbruling}). By combining with the arbdefective coloring algorithm given by \Cref{lemma:relaxedarbcoloring}, this implies that an $\alpha$-arbdefective $c$-colored $\beta$-ruling set can be computed in $O\big(\beta\big(\frac{C}{(\alpha+1)\sqrt{c}}\big)^{2/(\beta+1)}\big)$ rounds in the deterministic \LOCAL model (cf.\ \Cref{cor:basicarbruling2}). We refer the reader to \Cref{sec:corollaries,sec:algorithms} for a full list of results that we obtain.

\subsection{Additional Related Work}

In the following, we list additional relevant previous work. In light of the vast amount of related work on the techniques and problems that we study in this paper, we restrict the discussion to the results that are most relevant in the context of the present paper.

\paragraph{Round Elimination.}
Round elimination  has proven to be an extremely useful
tool for showing lower bounds in the distributed setting, \cite{Brandt2016,chang16exponential,chang18complexity,Brandt2019,BalliuHOS19,Olivetti2019,Balliu2019,trulytight,balliurules,BBKOmis,binary,FraigniaudPaz20}. Although we
started to widely understand the technique only recently, already
Linial's classic $\Omega(\log^* n)$ lower bound for $3$-coloring a
cycle \cite{Linial1992} can be seen as a round elimination proof.
The clean concept of round elimination, as we use it nowadays, has
been introduced in 2016 in \cite{Brandt2016}, where the authors prove a randomized
lower bound of $\Omega(\log_\Delta \log n)$ rounds for computing a
sinkless orientation or a $\Delta$-coloring.
%As a corollary, they
%obtain the same randomized lower bound for the distributed
%onstructive Lov\'{a}sz Local Lemma problem.
Shortly after, Chang, Kopelowitz, and Pettie~\cite{chang16exponential} lifted the lower bound to an
$\Omega(\log_\Delta n)$-round lower bound for deterministic
algorithms. Later, the lower bound was also extended to the $(2\Delta-2)$-edge coloring problem in \cite{chang18complexity}.

% rounds. In 2019, \cite{BalliuHOS19} showed that computing a weak
% $2$-coloring on even-degree graphs requires $\Omega(\log^*n)$ rounds.

In \cite{Brandt2019}, Brandt further refined the technique by
giving an \emph{automatic} way to perform round elimination. That is,
given a problem $\Pi$, he showed how to mechanically construct a
problem $\Pi'$ that is exactly one round easier than $\Pi$.
% As an
% application, he showed that the weak $2$-coloring problem on
% odd-degree graphs requires $\Omega(\log^* \Delta)$ rounds. 
The new automatic round elimination technique, together with a tool by Olivetti
\cite{Olivetti2019} that implements it, gave rise to a better
understanding of several important problems in distributed
computing. As a highlight, \cite{Balliu2019} used the technique to
prove asymptotically tight-in-$\Delta$ lower bounds for computing a
maximal matching (MM) and as an immediate corollary also for computing
an MIS. \cite{trulytight} further improved these results by showing
that MM on bipartite $2$-colored graphs requires exactly $2\Delta-1$
rounds. While the lower bounds for MM hold on trees, the MIS lower
bounds only hold on more general graphs. Moreover, the above lower
bounds cannot be directly extended to ruling sets, since it is known
that ruling sets on the line graph are much easier than
MIS~\cite{KuhnMW18}.

Balliu, Brandt, and Olivetti~\cite{balliurules} made progress on this question and used the automatic round elimination technique to show the first $\omega(\log^* n)$ lower bound for ruling sets and the first such lower bound for MIS on trees. Unfortunately, their lower bounds are exponentially worse than the best known upper bounds: they for example show that, on trees, MIS requires $\Omega(\log \Delta / \log \log \Delta)$ rounds.
% In order to find the lower bound for ruling sets, authors first used the automatic round elimination technique to design an algorithm that, given a $c$-coloring, computes a $(2,\beta)$-ruling set in $O(\beta c^{1/\beta})$ rounds, and then they used some tricks to convert this upper bound into an $\Omega(\log \Delta / (\beta \log \log \Delta))$ lower bound.
Subsequently, \cite{BBKOmis} showed that by making use of an input edge coloring, one can apply the round elimination technique to improve and simplify the lower bound for MIS on trees, obtaining a lower bound of $\Omega(\log\Delta)$ rounds. Unfortunately, this simpler proof does not extend to ruling sets, but the authors show that it can be used to prove lower bounds for a class of problems called bounded outdegree dominating sets (that are the same as arbdefective $1$-colored $1$-ruling sets).
% Finally, we would like to mention that the automatic round elimination technique has gained lots of attention, and, for example, in 2020, Fraigniaud and Paz showed how to use it with the algebraic topology framework~\cite{FraigniaudPaz20}. Also, \cite{binary} used this technique to characterize the complexity of all locally checkable labeling problems that can be encoded with only two labels.

\paragraph{Maximal Independent Set.}

The problem of computing an MIS is one of the most fundamental symmetry breaking problems and it has been extensively studied in the \LOCAL model (for example, see \cite{Awerbuch89, KarpW85, Luby1986, Alon1986, Linial1992, Naor1991, panconesi96decomposition, Kuhn2004, BarenboimE10, SchneiderW10, LenzenW11, Barenboim2016, Barenboim2013, barenboim14distributed, ghaffari16improved, Rozhon2020,GGR2020, DISC17_MIS,Ghaffari19congest,ghaffariPortman19}). Already in the 80s, the works of Luby~\cite{Luby1986} and Alon, Babai, and Itai~\cite{Alon1986} showed that with randomization, an MIS can be computed in $O(\log n)$ rounds. The first non-trivial deterministic upper bounds for MIS were also designed in the late 80s, when \cite{Awerbuch89} introduced a $2^{O(\sqrt{\log n\log\log n})}$-round \LOCAL algorithm for a more general tool, called network decomposition. Later, Panconesi and Srinivasan~\cite{panconesi96decomposition} improved this runtime to $2^{O(\sqrt{\log n})}$, and very recently the deterministic complexity of computing a network decomposition (and an MIS) has been improved to polylogarithmic in $n$ by Rozho\v{n} and Ghaffari~\cite{Rozhon2020}. An improved variant of this algorithm by Ghaffari, Grunau, and Rozho\v{n}~\cite{GGR2020} provides the current best distributed deterministic MIS algorithm with a round complexity of $O(\log^5 n)$. In addition to analyzing the complexity of computing an MIS solely as a function of $n$, there is also a large body of work that studies runtimes expressed as $f(\Delta) + g(n)$. In 2008, Barenboim, Elkin, and Kuhn~\cite{barenboim14distributed} showed that MIS can be computed deterministically in $O(\Delta+\log^* n)$ deterministic rounds in the \LOCAL model. Furthermore, the authors presented randomized algorithms with a round complexity of $O(\log^2\Delta) + 2^{O(\sqrt{\log\log n})}$, which was improved to $O(\log\Delta)+2^{O(\sqrt{\log\log n})}$ by Ghaffari~\cite{ghaffari16improved}. As a result of the network decomposition algorithm of \cite{Rozhon2020,GGR2020}, the complexity has been further improved to $O(\log\Delta + \log^5\log n)$.

On the lower bound side, we have known since the late 80s and early 90s that MIS requires $\Omega(\log^*n)$ rounds, even for randomized algorithms, which is due to the works of Linial~\cite{Linial1992} and Naor~\cite{Naor1991}. The first $\omega(\log^* n)$ lower bound for MIS was shown in 2004 by Kuhn, Moscibroda, and Wattenhofer~\cite{Kuhn2004}, who proved a lower bound of $\Omega\Bigl(\min\Bigl\{\frac{\log\Delta}{\log\log\Delta},\sqrt{\frac{\log n}{\log\log n}}\Bigr\}\Bigr)$ rounds, even for randomized algorithms. Recently, this lower bound has been improved and complemented by the round elimination based maximal matching lower bound of \cite{Balliu2019}, which implies that computing an MIS requires $\Omega\Bigl(\min\Bigl\{\Delta, \frac{\log n}{\log \log n}\Bigr\}\Bigr)$ rounds for deterministic algorithms and $\Omega\Bigl(\min\Bigl\{\Delta, \frac{\log \log n}{\log \log \log n}\Bigr\}\Bigr)$ rounds for randomized algorithms. As discussed, these results do not hold in trees and the first lower bounds for MIS in trees were recently proven in \cite{balliurules,BBKOmis}. In this paper, we prove that the same MIS lower bounds that were proven in \cite{Balliu2019} for general graphs also hold for computing MIS in trees.

The MIS problem has also been studied in specific class of graphs and in particular for trees (e.g., \cite{BarenboimE10, LenzenW11, Barenboim2016, ghaffari16improved}). For example, on trees, barenboim and Elkin~\cite{BarenboimE10} showed that MIS can be solved in $O(\log n / \log\log n)$ deterministic rounds, and Ghaffari~\cite{ghaffari16improved} showed that it can be solved in $O(\sqrt{\log n})$ randomized rounds. As a function of $\Delta$ and $n$, MIS on trees can be solved in $O(\log\Delta + \log\log n/\log\log\log n)$ randomized rounds~\cite{Barenboim2016,ghaffari16improved}.

\paragraph{Ruling Sets.}
Ruling sets were introduced in the late 80s in \cite{Awerbuch89}, where the authors used them as a subroutine in their network decomposition algorithm. Since then, there have been several works that studied the complexity of ruling sets (see, e.g., \cite{Awerbuch89, SchneiderW10symmetry, Gfeller07, SEW13, BishtKP13, Barenboim2016, ghaffari16improved, KuhnMW18, KP12,PPPR017,HKN16}), and that used ruling sets as a subroutine for solving problems of interest (see, e.g., \cite{Barenboim2016, ghaffari16improved, GhaffariHKM18, ChangLP18}). \cite{Awerbuch89} in particular showed that a $(2,O(\log n))$-ruling set can be computed in time $O(\log n)$. Moreover, as shown by Schneider, Elkin, und Wattenhofer~\cite{SEW13}, the algorithm of \cite{Awerbuch89} can be adapted to deterministically compute, for general $\beta\geq 1$, a $(2,\beta)$-ruling set in time $O(\beta \Delta^{2/\beta} + \log^* n)$ (or in time $O(\beta C^{1/\beta})$ if an initial $C$-coloring is given).

 On the randomized side, we know that $(2,\log \log n)$-ruling sets
 can be computed in $O(\log \log n)$ randomized rounds, which
 follows by combining the works of Gfeller and Vicari~\cite{Gfeller07} and Schneider, Elkin, and Watternhofer~\cite{SEW13}. Further improvements were obtained in \cite{KP12,BishtKP13,Barenboim2016,ghaffari16improved}. In combination
 with the network decomposition result of \cite{Rozhon2020,GGR2020},
 the algorithm of Ghaffari~\cite{ghaffari16improved} implies that a
 $(2,\beta)$-ruling set can be computed in
 $O(\beta\log^{1/\beta} \Delta) + \poly(\log\log n)$ rounds in the
 randomized \LOCAL model. On trees, a $(2,\beta)$-ruling set can be
 computed deterministically in time
 $O\big(\frac{\log n}{\beta \log\log n}\big)$ by combining a coloring
 algorithm of Barenboim and Elkin~\cite{BarenboimE10} with the
 $O(\beta C^{1/\beta})$-round algorithm for a given $C$-coloring. With
 randomization, in trees, it is possible to compute a
 $(2,\beta)$-ruling set in time
 $O\big(\beta\log^{\frac{1}{\beta+1}}n\big)$ by combining techniques
 of \cite{Barenboim2016} with techniques of \cite{ghaffari16improved}.
 
 On the lower bound side, it was known since the late 80s and early 90s that computing a $(2, \beta)$-ruling set requires $\Omega(\log^* n)$ rounds up to some $\beta \in \Theta(\log^* n)$, even for randomized algorithms, and this directly follows from the MIS lower bounds of \cite{Linial1992} and \cite{Naor1991} for paths and rings and it therefore also holds on trees. The only previous $\omega(\log^* n)$ lower bounds for ruling sets were obtained in \cite{balliurules}. There, the authors showed that for sufficiently small $\beta$, any deterministic algorithm that computes a $(2,\beta)$-ruling set requires $\Omega\left(\min \left\{  \frac{\log \Delta}{\beta \log \log \Delta}  ,  \log_\Delta n \right\} \right)$ rounds, while any randomized algorithm requires $\Omega\left(\min \left\{  \frac{\log \Delta}{\beta \log \log \Delta}  ,  \log_\Delta \log n \right\} \right)$ rounds.

\paragraph{Arbdefective Colorings.}
The arbdefective coloring problem was introduced by Barenboim and Elkin~\cite{BarenboimE11}, and it has proved to be a very useful tool that is used in several state-of-the-art distributed algorithms for computing a proper coloring of a graph~\cite{BarenboimE11,barenboim16sublinear,fraigniaud16local,MausTonoyan20,GhaffariKuhn20,Kuhn20}. The aforementioned paper shows that in graphs of arboricity $a$, a $b$-arbdefective $O(a/b)$-coloring can be deterministically computed in $O(\min\{a,(a/b)^2\} \log n)$ rounds. The paper uses this to recursively decompose a graph into sparser subgraphs, which can then be colored more efficiently. Subsequently, Barenboim~\cite{barenboim16sublinear} gave an algorithm that computes an $O(\alpha)$-arbdefective $O(\Delta\log\Delta/\alpha)$-coloring in $O(\Delta\log^2\Delta/\alpha + \log^* n)$ rounds. Barenboim~\cite{barenboim16sublinear} and Fraigniaud, Heinrich, and Kosowski~\cite{fraigniaud16local} used this algorithm to obtain deterministic $(\Delta+1)$-list-coloring algorithms with a round complexity that is sublinear in $\Delta$. Barenboim, Elkin, and Goldenberg~\cite{BEG18} improved on the arbdefective coloring of \cite{barenboim16sublinear} by achieving an $\alpha$-arbdefective coloring with $O(\Delta/\alpha)$-coloring in time $O(\Delta/\alpha + \log^* n)$ (the same result was later also proved in different ways in \cite{GhaffariKuhn20} and \cite{Maus21}).
%Arbdefective colorings and generalizations of the classic arbdefective coloring problem are also crucial in the recent deterministic coloring algorithms of Kuhn~\cite{Kuhn20} and Ghaffari and Kuhn~\cite{GhaffariKuhn20}.

\section{Road Map}
\paragraph{Preliminaries.}
We start, in \Cref{sec:preliminaries}, by providing some preliminaries. In particular, we define the model of computing, we define the class of \emph{locally checkable} problems, and we formally define the round elimination framework.

\paragraph{\boldmath A Fixed Point for $\Delta$-coloring.}
In \Cref{sec:deltacoloring}, we prove an $\Omega(\log_\Delta n)$-round deterministic and $\Omega(\log_\Delta \log n)$-round randomized lower bound for the $\Delta$-coloring problem in the \LOCAL model. These results are already known from prior work \cite{chang16exponential,Brandt2016}. The novelty of our proof lies on the fact that we apply the round elimination technique directly to a more natural generalization of $\Delta$-coloring. The previous proofs are also based on round elimination, however the relaxation of $\Delta$-coloring (which is called sinkless coloring) in the proofs of \cite{chang16exponential,Brandt2016} is more related to the sinkless orientation problem than to $\Delta$-coloring. Note that the primary goal of \cite{Brandt2016} is to prove a lower bound for the sinkless orientation problem. While the proof of \cite{Brandt2016} also implies that $\Delta$-coloring requires $\Omega(\log_\Delta \log n)$ randomized rounds, the proof does not reveal enough of the structure of $\Delta$-coloring to be used to prove the lower bounds for MIS or ruling sets. As already described in \Cref{sec:intro}, we define a relaxation of $\Delta$-coloring, which we prove to be a non-trivial fixed point under the round elimination framework. This proof can be seen as a special case of the more general proof that we will show afterwards. We also explain why a fixed point proof for the $\Delta$-coloring problem is helpful in proving lower bounds for problems that are \emph{much easier} than the $\Delta$-coloring problem. In particular, we show that one reason for why it is hard to prove lower bounds for, e.g., ruling sets is that the $c$-coloring problem, for $c \le \Delta$, appears as a subproblem of ruling sets when applying the round elimination technique. The $c$-coloring problem does not behave nicely under the round elimination framework, and this makes also ruling sets to not behave nicely under this framework. Relaxing $c$-coloring to a non-trivial fixed point makes it behave much better, and the same then helps for ruling sets as well.

\paragraph{A Problem Family.}
In \Cref{sec:family} we formally define the family of problems for which we prove lower bounds. We also show that this family of problems is rich enough to contain relaxations of many interesting natural problems: $\Delta$-coloring, arbdefective colorings, maximal independent set, ruling sets, and arbdefective colored ruling sets. We also show that this family of problems contains even more problems: in general, it gives a way to interpolate between colorings and ruling sets.

\paragraph{Lower Bound in the Port Numbering Model.}
In \Cref{sec:sequence}, we are going to use the round elimination technique to prove lower bounds for the problems of the family in the port numbering model (a model that is strictly weaker than \LOCAL). In particular, we are going to show that some problems in this family are strictly easier than others. Then, by showing that there exists a sequence of $T$ problems, all non-zero round solvable, such that each problem in the sequence is strictly harder than the next problem in the sequence, we directly get that the first problem of the sequence requires at least $T$ rounds. This would give us a lower bound for the port numbering model, but what we really want is a lower bound for the \LOCAL model. An important technical detail that we will need in order to be able to lift such a result from the port numbering model to the \LOCAL model is the following: we need to be able to describe each problem of the sequence with a small number of labels. Hence, in this section we also show that the number of labels of each problem is small enough.

\paragraph{\boldmath Lifting Port Numbering Lower Bounds to the \LOCAL model.}
In \Cref{sec:lifting}, we show that a long sequence of problems satisfying the previously described requirements directly gives a lower bound for the \LOCAL model. Here we use known results, but we strengthen them a bit in order to support a larger number of labels, compared to the one used in previous works.

\paragraph{Lower Bound Results.}
In \Cref{sec:corollaries}, we formalize the connection between the
problems in the family and some natural problems. We show that the
generic lower bound obtained by combining the results of
\Cref{sec:sequence} and \Cref{sec:lifting} implies lower bounds for
many natural problems. For example, in this section we show that the
$(2,\beta)$-ruling set problem requires
$\Omega(\beta \Delta^{1/\beta})$ rounds, unless $\Delta$ (as a
function of $n$) is very large.

\paragraph{Upper Bounds.}
While the fixed point that we provide in \Cref{sec:deltacoloring} is a
relaxation of the $\Delta$-coloring problem, we prove in
\Cref{sec:corollaries} that it is also a relaxation of some variants
of arbdefective coloring. We show that this fixed point exactly
characterizes which variants of arbdefective coloring require
$\Omega(\log_\Delta n)$ rounds. In particular, in
\Cref{sec:algorithms} we show that all the variants of arbdefective
coloring for which we do not obtain a lower bound in
\Cref{sec:corollaries} can be solved in $O(\Delta + \log^* n)$. In
\Cref{sec:algorithms} we also provide tight upper bounds for
$\alpha$-arbdefective $c$-colored $\beta$-ruling sets.

\paragraph{Open Problems.}
We conclude, in \Cref{sec:open}, with some open problems. As we have
seen, fixed point based proofs for hard problems may help in proving
lower bounds for much easier problems, so one natural question is
whether such a fixed point based proof exists for all problems
that require $\Omega(\log_\Delta n)$ deterministic rounds in the
\LOCAL model.

\section{Preliminaries}\label{sec:preliminaries}

\subsection{\boldmath \LOCAL Model}
The model of distributed computing considered in this paper is the well-known \LOCAL model. In this model, the computation proceeds in synchronous rounds, where, at each round, nodes exchange messages with neighbors and perform some local computation. The size of the messages and the computational power of each node are not bounded. Each node in an $n$-node graph has a unique identifier (ID) in $\{1,\dotsc,\poly n\}$.  Initially, each node knows its own ID, its own degree, the maximum degree $\Delta$ in the graph, and the total number of nodes in the graph. Nodes execute the same distributed algorithm, and upon termination, each node outputs its local output (e.g., a color). Nodes correctly solve a distributed problem if and only if their local outputs form a correct global solution (e.g., a proper coloring of the graph). The complexity of an algorithm that solves a problem in the \LOCAL model is measured as the number of rounds it takes such that all nodes give their local output and terminate. Since the size of the messages is not bounded, each node can share with its neighbors all what it knows so far, and hence a $T$ round algorithm in this model can be seen as a mapping of $T$-hop neighborhoods into local outputs. In the randomized version of the \LOCAL model, in addition, each node has access to a stream of random bits. The randomized algorithms that we consider are Monte Carlo ones, that is, a randomized algorithm of complexity $T$ must always terminate within $T$ rounds and the result it produces must be correct with high probability, i.e., with probability at least $1 - 1/n$. The \LOCAL model is quite a strong model of computation, hence lower bounds in this model widely apply on other weaker models as well (such as, on the well-studied \CONGEST model of distributed computation, where the size of the messages is bounded by $O(\log n)$ bits).

\paragraph{Port Numbering Model.}
For technical reasons, in order to show our lower bounds for the \LOCAL model, we first show lower bounds for the weaker \emph{port numbering} (PN) model, and then we lift these results to the \LOCAL model. As in the \LOCAL model, in the PN model the size of the messages and the computational power of a node are unbounded. Differently from the \LOCAL model, in the PN model nodes do not have IDs. Instead, nodes are equipped with a port numbering, that is, each node has assigned a port numbering to all its incident edges. If we denote with $\deg(v)$ the degree of a node $v$, then each incident edge of node $v$ has a number, or a \emph{port}, in $\{1,\dotsc,\deg(v)\}$ assigned in an arbitrary way, such that for any two incident edges it holds that their port number is not the same. For technical reasons, we will also assume that edges are equipped with a port numbering in $\{1, 2\}$, that is, for each edge, we have an arbitrary assignment of port numbers to each of its endpoints, such that, endpoints of the same edge have different port numbers. This essentially results in an arbitrary consistent orientation of the edges. Notice that this technical detail makes the PN model only stronger, hence our lower bounds directly apply on the classical PN model as well.

\paragraph{Special Port Numbering.}
There have been some cases in previous works where, in order to show lower bounds, the authors have made use of a specific input given to the graph (note that this would only make the task of finding lower bounds potentially harder). For example, \cite{BBKOmis} showed lower bounds for MIS and out-degree dominating sets on trees in the case where we are given a $\Delta$-edge coloring in input. Hence, for the sake of being as general as possible, it is convenient to show that some of our main statements (such as \Cref{thm:lifting}) hold also when the ports satisfy some constraints. More precisely, consider a hypergraph where nodes have degree $\Delta$ and hyperedges have rank $\delta$. Let $C^n$ and $C^h$ be two sets, where $C^n \subseteq \{1,\ldots,\delta\}^\Delta$ and $C^h \subseteq \{1,\ldots,\Delta\}^\delta$. The set $C^n$ is the \emph{node port numbering} constraint, and restricts the possible ports that hyperedges incident to the nodes are allowed to have. The set $C^h$ is the \emph{hyperedge port numbering} constraint, and restricts the possible ports that nodes incident to the hyperedges are allowed to have. The node port constraint and the hyperedge port constraint are defined as follows.
\begin{itemize}
	\item Node port constraint $C^n$: Consider a node $v$, and let $h_i$ be the hyperedge connected to port $i$ of node $v$, for $1 \le i \le \Delta$. Let $p(h_i,v)$ be the hyperedge port that connects $h_i$ to $v$, then $(p(h_1,v),\dotsc,p(h_\Delta,v))$ must be contained in $C^n$.
	
	\item Hyperedge port constraint $C^h$: Consider a hyperedge $h$ and let $v_i$ be the node connected to port $i$ of hyperedge $h$, for $1 \le i \le \delta$. Let $p(v_i,h)$ be the node port that connects $v_i$ to $h$, then $(p(v_1,h),\dotsc,p(v_\delta,h))$ must be contained in $C^h$.
\end{itemize}
If $C^n$ and $C^h$ contain all possible tuples, we say that the port numbering is \emph{unconstrained}.

As already mentioned, some specific constrained port assignments can encode inputs, and an example is $\Delta$-edge coloring. Consider a graph $G$. We can define $C^n$ and $C^h$ in such a way that they imply a $\Delta$-edge coloring of $G$. Let $C^n$ be unconstrained, and let $C^h = \{(i,i) ~|~ 1 \le i \le \Delta\}$. In this case, if port $i$ of node $v$ is connected to an edge $e = \{u,v\}$, then the port of $u$ connected to $e$ is also $i$. Hence, we can interpret this port assignment as a coloring, where edge $e$ has color $i$. We may assume that also in the \LOCAL model we have a port numbering assignment, and this assignment may satisfy some constraints $C^{\mathrm{port}}$. Lower bounds obtained in this setting are even stronger.

\subsection{Black-White Formalism}
In this work, we prove lower bounds on graphs, and we provide a general theorem that is useful for lifting lower bounds from the port numbering model to the \LOCAL model. In order to make such a theorem as general as possible, we let it support also problems defined on \emph{hypergraphs}. In particular, we consider problems defined in the black-white formalism, that has already been used in the literature to formally define so-called \emph{locally checkable problems} on these kinds of graphs.

 In this formalism, a problem $\Pi = \Pi_{\Delta,\delta}$ is described by a triple $(\Sigma_{\Pi},\nodeconst_{\Pi},\edgeconst_{\Pi})$, where $\Sigma_{\Pi}$ is a set of labels, and $\nodeconst_{\Pi}$ and $\edgeconst_{\Pi}$ are sets of tuples of size $\Delta$ and $\delta$, respectively, where each element of the tuples is a label from $\Sigma_{\Pi}$. That is, $\nodeconst_{\Pi} \subseteq (\Sigma_{\Pi})^\Delta$, and $\edgeconst_{\Pi} \subseteq (\Sigma_{\Pi})^\delta$. $\nodeconst_{\Pi}$ and $\edgeconst_{\Pi}$ are called \emph{node constraint} and \emph{hyperedge constraint}, respectively. In some cases, we may also refer to them as \emph{white constraint} and \emph{black constraint}, respectively.

Given a hypergraph $G = (V,E)$, and the set of node-hyperedge pairs $P = \{(v,e) \in V \times E ~|~ v \in e \}$, solving $\Pi$ means assigning to each element of $P$ an element from $\Sigma_{\Pi}$ such that:
\begin{itemize}
	\item Let $v$ be any node of degree exactly $\Delta$. Let $L = (\ell_1,\ldots,\ell_\Delta)$ be the labels assigned to the node-hyperedge pairs incident to $v$. A permutation of $L$ must be in $\nodeconst_\Pi$.
	\item Let $e$ be any hyperedge of rank exactly $\delta$. Let $L = (\ell_1,\ldots,\ell_\delta)$ be the labels assigned to the node-hyperedge pairs incident to $e$. A permutation of $L$ must be in $\edgeconst_\Pi$.
\end{itemize}
The complexity of $\Pi$ is the minimum number of rounds required for nodes to produce a correct output.

A special case of the black-white formalism is called node-edge formalism, and is given by considering $\delta=2$. Note that, in this case, there must be two labels assigned to each edge, one for each endpoint. Since any hypergraph can be seen as a $2$-colored bipartite graph, and vice versa, then the black-white formalism allows us to define problems on $2$-colored bipartite graphs as well. In this case, we need to assign a label to each edge, and we have constraints on the labeling of the edges incident on white nodes of degree $\Delta$, and on edges incident on black nodes of degree $\delta$.

We use regular expressions to concisely describe the allowed configurations of a problem. For example, we can write $(\W \s [\X \Y] \s \Z) ~|~  \Z^3$ to denote the constraint $\{(\W,\X,\Z), (\W,\Y,\Z), (\Z, \Z, \Z)\}$. As a shorthand, we may also write this constraint as $\{\W \s \X \s \Z, \W \s \Y \s \Z, \Z \s \Z \s \Z\}$.
Recall that the order does not matter, and hence this constraint also allows the configuration $\W \s \Z \s \X$.
We call parts of regular expressions of the form $[\X \Y]$ \emph{disjunctions}, and while expressions of the form $\W \s [\X \Y] \s \Z$ denote sets of configurations, we may refer to them as \emph{condensed configurations}. If a specific configuration can be obtained by choosing a label in each disjunction, then we say that the configuration is \emph{contained} in the condensed configuration. For example, the configuration $\W \s \X \s \Z$ is contained in the condensed configuration $\W \s [\X \Y] \s \Z$.

The black-white formalism is powerful enough to be able to encode many problems for which the solution can be checked locally, i.e., in $O(1)$ rounds. In this formalism, only nodes of degree $\Delta$ and hyperedges of degree $\delta$ are constrained, while other nodes and hyperedges are not required to satisfy the constraints, but note that proving lower bounds in this restricted setting only makes the lower bounds stronger. We may refer to all problems that can be expressed in this formalism as \emph{locally checkable} problems. We now provide some examples of locally checkable problems.

\paragraph{\boldmath Example: Sinkless Orientation in $2$-Colored Graphs.}
Assume we have a $2$-colored graph. We require nodes of degree $\Delta$ to solve the sinkless orientation problem: nodes must orient their edges, and every node of degree $\Delta$ must have at least one outgoing edge. Nodes of smaller degree are unconstrained. We can define $\Sigma_{\Pi} = \{\B,\W\}$. Since the graph is $2$-colored, then every edge has a black and a white endpoint. An edge labeled $\B$ denotes that the edge is oriented towards the black node, while an edge labeled $\W$ denotes that the edge is oriented towards the white node.
Hence, we can express the requirement of having at least one outgoing edge as follows. The white constraint $\nodeconst_{\Pi}$ is defined as $\B \s [\B\W]^{\Delta-1}$, while the black constraint $\edgeconst_{\Pi}$ is defined as $\W \s [\B\W]^{\Delta-1}$. In other words, white nodes must have at least one edge labeled $\B$, meaning that it is oriented towards a black neighbor, and then all other edges are unconstrained. Similarly, black nodes must have at least one edge labeled $\W$.

\paragraph{Example: Maximal Independent Set.}
Assume we have a graph, where we require nodes of degree $\Delta$ to solve the maximal independent set problem. In order to encode this problem, one may try to use two labels, one for nodes in the set, and the other for nodes not in the set. Unfortunately, this is not possible, as with only two labels it is not possible to encode the maximality requirement~\cite{binary}. Hence, we use three labels, and we define $\Sigma_{\Pi} = \{\M, \P, \U\}$. We define the node constraint $\nodeconst_{\Pi}$ as $\M^\Delta ~|~ \P \s \U^{\Delta-1}$, and the edge constraint $\edgeconst_{\Pi}$ as $\M \s [\U \P] ~|~ \U \s \U$.

In other words, nodes can be part of the independent set and output $\M^\Delta$, or can be outside of the independent set, and in this case they have to prove that they have at least one neighbor in the independent set, by outputting $\P \s \U^{\Delta-1}$. The edge constraint only allows $\P$ to be paired with an $\M$, and hence nodes outputting $\P \s \U^{\Delta-1}$ must have at least one neighbor outputting $\M^\Delta$. Also, nodes of the independent set cannot be neighbors, because $\M \s \M$ is not allowed. Then, every other configuration is allowed by $\edgeconst_{\Pi}$.

\subsection{Round Elimination}
The round elimination technique works as follows. Assume we are given a problem $\Pi_1$, for which we want to prove a lower bound. We show that, given $\Pi_i$, we can construct a problem $\Pi_{i+1}$ that is at least one round easier than $\Pi_i$, unless $\Pi_i$ is already $0$ rounds solvable. If we are able to perform this operation for $T$ times, and prove that for all $i \le T$, the problem $\Pi_i$ is not $0$-rounds solvable, then this implies that $\Pi_1$ is at least $T-1$ rounds harder than $\Pi_T$. Since $\Pi_T$ is not $0$-rounds solvable, this implies that $\Pi_1$ requires at least $T$ rounds.

In order to define this sequence of problems, we exploit \cite[Theorem 4.3]{Brandt2019}, that gives a mechanical way to define $\Pi_{i+1}$ as a function of $\Pi_i$, such that if $\Pi_i$ requires exactly $T$ rounds, then $\Pi_{i+1}$ requires exactly $\max\{T-1,0\}$ rounds.
In its general form, this theorem works for any locally checkable problem defined in the black-white formalism, and given a problem $\Pi$ with complexity $T$, it first constructs an intermediate problem $\Pi'$, and then it constructs a problem $\Pi''$ that can be proved to have complexity $T-1$.

We denote by $\re(\cdot)$ the procedure that can be applied on a problem $\Pi$ to obtain the intermediate problem $\Pi'$, and by $\rere(\cdot)$ the procedure that can be applied on $\Pi'$ to obtain the problem $\Pi''$. Note that $\Pi'' = \rere(\re(\Pi))$.
In \cite{Brandt2019}, $\re(\Pi = (\Sigma_\Pi, \nodeconst_\Pi, \nodeconst_\Pi)) = \Pi' = (\Sigma_{\Pi'}, \nodeconst_{\Pi'}, \edgeconst_{\Pi'})$ is defined as follows. 
\begin{itemize}
	\item $\edgeconst_{\Pi'}$ is defined as follows. Let $C$ be the maximal set such that for all $(L_1, \ldots, L_\delta) \in C$ it holds that, for all $i$, $L_i \in 2^{\Sigma_{\Pi}} \setminus \{\emptyset\}$, and for all $(\ell_1, \ldots, \ell_\delta) \in L_1 \times \ldots \times L_\delta$, it holds that a permutation of $(\ell_1, \ldots, \ell_\delta)$ is in $\edgeconst_\Pi$. In this case, we say that $(L_1, \ldots, L_\delta)$ \emph{satisfies the universal quantifier}. The set $\edgeconst_{\Pi'}$ is obtained by removing all \emph{non-maximal} configurations from $C$, that is, all configurations $(L_1, \ldots, L_\delta) \in C$ such that there exists another configuration $(L'_1, \ldots, L'_\delta) \in C$ and a permutation $\phi$, such that $L_i \subseteq L'_{\phi(i)}$ for all $i$, and there exists at least one $i$ such that the inclusion is strict.
	\item $\Sigma_{\Pi'} \subseteq 2^{\Sigma_{\Pi}}$ contains all the sets that appear at least once in $\edgeconst_{\Pi'}$.
	\item $\nodeconst_{\Pi'}$ is defined as follows. It contains all configurations $(L_1, \ldots, L_\Delta)$ such that, for all $i$, $L_i \in \Sigma_{\Pi'}$, and there exists $(\ell_1, \ldots, \ell_\Delta) \in L_1 \times \ldots \times L_\Delta$, such that that a permutation of $(\ell_1, \ldots, \ell_\Delta)$ is in $\nodeconst_\Pi$. In this case, we say that $(L_1, \ldots, L_\Delta)$ \emph{satisfies the existential quantifier}.
\end{itemize}
We may refer to the computation of $\edgeconst_{\Pi'}$ as \emph{applying the universal quantifier}, and to the computation of $\nodeconst_{\Pi'}$ as \emph{applying the existential quantifier}. When applying the universal quantifier, we may use the term \emph{by maximality} to argue that, in order for a configuration to be maximal, some label must appear in a set of labels.
The hard part in computing $\Pi'$ is applying the universal quantifier. In fact, there is an easy way to compute $\nodeconst_{\Pi'}$, that is the following. Start from all the configurations allowed by $\nodeconst_{\Pi}$, and for each configuration add to $\nodeconst_{\Pi'}$ the condensed configuration obtained by replacing each label $\y$ by the disjunction of all label sets in $\Sigma_{\Pi'}$ that contain $\y$.

The operator $\rere(\cdot)$ is defined in the same way, but the roles of $\nodeconst$ and $\edgeconst$ are reversed, that is, the universal quantifier is applied on $\nodeconst_{\Pi'}$, and the existential quantifier is applied on $\edgeconst_{\Pi'}$.

In \cite[Theorem 4.3]{Brandt2019}, the following is proved.
\begin{theorem}[\cite{Brandt2019}, rephrased]\label{thm:rethm}
	Let $T > 0$. Consider a class $\mathcal{G}$ of hypergraphs with girth at least $2T+2$, and some locally checkable problem $\Pi$. Then, there exists an algorithm that solves problem $\Pi$ in $\mathcal{G}$ in $T$ rounds if and only if there exists an algorithm that solves $\rere(\re(\Pi))$ in $T-1$ rounds.
\end{theorem}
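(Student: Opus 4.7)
The plan is to view any deterministic algorithm in the black-white formalism as a function that, on input the $T$-radius view of a node-hyperedge incidence $(v,e)$ with $v\in e$, outputs a label in $\Sigma_\Pi$; correctness means that the resulting labels satisfy $\nodeconst_\Pi$ at each degree-$\Delta$ node and $\edgeconst_\Pi$ at each rank-$\delta$ hyperedge. The two halves of the theorem are bridged by thinking of $\re$ and $\rere$ as each "shifting" the view by half a round: $\re$ hides the last half-round of information from the hyperedge's perspective, while $\rere$ hides it from the node's perspective, so together they shift by one full round.

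For the forward direction I would start with a $T$-round algorithm $A$ for $\Pi$ and build a $(T-1)$-round algorithm for the intermediate problem $\re(\Pi)$. After $T-1$ rounds a hyperedge $e$ with incident nodes $v_1,\dots,v_\delta$ knows the $(T-1)$-radius view of each $v_i$ but not the boundary extension at radius $T$. For each $i$, let $L_i\subseteq \Sigma_\Pi$ be the set of labels that $A$ could output on $(v_i,e)$ over all completions of the view of $v_i$ to a full $T$-view consistent with the graph class $\mathcal{G}$. The girth assumption $\geq 2T+2$ guarantees that the completions on different sides of $e$ involve pairwise disjoint portions of the graph, so every tuple in $L_1\times\cdots\times L_\delta$ is actually realized by $A$ on some valid global extension, and must therefore lie (up to permutation) in $\edgeconst_\Pi$. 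This is exactly the universal-quantifier condition, and passing to maximal tuples gives a configuration in $\edgeconst_{\re(\Pi)}$. Dually, around any degree-$\Delta$ node $v$, the concrete output of $A$ furnishes an element of $\nodeconst_\Pi$ that picks a witness in each set $L_i$ on $v$'s incidences, verifying the existential condition for $\nodeconst_{\re(\Pi)}$. Then applying $\rere$ performs the symmetric shift and yields a $(T-1)$-round algorithm for $\rere(\re(\Pi))$.

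For the backward direction I would start with a $(T-1)$-round algorithm $A''$ for $\rere(\re(\Pi))$ and extend it to a $T$-round algorithm for $\Pi$ by running $A''$ and then spending one additional communication round to refine the set-valued outputs into concrete $\Sigma_\Pi$-labels. Concretely, $A''$ assigns to each incidence a label that is a set of sets of original labels; with one more round, each white node can read the set-labels on all its incidences and, by the existential quantifier at that node, commit to a tuple in $\nodeconst_\Pi$ drawn from the product of those sets, while each black hyperedge can simultaneously use the universal quantifier to verify that whichever witnesses are chosen, the resulting $\delta$-tuple is in $\edgeconst_\Pi$. Under the girth-$\geq 2T+2$ hypothesis the independent choices made at different nodes can be resolved consistently across hyperedges, yielding a correct $T$-round execution.

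The main obstacle I expect is the careful use of the girth hypothesis: the whole argument rests on the claim that the $T$-radius extensions at the different sides of a hyperedge $e$ touch disjoint parts of the graph, so that any tuple in $L_1\times\cdots\times L_\delta$ is simultaneously realizable. Without girth $\geq 2T+2$ such extensions could share nodes and constrain each other, causing the universal-quantifier step to fail and invalidating the relaxation. A secondary technical subtlety is the maximality pruning in $\edgeconst_{\re(\Pi)}$: one must check that discarding non-maximal tuples is harmless, which comes down to observing that the label sets $L_i$ produced by the simulation are \emph{by construction} maximal (they enumerate \emph{all} valid extensions), and that in the backward direction one may always assume $A''$ uses maximal labels, as replacing any label by a larger valid one preserves the existential condition at nodes and cannot violate the universal condition at hyperedges.
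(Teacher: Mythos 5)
First, a remark on scope: the paper does not prove this statement at all --- it is imported verbatim (rephrased) from \cite{Brandt2019}, Theorem 4.3 --- so your proposal can only be measured against the standard proof of that speedup theorem. Your forward (speedup) direction does follow the standard route: simulate the $T$-round algorithm, let each hyperedge collect, for every incident node, the set of labels realizable over the extensions it cannot see, and use the girth bound to argue that these hidden extensions are disjoint, so every combination is simultaneously realizable and the universal quantifier holds, while the actual run of the algorithm supplies the existential witness at the nodes. One justification there is wrong, though: the sets $L_i$ of realizable outputs are \emph{not} ``by construction maximal'' --- the algorithm may simply never output some label $\ell$ whose addition to $L_i$ would keep the universal property. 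The standard fix is the opposite of your claim: one relaxes the computed configuration upward to a maximal configuration dominating it, and notes that enlarging sets preserves the existential condition at nodes. This is a small, repairable slip.

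The backward direction, however, has a genuine gap. For $\Pi''=\rere(\re(\Pi))$ the quantifiers sit the other way around from how you use them: the node constraint of $\Pi''$ is \emph{universal} over $\nodeconst_{\re(\Pi)}$, and the edge constraint of $\Pi''$ is only \emph{existential} over $\edgeconst_{\re(\Pi)}$. So it is false that a hyperedge ``can use the universal quantifier to verify that whichever witnesses are chosen'' by its incident nodes the resulting tuple lies in $\edgeconst_\Pi$: independent choices made at the $\delta$ endpoints of a hyperedge need not be compatible, and no girth hypothesis can rescue this --- indeed this direction of the theorem requires no girth at all, so your appeal to girth here is a sign the argument is off track. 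The correct decoding is two-staged. In the one extra round, every node incident to a hyperedge $e$ learns the $(T-1)$-round views of all nodes incident to $e$, recomputes the $\Pi''$-labels at $e$, and applies a \emph{common deterministic rule} to select one existential witness $(L_1,\dots,L_\delta)\in\edgeconst_{\re(\Pi)}$ from the product of those labels; the universal node constraint of $\Pi''$ guarantees that the resulting half-edge labels still form a configuration in $\nodeconst_{\re(\Pi)}$ at every node. Only then does each node, with no further communication, resolve the existential quantifier of $\nodeconst_{\re(\Pi)}$ to pick concrete labels from $\Sigma_\Pi$ forming a configuration in $\nodeconst_\Pi$, and now the \emph{universal} edge constraint of $\re(\Pi)$ guarantees that every such choice satisfies $\edgeconst_\Pi$. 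Your sketch collapses these two stages, swaps which side carries the universal guarantee, and attributes the needed coordination to girth; as written, the hyperedge constraint of $\Pi$ could be violated.
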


Note that, in order to prove lower bounds, when computing $\edgeconst_{\Pi'}$, it is not necessary to show that any possible choice over the sets allowed by $\edgeconst_{\Pi'}$ is a configuration allowed by $\edgeconst_{\Pi}$, but it is sufficient to show that there are no maximal configurations that satisfy the universal quantifier that are not in $\edgeconst_{\Pi'}$. In other words, when proving lower bounds, we can add allowed configurations to $\edgeconst_{\Pi'}$, as this is only making $\Pi'$ easier. Hence, we introduce the notion of a relaxation of a configuration (already introduced in~\cite{balliurules}).
\begin{definition}\label{def:noderelax}
	Consider two node configurations $\fA = \A_1 \s \dots \s \A_\Delta$ and $\fB = \B_1 \s \dots \s \B_\Delta$, where the $\A_i$ and $\B_i$ are sets of labels from some label space.
	We say that \emph{$\fA$ can be relaxed to $\fB$} if there exists a permutation $\rho \colon \{ 1, \dots, \Delta \} \to \{ 1, \dots, \Delta \}$ such that $\A_i \subseteq \B_{\rho(i)}$ for all $1 \leq i \leq \Delta$.
	For simplicity, if not indicated otherwise, we will assume that the sets in the two configurations are ordered such that $\rho$ is the identity function. We define relaxations of edge configurations analogously.
\end{definition}
Given two constraints $C$ and $C'$, if all the configurations allowed by $C$ can be relaxed to configurations in $C'$, we say that $C'$ is a \emph{relaxation} of $C$. Also, given two problems $\Pi_1 = (\Sigma_1,\nodeconst_1,\edgeconst_1)$ and  $\Pi_2 = (\Sigma_2,\nodeconst_2,\edgeconst_2)$, if $\nodeconst_2$ is a relaxation of $\nodeconst_1$ and $\edgeconst_2$ is a relaxation of $\edgeconst_1$, then we say that $\Pi_2$ is a relaxation of $\Pi_1$.

\paragraph{\boldmath Example: Sinkless Orientation in $2$-Colored Graphs.}
Recall that this problem can be expressed as follows.
\begin{align*}
 \nodeconst_{\Pi} &= \B \s [\B\W]^{\Delta-1}\\
 \edgeconst_{\Pi} &=\W \s [\B\W]^{\Delta-1}
\end{align*}
We start by computing $\edgeconst_{\Pi'}$. We need to satisfy the universal quantifier, and this implies that it must not be possible to pick $\B$ from $\Delta$ different sets, since we would obtain $\B^\Delta$, that is not allowed by $\edgeconst_{\Pi}$. But we can also notice that, as long as we are forced to pick $\W$ from at least one set, then everything else is allowed. Hence, the universal quantifier is satisfied for all configurations of the form $\{\W\} \s  [\{\W\}  \{\B\}  \{\W,\B\}]^{\Delta-1}$. If we discard non-maximal configurations, we obtain $\{\W\} \s  \{\W,\B\}^{\Delta-1}$. 

We can now rename the sets to make it easier to read the constraint, by using the following mapping:
\begin{align*}
\{\W\} &\rightarrow \W\\
\{\W,\B\} &\rightarrow \B
\end{align*}
 We obtain $\edgeconst_{\Pi'} = \W \s \B^{\Delta-1}$. In order to apply the existential quantifier on $\nodeconst_{\Pi}$, it is enough to replace each label with the disjunction of all the sets that contain that label. Note that $\B$ is only contained in the new label $\{\W,\B\}$, that is renamed to $\B$, while $\W$ is contained in both $\{\W\}$ and $\{\W,\B\}$, that are renamed to $\W$ and $\B$, respectively. Hence, $\nodeconst_{\Pi'} = \{\W,\B\} \s [\{\W\} \{\W,\B\}]]^{\Delta-1} = \B \s [\B \W]^{\Delta-1}$. Summarizing, we obtain the following:
\begin{align*}
	\Sigma_{\Pi'} &= \{\B,\W\} \\ 
	\edgeconst_{\Pi'} &= \W \s \B^{\Delta-1}\\
	\nodeconst_{\Pi'} &= \B \s [\B \W]^{\Delta-1}
\end{align*}

\paragraph{Relation Between Labels.}
Consider a problem $\Pi$ and let $C$ be one of its constraints (either node or hyperedge) containing tuples of size $d$. It can be useful to relate the labels of $\Pi$, according to how easier it is to use a label instead of another, according to $C$.

Consider two arbitrary labels $\X$, $\Y$ of $\Pi$. We say that $\X$ is \emph{at least as strong as} $\Y$ according to $C$, if the following holds. For all configurations $(\ell_1,\ldots,\ell_i,\ldots,\ell_d) \in C$ such that there exists an $i$ such that $\ell_i = \Y$, a permutation of the configuration $(\ell_1,\ldots,\X,\ldots,\ell_d)$ is in  $C$. That is, for all configurations in $C$ containing $\Y$, if we replace an arbitrary amount of $\Y$ with $\X$, then we obtain a configuration that is still in $C$. If the constraint is clear from the context, we may omit it and just say that $\X$ is at least as strong as $\Y$. If $\X$ is at least as strong as $\Y$, we also say that $\Y$ is \emph{at least as weak as} $\X$, or $\Y \le \X$. If $\X$ is at least as strong as $\Y$, but $\Y$ is not at least as strong as $\X$, then $\X$ is \emph{stronger} than $\Y$, and $\Y$ is \emph{weaker} than $\X$, that we may also denote as $\Y < \X$. Intuitively, stronger labels are easier to use, if we only consider the requirements of $C$, because we can always replace a weaker label by a stronger one. 

It can be useful to illustrate the strength of the labels by using a directed graph. The \emph{diagram} of $\Pi$ according to $C$ is a directed graph where nodes are labels of $\Pi$, and there is an edge $(\Y,\X)$ if and only if:
\begin{itemize}
	\item $\X \neq \Y$;
	\item $\X$ is at least as strong as $\Y$;
	\item there is no label $\Z$ different from $\X$ and $\Y$ that is stronger than $\Y$ and weaker than $\X$.
\end{itemize}
In other words, we put edges connecting weaker labels to stronger labels, but we omit relations that can be obtained by transitivity. Note that the obtained graph is not necessarily connected, but it is directed acyclic, unless $\Pi$ contains two labels of equal strength. We may refer to the diagram according to $\nodeconst$ as the \emph{node diagram}, and to the diagram according to $\edgeconst$ as the \emph{hyperedge diagram} (or \emph{edge diagram}, in the case of graphs). The \emph{successors} of some label $\ell$ w.r.t.\  some constraint $C$ are defined to be all labels that are at least as strong as $\ell$ according to $C$, or in other words all labels reachable from $\ell$ in the diagram according to $C$. 

When applying the round elimination technique, and in particular when performing the universal quantifier, it will be useful to consider the diagram of the constraint on which the universal quantifier is applied to. As we will see later, this diagram limits the possible labels appearing in the new constraint. 
As an example, \Cref{fig:mis} depicts the edge diagram of the MIS problem, that is, the diagram according to $\edgeconst_{\Pi} = \M \s [\U \P] ~|~ \U \s \U$. There is an edge from $\P$ to $\U$ because the only edge configuration containing $\P$ is $\M \s \P$, and if we replace $\P$ with $\U$, we obtain $\M \s \U$, which is also allowed by $\edgeconst_{\Pi}$. Conversely, there is no edge from $\U$ to $\P$, because $\U \s \U$ is allowed, but by replacing an arbitrary amount of $\U$ with $\P$ we obtain configurations not in $\edgeconst_{\Pi}$.

\begin{figure}[h]
	\centering
	\includegraphics[width=0.14\textwidth]{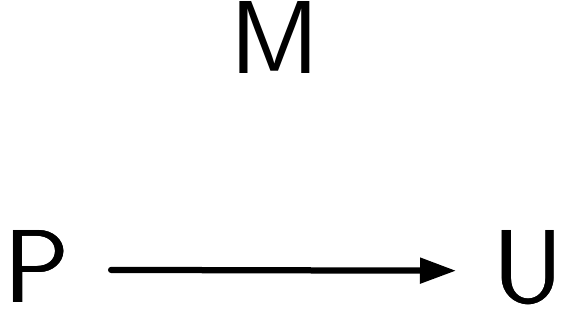}
	\caption{The edge diagram of the MIS problem: $\U$ is stronger than $\P$, and there is no relation between labels $\M$ and $\P$, and between $\M$ and $\U$.}
	\label{fig:mis}
\end{figure}

\paragraph{Additional Notation.}
We now introduce some additional notation that can be useful to concisely describe a set of labels.
Given a set $\{ \ell_1, \dots, \ell_p \} \subseteq \Sigma_\Pi$, we denote by $\gen{\ell_1, \dots, \ell_p}$ the set that contains all labels $\ell$ in $\Sigma_\Pi$ satisfying that there exists an $i$ such that $\ell$ is at least as strong as $\ell_i$. In other words, $\gen{\ell_1, \dots, \ell_p}$ contains the labels $\ell_1, \ldots, \ell_p$, plus all labels that are reachable from at least one of them in the diagram of $\Pi$ (i.e., all the successors of $\ell_1, \ldots, \ell_p$).

In order to properly define $\gen{}$, we need to specify to which constraint the strength relation refers to. If the argument of $\gen{}$ contains labels of some problem $\Pi$ on which we are applying the operator $\re(\cdot)$, then $\gen{}$ refers to its hyperedge constraint, while if the argument of $\gen{}$ contains labels of an intermediate problem $\Pi'$, that is, a problem on which we are applying the operator $\rere(\cdot)$, then $\gen{}$ refers to its node constraint. In other words, we always consider the constraint on which we apply the universal quantifier. That is, if we start from $\Pi$ and compute $\Pi' = \re(\Pi)$, then $\gen{}$ refers to $\edgeconst_{\Pi}$, while when computing $\Pi'' = \rere(\Pi')$, then  $\gen{}$ refers to $\nodeconst_{\Pi'}$.

When computing $\rere(\re(\Pi))$ it will be useful to use expressions such as $\gen{\gen{\ell}}$, where $\ell$ is a label of $\Pi$. This expression represents a set of sets of labels from $\Sigma_\Pi$, where $\gen{\ell}$ is computed according to $\edgeconst_{\Pi}$, and $\gen{\gen{\ell}}$ is then taken according to $\nodeconst_{\Pi'}$.

A set $S = \{ \ell_1, \dots, \ell_p \} \subseteq \Sigma_\Pi$ is called \emph{right-closed} if $S = \gen{\ell_1, \dots, \ell_p}$. That is, $S$ is right-closed if and only if for each label $\ell_i$ contained in $S$ also all successors of $\ell_i$ in the diagram (taken w.r.t.\ the constraint on which the universal quantifier is applied to) are contained in $S$.
In \cite{balliurules}, the following statements have been proved.
\begin{observation}[\cite{balliurules}]\label{obs:rcs}
	Consider an arbitrary collection of labels $\A_1, \dots, \A_p \in \Sigma_\Pi$.
	If $\{ \A_1, \dots, \A_p \} \in \Sigma_{\re(\Pi)}$, then the set $\{ \A_1, \dots, \A_p \}$ is right-closed (w.r.t.\ $\edgeconst_{\Pi}$).
	If $\{ \A_1, \dots, \A_p \} \in \Sigma_{\rere(\Pi')}$, then the set $\{ \A_1, \dots, \A_p \}$ is right-closed (w.r.t.\ $\nodeconst_{\Pi'}$).
\end{observation}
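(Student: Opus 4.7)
The plan is to prove both statements by a maximality argument via contradiction, using essentially the same idea for $\re$ and $\rere$.

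First I would handle the $\re(\Pi)$ case. Suppose $S := \{\A_1, \dots, \A_p\} \in \Sigma_{\re(\Pi)}$. By the definition of $\re$, a label of $\re(\Pi)$ is any set of labels that appears in at least one tuple in $\edgeconst_{\re(\Pi)}$, so there exists a maximal configuration $(L_1, \dots, L_\delta) \in \edgeconst_{\re(\Pi)}$ and an index $i$ with $L_i = S$. Suppose for contradiction that $S$ is not right-closed w.r.t.\ $\edgeconst_\Pi$: some $\A_j \in S$ has a successor $\A^\star$ in the edge diagram of $\Pi$ with $\A^\star \notin S$. I would then consider the enlarged tuple $(L_1, \dots, L_{i-1}, L_i \cup \{\A^\star\}, L_{i+1}, \dots, L_\delta)$ and show that it still satisfies the universal quantifier, which contradicts the maximality of $(L_1, \dots, L_\delta)$ (and hence contradicts $S \in \Sigma_{\re(\Pi)}$).

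The key step is verifying the universal quantifier for the enlarged tuple. Pick any selection $(\ell_1, \dots, \ell_\delta)$ with $\ell_k \in L_k$ for $k \ne i$ and $\ell_i \in L_i \cup \{\A^\star\}$. If $\ell_i \ne \A^\star$, the selection already lies in the original tuple, so a permutation of it is in $\edgeconst_\Pi$ by assumption. If $\ell_i = \A^\star$, I compare with the selection obtained by using $\A_j$ in place of $\A^\star$ at position $i$: this alternative selection is a legal choice from $(L_1, \dots, L_\delta)$, so a permutation of it lies in $\edgeconst_\Pi$. Since $\A^\star$ is a successor of $\A_j$, i.e.\ at least as strong as $\A_j$ w.r.t.\ $\edgeconst_\Pi$, the definition of "at least as strong as" lets me replace that occurrence of $\A_j$ by $\A^\star$ while staying (up to permutation) inside $\edgeconst_\Pi$, which is exactly what is needed. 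This yields the desired contradiction.

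The $\rere(\Pi')$ case is structurally identical after swapping the roles of the two constraints: the operator $\rere$ applies the universal quantifier to $\nodeconst_{\Pi'}$ and keeps only maximal tuples, so the same argument with $\edgeconst_\Pi$ replaced by $\nodeconst_{\Pi'}$ and $\delta$ replaced by $\Delta$ forces any set in $\Sigma_{\rere(\Pi')}$ to be right-closed w.r.t.\ $\nodeconst_{\Pi'}$. The main obstacle, and the only point requiring care, is being precise about "up to permutation": I would fix, once and for all, a permutation witnessing that the alternative selection lies in the relevant constraint, and then observe that the strength definition guarantees the same permutation still works after a single label substitution.
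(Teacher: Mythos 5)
Your argument is correct: the maximality-based contradiction (enlarging the set by a stronger label and checking that the universal quantifier is preserved via the definition of ``at least as strong'') is exactly the intended proof of this observation, which the paper itself only cites from \cite{balliurules} without reproving. The symmetric treatment of $\re$ and $\rere$ and your care with the permutation witness are both fine, since the constraints are closed under permutation anyway.
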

\begin{observation}[\cite{balliurules}]\label{obs:subsetarrow}
	Let $\U, \W \in  \Sigma_{\re(\Pi)}$ be two sets satisfying $\U \subseteq \W$.
	Then $\W$ is at least as strong as $\U$ according to $\nodeconst_{\re(\Pi)}$.
	In particular, for any label $\A \in \Sigma_\Pi$ such that $\gen{\A} \in \Sigma_{\re(\Pi)}$, every set $\X \in \Sigma_{\re(\Pi)}$ containing $\A$ is contained in $\gen{\gen{\A}}$.
	
	Analogous statements hold for $\rere(\cdot)$ instead of $\re(\cdot)$, by considering the label strength according to $\edgeconst_{\rere(\cdot)}$.
\end{observation}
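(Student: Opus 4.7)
The plan is to unfold the defining existential quantifier of $\nodeconst_{\re(\Pi)}$ and show that a witness for a configuration containing $\U$ is automatically a witness for the configuration obtained by replacing $\U$ by any superset $\W$. Concretely, to establish that $\W$ is at least as strong as $\U$ in $\nodeconst_{\re(\Pi)}$, I would take an arbitrary node configuration $(L_1, \dots, L_\Delta) \in \nodeconst_{\re(\Pi)}$ with some occurrences of $\U$, say $L_{i_1} = \dots = L_{i_k} = \U$, and let $(\ell_1, \dots, \ell_\Delta) \in L_1 \times \dots \times L_\Delta$ be a witness tuple whose permutation lies in $\nodeconst_\Pi$. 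I would then replace the labels $L_{i_1}, \dots, L_{i_k}$ by $\W$ and argue that the same tuple $(\ell_1, \dots, \ell_\Delta)$ still witnesses membership in the existential quantifier, because $\ell_{i_j} \in \U \subseteq \W$ for each $j$. Hence the modified configuration satisfies the existential quantifier and lies in $\nodeconst_{\re(\Pi)}$, which is exactly what being ``at least as strong'' requires.

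For the second statement, I would combine the first part with \Cref{obs:rcs}. Fix $\A \in \Sigma_\Pi$ with $\gen{\A} \in \Sigma_{\re(\Pi)}$ and let $\X \in \Sigma_{\re(\Pi)}$ contain $\A$. By \Cref{obs:rcs}, $\X$ is right-closed with respect to $\edgeconst_\Pi$, so every successor of $\A$ in the edge diagram of $\Pi$ lies in $\X$; this yields $\gen{\A} \subseteq \X$. Applying the first part of the observation to $\U = \gen{\A}$ and $\W = \X$ shows that $\X$ is at least as strong as $\gen{\A}$ according to $\nodeconst_{\re(\Pi)}$. By the very definition of $\gen{\cdot}$ in the intermediate problem (which refers to the constraint on which the universal quantifier is applied, namely $\nodeconst_{\re(\Pi)}$ when computing $\rere(\re(\Pi))$), every set at least as strong as $\gen{\A}$ is a member of $\gen{\gen{\A}}$, so $\X \in \gen{\gen{\A}}$, as claimed.

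I do not expect any real obstacle here: the first claim is essentially a one-line unfolding of how the existential quantifier in $\re(\cdot)$ is defined, and the second claim is a direct composition of right-closedness (\Cref{obs:rcs}) with the first claim. The only point requiring some care is being precise about \emph{which} constraint the operator $\gen{\cdot}$ refers to in each use, since the paper emphasizes that its meaning depends on whether we are in $\Pi$ or in $\re(\Pi)$; keeping this bookkeeping straight is what makes the second assertion a clean corollary of the first rather than a separate argument. The analogous statement with $\re$ and $\rere$ swapped follows by the same argument with the roles of $\nodeconst$ and $\edgeconst$ exchanged.
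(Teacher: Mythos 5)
Your proof is correct and is exactly the standard argument: the paper imports this observation from \cite{balliurules} without reproving it, and the intended proof there is the same unfolding of the existential quantifier (a witness tuple survives replacing $\U$ by a superset $\W$), combined with right-closedness from \Cref{obs:rcs} to get $\gen{\A}\subseteq\X$ and hence $\X\in\gen{\gen{\A}}$. Your care about which constraint $\gen{\cdot}$ refers to at each level is the right bookkeeping and matches the paper's conventions.
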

For a set $S = \{ \ell_1, \dots, \ell_p \} \subseteq \Sigma_\Pi$ of labels, we denote by $\dis(S)$ the disjunction $[\ell_1 \dots \ell_p]$. For instance, $\dis(\gen{\ell})$ is the disjunction of all labels that are at least as strong as $\ell$.

\section{$\Delta$-Coloring Fixed Point and Its Applications}\label{sec:deltacoloring}
In this section, we define a problem $\Pi_\Delta$ that is at least as easy as the $\Delta$-coloring problem, and show that $\Pi_\Delta$ is a non-trivial fixed point under the round elimination framework. It is known that a non-trivial fixed point directly implies an $\Omega(\log_\Delta n)$ lower bound for deterministic algorithms, and an $\Omega(\log_\Delta \log n)$ lower bound for randomized algorithms, in the \LOCAL model, but we show this formally in \Cref{sec:lifting}.
 Then, we also describe the relation of $\Pi_\Delta$ with the maximal independent set problem.

The problem $\Pi_\Delta$ is defined as follows. Consider the set of colors $\ccs = \{1, \ldots, \Delta\}$. The label set $\Sigma_{\Pi_{\Delta}}$ of $\Pi_\Delta$ is defined as $\Sigma_{\Pi_{\Delta}} = \{ \ell(\ccc) ~|~ \ccc \in 2^\ccs \}$. The node constraint $\nodeconst_{\Pi_{\Delta}}$ of $\Pi_\Delta$ contains the following allowed configurations:
\[
	\ell(\ccc)^{\Delta - x} \s \ell(\emptyset)^x, \text{ for all $\ccc \in 2^\ccs \setminus \{\emptyset\}$, where $x = |\ccc|-1$. }
\]
The edge constraint $\edgeconst_{\Pi_{\Delta}}$ of $\Pi_\Delta$ contains the following allowed configurations:
\[
	\ell(\ccc_1) \s \ell(\ccc_2), \text{for all $\ccc_1,\ccc_2 \in 2^\ccs$ such that $\ccc_1 \cap \ccc_2 = \emptyset$.}
\]
In \Cref{ssec:fixpoint}, we will prove that $\Pi_\Delta$ is a fixed point. Note that $\Pi_\Delta$ is not $0$-round solvable, since all configurations allowed by the node constraint contain at least one label $\L=\ell(\ccc)\neq \ell(\emptyset)$, and $\L\s\L\notin\edgeconst_{\Pi_{\Delta}}$. We now provide an explicit example of $\Pi_\Delta$ where $\Delta=3$, and we give some intuition behind its definition.

\paragraph{\boldmath Example: $\Pi_3$.}
Consider the case where $\Delta=3$, and let us use letters to denote colors, that is, $\ccs = \{\A,\B,\C\}$. To simplify the notation, we rename each label $\ell(\{x_1, \ldots, x_k\})$ corresponding to a non-empty set of colors as $\mybox{x_1 \ldots x_k}$. Also, we rename $\ell(\emptyset)$ as $\X$. For example, the label $\ell(\{\A,\B\})$ becomes $\mybox{\A\B}$.
The problem $\Pi_3$ is defined as follows:

\begin{equation*}
	\begin{aligned}
		\begin{split}
			\nodeconst_{\Pi_{3}}\text{:}\\
			&\mybox{\A} &\s& \mybox{\A} &\s& \mybox{\A} \\
			&\mybox{\B} &\s& \mybox{\B} &\s& \mybox{\B} \\
			&\mybox{\C} &\s& \mybox{\C} &\s& \mybox{\C} \\
			&\mybox{\A\B} &\s& \mybox{\A\B} &\s& \X \\
			&\mybox{\A\C} &\s& \mybox{\A\C} &\s& \X \\
			&\mybox{\B\C} &\s& \mybox{\B\C} &\s& \X \\
			&\mybox{\A\B\C} &\s& \X &\s& \X \\\\
		\end{split}
		\qquad
		\begin{split}
			\edgeconst_{\Pi_{3}}\text{:}\\
			&\X &\s& [\X\mybox{\A}\mybox{\B}\mybox{\C}\mybox{\A\B}\mybox{\B\C}\mybox{\A\C}\mybox{\A\B\C}] \\
			&\mybox{\A} &\s& [\X\mybox{\B}\mybox{\C}\mybox{\B\C}]\\
			&\mybox{\B} &\s& [\X\mybox{\A}\mybox{\C}\mybox{\A\C}]\\
			&\mybox{\C} &\s& [\X\mybox{\A}\mybox{\B}\mybox{\A\B}]\\
			&\mybox{\A\B} &\s& [\X\mybox{\C}]\\
			&\mybox{\A\C} &\s& [\X\mybox{\B}]\\
			&\mybox{\B\C} &\s& [\X\mybox{\A}]\\
			&\mybox{\A\B\C} &\s& \X\\
		\end{split}
	\end{aligned}
\end{equation*}
The intuition behind this problem is the following. We start from the standard definition of $3$-coloring, and we add allowed configurations that ``reward'' nodes that have many neighbors of the same color. For example, if a node $v$ outputs the configuration $\mybox{\A\B} \s \mybox{\A\B} \s \X$, it implies that $v$ has at least $2$ neighbors that output neither $\A \s \A \s \A$ nor $\B \s \B \s \B$. We can think of $v$ as being colored \emph{both $\A$ and $\B$}. Node $v$ is rewarded for this, and it is allowed to output $\X$ on one of its ports. Label $\X$ can be seen as a wildcard: a label that is compatible with everything. In particular, this means that $\Pi_3$ can be seen as a variant of arbdefective coloring, where nodes that control many colors are allowed to have a larger arbdefect. 

\paragraph{Relation with MIS}
It is known that if a problem can be relaxed to a problem that is a fixed point under the round elimination framework, then it requires $\Omega(\log_\Delta n)$ rounds in the \LOCAL model for deterministic algorithms, and $\Omega(\log_\Delta \log n)$ rounds for randomized algorithms.
Hence, it is perhaps surprising that a fixed point can be helpful to prove lower bounds for problems that can be solved in $O(f(\Delta) + \log^*n)$ for some function $f$, that is, problems that are much easier. We now provide the intuition behind this phenomenon, by considering the case of MIS (similar observations can be made for ruling sets and for arbdefective colored ruling sets).

Already in \cite{balliurules}, it has been shown that, when applying the round elimination technique on the MIS problem for many times, we get a problem that looks like the following:
\begin{itemize}
	\item There is a part of the problem that can be naturally described, that is, after doing $k$ steps of round elimination, the problem becomes a relaxation of the $k$-coloring problem.
	\item There is a part of the problem that cannot be naturally described, that seems to be just an artifact of the round elimination technique. In particular, after $k$ steps, the obtained problem also contains what one would obtain by applying the round elimination technique on the $(k-1)$-coloring problem.
\end{itemize}
Unfortunately, by applying the round elimination technique on coloring problems, we usually obtain problems whose description grows exponentially \emph{at each step} of round elimination. This makes it infeasible to apply the technique for more than just a few steps. Hence, we need to find a way to avoid this exponential growth, that is, find a relaxation of $k$-coloring such that if we apply the technique, we don't get a problem that is much larger. This is exactly what a fixed point for $k$-coloring (for $k \le \Delta$) gives us, since, by applying round elimination on this problem, we obtain the problem itself. Hence, the idea that we use to prove a lower bound for MIS (and other problems) is the following:
\begin{itemize}
	\item By applying the round elimination technique on MIS, at each step, the number of colors grows by $1$, and intuitively this means that the MIS problem requires $\Omega(\Delta)$, since $k$-coloring is a hard problem, for $k\le\Delta$. To prove this, we need to show that the ``unnatural'' part of the problem does not help to solve the problem faster.
	\item The unnatural part would grow exponentially at each step, but we relax it in the same way as we do in the $k$-coloring fixed point. In this way, at step $k+1$, we obtain a $(k+1)$-coloring, plus something that can be relaxed to the configurations allowed in the fixed point of the $(k+1)$-coloring problem. This allows us to \emph{concisely} describe the problem obtained at each step.
\end{itemize}
Hence, the problem obtained by applying the round elimination technique on MIS for $k$ times can be described as follows: there is a $k$-coloring component, plus something that makes the colors grow by $1$ at each new step, plus something related to applying the round elimination technique on $k'$-coloring for $k' < k$. In order to obtain our linear-in-$\Delta$ lower bound for MIS, we essentially embed, into the description of the intermediate problems, a proof that $k$-coloring is hard.

\subsection{\boldmath A Fixed Point For $\Delta$-Coloring}\label{ssec:fixpoint}
In order to simplify the proofs, instead of showing that $\Pi_\Delta$ is a non-trivial fixed point, we do the following. We define a problem $\Pi$, that is essentially equivalent to $\Pi_\Delta$, where the only difference is that the node constraint of $\Pi$ is defined in a slightly different way. Then, we show that $\Pi$ is a non-trivial fixed point. Hence, problem $\Pi$ is not $0$-round solvable, and it is a relaxation of the $\Delta$-coloring problem, i.e., a solution for $\Delta$-coloring is also a solution for $\Pi$, but the reverse does not necessarily hold. Then, we show that, by applying two round elimination steps on $\Pi$, we obtain the same problem $\Pi$, implying that $\Pi$ is a non-trivial fixed point. 
 
\paragraph{\boldmath Problem $\Pi$.}
As in the case of $\Pi_\Delta$, the label set is defined as $\Sigma_{\Pi}=\{\ell(\ccc)~|~\ccc\in 2^\ccs\}$, where $\ccs=\{1,\dotsc,\Delta\}$. The edge constraint $\edgeconst_{\Pi}$ is also defined in the same way. It contains all the configurations of the following form:
\[
\ell(\ccc_1) \s \ell(\ccc_2), \text{for all $\ccc_1,\ccc_2 \in 2^\ccs$ such that $\ccc_1 \cap \ccc_2 = \emptyset$.}
\]
The node constraint $\nodeconst_{\Pi}$ is defined slightly differently. It contains all the configurations of the following form:
\begin{align*}
	\ell(\ccc_1) \s \ell(\ccc_2)\s \ldots \s \ell(\ccc_\Delta),& \text{ such that }\\
	\exists k \in \{1,\dotsc,\Delta\} \text{ and } 1\le i_1< i_2 < \dotsc <i_k\le \Delta,& \text{ such that }\\
	|\ccc_{i_1}\cap \ccc_{i_2}\cap\dotsc\cap \ccc_{i_k}|\ge \Delta -k+1.&
\end{align*}
On the one hand, notice that $\nodeconst_{\Pi}$ allows strictly more configurations than the node constraint of $\Pi_\Delta$. On the other hand, given a solution of $\Pi_\Delta$, nodes can solve $\Pi$ in $0$ rounds by outputting the intersection of the $k$ sets on $k$ ports, and the empty set on the remaining ones. Hence, $\Pi$ and $\Pi_\Delta$ are equivalent. We start by observing some properties of the configurations of $\Pi$.

\begin{observation}\label{obs:supersets}
	If $\ell(\ccc_1) \s \ell(\ccc_2)\s \ldots \s \ell(\ccc_\Delta)$ is a configuration in $\nodeconst_{\Pi}$, then also $\ell(\ccc'_1) \s \ell(\ccc'_2)\s \ldots \s \ell(\ccc'_\Delta)$, where, for all $1\le j\le\Delta$, $\ccc_j\subseteq \ccc'_j$, is a configuration in $\nodeconst_{\Pi}$.
\end{observation}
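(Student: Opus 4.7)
The plan is to reuse the witness. By assumption, since $\ell(\ccc_1)\s\ldots\s\ell(\ccc_\Delta)\in\nodeconst_{\Pi}$, there exist an integer $k\in\{1,\dotsc,\Delta\}$ and indices $1\le i_1<i_2<\dotsc<i_k\le \Delta$ such that
\[
|\ccc_{i_1}\cap\ccc_{i_2}\cap\dotsc\cap\ccc_{i_k}|\ge \Delta-k+1.
\]
I will use the same $k$ and the same indices $i_1,\dotsc,i_k$ as a witness for the configuration $\ell(\ccc'_1)\s\ldots\s\ell(\ccc'_\Delta)$.

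The only step to verify is that the intersection along these indices does not shrink when each $\ccc_j$ is replaced by a superset $\ccc'_j$. This is the standard monotonicity of intersection with respect to inclusion: since $\ccc_{i_j}\subseteq \ccc'_{i_j}$ for every $1\le j\le k$, we have $\ccc_{i_1}\cap\dotsc\cap\ccc_{i_k}\subseteq \ccc'_{i_1}\cap\dotsc\cap\ccc'_{i_k}$, and therefore
\[
|\ccc'_{i_1}\cap\ccc'_{i_2}\cap\dotsc\cap\ccc'_{i_k}|\ge |\ccc_{i_1}\cap\ccc_{i_2}\cap\dotsc\cap\ccc_{i_k}|\ge \Delta-k+1.
\]
Hence the defining condition of $\nodeconst_{\Pi}$ is met by $\ell(\ccc'_1)\s\ldots\s\ell(\ccc'_\Delta)$ with the same witness, which proves the observation.

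There is no real obstacle here: the statement is essentially a formal record that $\nodeconst_{\Pi}$ is upward closed under coordinatewise superset, and the proof is a direct consequence of how the constraint is phrased in terms of the size of an intersection of chosen sets.
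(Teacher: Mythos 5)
Your proof is correct and is exactly the argument the paper implicitly relies on (the observation is stated without proof there): keeping the same witness indices $i_1,\dots,i_k$ and using monotonicity of intersections under supersets immediately preserves the cardinality condition defining $\nodeconst_{\Pi}$.
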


\begin{observation}\label{obs:subsets}
	If $\ell(\ccc_1) \s \ell(\ccc_2)$ is a configuration in $\edgeconst_{\Pi}$, then also $\ell(\ccc'_1) \s \ell(\ccc'_2)$, where, for all $1\le j\le2$, $\ccc'_j\subseteq \ccc_j$, is a configuration in $\edgeconst_{\Pi}$.
\end{observation}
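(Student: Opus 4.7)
The plan is very short because the claim is a direct unfolding of definitions: the edge constraint $\edgeconst_{\Pi}$ consists precisely of the pairs $\ell(\ccc_1)\s\ell(\ccc_2)$ with $\ccc_1 \cap \ccc_2 = \emptyset$, and disjointness is manifestly preserved when one passes to subsets. So the only work is to spell out this set-theoretic inheritance.

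First, I would fix an arbitrary configuration $\ell(\ccc_1)\s\ell(\ccc_2) \in \edgeconst_{\Pi}$ together with subsets $\ccc'_1 \subseteq \ccc_1$ and $\ccc'_2 \subseteq \ccc_2$. By the definition of $\edgeconst_{\Pi}$ stated right above the observation, the hypothesis yields $\ccc_1 \cap \ccc_2 = \emptyset$. The basic set-theoretic inclusion $\ccc'_1 \cap \ccc'_2 \subseteq \ccc_1 \cap \ccc_2$ then forces $\ccc'_1 \cap \ccc'_2 = \emptyset$, which is exactly the defining property that makes $\ell(\ccc'_1)\s\ell(\ccc'_2)$ an element of $\edgeconst_{\Pi}$. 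That finishes the argument.

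There is no genuine obstacle; a ``hard part'' would appear only if $\edgeconst_{\Pi}$ imposed extra side conditions beyond disjointness of the color sets, but it does not. I would place this observation right next to \Cref{obs:supersets} because the two play dual monotonicity roles: node constraints of $\Pi$ are upward-closed under superset of each color set, while edge constraints are downward-closed under subset of each color set. Both statements will presumably be used in \Cref{ssec:fixpoint} when verifying maximality/compatibility conditions in the application of $\re(\cdot)$ and $\rere(\cdot)$ to $\Pi$.
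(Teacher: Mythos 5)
Your proof is correct and matches the paper's intent exactly: the observation is stated without proof precisely because it is an immediate consequence of the definition of $\edgeconst_{\Pi}$ (disjointness of the color sets), which is preserved under passing to subsets. Nothing more is needed.
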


\paragraph{\boldmath Labels of $\Pi'$.} Let $\Pi'=\re(\Pi)$. By definition, $\Sigma_{\Pi'} \subseteq 2^{\Sigma_{\Pi}}$, that is, $\Sigma_{\Pi'}$ consists of a set of sets of labels in $\Sigma_{\Pi}$. Before characterizing the node and edge constraints of $\Pi'$, we perform a renaming of the labels of problem $\Pi'$, in the following way. Each label set $\{\ell(\ccc_1),\ldots,\ell(\ccc_h)\}$ is renamed to label $\ell(\ccc)=\ell(\bigcup_{1\le j\le h}\ccc_j)$, that is, a label that consists of sets of colors is mapped to a label that consists of the union of the sets of colors. This renaming gives a label-set that contains labels of the form $\ell(\ccc)$, where $\ccc\in 2^\ccs$ and $\ccs=\{1,\dotsc,\Delta\}$. Hence, after this renaming, $\Sigma_{\Pi'}=\Sigma_{\Pi}$. Note that different original labels, such that the union of their sets gives the same set, are mapped to the same label.

\paragraph{\boldmath Edge Constraint of $\Pi'$.}
We now characterize the edge constraint $\edgeconst_{\Pi'}$ of $\Pi'$. 
First of all, by definition of the round elimination framework, $\edgeconst_{\Pi'}$ contains all configurations $\L_1 \s \L_2$ such that, for any choice in $\L_1$, and for any choice in $\L_2$, we get a configuration that is in $\edgeconst_{\Pi}$. Hence, under the renaming that we performed, $\edgeconst_{\Pi'}$ must contain at least all the configurations in $\edgeconst_{\Pi}$. We claim that $\edgeconst_{\Pi'}$ does not contain any additional allowed configuration.

Suppose, for a contradiction, that there exists a configuration $\ell(\ccc_1)\s \ell(\ccc_2)$ in $\edgeconst_{\Pi'}$ such that $\ccc_1\cap \ccc_2 \neq \emptyset$. This means that, before the renaming, there exists a configuration $\L_1\s\L_2$ in $\edgeconst_{\Pi'}$ such that $\L_1$ contains a set $S_1$ containing $\ell(\ccc'_1)$, and $\L_2$ contains a set $S_2$ containing $\ell(\ccc'_2)$, such that $\ccc'_1\cap\ccc'_2\neq \emptyset$. Hence, we get that there exists a choice resulting in the configuration $\ell(\ccc'_1)\s \ell(\ccc'_2)$, where $\ccc'_1\cap \ccc'_2 \neq \emptyset$, which does not satisfy any configuration in $\edgeconst_{\Pi}$, which is a contradiction. Hence, $\edgeconst_{\Pi'}=\edgeconst_{\Pi}$.

\paragraph{\boldmath Node Constraint of $\Pi'$.}
We now characterize the node constraint $\nodeconst_{\Pi'}$ of $\Pi'$. 
As shown in \Cref{sec:preliminaries}, the node constraint $\nodeconst_{\Pi'}$ can be constructed in the following way: take all the configurations allowed by the node constraint $\nodeconst_{\Pi}$, and replace each label $\ell(\ccc)$ with the disjunction of all the labels of $\Pi'$ that, before the renaming, contain $\ell(\ccc)$. Note that, after the renaming, the labels that contain $\ell(\ccc)$ are exactly all the labels $\ell(\ccc')$ such that $\ccc \subseteq \ccc'$. Hence, from each configuration $\ell(\ccc_1) \s \ldots \s \ell(\ccc_\Delta)$ in $\nodeconst_{\Pi}$, we obtain the configuration itself, plus all the configurations $\ell(\ccc'_1) \s \ldots \s \ell(\ccc'_\Delta)$ such that for all $1\le j\le\Delta$, $\ccc_j \subseteq \ccc'_j$. By Observation \ref{obs:supersets}, these configurations are also in $\nodeconst_{\Pi}$. Hence $\nodeconst_{\Pi'}=\nodeconst_{\Pi}$.

\paragraph{\boldmath Labels of $\Pi''$.} 
In order to prove that $\Pi$ is a fixed point, we show that problem  $\Pi''$, defined as $\Pi''=\re(\Pi')=\rere(\re(\Pi))$, after a specific renaming, is the same as problem $\Pi$.
By definition, $\Sigma_{\Pi''} \subseteq 2^{\Sigma_{\Pi'}}$. We rename the labels in $\Sigma_{\Pi''}$ as follows. Each label set $\{\ell(\ccc_1),\ldots,\ell(\ccc_h)\}$ is renamed to label $\ell(\ccc)=\ell(\bigcap_{1\le j\le h}\ccc_j)$, that is, a label that consists of sets of colors is mapped to a label that consists of the intersection of the sets of colors. This renaming gives a label-set that contains labels of the form $\ell(\ccc)$, where $\ccc\in 2^\ccs$ and $\ccs=\{1,\dotsc,\Delta\}$. Hence, after this renaming, $\Sigma_{\Pi''}=\Sigma_{\Pi}$. Note that different original labels, such that the intersection of their sets gives the same set, are mapped to the same label. We  prove that, under this renaming, $\Pi'' = \Pi$.

\paragraph{\boldmath Node Constraint of $\Pi''$}
We now characterize the node constraint $\nodeconst_{\Pi''}$ of $\Pi''$, and show that it is equal to $\nodeconst_{\Pi}$.
First of all, by definition of the round elimination framework, before renaming, $\nodeconst_{\Pi''}$ contains all and only configurations $\L_1 \s \ldots \s \L_\Delta$, such that, for any choice $(\ell(\ccc_1),\dotsc,\ell(\ccc_\Delta)) \in \L_1\times\ldots\times\L_\Delta$, we get a configuration that is in $\nodeconst_{\Pi'}=\nodeconst_{\Pi}$. We claim that, after the proposed renaming, $\nodeconst_{\Pi''} = \nodeconst_{\Pi}$.

We prove our claim by performing multiple partial renamings.
Let $\L_1 \s \ldots \s \L_\Delta$ be an arbitrary configuration in $\nodeconst_{\Pi''}$. Consider an arbitrary label $\L_i = \{\ell(\ccc_1), \ldots, \ell(\ccc_k)\}$ in the configuration. Create the label $\L'_i = \{\ell(\ccc_1 \cap \ccc_2), \ell(\ccc_3), \ldots, \ell(\ccc_k)\}$, that is, replace $\ell(\ccc_{1}), \ell(\ccc_{2})$ by their intersection. In the following lemma, we prove that also the configuration obtained by replacing $\L_i$ with $\L'_i$ satisfies that, for any choice over this configuration, we get a configuration that is in $\nodeconst_{\Pi'}=\nodeconst_{\Pi}$. If we repeat this process many times, until every set $\L_i$ contains a single label, then we get the exact same result that we would have obtained by directly replacing each set of labels by their intersection. But this implies that, under the proposed renaming, we get exactly the same configurations that appear in $\nodeconst_{\Pi'}$, and hence our claim follows.

	\begin{lemma}\label{lem:intersection}
		Let $\L_1\s\L_2\s\dotsc\s\L_\Delta$ be a configuration in $\nodeconst_{\Pi''}$. Let $\L_t = \{\ell(\ccc_t^1), \ldots, \ell(\ccc_t^h)\}$
		be an arbitrary label in the configuration, and let $\ell(\ccc')=\ell(\ccc_t^{1} \cap \ccc_t^{2})$. If we replace the label set $\L_t$ with $\L'_t = \{\ell(\ccc'), \ell(\ccc_t^3), \ldots, \ell(\ccc_t^h)\}$, we get a configuration that is in $\nodeconst_{\Pi''}$.
	\end{lemma}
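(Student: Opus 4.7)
The plan is to verify the universal-quantifier condition for the modified tuple directly: for every choice that picks some $\ell(\D_i) \in \L_i$ for $i \neq t$ and a label from $\L'_t$ at position $t$, the resulting tuple must lie in $\nodeconst_{\Pi'} = \nodeconst_\Pi$. If the label chosen at $t$ is $\ell(\ccc_t^j)$ with $j \geq 3$, then the same choice is already legal over the original tuple $\L_1 \s \dots \s \L_t \s \dots \s \L_\Delta$ and so is in $\nodeconst_\Pi$ by assumption. The only nontrivial case is the choice $\ell(\ccc')$ at position $t$: I must deduce validity of $N := (\D_1, \dots, \D_{t-1}, \ccc', \D_{t+1}, \dots, \D_\Delta)$ from the validity of the two neighbouring tuples $A$ and $B$ obtained by placing $\ccc_t^1$ or $\ccc_t^2$ at position $t$, both of which lie in $\nodeconst_\Pi$ since $\ell(\ccc_t^1), \ell(\ccc_t^2) \in \L_t$.

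A convenient reformulation is that $(\ccc_1,\dots,\ccc_\Delta) \in \nodeconst_\Pi$ iff there exists $C \subseteq \ccs$ with $|C| + m(C) \geq \Delta + 1$, where $m(C) := |\{i : C \subseteq \ccc_i\}|$: from an $I$-witness take $C := \bigcap_{i \in I} \ccc_i$, and from a $C$-witness take $I := \{i : C \subseteq \ccc_i\}$. Let $C_A, C_B$ be witnesses for $A, B$, and write $m_A, m_B, m_N$ for the corresponding $m$-functions. Since $m_N(C) - m_A(C) = [C \subseteq \ccc'] - [C \subseteq \ccc_t^1]$ is nonnegative except when $C \subseteq \ccc_t^1$ and $C \not\subseteq \ccc_t^2$, outside of that ``bad'' subcase $C_A$ itself witnesses validity of $N$; symmetrically for $C_B$.

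The only remaining case is $C_A \subseteq \ccc_t^1$, $C_A \not\subseteq \ccc_t^2$, $C_B \subseteq \ccc_t^2$, $C_B \not\subseteq \ccc_t^1$. Setting $U_A := \{i \neq t : C_A \subseteq \D_i\}$ and $U_B := \{i \neq t : C_B \subseteq \D_i\}$, the bad-case assumption gives $m_A(C_A) = |U_A| + 1$ and $m_B(C_B) = |U_B| + 1$, hence $|C_A| + |U_A| \geq \Delta$ and $|C_B| + |U_B| \geq \Delta$. I then play the two candidate witnesses $C^\cap := C_A \cap C_B$ and $C^\cup := C_A \cup C_B$ off against each other. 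Since $C^\cap \subseteq \ccc_t^1 \cap \ccc_t^2 = \ccc'$, position $t$ contributes to $m_N(C^\cap)$, and $C^\cap \subseteq C_A, C_B$ yields $m_N(C^\cap) \geq |U_A \cup U_B| + 1$; conversely, $C^\cup \not\subseteq \ccc_t^1$ (it contains an element of $C_B \setminus \ccc_t^1$), so position $t$ contributes nothing, while $C^\cup \supseteq C_A, C_B$ forces $\{i \neq t : C^\cup \subseteq \D_i\} = U_A \cap U_B$ and hence $m_N(C^\cup) = |U_A \cap U_B|$. Using the two identities $|C^\cap| + |C^\cup| = |C_A| + |C_B|$ and $|U_A \cap U_B| + |U_A \cup U_B| = |U_A| + |U_B|$,
\[
\bigl(|C^\cap| + m_N(C^\cap)\bigr) + \bigl(|C^\cup| + m_N(C^\cup)\bigr) \geq (|C_A| + |U_A|) + (|C_B| + |U_B|) + 1 \geq 2\Delta + 1,
\]
so by pigeonhole one of $C^\cap, C^\cup$ satisfies $|C| + m_N(C) \geq \Delta + 1$, witnessing $N \in \nodeconst_\Pi$.

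The main obstacle is precisely this hard subcase: a single-candidate inclusion-exclusion bound (only on $C^\cap$, or only on $C^\cup$) leaves a gap that cannot in general be closed, so one really needs to pair the two candidates and exploit the conservation identities above, after which the single extra $+1$ -- coming from $t$ contributing to exactly one of the two candidates -- is what tips the combined budget over $2\Delta$. The second essential move is the reformulation via $|C| + m(C) \geq \Delta + 1$, which turns the slightly awkward ``there exist $k$ indices whose intersection has size $\geq \Delta - k + 1$'' condition into an additive invariant that interacts cleanly with unions, intersections, and the indicator shift at position $t$.
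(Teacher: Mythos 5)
Your proof is correct and is essentially the paper's argument in contrapositive form: your witnesses $C_A, C_B$ and index sets $U_A, U_B$ correspond exactly to the paper's $S_1, S_2$ and its counts $s, s_1, s_2$ (with $s = |U_A \cap U_B|$, $s + s_1 = |U_A|$, $s + s_2 = |U_B|$), and playing $C^\cap = S_1 \cap S_2$ against $C^\cup = S_1 \cup S_2$ via inclusion--exclusion is precisely the paper's final contradiction, repackaged as a direct pigeonhole over the potential $|C| + m(C)$. The reformulation of the node constraint as $|C| + m(C) \geq \Delta + 1$ is a nice streamlining, but not a genuinely different route.
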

	\begin{proof}
		W.l.o.g., suppose $t=\Delta$. Let $\L_{\Delta}^{1\cap 2} = \{ \ell(\ccc')\}$, and let $\L_{\Delta}^{3+} =  \{\ell(\ccc_\Delta^3), \ldots, \ell(\ccc_\Delta^h)\}$. 
		Since by assumption $\L_1\s\dotsc\s \L_{\Delta-1} \s \L_{\Delta}^{3+}$ already satisfies the universal quantifier of the round elimination framework, we only need to prove that $\L_1\s\dotsc\s \L_{\Delta-1}\s \L_{\Delta}^{1\cap 2}$ is a configuration in $\nodeconst_{\Pi''}$, that is, for any choice $(\ell(\ccc_1),\dotsc,\ell(\ccc_{\Delta-1}))\in\L_1\times\dotsc\times\L_{\Delta-1}$, we get that $\ell(\ccc_1)\s\dotsc\s\ell(\ccc_{\Delta -1})\s\ell(\ccc')$ is a configuration in $\nodeconst_{\Pi}$. 		
		We prove our claim by contradiction. Hence, 
		assume that there exists a choice $C' = \ell(\ccc_1)\s\dotsc\s\ell(\ccc_{\Delta -1})\s\ell(\ccc')$ that is not a configuration in $\nodeconst_{\Pi}$.
		
		Let $\ccc_\Delta = \ccc^1_\Delta$.
		By assumption, for any choice in $\L_1\s\L_2\s\dotsc\s\L_\Delta$, we get a configuration in $\nodeconst_{\Pi}$. Hence, by the definition of $\nodeconst_{\Pi}$, we get the following:
		\begin{align*}
			\exists k \in \{1,\dotsc,\Delta\} \text{ and } 1\le i_1< \dotsc <i_k\le \Delta,& \text{ such that }\\
			|S_1 = \ccc_{i_1}\cap \dotsc\cap \ccc_{i_k}|\ge \Delta -k+1.&
		\end{align*}
		Similarly, if we set $\ccc_\Delta = \ccc^2_\Delta$, we get the following:
		\begin{align*}
			\exists k' \in \{1,\dotsc,\Delta\} \text{ and } 1\le i'_1 < \dotsc <i'_k\le \Delta,& \text{ such that }\\
			|S_2 = \ccc_{i'_1}\cap \dotsc\cap \ccc_{i'_{k'}}|\ge \Delta -k'+1.&
		\end{align*}	
		Let $s_1$ be the number of sets $\ccc_i$, $i\in\{1,\dotsc,\Delta-1\}$, such that $S_1 \subseteq \ccc_i$ and $S_2 \not\subseteq \ccc_i$. Similarly, let $s_2$ be the number of sets $\ccc_{i'}$, $i'\in\{1,\dotsc,\Delta-1\}$, such that $S_2 \subseteq \ccc_{i'}$ and $S_1 \not\subseteq \ccc_{i'}$. Also, let $s$ be the number of labels $\ccc_j$, $j\in\{1,\dotsc,\Delta-1\}$, such that $S_1\cup S_2 \subseteq \ccc_j$.
		
		Consider the configuration $C = \ell(\ccc_1)\s\dotsc\s\ell(\ccc_{\Delta -1})\s\ell(\ccc^1_\Delta)$. Depending on $\ccc^1_\Delta$, the number of sets that are supersets of $S_1$ is either $s+s_1$, or $s+s_1+1$. Hence, $s+s_1+1$ is an upper bound on the number of sets that are supersets of $S_1$. Since, by assumption, $C$ is a valid configuration of $\nodeconst_\Pi$, then $|S_1| \ge \Delta - (s+s_1+1) +1$. 
		By applying the same reasoning on $S_2$, we obtain the following:
		\begin{align}
			|S_1| &\ge \Delta - (s+ s_1), \text{ and }\label{eq:s1}\\
			|S_2| &\ge \Delta - (s+ s_2).\label{eq:s2}
		\end{align}
		Also, consider the configuration $C'$. If $|S_1 \cup S_2|\ge \Delta - s + 1$, then, since we can pick $s$ sets that are supersets of $S_1 \cup S_2$, we get that $C'$ satisfies the requirements for being a configuration in $\nodeconst_\Pi$, reaching a contradiction. Hence, we obtain the following:
		\begin{align}
			|S_1 \cup S_2 | &\le \Delta - s.\label{eq:union}
		\end{align}
		By combining \Cref{eq:s1}, \Cref{eq:s2}, and \Cref{eq:union}, we get that:
		\begin{align}\label{eq:intersection1}
			|S_1\cap S_2| &=|S_1| + |S_2| - |S_1\cup S_2|\\
			&\ge (\Delta -s -s_1) + (\Delta -s -s_2) - (\Delta -s)\nonumber\\
			&= \Delta -s -s_1 - s_2\nonumber.
		\end{align}
		If $S_1 \nsubseteq \ccc^1_\Delta$, then all the sets of $C$ that are supersets of $S_1$ are part of $\ccc_1,\ldots,\ccc_{\Delta-1}$, but this implies that also in $C'$ there are still $k$ sets that are supersets of $S_1$, and hence that $C'$ satisfies the requirements for being a configuration in $\nodeconst_\Pi$, reaching a contradiction. Hence, $S_1 \subseteq \ccc^1_\Delta$. For the same reason, $S_2 \subseteq \ccc^2_\Delta$. We thus get that $S_1 \cap S_2 \subseteq \ccc'$.
		If  $|S_1 \cap S_2|\ge \Delta - (s + s_1 + s_2 + 1 )+ 1$, then, since from $C'$ we can pick $s+s_1+s_2+1$ sets that are supersets of $S_1 \cap S_2$, we get that that $C'$ satisfies the requirements for being a configuration in $\nodeconst_\Pi$, reaching a contradiction. Hence, we obtain the following:
		\begin{align}
			|S_1 \cap S_2| < \Delta - s - s_1 - s_2.\label{eq:intersection2}
		\end{align}
		But \Cref{eq:intersection1} is in contradiction with \Cref{eq:intersection2}. Hence, $C'$ is a valid configuration.
	\end{proof}

\paragraph{\boldmath Edge Constraint of $\Pi''$.}
We now characterize the edge constraint $\edgeconst_{\Pi''}$ of $\Pi''$. 
As shown in \Cref{sec:preliminaries}, the edge constraint $\edgeconst_{\Pi''}$ can be constructed in the following way: take all the configurations allowed by the edge constraint $\edgeconst_{\Pi'} = \edgeconst_{\Pi}$, and replace each label $\ell(\ccc)$ with the disjunction of all the labels of $\Pi''$ that, before the renaming, contain $\ell(\ccc)$. Note that, after the renaming, the labels that contain $\ell(\ccc)$ are exactly all the labels $\ell(\ccc')$ such that $\ccc' \subseteq \ccc$. Hence, from each configuration $\ell(\ccc_1) \s \ell(\ccc_2)$ in $\edgeconst_{\Pi}$, we obtain the configuration itself, plus all the configurations $\ell(\ccc'_1) \s \ell(\ccc'_2)$ such that for all $1\le j\le 2$, $\ccc'_j \subseteq \ccc_j$. By Observation \ref{obs:subsets}, these configurations are also in $\edgeconst_{\Pi}$. Hence $\edgeconst_{\Pi''}=\edgeconst_{\Pi}$.

\section{The Problem Family}\label{sec:family}
In this section, we describe the family of problems that we use to prove our lower bounds. We first start from the formal definition, and then we provide some intuition. Finally, we explain the relation between these problems and some other natural problems.

\subsection{Problem Definition}
Consider a vector $z$ of $\beta+1$ non-negative integers $z_0,\ldots, z_\beta$ and let $\len(z) = \beta$. We define $|z| = z_0 + \dots + z_\beta = |\ccs|$, where $\ccs = \{ C_{i,j} ~|~ 0 \le i \le \beta, 1 \le j \le z_i \}$ is a set of colors. Given a color $C_{i,j} \in \ccs$, we define $\level(C_{i,j}) = i$, and given a set of colors $\ccc \in 2^\ccs \setminus \{\emptyset\}$, we define $\level(\ccc) = \max\{ i ~|~ C_{i,j} \in \ccc \}$, where $\level(\emptyset) = -1$. Finally, let $\prefix(z)$ be the inclusive prefix sum of $z$, that is, $\prefix(z)_i = \sum_{j\le i} z_j$. Assume that $|z|\le\Delta$, then problem $\Pi_{\Delta}(z)$ is defined as follows.

\paragraph{Labels.}
The label set of problem $\Pi_{\Delta}(z)$ is defined as 
$\Sigma_{\Delta}(z) = \mathcal{P} \cup \mathcal{S}\cup \{\X\}$, where:
\begin{itemize}
	\item $\mathcal{P}=\{ \P_i, \U_i ~|~ 1 \le i \le \beta \}$;
	\item $\mathcal{S}=\{ \ell(\ccc) ~|~ \ccc \in 2^\ccs \setminus  \{\emptyset\}\}$.
\end{itemize}

\paragraph{Node Constraint.}
The node constraint $\nodeconst_{\Delta}(z)$ of problem $\Pi_{\Delta}(z)$ contains the following allowed configurations for the nodes:
\begin{itemize}
	\item $\ell(\ccc)^{\Delta-x} \s \X^x$, for each $\ccc \in 2^\ccs \setminus  \{\emptyset\}$, where $x = |\ccc|-1$;
	\item $\P_i \s \U^{\Delta-1}_i$, for each $1 \le i \le \beta$. 
\end{itemize}

\paragraph{Edge Constraint.}
The edge constraint $\edgeconst_{\Delta}(z)$ of problem $\Pi_{\Delta}(z)$ contains the following allowed configurations for the edges:
\begin{itemize}
	\item $\ell(\ccc_1) \s \ell(\ccc_2)$, for all $\ccc_1, \ccc_2 \in 2^\ccs \setminus  \{\emptyset\}$ such that $\ccc_1 \cap\ccc_2 = \emptyset$;
	\item $\U_i \s \U_j$, for each $1 \le i,j \le \beta$;
	\item $\U_i \s \P_j$, for each $1 \le i < j \le \beta$;
	\item $\U_i \s \ell(\ccc)$, for each $\ccc \in 2^\ccs \setminus  \{\emptyset\}$ and $1 \le i \le \beta$; 
	\item $\P_i \s \ell(\ccc)$, for all $\ccc \in 2^\ccs \setminus  \{\emptyset\}$ such that $\level(\ccc) < i$;
	\item $\X \s \L$, for each $\L \in \Sigma_{\Delta}(z)$.
\end{itemize}

\paragraph{Edge Diagram.}
In order to compute $\re(\Pi_\Delta(z))$, it will be helpful to know the relation between the labels of $\Pi_\Delta(z)$ according to the edge constraint $\edgeconst_{\Delta}(z)$. The following relation derives directly from the definition of $\edgeconst_{\Delta}(z)$. \Cref{fig:diagram} shows an example, for $\Pi_\Delta([1,2,1])$.
\begin{observation}\label{lem:edgediagpi}
	The labels of $\Pi_\Delta(z)$ satisfy all and only the following strength relations w.r.t.\ $\edgeconst_{\Delta}(z)$. 
	\begin{itemize}
		\item $\U_i < \U_j$, if $j < i$.
		\item $\P_i < \P_j$, if $i < j$.
		\item $\P_i < \U_j$, for all $i,j$.
		\item $\P_i < \ell(\ccc)$, if, for all $C \in \ccc$, $\level(C) \ge i$.
		\item $\ell(\ccc) < \ell(\ccc')$, if $\ccc' \subseteq \ccc$.
		\item $\ell(\ccc) < \U_i$, if $\level(\ccc) \ge i$.
		\item $\L < \X$, for all $\L \in \Sigma_\Delta(z)$.
	\end{itemize}
\end{observation}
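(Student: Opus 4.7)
The plan is to prove the observation by direct case analysis on pairs of labels in $\Sigma_\Delta(z) = \{\X\} \cup \{\U_i, \P_i : 1 \le i \le \beta\} \cup \{\ell(\ccc) : \ccc \in 2^\ccs \setminus \{\emptyset\}\}$. Recall that ``$\A$ is at least as strong as $\B$ w.r.t.\ $\edgeconst_\Delta(z)$'' means that in every configuration of $\edgeconst_\Delta(z)$ containing $\B$, replacing any subset of the $\B$-occurrences by $\A$ still yields a configuration of $\edgeconst_\Delta(z)$. The first step is to tabulate, for each label $\B$, the list of edge configurations containing $\B$; this list is read off directly from the six defining clauses of $\edgeconst_\Delta(z)$.

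For the ``if'' direction (each listed relation holds), each verification is a short substitution check. To illustrate, for $\P_i < \P_j$ with $i < j$: the configurations containing $\P_i$ are $\U_k \s \P_i$ (with $k < i$), $\P_i \s \ell(\ccc)$ (with $\level(\ccc) < i$), and $\X \s \P_i$; after substituting $\P_j$ we obtain $\U_k \s \P_j$ (valid since $k < i < j$), $\P_j \s \ell(\ccc)$ (valid since $\level(\ccc) < i \le j$), and $\X \s \P_j$ (always valid). The same template dispatches $\U_i < \U_j$ when $j < i$, $\P_i < \U_j$, $\ell(\ccc) < \ell(\ccc')$ when $\ccc' \subseteq \ccc$, and $\L < \X$. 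The two slightly subtler cases are $\P_i < \ell(\ccc)$ (substituting $\ell(\ccc)$ into $\P_i \s \ell(\ccc')$ requires $\ell(\ccc) \s \ell(\ccc')$ to be valid for every $\ccc'$ with $\level(\ccc') < i$, which is equivalent to every color of $\ccc$ having level $\ge i$) and $\ell(\ccc) < \U_i$ (substituting $\U_i$ into $\P_j \s \ell(\ccc)$ requires $\P_j \s \U_i$ to be valid for every $j > \level(\ccc)$, which is equivalent to $\level(\ccc) \ge i$).

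For the ``only if'' direction, I rule out every unlisted pair by exhibiting a one-edge witness that is valid with the weaker label but fails after substitution. Representative examples: $\U_j$ is not at least as strong as $\U_i$ when $j > i$, because $\U_i \s \P_j$ is allowed while $\U_j \s \P_j$ is not; $\ell(\ccc')$ is not at least as strong as $\ell(\ccc)$ when $\ccc' \not\subseteq \ccc$, because, for any $C \in \ccc' \setminus \ccc$, $\ell(\ccc) \s \ell(\{C\})$ is allowed while $\ell(\ccc') \s \ell(\{C\})$ is not; $\P_i$ is never at least as strong as any $\ell(\ccc)$, because $\U_i \s \ell(\ccc)$ is allowed but $\U_i \s \P_i$ is not; $\U_i$ is not at least as strong as $\ell(\ccc)$ when $\level(\ccc) < i$, because $\P_i \s \ell(\ccc)$ is allowed while $\P_i \s \U_i$ is not; $\ell(\ccc)$ is not at least as strong as $\P_i$ when some $C \in \ccc$ has $\level(C) < i$, using the witness $\P_i \s \ell(\{C\})$; and no label $\L \neq \X$ dominates $\X$, because $\X \s \X$ is allowed but $\L \s \L$ is not. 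Analogous witnesses dispatch the remaining reverse directions.

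The main obstacle is purely organizational rather than mathematical. The label set partitions into four families ($\X$, $\U$, $\P$, $\ell$), giving a dozen pairwise interactions to audit, each needing either a positive substitution check or a one-edge negative witness. Care is needed at parameter boundaries -- e.g., when $\beta = 0$ the $\U/\P$ clauses become vacuous, and the witness color for $\ccc'$ in the $\ell$--vs--$\ell$ case must actually live in $\ccs$ -- but each individual case reduces to a one-line check against the clauses of $\edgeconst_\Delta(z)$, so the proof just systematically enumerates these cases.
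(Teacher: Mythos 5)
Your overall approach is fine and is essentially the only thing one can do here: the paper states this as an observation that ``derives directly from the definition'' and gives no proof, so your systematic case audit (positive substitution checks plus one-edge negative witnesses) is exactly the intended argument, and almost all of your explicit checks and witnesses are correct.

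However, one of your stated witnesses is wrong as written. To rule out that some $\L \neq \X$ is at least as strong as $\X$, you claim ``$\X \s \X$ is allowed but $\L \s \L$ is not''; this fails for $\L = \U_i$, since $\U_i \s \U_j \in \edgeconst_{\Delta}(z)$ for \emph{all} $1 \le i,j \le \beta$, including $i=j$, so $\U_i \s \U_i$ is allowed and the substitution $\X \s \X \to \U_i \s \U_i$ does not break anything. (Your witness does work for $\L = \P_i$ and $\L = \ell(\ccc)$.) The claim $\U_i < \X$ is still true, but you need a different witness: take $\X \s \P_j$ for some $j \le i$, e.g.\ $j = i$, which exists because $\U_i$ existing forces $i \le \beta$; this configuration is allowed, while $\U_i \s \P_j$ requires $i < j$ and hence is not. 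With that single repair (and the boundary remark that for $\beta = 0$ there are no $\U_i$ labels, so the case is vacuous), your case analysis goes through and establishes both the ``all'' and the ``only'' part of the observation.
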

\begin{figure}[h]
	\centering
	\includegraphics[width=0.5\textwidth]{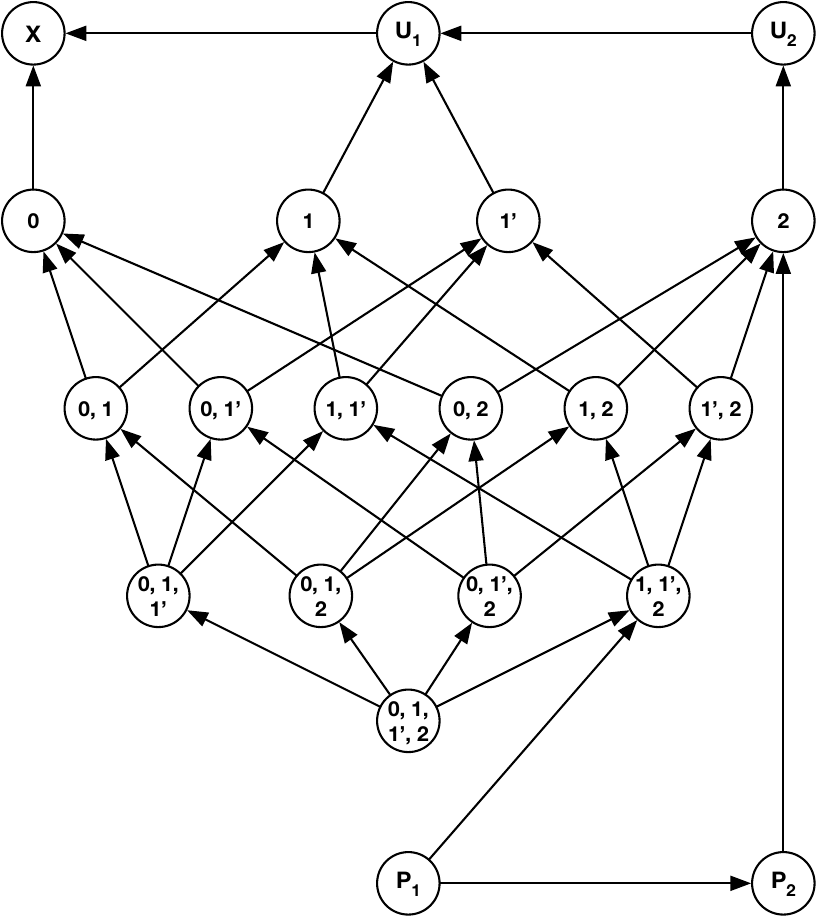}
	\caption{The edge diagram of $\Pi_\Delta([1,2,1])$. Color $0$ is in level $0$, colors $1$ and $1'$ are in level $1$, and color $2$ is in level $2$. For the sake of readability, we omit $\ell(\cdot)$.}
	\label{fig:diagram}
\end{figure}

\subsection{Intuition behind These Problems}
The problem definition contains two main components, a part related to coloring, and a part related to pointers. We start by describing how to interpret the coloring part. Then, we will discuss how pointers can interact with each other, and with the colors. 

\paragraph{Coloring Part.}
Let $\ccs$ be a color space and assume that it is split into $\beta+1$ levels. We can think of color $C_{i,j}$ to be the $j$-th color of level $i$. Nodes can either be colored (nodes outputting a configuration containing $\ell(\ccc)$, for some $\ccc$), or uncolored (nodes outputting a configuration containing $\P_i$, for some $i$). Colored nodes may even have multiple colors assigned ($\ccc$ is a non-empty \emph{set} of colors). Colored nodes do not need to satisfy the requirements of a proper coloring, but it is enough to satisfy the requirements of an arbdefective coloring, that is, they can mark $x$ edges as outgoing (using label $\X$), and the nodes reachable through those edges are allowed to have the same color as them. If a node is colored with multiple colors, then, ignoring edges that are marked with $\X$ by at least one endpoint, the coloring constraints must be satisfied for \emph{all} colors assigned to that node: if a node is red \emph{and} blue, then no neighbor must be red, and no neighbor must be blue (in fact, the edge constraint requires that if two nodes output $\ell(\ccc_1) \s \ell(\ccc_2)$, then $\ccc_1$ and $\ccc_2$ must be disjoint).
The arbdefect $x$ allowed for a node depends on the number of colors assigned to that node: if a node has a single color assigned, then $x=0$, while if a node has two colors assigned, then $x=1$, and so on. In general, the node constraint allows $x = |\ccc|-1$.
 The idea here is that if a node manages to get more than $1$ color assigned, then it is rewarded by being allowed to have a larger arbdefect.
 Note that the node constraint requires to have \emph{exactly} $x$ ports marked $\X$, but this is just a technicality: if a node wants to output $\X$ on \emph{at most} $x$ ports, then it can always change the output on some arbitrary ports to $\X$ and obtain \emph{exactly} $x$ ports marked $\X$, since $\X$ is compatible with any other label.

\paragraph{Pointers Part.}
Uncolored nodes must output pointers, and there are $\beta$ types of pointers ($\P_1, \ldots, \P_\beta$). We can think of pointer $\P_i$ to be of level $i$. The edge constraint allows pointers to point to either:
\begin{itemize}
	\item nodes outputting a pointer of strictly lower level ($\P_j$ is compatible with $\U_i$ only if $i<j$), or
	\item nodes outputting a color set $\ccc$ such that all the colors of $\ccc$ are of lower levels ($\P_i \s \ell(\ccc)$ is allowed only if $\level(\ccc) < i$, that is, the maximum level of the colors in $\ccc$ is strictly less than $i$).
\end{itemize}
Except for these constraints, all the other ports of nodes outputting pointers are unconstrained: $\U_i$ can be neighbor with all other $\U$ labels, with all color sets, and with $\X$.

Hence, pointers can be interpreted as follows. Pointers must form chains of length at most $\beta$, and these chains must reach colored nodes. Not all types of chains are allowed: if the chain terminates by pointing to a color of level $i$, then the chain must have length at most $\beta - i$.

\subsection{Relation with Natural Problems}
We now provide some informal examples of what problems our family can capture. We make this more formal in \Cref{sec:corollaries}.
\paragraph{$\Delta$-coloring.}
Consider the case where $\beta=0$ (that is, we do not have pointers at all), and $\ccs = \{1, \ldots, \Delta\}$. We can see that the obtained problem is a relaxation of the $\Delta$-coloring problem.

\paragraph{Arbdefective Colorings.}
When $\beta=0$, some variants of arbdefective coloring are captured as well. For example, if the color space contains $\Delta$ colors, then there exist $\lfloor \Delta/2 \rfloor$ disjoint sets of colors of size $2$. If nodes output sets of size $2$, then they are allowed to have arbdefect $1$. Hence, by letting each node of color $0 \le i < \lfloor \Delta/2 \rfloor$ output the configuration $\ell(\{C_{0,2i+1}, C_{0,2i+2}\})^{\Delta-1} \s \X$, we get a problem that is a relaxation of the $1$-arbdefective  $\lfloor \Delta/2 \rfloor$-coloring problem.

\paragraph{Ruling Sets.}
Consider the problem $\Pi$ in the family where the color space contains a single color, in level $0$, and there are $\beta$ possible pointers. In a solution for the $(2,\beta)$-ruling set problem, nodes in the set form an independent set, and nodes not in the set are at distance at most $\beta$ from nodes in the set. Hence, nodes can spend $\beta$ rounds to solve $\Pi$: nodes in the set output the single color of level $0$, and node not in the set output pointer chains based on their distance from the nodes in the set.

\paragraph{\boldmath $\alpha$-Arbdefective $c$-Colored $\beta$-Ruling Sets.}
This problem is a variant of ruling sets, where the nodes in the set do not need to satisfy the independence requirement, but they are only required to be $c$-colored with arbdefect at most $\alpha$. Note that, if we set $\alpha = 0$ and $c=1$, we get the $(2,\beta)$-ruling set problem.
Hence, we can consider a problem of the family where the color space contains $c(1+\alpha)$ colors, all in level $0$, that allows us to have $c$ sets of colors with arbdefect $\alpha$, and a solution for $\alpha$-arbdefective $c$-colored $\beta$-ruling sets can be converted in $\beta$ rounds into a solution for the problem in the family.

\paragraph{Other Problems.}
In all the examples made until now, all colors of the color space were in level $0$, but our family of problems is more general. For example, our family contains a relaxation of the following problem.
Nodes can either be colored or uncolored. There are two groups of colors. One of the two groups is allowed to have some arbdefect, while the other group is allowed some different arbdefect. 
Uncolored nodes must either be at distance at most $3$ from a node colored colored with a color from the first group, or at distance at most $2$ from a node colored with a color from the second group.

\section{Lower Bounds}\label{sec:sequence}
In this section, we prove some properties about the problems in our family. In particular, we show that there exists a long sequence of problems, all contained in the family, such that the first problem of the sequence is a relaxation of a problem for which we want to prove a lower bound, and then for each problem in the sequence it holds that:
\begin{itemize}
	\item it is not $0$-round solvable;
	\item it is at least $1$ round harder than the next problem in the sequence;
	\item it can be described with a small amount of labels.
\end{itemize}
By combining the first two properties, we obtain that a sequence of $T$ problems implies a lower bound of $T$ rounds for the first problem in the port numbering model. As we will see later, in order to lift this lower bound to the \LOCAL model, we also need the third property: the number of labels used to describe the problems in the family should stay small.

In order to construct this sequence, we start in \Cref{ssec:relation} by computing $\Pi' = \re(\Pi_{\Delta}(z))$, and prove some properties about it. While it is easy to check that,  if $|z| \le \Delta$, then $\Pi_\Delta(z)$ contains at most $2^\Delta + 2 \len(z) \le 2^\Delta(1 + \len(z))$ labels, we also make sure that the intermediate problem $\Pi'$ can also be described with that amount of labels.

We continue, in \Cref{sec:keytec}, where we show the core part of our proof, that is, we apply the universal quantifier on $\re(\Pi_\Delta(z))$.

Then, in \Cref{sec:relaxrerere}, we show that $\rere(\re(\Pi_\Delta(z)))$ can be relaxed to $\Pi_\Delta(\prefix(z))$, implying that  if $\Pi_{\Delta}(z)$ can be solved in $T$ rounds, then $\Pi_{\Delta}(\prefix(z))$ can be solved in at most $T-1$ rounds.

By combining \Cref{ssec:relation}, \Cref{sec:keytec}, and \Cref{sec:relaxrerere}, we obtain the following lemma.
\begin{lemma}\label{lem:pn1step}
	Let $z' = \prefix(z)$, and assume that $|z'| \le \Delta$ if $\len(z') = 0$, and $|z'| \le \Delta-1$ otherwise.
	Let $\Pi'=\re(\Pi_{\Delta}(z))$. Then, $\Pi_{\Delta}(z')$ is a relaxation of $\rere(\Pi')$. The number of labels of $\Pi_{\Delta}(z)$, $\Pi'$, and $\Pi_{\Delta}(z')$ is upper bounded by $2^\Delta(1 + \len(z))$.
\end{lemma}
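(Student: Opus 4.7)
The plan is to prove \Cref{lem:pn1step} purely by composition of the three preceding subsections, since each of \Cref{ssec:relation}, \Cref{sec:keytec}, and \Cref{sec:relaxrerere} is exactly one link of the chain $\Pi_\Delta(z) \to \re(\Pi_\Delta(z)) \to \rere(\re(\Pi_\Delta(z))) \to \Pi_\Delta(z')$. So my proof will be short: invoke each subsection as a black box and then check the label count separately.

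First I would invoke \Cref{ssec:relation}, which gives an explicit description of $\Pi' = \re(\Pi_\Delta(z))$ (its node and edge constraints, and, crucially, a concrete label set obtained after discarding non-maximal configurations) together with the bound on $|\Sigma_{\Pi'}|$. This description is what feeds into the next step. Next I would invoke the key technical step, \Cref{sec:keytec}, which applies the universal quantifier to $\nodeconst_{\Pi'}$ in order to compute $\edgeconst_{\rere(\Pi')}$; in parallel, the existential quantifier applied to $\edgeconst_{\Pi'}$ yields $\nodeconst_{\rere(\Pi')}$ in the easy, mechanical way outlined in \Cref{sec:preliminaries}. At this point both constraints of $\rere(\Pi')$ are in hand. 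Finally I would invoke \Cref{sec:relaxrerere}, which exhibits an explicit label map from $\Sigma_{\rere(\Pi')}$ to $\Sigma_\Delta(z')$ and verifies, configuration by configuration, that every node configuration of $\rere(\Pi')$ relaxes to one in $\nodeconst_\Delta(z')$ and every edge configuration relaxes to one in $\edgeconst_\Delta(z')$ (in the sense of \Cref{def:noderelax}). This is precisely the statement that $\Pi_\Delta(z')$ is a relaxation of $\rere(\Pi')$.

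For the label counts, the bounds on $\Pi_\Delta(z)$ and $\Pi_\Delta(z')$ follow directly from the definition: $|\Sigma_\Delta(z)| = |\mathcal{S}| + |\mathcal{P}| + 1 \le (2^{|z|}-1) + 2\len(z) + 1 \le 2^\Delta + 2\len(z) \le 2^\Delta(1+\len(z))$, and the same calculation applies to $z'$ because the hypothesis $|z'|\le\Delta$ (with $\len(z') = \len(z)$) gives $|\Sigma_\Delta(z')| \le 2^\Delta + 2\len(z)$. The bound on $|\Sigma_{\Pi'}|$ is the one already established in \Cref{ssec:relation} as part of the explicit characterization of $\Pi'$.

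The only genuinely nontrivial step is \Cref{sec:keytec}, i.e., computing $\edgeconst_{\rere(\Pi')}$ via the universal quantifier: this is where one has to argue maximality and show that no spurious maximal configuration appears beyond those that correspond to labels and tuples already present in $\Pi_\Delta(\prefix(z))$. In the present lemma, however, that work is already done upstream, so here the main thing to verify carefully is that the relaxation map of \Cref{sec:relaxrerere} is well-defined under the mild size hypothesis on $|z'|$ (which rules out degenerate cases where one would need more colors than $\Delta$ allows), and that the relaxation is compatible with the ordering permutations in \Cref{def:noderelax}. Both are immediate from the explicit label correspondence constructed in \Cref{sec:relaxrerere}, so the final assembly is routine.
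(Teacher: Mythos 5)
Your proposal matches the paper exactly: the paper gives no standalone proof of this lemma, stating only that it follows by combining \Cref{ssec:relation}, \Cref{sec:keytec}, and \Cref{sec:relaxrerere}, which is precisely your composition, and your label-count check is the routine computation the paper leaves implicit. One small slip worth noting: in the $\rere(\cdot)$ step you swap the outputs --- the universal quantifier applied to $\nodeconst_{\Pi'}$ yields $\nodeconst_{\rere(\Pi')}$ (this is what \Cref{lem:keytec} relaxes), while the existential quantifier on $\edgeconst_{\Pi'}$ yields $\edgeconst_{\rere(\Pi')}$ --- but since you cite the right sections for the right tasks, this does not affect the argument.
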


We continue, in \Cref{ssec:zero}, by showing that, unless $|z|$ is too large, $\Pi_{\Delta}(z)$ cannot be solved in $0$ rounds in the port numbering model. In this way, we obtain all the requirements mentioned before.

We conclude, in \Cref{ssec:sequence}, by combining all the ingredients and give a lower bound on the length of the problem sequence.

\subsection{\boldmath Computing $\re(\Pi_\Delta(z))$}\label{ssec:relation}
Recall the definition of $\Pi_\Delta(z)$ (see \Cref{sec:family}), and define $\U_0 := \ell(\emptyset) := \X$.
The following lemma computes $\re(\Pi_{\Delta}(z))$.

\begin{lemma}\label{lem:repi}
	The label set of problem $\re(\Pi_{\Delta}(z))$ is
	\begin{align*}
		\Sigma_{\re} := \{ \gen{\P_i, \ell(\ccc)} \mid 1 \leq i \leq \beta, \ccc \in 2^{\ccs}, \level(\ccs \setminus \ccc) \leq i - 1 \} \cup \{ \gen{\U_i, \ell(\ccc)} \mid \ccc \in 2^{\ccs}, \level(\ccc) \leq i \leq \beta \}.
	\end{align*}
	Its edge constraint $\edgeconst_{\re}$ consists of the following configurations:
	\begin{align*}
		&\gen{\P_i, \ell(\ccc)} \s \gen{\U_{i-1}, \ell(\ccs \setminus \ccc)} &\textrm{ for all $1 \leq i \leq \beta$ and $\ccc \in 2^{\ccs}$ satisfying } \level(\ccs \setminus \ccc) \leq i - 1,& \textrm{ and}\\
		&\gen{\U_{\beta}, \ell(\ccc)} \s \gen{\U_{\beta}, \ell(\ccs \setminus \ccc)} &\textrm{ for all $\ccc \in 2^{\ccs}$}.&
	\end{align*}
	Its node constraint $\nodeconst_{\re}$ consists of all configurations $\B_1 \s \dots \s \B_{\Delta}$ with labels $\B_i \in \Sigma_{\re}$ such that there exists some configuration $ \A_1 \dots \A_{\Delta} \in \nodeconst_{\Delta}(z)$ with $\A_i \in \B_i$ for all $1 \leq i \leq \Delta$. 
%	\begin{align*}
%		\dis(\{ \B_1 \in \Sigma_{\re} \mid \A_1 \in \B_1 \}) \s \dots \s \dis(\{ \B_{\Delta} \in \Sigma_{\re} \mid \A_{\Delta} \in \B_{\Delta} \}) \textrm{ for all configurations } \A_1 \dots \A_{\Delta} \in \nodeconst_{\Delta}(z) \enspace.
%	\end{align*}\todo{perhaps one should simply write the node configurations in the uncondensed way.}
\end{lemma}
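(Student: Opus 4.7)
The plan is to execute the universal-quantifier step of $\re(\cdot)$ on $\edgeconst_\Delta(z)$ by hand, determining which right-closed sets appear in $\Sigma_\re$ and which pairs of them form maximal edge configurations; $\nodeconst_\re$ then follows from the standard existential-quantifier rule. First I would spell out the successor sets obtained from \Cref{lem:edgediagpi}: $\gen{\P_i}$ consists of $\P_j$ with $j \geq i$, every $\U_j$, every $\ell(\ccc')$ all of whose colors have level $\geq i$, and $\X$; $\gen{\U_i} = \{\U_j : j \leq i\} \cup \{\X\}$; and $\gen{\ell(\ccc)} = \{\ell(\ccc') : \ccc' \subseteq \ccc\} \cup \{\U_j : j \leq \level(\ccc)\} \cup \{\X\}$. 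By \Cref{obs:rcs} every element of $\Sigma_\re$ must be a right-closed union of such sets.

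Next I would verify that both claimed families of edge configurations satisfy the universal quantifier. For $\gen{\P_i, \ell(\ccc)} \s \gen{\U_{i-1}, \ell(\ccs \setminus \ccc)}$ the delicate case is pairing a color label $\ell(\ccc'') \in \gen{\P_i}$ (all colors of level $\geq i$) with $\ell(\ccc') \in \gen{\ell(\ccs \setminus \ccc)}$ (so $\ccc' \subseteq \ccs \setminus \ccc$): the hypothesis $\level(\ccs \setminus \ccc) \leq i-1$ forces every level-$\geq i$ color to lie in $\ccc$, hence $\ccc'' \subseteq \ccc$ and $\ccc'' \cap \ccc' = \emptyset$. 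All remaining pairings ($\P_j$ with $\U_{i-1}$, $\U_j$ with a color, $\U_j$ with $\U_k$, and the various $\X$ cases) reduce immediately to configurations of $\edgeconst_\Delta(z)$. The family $\gen{\U_\beta, \ell(\ccc)} \s \gen{\U_\beta, \ell(\ccs \setminus \ccc)}$ is simpler since $\U_\beta$ is compatible with every $\U$ and every nonempty-color label, and color disjointness is built in by complementation. Maximality is then checked by observing that extending either side would introduce either a $\P_j$--$\P_k$ pair or an overlapping color pair, both forbidden by $\edgeconst_\Delta(z)$.

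For exhaustiveness I would classify candidate pairs $(L_1, L_2)$ by which label kinds appear on each side. If $L_1$ contains some $\P_i$, its partner cannot contain any $\P_j$ or any $\U_j$ with $j \geq i$, so the strongest allowed $\U$-label on the right is $\U_{i-1}$; by \Cref{obs:subsetarrow} and maximality this pins down the partner to $\gen{\U_{i-1}, \ell(\ccc^*)}$ for the maximal $\ccc^*$ consistent with color-disjointness, namely $\ccc^* = \ccs \setminus \ccc$, and the condition $\level(\ccs \setminus \ccc) \leq i-1$ emerges as precisely what is needed for right-closure and universal-quantifier satisfaction to coexist. If $L_1$ contains no $\P$-label, the weakest $\U$-label it may contain is $\U_\beta$, and the same complementation argument forces $L_2$ to match it. Finally, $\nodeconst_\re$ is obtained by replacing each label $\A_i$ in every $(\A_1, \dots, \A_\Delta) \in \nodeconst_\Delta(z)$ by the disjunction of sets $\B \in \Sigma_\re$ with $\A_i \in \B$, which is exactly the condition in the statement.

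The main obstacle will be the exhaustiveness direction: one must carefully rule out plausible-looking mixed right-closed sets such as $\gen{\P_i, \ell(\ccc)}$ where $\ccc$ contains a color of level $< i$, or $\gen{\U_i, \ell(\ccc)}$ with $\level(\ccc) > i$, showing that each either fails to be extendable to a maximal pair or collapses under renaming to one of the two listed families. Handling this cleanly will rely on a structural case split on which kinds of labels are present on each side, combined with systematic use of the disjointness and level-compatibility constraints encoded in $\edgeconst_\Delta(z)$.
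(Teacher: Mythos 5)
Your proposal follows essentially the same route as the paper's proof: compute the universal quantifier on $\edgeconst_\Delta(z)$ directly via the edge diagram, right-closedness and maximality, split on whether a $\P$-label is present (taking the minimal such index $i$), let complementation of the accumulated color set force the partner to be $\gen{\U_{i-1},\ell(\ccs\setminus\ccc)}$ resp.\ $\gen{\U_\beta,\ell(\ccs\setminus\ccc)}$, and obtain the node constraint from the existential quantifier. One small slip in your closing paragraph: sets $\gen{\P_i,\ell(\ccc)}$ in which $\ccc$ contains colors of level $<i$ need not be ruled out, since the lemma only restricts the levels of $\ccs\setminus\ccc$; this does not affect the main argument.
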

\begin{proof}
	Consider an arbitrary configuration $\B_1 \s \B_2$ in the edge constraint $\edgeconst_{\re}$ of $\re(\Pi_{\Delta}(z))$.
	We will show that $\B_1 \s \B_2$ is of the form stated in the lemma.
	We consider two cases.
	
	Consider first the case that at least one of $\B_1$ and $\B_2$ (which, by the definition of $\re(\cdot)$ are sets of labels from $\Sigma_{\Delta}(z)$) contains a label from $\{ \P_1, \dots, \P_\Delta\}$.
	Let $i$ be the smallest index for which $\P_i$ is contained in $\B_1$ or $\B_2$, and assume w.l.o.g.\ that $\P_i \in \B_1$.
	Let $\ccc \in 2^\ccs$  be the set of all colors $C \in \ccs$ such that there exists a set $\ccc'$ containing $C$ for which $\ell(\ccc') \in \B_1$ (possibly $\ccc = \emptyset$). In other words, $\ccc$ contains all the colors that appear in $\B_1$.
	Observe that $\B_2$ cannot contain any label $\ell(\ccc'')$ such that $\ccc'' \cap \ccc \neq \emptyset$ since otherwise there exist $\ccc', \ccc'' \in 2^\ccs$ such that $\ell(\ccc') \in \B_1$, $\ell(\ccc'') \in \B_2$, and $\ccc' \cap \ccc'' \neq \emptyset$, which is a contradiction to the definition of $\re(\Pi_{\Delta}(z))$ (as $\ccc' \cap \ccc'' \neq \emptyset$ implies that $\ell(\ccc') \s \ell(\ccc'') \notin \edgeconst_\Delta(z)$).
	Similarly, for any $1 \leq j \leq \beta$, set $\B_2$ does not contain $\P_j$ since $\B_1$ contains $\P_i$, and $\P_i \s \P_j \notin \edgeconst_\Delta(z)$.
	Since, for any $\A \in \Sigma_{\Delta}(z) \setminus \left( \{ \ell(\ccc'') \mid \ccc'' \cap \ccc \neq \emptyset \} \cup \{ \P_1, \dots, \P_\Delta \} \right)$, we have $\ell(\ccc) \s \A \in \edgeconst_\Delta(z)$, it follows that $\ell(\ccc) \in \B_1$, by maximality.
	Hence, $\B_1$ contains $\ell(\ccc)$ and $\P_i$, but not any $\P_j$ with $j < i$ (by the definition of $i$) and not any $\ell(\ccc'')$ with $\ccc'' \nsubseteq \ccc$ (by the definition of $\ccc$).
	 By \Cref{lem:edgediagpi} and the fact that by \Cref{obs:rcs} $\B_1$ is right-closed w.r.t.\ $\edgeconst_{\Delta}(z)$, it follows that the labels in $\B_1$ are exactly the labels from $\Sigma_\Delta(z)$ that are successors of $\ell(\ccc)$ (w.r.t.\ $\edgeconst_{\Delta}(z)$) or $\P_i$, i.e., we have $\B_1 = \gen{\P_i, \ell(\ccc)}$.
	Note further that the right-closedness of $\B_1$, together with \Cref{lem:edgediagpi} and the fact that $\P_i \in \B_1$, implies that $\B_1$ contains $\ell(\{C\})$ for all $C \in \ccs$ satisfying $\level(C) \geq i$.
	Hence, $\level(\ccs \setminus \ccc) \leq i - 1$, by the definition of $\ccc$.

	By the definition of $\re(\cdot)$ (in particular by maximality), the set $\B_2$ consists of precisely those labels $\A \in \Sigma_{\Delta}(z)$ that satisfy $\A \s \A' \in \edgeconst_{\re}$ for all $\A' \in \B_1$.
	By going through the labels in $\Sigma_{\Delta}(z)$ and checking whether this property holds (by going through the list of edge configurations of $\Pi_\Delta(z)$), we see that $\B_2 = \{ \ell(\ccc') \mid \ccc' \in 2^\ccs\setminus\{\emptyset\}, \ccc' \subseteq \ccs \setminus \ccc \} \cup \{ \U_1, \dots, \U_{i-1} \} \cup \{ \X \}$.
	Note that, for any nonempty $\ccc' \subseteq \ccs \setminus \ccc$ and any $i \leq j \leq \beta$, we have $\ell(\ccc') \s \P_j \in \edgeconst_{\re}$ since $\level(\ccs \setminus \ccc) \leq i - 1$, as observed above.
	Now, by \Cref{lem:edgediagpi}, the fact that by \Cref{obs:rcs} $\B_2$ is right-closed  w.r.t.\ $\edgeconst_{\Delta}(z)$, and the fact that $\level(\ccs \setminus \ccc) \leq i - 1$, we obtain that $\B_2 = \gen{\U_{i-1}, \ell(\ccs \setminus \ccc)}$.
	It follows that $\B_1 \s \B_2 = \gen{\P_i, \ell(\ccc)} \s \gen{\U_{i-1}, \ell(\ccs \setminus \ccc)}$, where $\level(\ccs \setminus \ccc) \leq i - 1$, as desired.

	Consider now the second case, namely that both $\B_1$ and $\B_2$ do not contain any label from $\{ \P_1, \dots, \P_\Delta\}$.
	Again, let $\ccc \in 2^\ccs$  be the set of all colors $C \in \ccs$ such that there exists a set $\ccc'$ containing $C$ for which $\ell(\ccc') \in \B_1$.
	Similarly to above, we obtain that $\B_2$ does not contain any label $\ell(\ccc'')$ such that $\ccc'' \cap \ccc \neq \emptyset$, and that $\ell(\ccc) \in \B_1$.
	Also, observe that, by maximality, both $\B_1$ and $\B_2$ contain $\U_\beta$ since $\U_\beta \s \A \in \edgeconst_{\re}$ for any $\A \in \Sigma_\Delta(z) \setminus \{\P_1, \dots, \P_\Delta\}$, and $\B_1$ and $\B_2$ do not contain any label from $\{ \P_1, \dots, \P_\Delta\}$.
	
	Since $\B_1$ contains $\ell(\ccc)$ and $\U_\beta$, but not any $\P_j$ and not any $\ell(\ccc'')$ with $\ccc'' \nsubseteq \ccc$, we obtain $\B_1 = \gen{\U_{\beta}, \ell(\ccc)}$, by \Cref{lem:edgediagpi} and the fact that by \Cref{obs:rcs} $\B_1$ is right-closed  w.r.t.\ $\edgeconst_{\Delta}(z)$.
	Moreover, with an analogous argumentation for computing $\B_2$ as in the first case, we obtain $\B_2 = \gen{\U_{\beta}, \ell(\ccs \setminus \ccc)}$.
	Hence, also in this second case $\B_1 \s \B_2$ is of a form given in the lemma statement.
	
	It is straightforward to verify that the edge configurations given in the lemma statement are indeed contained in $\edgeconst_{\re}$, by checking that any two labels picked from the two sets in the considered configuration indeed form an edge configuration of $\Pi_\Delta(z)$, and that for both sets it holds that any label that is not in the set cannot be added to the set without violating this property.
	Hence, the edge constraint $\edgeconst_{\re}$ is precisely as given in the lemma.

	Moreover, by collecting all labels (i.e., sets of labels from $\Pi_{\Delta}(z)$) that appear in $\edgeconst_{\re}$, we obtain precisely the label set $\Sigma_{\re}$ given in the lemma.
	Finally, the statement about the node constraint $\nodeconst_{\re}$ follows from the definition of $\re(\cdot)$.
\end{proof}

We now prove a property about the labels of $\re(\Pi_{\Delta}(z))$.
\begin{lemma}\label{lem:abc123}
	Assume $|z| \le \Delta$ if $\len(z) = 0$, and $|z| \le \Delta-1$ otherwise.
	For all $\A, \B \in \Sigma_{\re}$ it holds that, if $\A \le \B$ according to $\nodeconst_{\re}$, then $\A \subseteq \B$.
\end{lemma}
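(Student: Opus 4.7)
I will prove the contrapositive: assume $\A \nsubseteq \B$, pick any $\ell \in \A \setminus \B$, and exhibit a configuration $(\A, \B_2, \ldots, \B_\Delta) \in \nodeconst_{\re}$ such that no label of $\B$ admits a completion to a valid $\nodeconst_\Delta(z)$-configuration, thereby witnessing $\A \nleq \B$. Note first that $\ell \neq \X$, since $\X$ is the top element of the edge diagram (by \Cref{lem:edgediagpi}) and hence a successor of every generator, so $\X$ belongs to every set in $\Sigma_{\re}$. The node constraint $\nodeconst_\Delta(z)$ admits only two families of configurations: the color configurations $\ell(\ccc)^{\Delta - x}\s\X^{x}$ with $x = |\ccc|-1$, and the pointer configurations $\P_i\s\U_i^{\Delta-1}$. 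I will pick one or the other family to witness $\A \nleq \B$, depending on the type of $\ell$.

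\textbf{Color case ($\ell = \ell(\ccc^*)$).} Set $x^* := |\ccc^*|-1$, place $\A$ at position~$1$, put $\gen{\U_\beta, \ell(\ccc^*)}$ on positions $2,\ldots,\Delta-x^*$, and put $\gen{\U_\beta, \ell(\{C_0\})}$ for some color $C_0 \notin \ccc^*$ on the remaining $x^*$ positions. Such a $C_0$ exists because $|\ccc^*| \leq |z| \leq \Delta$ in the $\len(z)=0$ subcase and $|\ccc^*| \leq |z| \leq \Delta-1 < |\ccs|$ will be used in the degeneracy check below. Picking $\ell$ from $\A$ and the obvious labels elsewhere yields the valid configuration $\ell(\ccc^*)^{\Delta - x^*}\s\X^{x^*}$. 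For a $\B$-completion, observe that $\gen{\U_\beta, \cdot}$ contains no $\P$-label (since $\P_i < \U_\beta$ by \Cref{lem:edgediagpi}), so the completion must be a color configuration. A bookkeeping of the possible non-$\X$ picks from each set, combined with the fact that $\ccc^* \cap \{C_0\} = \emptyset$, forces the color set $\ccc^{**}$ of the completion to be exactly $\ccc^*$ and the $\B$-contribution to be $\ell(\ccc^*)=\ell$, unless we pass through the ``all-$\X$'' degeneracy which would require some $\ell(\ccc^{**}) \in \B$ with $|\ccc^{**}| = \Delta$. This last possibility is precisely ruled out by the hypothesis $|z|\leq\Delta-1$ when $\len(z) \geq 1$ and is handled directly by comparing color sets in the $\len(z)=0$ subcase.

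\textbf{Pointer case ($\ell = \P_j$ or $\ell = \U_j$, $j \geq 1$).} Here I use the pointer configuration $\P_j\s\U_j^{\Delta-1}$. For $\ell = \P_j$, I set $\B_2 = \cdots = \B_\Delta = \gen{\U_j}$ (an element of $\Sigma_{\re}$ via $\gen{\U_j,\ell(\emptyset)}$), each of which supplies $\U_j$; for $\ell = \U_j$, I additionally use one copy of $\gen{\P_j, \ell(\{C : \level(C)\geq j\})}$ to supply the single $\P_j$. A pick of $\ell$ from $\A$ completes the configuration to the valid $\P_j\s\U_j^{\Delta-1}$. For a $\B$-completion, a color configuration would require a color set of size $\geq \Delta-1$ among the picks, which is barred by $|z|\leq\Delta-1$ together with the fact that $\gen{\U_j}$ supplies only elements of $\{\X,\U_0,\ldots,\U_j\}$ (no color labels, since $\U_j$'s successors are only smaller-index $\U_k$'s and $\X$). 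So the completion must be a pointer configuration $\P_{j'}\s\U_{j'}^{\Delta-1}$, and the $\P_{j'}$ can only come from $\A$'s replacement or from the $\gen{\P_j,\cdot}$ set, which forces $j' \geq j$ via right-closedness of the generating sets; combined with $\U_{j'} \in \gen{\U_j}$ forcing $j' \leq j$, we obtain $j' = j$, and the only consistent label at position~$1$ is precisely $\ell$, contradicting $\ell \notin \B$.

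\textbf{Main obstacle.} The chief delicate point is ruling out spurious $\B$-completions: because every set in $\Sigma_{\re}$ contains $\X$ and typically several $\U_k$'s, naive constructions admit unintended ``all-$\X$'' or mixed pointer/color completions. The hypothesis on $|z|$ is precisely calibrated to forbid the all-$\X$ color degeneracy (which would require a color set of size $\Delta$), and the observation that $\gen{\U_j}$-type sets carry no $\P$-labels (immediate from \Cref{lem:edgediagpi}) is what prevents mixing pointer with color configurations. Once these two traps are handled, the remaining case analysis is mechanical.
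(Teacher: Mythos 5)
Your pointer case is essentially sound, but the color case has a genuine gap at the step ``so the completion must be a $\B$-color configuration''. The sets you place on positions $2,\dots,\Delta$ are generated from $\U_\beta$, and by \Cref{lem:edgediagpi} the successors of $\U_\beta$ include \emph{every} $\U_i$ with $i\le\beta$; hence both $\gen{\U_\beta,\ell(\ccc^*)}$ and $\gen{\U_\beta,\ell(\{C_0\})}$ contain $\U_1,\dots,\U_\beta$. A pointer configuration $\P_i\s\U_i^{\Delta-1}$ needs only a \emph{single} $\P_i$, and that one label can be supplied by $\B$ at position $1$, with all remaining positions supplying $\U_i$. So the absence of $\P$-labels on positions $2,\dots,\Delta$ does not force a color completion. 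This escape is not vacuous: take $\B=\gen{\P_i,\ell(\ccc)}\in\Sigma_{\re}$ (the membership condition only requires $\level(\ccs\setminus\ccc)\le i-1$) and let $\ccc^*$ contain a color of level $<i$ lying outside $\ccc$; then $\P_i\in\B$ while $\ell(\ccc^*)\notin\B$, and your witness configuration admits the valid completion $\P_i\s\U_i^{\Delta-1}$, so it fails to certify $\A\not\le\B$. A second, smaller defect in the same case: a color $C_0\notin\ccc^*$ need not exist (nothing prevents $\ccc^*=\ccs$), and the justification you give is circular, since $|\ccs|=|z|$, so ``$|z|\le\Delta-1<|\ccs|$'' cannot hold.

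The paper's proof avoids exactly these traps by not hand-crafting the companion sets: it starts from a configuration of $\nodeconst_\Delta(z)$ containing the distinguishing label $a$, here $\ell(\ccc^*)^{\Delta-x^*}\s\X^{x^*}$, and uses the singleton-generated sets $\gen{\ell(\ccc^*)}$ and $\gen{\X}=\{\X\}$ on positions $2,\dots,\Delta$. Since $\gen{\ell(\ccc^*)}$ contains $\U_i$ only for $i\le\level(\ccc^*)$ and $\{\X\}$ contains no $\U_i$ with $i\ge 1$, pointer completions are blocked outright when $x^*\ge 1$, and when $x^*=0$ (so $|\ccc^*|=1$) a pointer completion would force $i\le\level(\ccc^*)$, whence $\ell(\ccc^*)\in\gen{\P_i}\subseteq\B$ by right-closedness, a contradiction; the remainder is a case analysis over the two configuration types, which is where the level structure does the work your construction skips. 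If you want to keep your ``tailored witness'' style, you must cap the $\U$-content of the companion sets (e.g.\ use $\gen{\ell(\ccc^*)}=\gen{\U_{\level(\ccc^*)},\ell(\ccc^*)}$ and $\{\X\}$ instead of the $\U_\beta$-generated sets) and then still argue the $x^*=0$ subcase via levels. Finally, in your $\U_j$ sub-case the phrase ``barred by $|z|\le\Delta-1$'' is not by itself sufficient: a color completion with $|\ccc|=\Delta-1=|z|$ is not excluded by that bound alone; one must additionally observe that $\ell(\ccs)\in\B$ would force $\U_j\in\B$ (via \Cref{obs:rcs} and \Cref{lem:edgediagpi}), contradicting $\U_j\notin\B$.
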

\begin{proof}
	Assume that $\A \not\subseteq \B$, and let $a \in \Sigma_\Delta(z)$ be a label contained in $\A \setminus \B$. We prove that $\A \not\le \B$. For a contradiction, assume that $\A \le \B$.
	
	Consider some configuration $a \s \ell_2 \s \ldots \s \ell_\Delta \in \nodeconst_\Delta(z)$. Note that, by \Cref{lem:repi} and by \Cref{lem:edgediagpi}, for each label $\ell \in \Sigma_{\Delta}(z)$, there is the label $\gen{\ell}$ in $\Sigma_{\re}$.
	Hence, by \Cref{lem:repi}, the configuration  $\A \s \gen{\ell_2} \s \ldots \s \gen{\ell_\Delta}$ is in $\nodeconst_{\re}$. Since $\A \le \B$, then the configuration $\B \s \gen{\ell_2} \s \ldots \s \gen{\ell_\Delta}$ must also be in $\nodeconst_{\re}$.
	
	By the definition of $\nodeconst_{\re}$, there must exist a configuration $b \s \ell'_2 \s \ldots \ell'_\Delta \in \nodeconst_\Delta(z)$ that is in $\B \times \gen{\ell_2} \times \ldots \times \gen{\ell_\Delta}$. Since, by \Cref{obs:rcs}, $\B$ is right-closed, then the fact that $a \notin \B$ implies that also all labels that are at least as weak as $a$ (according to $\edgeconst_{\Delta}(z)$) are not in $\B$. This implies that $b$ is not as least as weak as $a$ according to $\edgeconst_{\Delta}(z)$, since otherwise $a$ would be a successor of $b$, and $b$ being in $\B$ would imply $a$ being in $\B$. Also, since $\ell'_i \in \gen{\ell_i}$ for all $2 \le i \le \Delta$, then we also have that $\ell_i \le \ell'_i$.
	
	Summarizing, there must be two configurations, $a \s \ell_2 \s \ldots \s \ell_\Delta$ and $b \s \ell'_2 \s \ldots \ell'_\Delta$, that are both contained in $\nodeconst_\Delta(z)$, satisfying that, according to $\edgeconst_{\Delta}(z)$, $b \not\le a$ (implying that either $a < b$, or that $a$ and $b$ are uncomparable), and $\ell_i \le \ell'_i$ for all  $2 \le i \le \Delta$.
	We show that all configurations of $\nodeconst_\Delta(z)$ do not satisfy these requirements, reaching a contradiction.
	
	In $\nodeconst_\Delta(z)$ there are two types of configurations: $\ell(\ccc)^{\Delta-x} \s \X^x$ (color configurations), and $\P_i \s \U^{\Delta-1}_i$ (pointer configurations). We consider three cases separately. 
	\begin{itemize}
		\item Both configurations are color configurations, $C_1 = \ell(\ccc_1)^{\Delta-x_1} \s \X^{x_1}$ and $C_2 = \ell(\ccc_2)^{\Delta-x_2} \s \X^{x_2}$. Assume w.l.o.g.\ that $x_1 \le x_2$. By assumption on $z$, $x_2 \le \Delta - 1$. Also by the definition of $\nodeconst_\Delta(z)$, $|\ccc_1| \le |\ccc_2|$. 
		 In this case, $a$ cannot be $\X$ (since $b$ must be a label strictly stronger than $a$ or uncomparable with $a$, but $\X$ is at least as strong as any other label). Hence, $a$ must be either $\ell(\ccc_1)$ or $\ell(\ccc_2)$.
		 
		 If $a = \ell(\ccc_1)$ and $b = \ell(\ccc_2)$, then by \Cref{lem:edgediagpi}, since $b \not \le a$, then $\ccc_1 \not\subseteq \ccc_2$, that, together with the fact that $|\ccc_1| \le |\ccc_2|$ implies that both $\ccc_1$ and $\ccc_2$ are not the set containing all colors. Hence, $x_1 \le x_2 \le \Delta-2$, implying that there exists some $i$ such that $\ell'_i = \ell(\ccc_2)$. Then, $\ell_i$ must be either $\X$ or $\ell(\ccc_1)$, but both of them are not as weak as $\ell(\ccc_2)$ because of \Cref{lem:edgediagpi} and the fact that $|\ccc_1| \le |\ccc_2|$.  
		 
		 If $a = \ell(\ccc_1)$ and $b = \X$, then there must exist some index  $2 \le i \le \Delta$ such that $\ell'_i = \ell(\ccc_2)$, but $\ell(\ccc_2)$ is not a successor of either $\ell(\ccc_1)$ or $\X$, by \Cref{lem:edgediagpi} and the fact that  $|\ccc_1| \le |\ccc_2|$.
		 
		 If $a = \ell(\ccc_2)$ and $b = \X$, then since $x_2 \ge x_1$, then there must exist some index $2 \le i \le \Delta$ such that $\ell_i = \X$ and $\ell'_i = \ell(\ccc_1)$, that is now allowed.
		 
		 If $a = \ell(\ccc_2)$ and $b = \ell(\ccc_1)$, then, if $a < b$, then by  \Cref{lem:edgediagpi}, we obtain that $x_1 < x_2$. This implies that there exists some $2 \le i \le \Delta$ such that $\ell_i = \X$, and $\ell'_i = \ell(\ccc_1)$, which is not allowed. If $a$ and $b$ are uncomparable, then there  must exist some $2 \le i \le \Delta$ such that $\ell_i = \ell(\ccc_2)$ or $\ell_i = \X$, and $\ell'_i = \ell(\ccc_1)$. In both cases, we have $\ell_i \not\le \ell'_i$.
		 
		 \item Both configurations are pointer configurations,  $C_1 = \P_s \s \U^{\Delta-1}_s$ and $C_2 = \P_t \s \U^{\Delta-1}_t$. Assume w.l.o.g.\ that $s \le t$. Note that, according to \Cref{lem:edgediagpi}, $\P_s \le \P_t < \U_t \le \U_s$. 
		 
		 If $b = \P_s$, then $a = \P_t$ or  $a = \U_t$. We obtain that $b \le a$, which is not allowed.
		 
		 If $b = \U_s$, then there exists some $2 \le i \le \Delta$ such that $\ell'_i = \P_s$. Then, $\ell_i$ is either $\P_t$ or $\U_t$. If $\ell_i = \U_t$, or $\ell_i = \P_t$ and $s \neq t$, then $\ell'_i < \ell_i$, which is not allowed. If $\ell_i = \P_t$ and $s = t$, then $a = \U_s$, implying $a=b$, which is not allowed.
		 
		 If $b = \P_t$, then $a$ cannot be $\U_s$, since $\P_t < \U_s$. Hence, $a = \P_s$, implying $s \neq t$, since otherwise we would get that $a=b$. In this case, $\ell'_2 = \U_t$ and $\ell_2 = \U_s$, and we obtain $\ell'_2 < \ell_2$, which is not allowed. 
		 
		 If $b = \U_t$, then $a$ cannot be $\U_s$, since $\U_t \le \U_s$. Hence, $a = \P_s$.
		 In this case, there must exist some $2 \le i \le \Delta$ such that $\ell'_i = \P_t$. Then, $\ell_i$ is $\U_s$, implying that $\ell'_i < \ell_i$, which is not allowed.

		 \item One configuration is a color configuration  $C_1 = \ell(\ccc)^{\Delta-x} \s \X^{x}$, and the other is a pointer configuration $C_2 = \P_s \s \U^{\Delta-1}_s$.
		 It is not possible that $a = \X$, since we need that $b \not\le a$, but $\X$ is at least as strong as any other label.
		 
		  If $a = \ell(\ccc)$ and $x \ge 1$, then there must be some $i$ such that $\ell_i = \X$, and since $\ell_i \le \ell'_i$, then $\ell'_i = \X$, which is a contradiction.
		 
		 If $a = \ell(\ccc)$ and $x = 0$, then for all $2 \le i \le \Delta$, $\ell_i = \ell(\ccc)$. In order to have $\ell_i \le \ell'_i$, by \Cref{lem:edgediagpi} it must hold that $\level(\ccc) \ge s$, since for some $i$, it holds that $\ell'_i = \U_s$. For the same reason, it must hold that for all $2 \le i \le \Delta$, $\ell'_i \neq \P_s$. Hence, $b = \P_s$. Since $x=0$, then $|\ccc| = 1$, implying that for all $C \in \ccc$, $\level(C) \ge s$, and hence that $\P_s < \ell(\ccc)$, and hence that $b < a$, which is not allowed.
		 
		 If $a = \P_s$, then, for all $2 \le i \le \Delta$, $\ell_i = \U_s$. By \Cref{lem:edgediagpi}, for all $\ccc$ it holds that $\U_s \not \le \ell(\ccc)$. Hence, for all $2 \le i \le \Delta$, $\ell'_i$ must be $\X$, but this is not possible, since the existence of the configuration $C_2$ implies that $\len(z) > 0$, and hence that $|z| \le \Delta-1$, and thus that $x \le \Delta-2$.
		 
		 If $a = \U_s$, by \Cref{lem:edgediagpi} we either have that $b = \X$, or that $b = \ell(\ccc)$ and $\level(\ccc) < s$. If $b = \X$ then there must exist some $i$ such that $\ell'_i = \ell(\ccc)$ such that $\ell_i = \U_s$ (since, for the same reason as above, it must hold that $x \le \Delta-2$), but this is not allowed. If $b = \ell(\ccc)$ and $\level(\ccc) < s$, then there must be some $i$ such that $\ell'_i = \ell(\ccc)$ and $\ell_i$ is either $\P_s$ or $\U_s$. Since  $\level(\ccc) < s$, then both are not as weak as $\ell(\ccc)$.
		 
	\end{itemize}

\end{proof}

Recall the definition of a relaxation of a node configuration in~\Cref{def:noderelax}.
The second and much more complicated step we have to perform is to show that $\rere(\re(\Pi_\Delta(z)))$ can be relaxed to $\Pi_\Delta(z')$, where $z' = \prefix(z)$.
Our key technical lemma is given in the following.
Recall that we set $\U_0 := \X$.

\begin{lemma}\label{lem:keytec}
	 Assume $|z| \le \Delta$ if $\len(z) = 0$, and $|z| \le \Delta-1$ otherwise. Then each node configuration of $\rere(\re(\Pi_\Delta(z)))$ can be relaxed to
	\begin{align*}
		&\gen{\gen{\P_i}} \s \gen{\gen{\U_i}}^{\Delta - 1} &\textrm{ for some }1 \leq i \leq \beta&, \textrm{ or}\\
		&\gen{\gen{\U_i, \ell(\ccc)}, \gen{\P_j}}^{\Delta - |\ccc| - i + j + 1} \s \gen{\gen{X}}^{|C| + i - j - 1} &\textrm{ for some }\ccc \in 2^\ccs \setminus \{\emptyset\}, 0 \leq i, j \leq \beta&\\ &&\textrm{satisfying }\level(\ccc) \leq j\textrm{ and }i \in \{ j, j+1 \}&,
	\end{align*}
	where we set $\gen{\gen{\U_i, \ell(\ccc)}, \gen{\P_j}} := \gen{\gen{\U_i, \ell(\ccc)}}$ if $j = 0$.
\end{lemma}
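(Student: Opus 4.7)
I will start by unpacking what it means to be a node configuration of $\rere(\re(\Pi_\Delta(z)))$: it is a tuple $\mathcal{B}_1 \s \dots \s \mathcal{B}_\Delta$ of subsets of $\Sigma_{\re}$ such that for every selection $(\B_1, \dots, \B_\Delta)$ with $\B_i \in \mathcal{B}_i$, the tuple belongs to $\nodeconst_{\re}$. By \Cref{lem:repi}, this translates to the existence of a \emph{witness} $(a_1, \dots, a_\Delta) \in \nodeconst_\Delta(z)$ with $a_i \in \B_i$ for every selection. Two key structural facts will guide the analysis: first, by \Cref{obs:rcs} combined with \Cref{lem:abc123}, each $\mathcal{B}_i$ is closed upward under set inclusion (since strength on $\Sigma_{\re}$ w.r.t.\ $\nodeconst_{\re}$ coincides with $\subseteq$); second, each $\B \in \Sigma_{\re}$ is itself right-closed in $\Sigma_\Delta(z)$ w.r.t.\ $\edgeconst_\Delta(z)$, so once $\B$ contains $\P_k$ (resp.\ $\U_k$, resp.\ $\ell(\ccc)$), it automatically contains $\gen{\P_k}$ (resp.\ $\gen{\U_k}, \gen{\ell(\ccc)}$).

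\textbf{Step 1: pick canonical witnesses and split into cases.} For each $i$, fix some inclusion-minimal element $\M_i$ of $\mathcal{B}_i$. The selection $(\M_1, \dots, \M_\Delta)$ admits a witness $(a_1, \dots, a_\Delta) \in \nodeconst_\Delta(z)$, which is either a pointer configuration $\P_k \s \U_k^{\Delta-1}$ or a color configuration $\ell(\ccc)^{\Delta-x} \s \X^x$ with $x = |\ccc| - 1$. If for \emph{every} choice of minimal elements the witness has the pointer shape, I will show the first target form applies; otherwise some minimal choice has a color witness, and I will derive the second form.

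\textbf{Step 2: pointer case.} Suppose every minimal selection yields a pointer witness $\P_k \s \U_k^{\Delta-1}$ (the index $k$ must be consistent across selections because mixing different levels $k, k'$ violates $\edgeconst_{\re}$ in \Cref{lem:repi}). In the unique position with $a_i = \P_k$, the minimal $\M_i$ contains $\P_k$, hence contains $\gen{\P_k}$ by right-closedness; analogously the other minimal elements contain $\gen{\U_k}$. Upward closure of $\mathcal{B}_i$ in $\Sigma_{\re}$ then lifts this to $\mathcal{B}_i \subseteq \gen{\gen{\P_k}}$ or $\mathcal{B}_i \subseteq \gen{\gen{\U_k}}$, which is exactly the first target form (I will also need to check that the identity of the $\P_k$-position is fixed by varying the selection, using the observation that no position can simultaneously be forced to contain $\gen{\P_k}$ and be one of the $\U_k$-slots, since that would yield $\P_k \s \P_k \notin \nodeconst_{\re}$).

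\textbf{Step 3: color case and identifying $(\ccc,i,j)$.} If some minimal selection has witness $\ell(\ccc)^{\Delta-x} \s \X^x$, I extract $\ccc$ from this witness. The parameter $j$ will be determined by the largest pointer level that can still appear as a minimal element of some $\mathcal{B}_i$ (equivalently, by the weakest uncolored label any $\mathcal{B}_i$ can reach), and $i$ will be one more than the smallest level $i'$ such that $\gen{\U_{i'}, \ell(\ccc')}$ with $\ccc' \supseteq \ccc$ appears in every ``color position''. The constraint $\level(\ccc) \leq j$ will come from requiring $\P_j \s \ell(\ccc)$ to be a legal edge configuration of $\Pi_\Delta(z)$ (otherwise some cross-selection between a $\gen{\P_j}$-position and a $\gen{\U_i, \ell(\ccc)}$-position has no witness). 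The constraint $i \in \{j, j+1\}$ will follow from the fact that $\edgeconst_\Delta(z)$ pairs $\P_j$ only with $\U_{j-1}$ (\Cref{lem:repi}), so the uncolored level $i$ at a color position cannot exceed $j+1$ without breaking a selection, and cannot drop below $j$ without making the ``$\X$ count'' too small. Then counting positions whose minimal element contains $\X$ versus those containing $\gen{\U_i, \ell(\ccc)}$ or $\gen{\P_j}$ yields the exponents $|\ccc| + i - j - 1$ and $\Delta - |\ccc| - i + j + 1$ respectively.

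\textbf{Main obstacle.} The hard part is Step 3. Concretely, I expect the bookkeeping to be delicate when different minimal elements of the \emph{same} $\mathcal{B}_i$ correspond to different witness types (some colored, some pointer-shaped): one has to verify that upgrading each such minimal element to a superset in $\gen{\gen{\U_i, \ell(\ccc)}, \gen{\P_j}}$ is still consistent, and that the extra slack provided by the ``$i = j+1$'' choice is exactly what allows this. I anticipate the cleanest way to close this is to argue that, for any color position $v$ and any selection, the witness label at $v$ must lie in $\gen{\ell(\ccc)} \cup \gen{\U_i} \cup \gen{\P_j} \cup \{\X\}$, and then use the numerical identity $(\Delta - |\ccc| - i + j + 1) + (|\ccc| + i - j - 1) = \Delta$ to match the count.
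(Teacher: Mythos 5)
There is a genuine gap, and it sits exactly where you flagged your ``main obstacle.'' The heart of the lemma is a simultaneity problem that your plan never resolves: to show that a configuration $\S_1 \s \dots \s \S_\Delta$ which does not relax to one of the two target forms is absent from $\nodeconst_{\rere}$, one must exhibit a \emph{single} selection $(\B_1,\dots,\B_\Delta)\in\S_1\times\dots\times\S_\Delta$ that admits \emph{no} witness in $\nodeconst_\Delta(z)$ -- i.e.\ it must block every color witness $\ell(\ccc)^{\Delta-|\ccc|+1}\s\X^{|\ccc|-1}$ for every $\ccc\in 2^\ccs\setminus\{\emptyset\}$ and every pointer witness at once. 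The paper isolates this as a sufficient criterion (\Cref{lem:notsteptwo}: for each $\ccc$, at least $|\ccc|$ positions with $\ell(\ccc)\notin\B_k$, plus a pointer condition), derives the needed counting hypotheses from non-relaxability (\Cref{lem:confprop}), and then -- this is the step your proposal has no counterpart for -- constructs the blocking selection by setting up a bipartite graph between the colors of $\ccs$ and the positions, verifying Hall's condition, and applying Hall's marriage theorem to match each color $C$ to a distinct position omitting $\ell(\{C\})$ (right-closedness then kills $\ell(\ccc)$ for every $\ccc$ containing $C$). Reconciling this matching with the pointer condition ($\U_{i^*+1}$ must be missing from some position not matched to a low-level color) requires a further augmenting-path modification of the matching (\Cref{lem:eandf} and Case II). Your sketch of extracting $(\ccc,i,j)$ from one minimal witness, showing witness labels lie in $\gen{\ell(\ccc)}\cup\gen{\U_i}\cup\gen{\P_j}\cup\{\X\}$, and matching exponents by the identity $(\Delta-|\ccc|-i+j+1)+(|\ccc|+i-j-1)=\Delta$ does not produce this coherent global choice, and no local/greedy refinement of it obviously can.

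There are also local slips that would need repair even in the easier pointer case. Knowing that \emph{one} inclusion-minimal element $\M_i$ of $\S_i$ contains $\gen{\P_k}$ does not give $\S_i\subseteq\gen{\gen{\P_k}}$: by \Cref{obs:rcs} and \Cref{lem:abc123}, $\gen{\gen{\P_k}}$ is exactly the set of labels of $\Sigma_{\re}$ that are supersets of $\gen{\P_k}$, so the desired containment is a statement about \emph{every} element of $\S_i$ (equivalently every minimal element), and $\S_i$ may have several pairwise-incomparable minimal elements; upward closure of $\S_i$ does not bridge this. Likewise, witnesses are not unique and need not use a consistent pointer level $k$ or a consistent $\P$-position across different selections, so your Step~1 dichotomy (``all minimal selections have pointer witnesses'' versus ``some has a color witness'') does not by itself decide which target form applies, and the argument offered for fixing the $\P_k$-position is not sound. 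Finally, note that every label of $\Sigma_{\re}$ contains $\X$ (it is the strongest label and all labels are right-closed), so counting ``positions whose minimal element contains $\X$'' is vacuous; the counting you actually need is an upper bound of $|\ccc|+i-j-1$ on the number of positions having some element that avoids both $\{\U_i,\ell(\ccc)\}$ and $\P_j$, which is precisely the contrapositive bookkeeping of \Cref{lem:confprop} that then feeds the Hall argument.
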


\subsection{Proof of Lemma~\ref{lem:keytec}}\label{sec:keytec}
Throughout this section, assume $|z| \le \Delta$ if $\len(z) = 0$, and $|z| \le \Delta-1$ otherwise.
Let $\nodeconst'$ denote the set of configurations given in \Cref{lem:keytec}, and $\nodeconst_{\rere}$ the node constraint of $\rere(\re(\Pi_\Delta(z)))$.
We prove \Cref{lem:keytec} by showing that any configuration with labels from $2^{\Sigma_{\re}} \setminus \{\emptyset\}$ that cannot be relaxed to any configuration from $\nodeconst'$ is not contained in $\nodeconst_{\rere}$.

Let $\fS = \S_1 \s \dots \s \S_\Delta$ be an arbitrary configuration with labels from $2^{\Sigma_{\re}} \setminus \{\emptyset\}$ that cannot be relaxed to any configuration from $\nodeconst'$.
If at least one of the $\S_k$ is not right-closed (w.r.t.\ $\nodeconst_{\re}$), then, by \Cref{obs:rcs}, $\fS$ is not contained in $\nodeconst_{\rere}$, and we are done.
Hence, assume throughout the remainder of the section that all $\S_k$ are right-closed.

In the following, we collect some properties that $\fS$ satisfies due to the fact that it cannot be relaxed to any configuration from $\nodeconst'$.
Observe that, by definition, the $\S_k$ are sets of sets of labels from $\Sigma_{\Delta}(\Pi)$.

\begin{lemma}\label{lem:confprop}
	The configuration $\fS$ satisfies the following two properties.
	\begin{enumerate}
		\item\label{item:color} For any $\ccc \in 2^\ccs \setminus \{\emptyset\}$, $0 \leq i, j \leq \beta$ satisfying $\level(\ccc) \leq j$ and $i \in \{ j, j+1 \}$, there exist at least $|\ccc| + i - j$ indices $k \in \{1, \dots, \Delta\}$ such that there exists some set $\Q_k \in \S_k$ satisfying
		\begin{enumerate}
			\item\label{item:colorone} $\U_i \notin \Q_k$ or $\ell(\ccc) \notin \Q_k$, \textbf{and}
			\item\label{item:colortwo} $\P_j \notin \Q_k$ if $j > 0$.
		\end{enumerate} 
		\item\label{item:pointer} For any $1 \leq i \leq \beta$,
		\begin{enumerate}
			\item there exist sets $\Q_1 \in \S_1, \dots, \Q_\Delta \in \S_\Delta$ such that $\P_i \notin \Q_k$ for all $1 \leq k \leq \Delta$, \textbf{or}
			\item there exists some index $1 \leq k \leq \Delta$ and a set $\Q_k \in \S_k$ such that $\U_i \notin \Q_k$.
		\end{enumerate}
	\end{enumerate}
\end{lemma}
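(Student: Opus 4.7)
I would prove both properties by contrapositive: assume the given property fails and exhibit an explicit relaxation of $\fS$ to a configuration in $\nodeconst'$, contradicting the standing assumption that $\fS$ admits no such relaxation. The engine of both arguments is \Cref{lem:abc123} combined with \Cref{obs:subsetarrow}, which together say that for labels $\A,\B \in \Sigma_{\re}$ the strength relation w.r.t.\ $\nodeconst_{\re}$ coincides with reverse set inclusion; hence $\gen{\gen{\A}}$ is exactly the set of $\B \in \Sigma_{\re}$ with $\B \supseteq \A$. Moreover, by \Cref{obs:rcs} every $\Q \in \S_k$ is right-closed w.r.t.\ $\edgeconst_{\Delta}(z)$, so containment of a single label $a$ in $\Q$ already forces $\gen{a} \subseteq \Q$. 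Consequently, whenever every element of some $\S_k$ contains a label $a$, we automatically obtain $\S_k \subseteq \gen{\gen{\gen{a}}}$; this is the only mechanism we need to build relaxations.

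For property~2, suppose it fails for some $1 \leq i \leq \beta$. Failure of~(b) means $\U_i \in \Q$ for every $\Q \in \S_k$ and every $k$, which by the above yields $\S_k \subseteq \gen{\gen{\U_i}}$ uniformly. Failure of~(a) says that no global choice $\Q_1 \in \S_1,\dots,\Q_\Delta \in \S_\Delta$ can avoid $\P_i$ at every coordinate; a one-line contrapositive (if each $\S_k$ contained some $\P_i$-free set, collect these into a disallowed choice) then produces an index $k^*$ such that $\P_i \in \Q$ for every $\Q \in \S_{k^*}$, giving $\S_{k^*} \subseteq \gen{\gen{\P_i}}$. After relabeling coordinates, $\fS$ relaxes to $\gen{\gen{\P_i}} \s \gen{\gen{\U_i}}^{\Delta-1}$, the pointer-type configuration in $\nodeconst'$.

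For property~1, suppose it fails for some admissible triple $(\ccc,i,j)$. Then strictly fewer than $|\ccc|+i-j$ indices admit a $\Q_k \in \S_k$ satisfying both (a) and (b), so at least $\Delta - |\ccc| - i + j + 1$ indices $k$ have the property that every $\Q \in \S_k$ either contains both $\U_i$ and $\ell(\ccc)$, or (when $j > 0$) contains $\P_j$. Right-closedness upgrades this to $\Q \supseteq \gen{\U_i,\ell(\ccc)}$ or $\Q \supseteq \gen{\P_j}$, so $\S_k \subseteq \gen{\gen{\U_i,\ell(\ccc)},\gen{\P_j}}$ (collapsing to $\gen{\gen{\U_i,\ell(\ccc)}}$ when $j=0$, matching the convention in \Cref{lem:keytec}). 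The remaining at most $|\ccc|+i-j-1$ indices satisfy $\S_k \subseteq \gen{\gen{\X}}$ trivially, since every nonempty right-closed set contains the maximum label $\X$. Permuting coordinates yields the desired relaxation to a color-type configuration in $\nodeconst'$, the contradiction we sought.

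The main obstacle will be purely bureaucratic rather than conceptual: one must verify that $\gen{\P_i}$, $\gen{\U_i}$, $\gen{\U_i,\ell(\ccc)}$, $\gen{\P_j}$, and $\gen{\X}$ are all bona fide elements of $\Sigma_{\re}$ under the convention $\U_0 = \ell(\emptyset) = \X$, which amounts to choosing the right companion $\ell$-set in each case using the explicit form of $\Sigma_{\re}$ from \Cref{lem:repi}. One must also handle degenerate cases $j=0$, $\ccc = \ccs$, or exponents that collapse to $0$ or $\Delta$ so that the target configuration becomes a single-block one. None of this requires new ideas beyond the edge diagram recorded in \Cref{lem:edgediagpi}.
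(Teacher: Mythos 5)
Your proof is correct and follows essentially the same route as the paper's: negate each property, use right-closedness (\Cref{obs:rcs}) together with \Cref{obs:subsetarrow} to place every element of the offending $\S_k$'s into $\gen{\gen{\U_i,\ell(\ccc)},\gen{\P_j}}$, $\gen{\gen{\P_i}}$, or $\gen{\gen{\U_i}}$, absorb all remaining coordinates into $\gen{\gen{\X}}$, and contradict the assumed non-relaxability of $\fS$. The only cosmetic difference is that you additionally invoke \Cref{lem:abc123}, which is not needed for this step (only the inclusion-implies-strength direction from \Cref{obs:subsetarrow} is used), just as in the paper's argument.
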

\begin{proof}
	Suppose for a contradiction that the lemma does not hold.
	Then $\fS$ violates property~\ref{item:color} or property~\ref{item:pointer}.
	
	Consider first the case that $\fS$ violates property~\ref{item:color}, and let $\ccc \in 2^\ccs \setminus \{\emptyset\}$, $0 \leq i, j \leq \beta$ be a choice for which property~\ref{item:color} is violated, i.e., we have $\level(\ccc) \leq j$ and $i \in \{ j, j+1 \}$, and there exist at most $|\ccc| + i - j - 1$ indices $k \in \{1, \dots, \Delta\}$ such that there is some set $\Q_k \in \S_k$ satisfying properties~\ref{item:colorone} and~\ref{item:colortwo}.
	W.l.o.g., let these indices be $\Delta - |\ccc| - i + j + 2, \dots, \Delta$.
	Then, for any $1 \leq k \leq \Delta - |\ccc| - i + j + 1$, and any set $\Q_k \in \S_k$, property~\ref{item:colorone} is violated or property~\ref{item:colortwo} is violated, i.e., 1) $\U_i \in \Q_k$ and $\ell(\ccc) \in \Q_k$, or 2) $\P_j \in \Q_k$ and $j > 0$.
	If the former holds, then, since by \Cref{obs:rcs} $\Q_k$ is right-closed  w.r.t.\ $\edgeconst_{\Delta}(z)$, we have $\gen{\U_i, \ell(\ccc)} \subseteq \Q_k$, and we obtain, by \Cref{obs:subsetarrow}, that $\Q_k$ is at least as strong as $\gen{\U_i, \ell(\ccc)}$ according to $\nodeconst_{\re}$, which implies $\Q_k \in \gen{\gen{\U_i, \ell(\ccc)}}$.
	If the latter holds, then $\gen{\P_j} \subseteq \Q_k$ and $j > 0$, and we analogously obtain that $\Q_k \in \gen{\gen{\P_j}}$ (and $j > 0$).
	Since, for any $1 \leq j \leq \beta$, we have $\gen{\gen{\U_i, \ell(\ccc)}} \subseteq \gen{\gen{\U_i, \ell(\ccc)}, \gen{\P_j}}$ and $\gen{\gen{\P_j}} \subseteq \gen{\gen{\U_i, \ell(\ccc)}, \gen{\P_j}}$ (by the definition of $\gen{\cdot}$), it follows that $\Q_k \in \gen{\gen{\U_i, \ell(\ccc)}, \gen{\P_j}}$ (where we set $\gen{\gen{\U_i, \ell(\ccc)}, \gen{\P_j}} := \gen{\gen{\U_i, \ell(\ccc)}}$ if $j = 0$).
	Note that this property holds for any $\Q_k \in \S_k$.
	Hence, for any $1 \leq k \leq \Delta - |\ccc| - i + j + 1$, we have $\S_k \subseteq \gen{\gen{\U_i, \ell(\ccc)}, \gen{\P_j}}$, which implies that $\fS$ can be relaxed to the configuration $\gen{\gen{\U_i, \ell(\ccc)}, \gen{\P_j}}^{\Delta - |\ccc| - i + j + 1} \s \gen{\gen{X}}^{|C| + i - j - 1}$, yielding a contradiction to the fact that $\fS$ cannot be relaxed to any configuration from  $\nodeconst'$.
	Here we used that any set of labels from $\Sigma_{\re}$ is a subset of $\gen{\gen{\X}}$, which follows from \Cref{obs:subsetarrow}, the fact that by \Cref{obs:rcs} the sets in $\Sigma_{\re}$ are right-closed w.r.t\ $\edgeconst_{\Delta}(z)$, and the fact that $\X$ is at least as strong as any label in $\Sigma_{\Delta}(z)$ (which in turn follows from \Cref{lem:edgediagpi}).

	Now consider the second case, namely that  $\fS$ violates property~\ref{item:pointer}, and let $1 \leq i \leq \beta$ be a choice for which property~\ref{item:pointer} is violated, i.e., we have that 1) there exists an index $1 \leq k' \leq \Delta$ such that any set $\Q_{k'} \in \S_{k'}$ satisfies $\P_i \in \Q_{k'}$, and 2) for any $1 \leq k \leq \Delta$ and any $\Q_k \in \S_k$, we have $\U_i \in \Q_k$.
	W.l.o.g., assume that $k' = 1$.
	By an analogous argumentation to the one from the first case, we obtain $\Q_1 \in \gen{\gen{\P_i}}$ for any $\Q_1 \in \S_1$, and $\Q_k \in \gen{\gen{\U_i}}$ for any $k \in \{2, \dots, \Delta\}$ and any $\Q_k \in \S_k$.
	Hence, $\S_1 \subseteq \gen{\gen{\P_i}}$, and $\S_k \subseteq \gen{\gen{\U_i}}$ for any $2 \leq k \leq \Delta$.
	It follows that $\fS$ can be relaxed to the configuration $\gen{\gen{\P_i}} \s \gen{\gen{\U_i}}^{\Delta - 1}$, again obtaining a contradiction to the fact that $\fS$ cannot be relaxed to any configuration from  $\nodeconst'$.
\end{proof}

Recall that in order to prove Lemma~\ref{lem:keytec}, we want to show that certain configurations with labels from $2^{\Sigma_{\re}} \setminus \{\emptyset\}$ are not contained in $\nodeconst_{\rere}$.
To this end, the following lemma gives a useful sufficient condition for the property that a configuration is not contained in $\nodeconst_{\rere}$.

\begin{lemma}\label{lem:notsteptwo}
	Let $\D_1 \s \dots \s \D_\Delta$ be a configuration with labels from $2^{\Sigma_{\re}} \setminus \{\emptyset\}$.
	Assume that there exists some $(\B_1, \dots, \B_\Delta) \in \D_1 \times \dots \times \D_\Delta$ satisfying that
	\begin{enumerate}
		\item\label{item:intercolor} for each $\ccc \in 2^{\ccs} \setminus \{\emptyset\}$, there are at least $|\ccc|$ indices $k \in \{ 1, \dots, \Delta \}$ such that $\ell(\ccc) \notin \B_k$, \textbf{and}
		\item\label{item:interpointer} there exists some $0 \leq i \leq \beta$ such that
		\begin{enumerate}
			\item\label{item:interpointerone} if $i \geq 1$, then $\P_i \notin \B_k$ for all $1\leq k \leq \Delta$, \textbf{and}
			\item\label{item:interpointertwo} if $i \leq \beta - 1$, then there exists some index $1 \leq k \leq \Delta$ such that $\U_{i+1} \notin \B_k$.
		\end{enumerate}
%		\item for each $1 \leq i \leq \beta$,
%		\begin{enumerate}
%			\item $\P_i \notin \A_k$ for all $1 \leq k \leq \Delta$, \textbf{or}
%			\item there exists some index $1 \leq k \leq \Delta$ such that $\U_i \notin \A_k$.
%		\end{enumerate}
	\end{enumerate}
	Then $\D_1 \s \dots \s \D_\Delta \notin \nodeconst_{\rere}$.
\end{lemma}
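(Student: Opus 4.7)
My plan is to exploit the definition of $\rere(\cdot)$: to show $\D_1 \s \dots \s \D_\Delta \notin \nodeconst_{\rere}$, it suffices to exhibit a single choice $(\B_1, \dots, \B_\Delta) \in \D_1 \times \cdots \times \D_\Delta$ such that no permutation of $(\B_1, \dots, \B_\Delta)$ lies in $\nodeconst_{\re}$. The hypothesis supplies precisely such a candidate. By \Cref{lem:repi}, membership in $\nodeconst_{\re}$ then reduces to selecting labels $\A_k \in \B_k$ whose multiset matches one of the canonical node configurations of $\Pi_\Delta(z)$: either a color configuration $\ell(\ccc)^{\Delta - |\ccc| + 1} \s \X^{|\ccc| - 1}$ for some nonempty $\ccc$, or a pointer configuration $\P_j \s \U_j^{\Delta - 1}$ for some $1 \leq j \leq \beta$. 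Since both canonical configurations are symmetric, the permutation can be absorbed, and I plan to rule out each canonical form by a counting argument on the $\B_k$'s, using the right-closedness of each $\B_k$ w.r.t.\ $\edgeconst_\Delta(z)$ (\Cref{obs:rcs}) and the strength relations of \Cref{lem:edgediagpi}.

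For the color case, matching $\ell(\ccc)^{\Delta - |\ccc| + 1} \s \X^{|\ccc| - 1}$ would require $\ell(\ccc) \in \B_k$ for at least $\Delta - |\ccc| + 1$ distinct indices $k$. But condition~\ref{item:intercolor} guarantees $\ell(\ccc) \notin \B_k$ for at least $|\ccc|$ indices, leaving at most $\Delta - |\ccc|$ indices where $\ell(\ccc)$ can appear, an immediate contradiction.

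For the pointer case with parameter $j$, let $k^*$ denote the index selected with $\A_{k^*} = \P_j$, and split on the relationship between $j$ and the index $i$ supplied by condition~\ref{item:interpointer}. If $j = i$, then $i \geq 1$ (because $j \geq 1$), so condition~\ref{item:interpointerone} applies and is directly violated by $\P_j = \P_i \in \B_{k^*}$. If $j < i$, then again $i \geq 1$; by \Cref{lem:edgediagpi} we have $\P_j < \P_i$, so right-closedness of $\B_{k^*}$ forces $\P_i \in \B_{k^*}$, again contradicting~\ref{item:interpointerone}. If $j > i$, then $i \leq j - 1 \leq \beta - 1$, so condition~\ref{item:interpointertwo} supplies some $k_0$ with $\U_{i+1} \notin \B_{k_0}$; whichever of $\P_j$ or $\U_j$ the label $\A_{k_0}$ equals, \Cref{lem:edgediagpi} shows that it is at least as weak as $\U_{i+1}$ ($\P_j$ is weaker than every $\U$-label, and $\U_j$ is weaker than or equal to $\U_{i+1}$ since $j \geq i+1$), so right-closedness forces $\U_{i+1} \in \B_{k_0}$, a contradiction.

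The main obstacle I anticipate is simply navigating the three-way split on $j$ versus $i$ in the pointer case and verifying the boundary values: when $i = 0$ only the subcase $j > i$ can occur (because $j \geq 1$ and condition~\ref{item:interpointerone} is vacuous), and when $i = \beta$ the subcase $j > i$ is vacuous while the other two suffice. The rest is a systematic application of right-closedness, upgrading ``a weaker label is present in some $\B_k$'' into ``the target label must be present in that $\B_k$'', and then colliding this with the negative information supplied by the hypotheses.
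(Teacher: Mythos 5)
Your proposal is correct and follows essentially the same route as the paper: reduce via the definition of $\rere(\cdot)$ and \Cref{lem:repi} to ruling out a choice from $\B_1\times\dots\times\B_\Delta$ matching one of the two canonical configurations of $\nodeconst_\Delta(z)$, eliminate the color configuration by the counting argument from property~\ref{item:intercolor}, and eliminate the pointer configuration using right-closedness (\Cref{obs:rcs}) together with the strength relations of \Cref{lem:edgediagpi}. The only cosmetic difference is that the paper shows a pointer configuration would violate property~\ref{item:interpointer} for every $i$ simultaneously, whereas you fix the hypothesized $i$ and split on $j$ versus $i$; the underlying facts used are identical.
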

\begin{proof}
	Let $\fB = \B_1 \s \dots \s \B_\Delta$ be a configuration with labels from $\Sigma_{\re}$ satisfying properties~\ref{item:intercolor} and~\ref{item:interpointer}.
	By the definition of $\rere(\cdot)$, in order to prove the lemma, it suffices to show that $\B \notin \nodeconst_{\re}$.

	Assume for a contradiction that $\B \in \nodeconst_{\re}$, and recall that the labels in $\B$ are sets of labels from $\Sigma_{\Delta}(z)$.
	By the definition of $\re(\cdot)$, it follows that there exists some $(\A_1, \dots, \A_\Delta) \in \B_1 \times \dots \times \B_\Delta$ such that $\fA := \A_1 \s \dots \s \A_\Delta \in \nodeconst_\Delta(z)$.
	If $\fA = \ell(\ccc)^{\Delta - |\ccc| + 1} \s \X^{|\ccc| - 1}$ for some $\ccc \in 2^\ccs \setminus  \{\emptyset\}$, then $\fB$ violates property~\ref{item:intercolor}.
	Hence, $\fA$ is not of this form, which implies that $\fA = \P_j \s \U_j^{\Delta-1}$ for some $1 \le j \le \beta$.
	W.l.o.g., assume that $\A_1 = \P_j$ and $\A_k = \U_j$ for any $2 \leq k \leq \Delta$.
	By \Cref{obs:rcs} and \Cref{lem:edgediagpi}, it follows that $\P_i \in \B_1$ for all $j \leq i \leq \beta$, and $\U_i \in \B_k$ for all $1 \leq i \leq j$ and $1 \leq k \leq \Delta$.
	Thus, for each $0 \leq i \leq j-1$, configuration $\fB$ violates property~\ref{item:interpointertwo}, and, for each $j \leq i \leq \beta$, configuration $\fB$ violates property~\ref{item:interpointerone}.
	Hence, $\fB$ violates property~\ref{item:interpointer}, which yields the desired contradiction.
\end{proof}

For a set $V'$ of nodes of a graph $G$, let $N_G(V')$ denote the set of all nodes of $G$ that are adjacent to at least one vertex from $V'$.
In order to prove Lemma~\ref{lem:keytec}, we will make use of Hall's marriage theorem~\cite[Theorem 1]{hall} that can be stated as follows.
\begin{theorem}[Hall]\label{thm:hall}
	Let $G = (V \cup W, E)$ be a bipartite graph such that, for each subset $V' \subseteq V$, we have $|N_G(V')| \geq |V'|$.
	Then there exists a matching in $G$ that matches all vertices from $V$.
\end{theorem}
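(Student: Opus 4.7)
The plan is to prove Hall's theorem by induction on $|V|$. The base cases $|V| = 0$ (the empty matching works vacuously) and $|V| = 1$ (Hall's condition forces the unique $v \in V$ to have a neighbor, which we match it to) are immediate. For the inductive step, assume the result holds for all bipartite graphs with fewer than $|V|$ vertices on the left, and assume Hall's condition holds in $G$. I would then split into two cases based on whether Hall's condition is always strict on proper nonempty subsets.

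Case 1 (slack everywhere): Suppose $|N_G(V')| > |V'|$ for every nonempty $V' \subsetneq V$. Pick any $v \in V$ and any neighbor $w$ of $v$ (which exists since $|N_G(\{v\})| \ge 1$). Match $v$ to $w$, and consider the induced subgraph $G'$ on $(V \setminus \{v\}) \cup (W \setminus \{w\})$. For any $V' \subseteq V \setminus \{v\}$, deleting $w$ shrinks the neighborhood by at most one, so $|N_{G'}(V')| \ge |N_G(V')| - 1 \ge (|V'|+1) - 1 = |V'|$, preserving Hall's condition. Induction gives a matching saturating $V \setminus \{v\}$, which combined with the edge $\{v,w\}$ saturates $V$.

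Case 2 (a critical set exists): Suppose there is a nonempty proper $V^* \subsetneq V$ with $|N_G(V^*)| = |V^*|$. Apply induction to the subgraph $G_1$ induced on $V^* \cup N_G(V^*)$, where Hall's condition restricts directly (for $V' \subseteq V^*$, neighbors in $G_1$ coincide with neighbors in $G$), to obtain a matching $M_1$ saturating $V^*$. Then apply induction to the subgraph $G_2$ induced on $(V \setminus V^*) \cup (W \setminus N_G(V^*))$, verifying Hall's condition there as follows: for any $V' \subseteq V \setminus V^*$, one has $N_{G_2}(V') = N_G(V') \setminus N_G(V^*)$, and writing $N_G(V' \cup V^*) = N_G(V^*) \sqcup (N_G(V') \setminus N_G(V^*))$ gives $|N_{G_2}(V')| = |N_G(V' \cup V^*)| - |V^*| \ge (|V'| + |V^*|) - |V^*| = |V'|$, using Hall's condition on $G$ together with the disjointness $V' \cap V^* = \emptyset$ and the tightness $|N_G(V^*)| = |V^*|$. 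Induction then yields a matching $M_2$ saturating $V \setminus V^*$, and $M_1 \cup M_2$ is a matching of $G$ saturating $V$ since the two matchings use disjoint vertex sets by construction.

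The main obstacle is the Hall-condition verification for $G_2$ in Case 2: it is crucial there that both $|V^* \cup V'| = |V^*| + |V'|$ and $|N_G(V^*) \cup N_{G_2}(V')| = |V^*| + |N_{G_2}(V')|$ hold as genuine equalities, which is exactly what the criticality $|N_G(V^*)| = |V^*|$ and the set-theoretic definition of $G_2$ deliver. Once this is in place, the dichotomy between Case 1 and Case 2 is clean, and both branches cleanly reduce to smaller instances where induction applies.
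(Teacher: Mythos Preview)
Your proof is correct; it is the standard textbook induction proof of Hall's theorem, and every step checks out. Note, however, that the paper does not actually prove this theorem: it is stated with a citation to Hall's original paper and used as a black box in the proof of \Cref{lem:keytec}. So there is no ``paper's own proof'' to compare against here.
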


In the following, we will combine \Cref{lem:confprop,lem:notsteptwo} and \Cref{thm:hall} to prove \Cref{lem:keytec}.
Recall that to prove \Cref{lem:keytec} it suffices to show that $\fS = \S_1 \s \dots \s \S_\Delta \notin \nodeconst_{\rere}$.
To this end, we will show that there exists some $(\B_1, \dots, \B_\Delta) \in \S_1 \times \dots \times \S_\Delta$ satisfying properties~\ref{item:intercolor} and~\ref{item:interpointer} from \Cref{lem:notsteptwo}, and then apply \Cref{lem:notsteptwo}.

Let $1 \leq i^* \leq \beta$ denote the largest index such that, for each $\S_k$ and each $\Q_k \in \S_k$, we have $\U_{i^*} \in \Q_k$.
If no such index exists, i.e., if, for each $1 \leq i \leq \beta$, there exists some $1 \leq k \leq \Delta$ and some $\Q_k \in \S_k$ such that $\U_i \notin \Q_k$, then set $i^* := 0$.

%Consider first the case that $i^* \neq 0$.
Consider the bipartite graph $G = (V \cup W, E)$ where $V = \ccs$, $W = \{\S_1, \dots, \S_\Delta\}$, and there is an edge between some color $C \in V$ and some set $\S_k \in W$ if and only if there exists some $\Q_k \in \S_k$ satisfying 1) $\ell(\{ C \}) \notin \Q_k$ and 2) $\P_{i^*} \notin \Q_k$ if $i^* > 0$.  

Consider some arbitrary set $\ccc \in 2^\ccs \setminus \{ \emptyset \}$, and set $i_{\max} := \max(i^*, \level(\ccc))$.
By applying (property~\ref{item:color} of) \Cref{lem:confprop} with parameters $\ccc$ and $i = j = i_{\max}$, we obtain that there exist at least $|\ccc|$ indices $k \in \{1, \dots, \Delta\}$ such that there exists some set $\Q_k \in \S_k$ satisfying
\begin{enumerate}
	\item\label{item:newcolorone} $\U_{i_{\max}} \notin \Q_k$ or $\ell(\ccc) \notin \Q_k$, and
	\item\label{item:newcolortwo} $\P_{i_{\max}} \notin \Q_k$ if $i_{\max} > 0$.
\end{enumerate} 
Consider an arbitrary such $\Q_k$.
If $i_{\max} = i^*$, then, by the definition of $i^*$, we have $\U_{i_{\max}} \in \Q_k$, and hence $\ell(\ccc) \notin \Q_k$ (by property~\ref{item:newcolorone}).
If $i_{\max} \neq i^*$, then, by the definition of $i_{\max}$, we have $i_{\max} =\level(\ccc)$, which implies that $\ell(\ccc)$ is at least as weak as $\U_{i_{\max}}$ according to $\edgeconst_\Delta(z)$, by \Cref{lem:edgediagpi}.
Since, by \Cref{obs:rcs}, $\Q_k$ is right-closed (w.r.t.\ $\edgeconst_\Delta(z)$), it follows that $\ell(\ccc) \in \Q_k$ would imply $\U_{i_{\max}} \in \Q_k$.
Thus, by property~\ref{item:newcolorone}, we again obtain $\ell(\ccc) \notin \Q_k$.
Hence, irrespective of whether $i_{\max} = i^*$ or $i_{\max} \neq i^*$, we have $\ell(\ccc) \notin \Q_k$.

Moreover, observe that the characterization of $\Sigma_{\re}$ given in \Cref{lem:repi} implies, by \Cref{lem:edgediagpi}, that for every $\B \in \Sigma_{\re}$ there is a set $\ccc' \in 2^\ccs$ such that, for every $\ccc'' \in 2^\ccs \setminus \{ \emptyset \}$, we have $\ell(\ccc'') \in \B$ if and only if $\ccc'' \subseteq \ccc'$: if $\B$ is of the form $\gen{\P_{i'}, \ell(\ccc''')}$, then $\ccc' = \ccc'''$, and if $\B$ is of the form $\gen{\U_{i'}, \ell(\ccc''')}$, then also $\ccc' = \ccc'''$.
We claim that there exists some $C \in \ccc$ such that $\ell(\{C\}) \notin \Q_k$.
Assume for a contradiction that, for all $C \in \ccc$, we have $\ell(\{C\}) \in \Q_k$.
Then, since $\Q_k \in \Sigma_{\re}$, the aforementioned characterization implies that there exists some $\ccc' \supseteq \bigcup_{C \in \ccc} \{ C \}$ such that $\ell(\ccc'') \in \Q_k$ for any $\ccc'' \subseteq \ccc'$.
Since $\ccc \subseteq \ccc'$, we obtain $\ell(\ccc) \in \Q_k$, yielding the desired contradiction.
Hence, there exists some $C \in \ccc$ such that $\ell(\{C\}) \notin \Q_k$.

Furthermore, if $i^* > 0$, then, by \Cref{lem:edgediagpi}, $\P_{i^*} \leq \P_{i_{\max}}$, which implies $\P_{i^*} \notin \Q_k$, by property~\ref{item:newcolortwo} and the right-closedness of $\Q_k$ (which is due to \Cref{obs:rcs}).

Combining the above insights, we obtain that there exist at least $|\ccc|$ indices $k \in \{1, \dots, \Delta\}$ such that there exists some set $\Q_k \in \S_k$ satisfying that
\begin{enumerate}
	\item\label{item:obtainedone} there exists some color $C \in \ccc$ with $\ell(\{ C \}) \notin \Q_k$, and
	\item\label{item:obtainedtwo} $\P_{i^*} \notin \Q_k$ if $i^* > 0$.
\end{enumerate} 
As this statement holds for any $\ccc \in 2^\ccs \setminus \{ \emptyset \}$, we obtain, by the definition of $G$, that, for any subset $\ccc \subseteq V$, we have $N_G(\ccc) \geq |\ccc|$.
By applying \Cref{thm:hall}, i.e., Hall's marriage theorem, we obtain that there exists some matching $M$ such that each $C \in \ccs = V$ is matched.
W.l.o.g., assume that the $\S_k$ that are matched are $\S_1, \dots, \S_{|\ccs|}$.
For each $C \in \ccs$, denote the $\S_k$ color $C$ is matched to by $M(C)$.
Similarly, for each $\S_k$ with $1 \leq k \leq |\ccs|$, denote the color $C$ set $\S_k$ is matched to by $M(\S_k)$.

Now, let $\ccc_{\leq i^*} \in 2^\ccs$ denote the set of colors of level at most $i^*$.
W.l.o.g., assume that the colors in $\ccc_{\leq i^*}$ are matched with the sets from $\{ \S_1, \dots, S_{|\ccc_{\leq i^*}|}\}$.
Furthermore, observe that, if $i^* > 0$, then property~\ref{item:pointer} of \Cref{lem:confprop}, together with the definition of $i^*$, implies that for any $1 \leq k \leq \Delta$ there exists some set $\Q_k \in \S_k$ satisfying $\P_{i^*} \notin \Q_k$.
Also, note that it follows from the definition of $i^*$ that $i^* = \beta$ or there exist some $1 \leq k \leq \Delta$ and some $\Q_k \in \S_k$ such that $\U_{i^* + 1} \notin \Q_k$.
If $i^* < \beta$, denote the largest index $k$ such that there exists some $\Q_k \in \S_k$ satisfying $\U_{i^* + 1} \notin \Q_k$ by $k_{\max}$.
We consider two cases.
\newline

\noindent{\textbf{\boldmath Case I: $i^* = \beta$, or $i^* < \beta$ and $k_{\max} > |\ccc_{\leq i^*}|$}}

\noindent For each $1 \leq k \leq |\ccs|$, set $\B_k$ to be some $\Q_k \in \S_k$ such that $\ell(\{M(\S_k)\}) \notin \Q_k$ and $\P_{i^*} \notin \Q_k$ if $i^* > 0$.
Such a choice exists by the definition of $G$.
%For each $k \in \{1, \dots, |\ccs|\} \setminus \{k_{\max}\}$, set $\B_k := M(\S_k)$.
If $i^* < \beta$, set $\B_{k_{\max}}$ to be some $\Q_{k_{\max}} \in \S_{k_{\max}}$ such that $\U_{i^* + 1} \notin \Q_{k_{\max}}$ (overwriting the previous choice for $\B_{k_{\max}}$ if $k_{\max} \leq |\ccs|$).
If $i^* > 0$, then, for each $1 \leq k \leq \Delta$ for which $\B_k$ has not been set yet, set $\B_k$ to be some $\Q_k \in \S_k$ satisfying $\P_{i^*} \notin \Q_k$.
Such a choice exists as shown above.
If $i^* = 0$, then, for each $1 \leq k \leq \Delta$ for which $\B_k$ has not been set yet, set $\B_k$ to be some arbitrary $\Q_k \in \S_k$.

We claim that the obtained tuple $(\B_1, \dots, \B_{\Delta}) \in \S_1 \times \dots \times \S_\Delta$ satisfies properties~\ref{item:intercolor} and~\ref{item:interpointer} of \Cref{lem:notsteptwo}.
Consider first property~\ref{item:intercolor}, and consider some arbitrary $\ccc \in 2^\ccs \setminus \{\emptyset\}$.
Let $K$ denote the set of indices $1 \leq k \leq |\ccs|$ such that $M(\S_k) \in \ccc$.
As $M$ is a matching, we have $|K| = |\ccc|$.

Consider some arbitrary $k \in K$.
Since $1 \leq k \leq |\ccs|$, we know, by the definition of $\B_k$, that 1) $\ell(\{M(\S_k)\}) \notin \B_k$, or 2) $i^* < \beta$, $k = k_{\max} \leq |\ccs|$, and $\U_{i^* + 1} \notin \B_{k}$.
In the latter case, recall that $k_{\max} > |\ccc_{\leq i^*}|$, which implies that $\level(M(\S_k)) \geq i^* + 1$.
By \Cref{lem:edgediagpi}, it follows that $\ell(M(\S_k)) \leq \U_{i^* + 1}$, and, by the right-closedness of $\B_k$ (due to \Cref{obs:rcs}), we obtain $\ell(\{M(\S_k)\}) \notin \B_k$.
Hence, in either case, we have $\ell(\{M(\S_k)\}) \notin \B_k$.
Now, by \Cref{lem:edgediagpi} and the fact that $M(\S_k) \in \ccc$, we obtain $\ell(\ccc) \leq \ell(\{M(\S_k)\})$, which, by the right-closedness of $\B_k$, implies $\ell(\ccc) \notin \B_k$.
Hence, for each $k \in K$, we have $\ell(\ccc) \notin \B_k$.
Since $|K| = |\ccc|$, it follows that there are at least $|\ccc|$ indices $k \in \{1, \dots, \Delta\}$ such that $\ell(\ccc) \notin \B_k$, which shows that $(\B_1, \dots, \B_{\Delta})$ satisfies property~\ref{item:intercolor} of \Cref{lem:notsteptwo}.

Now consider property~\ref{item:interpointer} of \Cref{lem:notsteptwo}.
We show that this property holds for $(\B_1, \dots, \B_{\Delta})$ by showing that properties~\ref{item:interpointerone} and~\ref{item:interpointertwo} of \Cref{lem:notsteptwo} are satisfied when setting $i = i^*$.
Consider first property~\ref{item:interpointerone}.
If $i^* = 0$, then property~\ref{item:interpointerone} is trivially satisfied, hence assume $i^* > 0$.
By the definition of the $\B_k$, we obtain, for each $1 \leq k \leq \Delta$ that $\P_{i^*} \notin \B_k$ or that $\U_{i^* + 1} \notin \B_{k}$, which also implies $\P_{i^*} \notin \B_k$, due to the right-closedness of $\B_k$ and the fact that $\P_{i^*} \leq \U_{i^* + 1}$ (which follows from \Cref{lem:edgediagpi}).
Therefore, property~\ref{item:interpointerone} is satisfied.

Now consider property~\ref{item:interpointertwo} of \Cref{lem:notsteptwo}.
If $i^* = \beta$, then property~\ref{item:interpointertwo} is trivially satisfied, hence assume $i^* < \beta$.
By the definition of the $\B_k$, we obtain $\U_{i^* + 1} \notin \B_k$ for $k = k_{\max}$, proving that also  property~\ref{item:interpointertwo} is satisfied.
Hence, $(\B_1, \dots, \B_{\Delta})$ satisfies also property~\ref{item:interpointer} of \Cref{lem:notsteptwo}.

We conclude that $(\B_1, \dots, \B_{\Delta})$ satisfies the conditions given in \Cref{lem:notsteptwo}, and by applying \Cref{lem:notsteptwo}, we obtain $\S_1 \s \dots \s \S_\Delta \notin \nodeconst_{\rere}$, as desired.
\newline

\noindent{\textbf{\boldmath Case II: $i^* < \beta$ and $k_{\max} \leq |\ccc_{\leq i^*}|$}}

\noindent The challenge that we have to overcome in Case II is that we cannot simply choose $\B_{k_{\max}}$ to be some $\Q_{k_{\max}} \in \S_{k_{\max}}$ such that $\U_{i^* + 1} \notin \Q_{k_{\max}}$.
Such a choice would make sure that property~\ref{item:interpointertwo} of \Cref{lem:notsteptwo} is satisfied, but it would not guarantee that the color $C$ that is matched to $\S_{k_{\max}}$ (or more precisely, the set $\{C\}$) satisfies the condition given in property~\ref{item:intercolor} of \Cref{lem:notsteptwo}.
Our solution is to change the matching $M$ to another matching with the property that there is some $\S_k$ that is not matched to any color from $\ccc_{\leq i^*}$ but contains a set $\Q_k$ satisfying $\U_{i^* + 1} \notin \Q_{k}$ (which allows us to choose $\B_k = \Q_k$ and thereby satisfy property~\ref{item:interpointertwo} without violating property~\ref{item:intercolor}).
To this end, we essentially construct a suitable augmenting path along which we change the matching (while we will not use this exact terminology in the proof).

Let $\fE \subseteq \ccs$ denote the set of colors $C$ such that there exists some set $\Q \in M(C)$ satisfying $\U_{i^* + 1} \notin \Q$.
By the definition of $k_{\max}$ and the fact that $k_{\max} \leq |\ccc_{\leq i^*}|$, we obtain that $\emptyset \neq \fE \subseteq \ccc_{\leq i^*}$.
Moreover, recall that $E$ denotes the set of edges of $G$, and recursively define sets $\fF_1, \fF_2, \dots$ (which are subsets of $\ccc_{\leq i^*}$) as follows.
\begin{align*}
	\fF_1 := &\ \{ C \in \ccc_{\leq i^*} \mid \text{there exists some $|\ccc_{\leq i^*}| + 1 \leq k \leq \Delta$ such that $\{C, \S_k\} \in E$} \},\\
	\fF_y := &\ \{ C \in \ccc_{\leq i^*} \setminus (\fF_1 \cup \dots \cup \fF_{y-1}) \mid \text{there exists some $C' \in \fF_{y-1}$ such that $\{C, M(C')\} \in E$} \},\\
		 &\ \text{for all $y \geq 2$}.
\end{align*}
The following lemma shows that in this recursive definition, the $\fF_y$ do not become empty before some $\fF_y$ is reached that has a nonempty intersection with $\fE$.

\begin{lemma}\label{lem:eandf}
	Let $y$ be some positive integer.
	If $\fE \cap (\fF_1 \cup \dots \cup \fF_{y-1}) = \emptyset$, then $\fF_y \neq \emptyset$.
\end{lemma}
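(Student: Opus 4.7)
The plan is a proof by contradiction: I will assume $\fF_y = \emptyset$, and apply property~\ref{item:color} of \Cref{lem:confprop} to the \emph{remaining low-level} set $\ccc^{\mathrm{out}} := \ccc_{\leq i^*} \setminus (\fF_1 \cup \dots \cup \fF_{y-1})$ with parameters $i = i^* + 1$ and $j = i^*$. First I will check that $\ccc^{\mathrm{out}} \neq \emptyset$: we have already observed that $\emptyset \neq \fE \subseteq \ccc_{\leq i^*}$ in Case~II, so if $\ccc^{\mathrm{out}}$ were empty then $\fE$ would be contained in $\fF_1 \cup \dots \cup \fF_{y-1}$, contradicting the hypothesis of the lemma. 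Since $\ccc^{\mathrm{out}} \subseteq \ccc_{\leq i^*}$ has level at most $j = i^*$, applying property~\ref{item:color} of \Cref{lem:confprop} outputs at least $|\ccc^{\mathrm{out}}| + 1$ \emph{special} indices $k$, each carrying a witness $\Q_k \in \S_k$ with $\U_{i^*+1} \notin \Q_k$ or $\ell(\ccc^{\mathrm{out}}) \notin \Q_k$, and with $\P_{i^*} \notin \Q_k$ whenever $i^* > 0$. Because $M$ bijectively matches $\ccc^{\mathrm{out}}$ with the $|\ccc^{\mathrm{out}}|$ sets of $M(\ccc^{\mathrm{out}})$, pigeonhole yields a special index $k^*$ with $\S_{k^*} \notin M(\ccc^{\mathrm{out}})$.

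I will then split on the location of $k^*$. If $k^* > |\ccc_{\leq i^*}|$, then $k^* > k_{\max}$ by the Case~II hypothesis $k_{\max} \leq |\ccc_{\leq i^*}|$, so by the definition of $k_{\max}$ every $\Q \in \S_{k^*}$ contains $\U_{i^*+1}$; the witness must therefore satisfy $\ell(\ccc^{\mathrm{out}}) \notin \Q_{k^*}$. Invoking the characterization of labels in $\Sigma_{\re}$ used in the main proof (every such label is determined by some $\ccc' \in 2^\ccs$ via $\ell(\ccc'') \in \B \Leftrightarrow \ccc'' \subseteq \ccc'$), I extract a color $C \in \ccc^{\mathrm{out}}$ with $\ell(\{C\}) \notin \Q_{k^*}$, which, combined with $\P_{i^*} \notin \Q_{k^*}$, produces an edge $\{C,\S_{k^*}\} \in E$, so $C \in \fF_1$. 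For $y = 1$ this contradicts $\fF_1 = \emptyset$; for $y \geq 2$ it gives $C \in \fF_1 \cup \dots \cup \fF_{y-1}$, contradicting $C \in \ccc^{\mathrm{out}}$.

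The remaining subcase, which only arises for $y \geq 2$, is $k^* \leq |\ccc_{\leq i^*}|$ with $\S_{k^*} \in M(\fF_1 \cup \dots \cup \fF_{y-1})$. Let $C_0 := M^{-1}(\S_{k^*})$ and let $z \leq y - 1$ be the smallest index with $C_0 \in \fF_z$. If the witness satisfies $\U_{i^*+1} \notin \Q_{k^*}$, then $C_0 \in \fE$ by the definition of $\fE$, so $C_0 \in \fE \cap (\fF_1 \cup \dots \cup \fF_{y-1})$, contradicting the lemma hypothesis. Otherwise $\ell(\ccc^{\mathrm{out}}) \notin \Q_{k^*}$, and the same characterization argument yields $C \in \ccc^{\mathrm{out}}$ with an edge $\{C, M(C_0)\} \in E$. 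The recursive definition of $\fF_{z+1}$ then forces $C \in \fF_{z+1}$, contradicting either $C \in \ccc^{\mathrm{out}}$ (if $z + 1 \leq y - 1$) or $\fF_y = \emptyset$ (if $z + 1 = y$). The delicate ingredient is the parameter choice $i = i^* + 1$ rather than $i = j = i^*$, which buys the crucial extra $+1$ in the count of special indices and is exactly what forces some $k^*$ outside $M(\ccc^{\mathrm{out}})$; the handling of the second subcase is where both the augmenting-path-like structure of the $\fF_y$ and the hypothesis on $\fE$ are used in full.
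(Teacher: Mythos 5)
Your proof is correct and takes essentially the same route as the paper's: you check nonemptiness of the residual set $\ccc_{\leq i^*} \setminus (\fF_1 \cup \dots \cup \fF_{y-1})$, apply property~\ref{item:color} of \Cref{lem:confprop} with $j = i^*$ and $i = i^*+1$ to get the extra index, use the pigeonhole principle against the matching, extract a single color via the characterization of $\Sigma_{\re}$, and close the argument with the recursive definition of the $\fF_y$ and the definitions of $\fE$ and $k_{\max}$. The only difference is organizational: you argue by contradiction from $\fF_y = \emptyset$ and branch on which disjunct the witness satisfies, whereas the paper first shows that every set in the chosen $\S_k$ contains $\U_{i^*+1}$ and then directly exhibits an element of $\fF_y$.
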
 
\begin{proof}
	Set $\ccc := \ccc_{\leq i^*} \setminus (\fF_1 \cup \dots \cup \fF_{y-1})$ and observe that $\level(\ccc) \leq \level(\ccc_{\leq i^*}) \leq i^*$ since $\ccc \subseteq \ccc_{\leq i^*}$.
	Observe further that $\fE \cap (\fF_1 \cup \dots \cup \fF_{y-1}) = \emptyset$ implies $\ccc \neq \emptyset$ since $\emptyset \neq \fE \subseteq \ccc_{\leq i^*}$ as observed above.
	Hence, by property~\ref{item:color} of \Cref{lem:confprop} (applied with parameters $\ccc$, $j = i^*$, and $i = i^* + 1$), we know that there exist at least $|\ccc| + 1$ indices $k \in \{1, \dots, \Delta\}$ such that there exists some set $\Q_k \in \S_k$ satisfying 1) $\U_{i^* + 1} \notin \Q_k$ or $\ell(\ccc) \notin \Q_k$, and 2) $\P_{i^*} \notin \Q_k$ if $i^* > 0$.
	By the pigeonhole principle, it follows that there also exists such a $k$ additionally satisfying that $\S_k$ is not matched or $M(\S_k) \notin \ccc$; consider such a $k$.
	Since $\fE \subseteq \ccc$, we obtain that $\S_k$ is not matched or $M(\S_k) \notin \fE$.
	Observe that each $\S_{k'}$ such that there exists some $\Q_{k'} \in \S_{k'}$ with $\U_{i^* + 1} \notin \Q_{k'}$ is matched (as, by the definition of $k_{\max}$, we have $k' \le k_{\max} \le |\ccc_{\leq i^*}|$, and, for each $k'' \leq |\ccc_{\leq i^*}|$, all $\S_{k''}$ are matched (with colors from $\ccc_{\leq i^*}$)) and satisfies $M(\S_{k'}) \in \fE$ (by the definition of $\fE$).
	Hence, for all $\Q_k \in \S_k$, we have $\U_{i^* + 1} \in \Q_k$.

	By the definition of $k$, it follows that there exists some set $\Q_k \in \S_k$ satisfying 1) $\ell(\ccc) \notin \Q_k$, and 2) $\P_{i^*} \notin \Q_k$ if $i^* > 0$.
	Recall that, as observed before, $\ell(\ccc) \notin \Q_k$ implies that there exists some $C \in \ccc$ such that $\ell(\{ C \}) \notin \Q_k$ (due to the characterization of $\Sigma_{\re}$ given in \Cref{lem:repi}).
	Consider such a $C$.
	By the definition of $G$, there is an edge between $C$ and $\S_k$.
	
	Recall that $C \in \ccc$ and that $\S_k$ is not matched or $M(\S_k) \notin \ccc$.
	If $y = 1$, then $\ccc = \ccc_{\leq i^*}$, and we obtain $|\ccc_{\leq i^*}| + 1 \leq k \leq \Delta$, which implies $C \in \fF_1$.
	Thus, $\fF_y \neq \emptyset$, as desired.
	Hence, assume that $y \geq 2$.
	Observe that we have $1 \leq k \leq |\ccc_{\leq i^*}|$ as otherwise $C \in \fF_1$, which would contradict $C \in \ccc \subseteq \ccc_{\leq i^*} \setminus \fF_1$.
	In particular, $\S_k$ is matched, and $M(\S_k) \in \ccc_{\leq i^*}$.
	Since $M(\S_k) \notin \ccc$, we have $M(\S_k) \in \ccc_{\leq i^*} \setminus \ccc = \fF_1 \cup \dots \cup \fF_{y-1}$.
	As, by definition, $\fF_1, \dots, \fF_{y-1}$ are pairwise disjoint, it follows that there is precisely one $1 \leq y' \leq y-1$ such that $M(\S_k) \in \fF_{y'}$.
	Again by the definition of $\fF_1, \fF_2, \dots$, we obtain that $C \in \fF_{y' + 1}$.
	Since $C \in \ccc = \ccc_{\leq i^*} \setminus (\fF_1 \cup \dots \cup \fF_{y-1})$, we see that $y' = y - 1$.
	Hence, $C \in \fF_y$, and we obtain $\fF_y \neq \emptyset$, as desired.	
\end{proof}

Since the $\fF_y$ are subsets of (the finite set) $\ccc_{\leq i^*}$ and pairwise disjoint, there is some positive integer $y'$ with $\fF_{y'} = \emptyset$.
By \Cref{lem:eandf}, it follows that $\fE \cap (\fF_1 \cup \dots \cup \fF_{y'-1}) \neq \emptyset$ (which also implies that $y' \geq 2$).
Hence, there is some positive integer $y^*$ with $\fE \cap \fF_{y^*} \neq \emptyset$.
Let $C'$ be some color contained in $\fE \cap \fF_{y^*}$.

%For any $2 \leq y \leq y^*$ and any $C \in \fF_y$ let $f(C)$ denote some color from $\fF_{y-1}$ such that $\{C, M(f(C))\} \in E$.
%Such colors exist due to the definition of the $\fF_y$.
%Moreover, for any $C \in \fF_1$, let $g(C)$ denote some $\S_k$ such that $\{ C, \S_k \} \in E$.

Let $C^{(1)}, \dots C^{(y^*)}$ be colors from $\fF_1, \dots, \fF_{y^*}$, respectively, such that $C^{(y^*)} = C'$ and, for any $1 \leq y \leq y^* - 1$, we have $\{ C^{(y+1)}, M(C^{(y)}) \} \in E$.
Such $C^{(y)}$ exist due to the definition of the $\fF_y$.
Moreover, let $|\ccc_{\leq i^*}| + 1 \leq k' \leq \Delta$ be some index such that $\{C^{(1)}, \S_{k'}\} \in E$.

If $\S_{k'}$ is matched in $M$, then $M(\S_{k'}) \in \ccs \setminus \ccc_{\leq i^*}$, which implies that $\level(M(\S_{k'})) \geq i^* + 1$.
Hence, in this case, $\ell(\{M(\S_{k'})\}) \leq \U_{i^* + 1}$, by \Cref{lem:edgediagpi}, and the fact that $M(C^{(y^*)})$ contains some set $\Q$ with $\U_{i^* + 1} \notin \Q$ (which follows from $C^{(y^*)} \in \fE$) implies that there is some $\Q \in M(C^{(y^*)})$ such that $\ell(\{M(\S_{k'})\}) \notin \Q$, by the right-closedness of $\Q$.
If $i^* > 0$, then for any such $\Q$, the fact that $\P_{i^*} \leq \U_{i^* + 1}$ analogously implies $\P_{i^*} \notin \Q$, and it follows that in $G$ there is an edge between $M(\S_{k'})$ and $M(C^{(y^*)})$.

Let $M'$ be the matching obtained from $M$ by
\begin{enumerate}
	\item removing the edges in $M$ that are incident to some vertex from $\{ C^{1}, \dots, C^{(y^*)} \}$,
	\item\label{item:step2} if $\S_{k'}$ is matched in $M$, removing the corresponding matching edge,
	\item if $\S_{k'}$ was matched in $M$, adding edge $\{M(\S_{k'}), M(C^{(y^*)})\}$
	\item adding edge $\{C^{(1)}, \S_{k'}\}$, and
	\item adding edge $\{ C^{(y+1)}, M(C^{(y)}) \}$, for each $1 \leq y \leq y^* - 1$.
\end{enumerate}
Note that all edges added to $M'$ are indeed edges in $G$, by the above discussion.
Note further that for each added edge $e$, the two endpoints of $e$ were unmatched after step~\ref{item:step2}; hence, $M'$ is indeed a matching.
Finally, observe that the construction of $M'$ ensures that each color $C \in \ccs$ is matched in $M'$. 

For each vertex $v$ of $G$ that is matched in $M'$, denote the vertex that $v$ is matched to by $M'(v)$.
Moreover, permute the indexing of the $\S_1, \dots, \S_\Delta$ such that the colors in $\ccs$ are matched (in $M'$) to the sets in $\{\S_1, \dots, \S_{|\ccs|}\}$, and the colors in $\ccc_{\leq i^*}$ are matched to the sets in $\{\S_1, \dots, \S_{|\ccc_{\leq i^*}|}\}$.
Whenever we refer to some $\S_k$ in the remainder of the section, it is w.r.t.\ this new indexing (unless indicated otherwise).
Let $k^*$ denote the new index of set $M(C^{(y^*)})$.
By the construction of $M'$ and the above discussion, we know that $\S_{k^*}$ is unmatched in $M'$ or matched to some color from $\ccs \setminus \ccc_{\leq i^*}$, implying that $k^* \geq |\ccc_{\leq i^*}| + 1$.
Note that, due to $\S_{k^*} = M(C^{(y^*)})$, $C^{(y^*)} \in \fE$, and by the definition of $\fE$, there is some set $\Q_{k^*} \in \S_{k^*}$ such that $\U_{i^* + 1} \notin \Q_{k^*}$. Hence, according to the new indexing, $k_{\max} \ge k^* \ge |\ccc_{\leq i^*}| + 1$.
%Let $k_{\max}$ denote the new index of set $M(C^{(y^*)})$.
%By the construction of $M'$ and the above discussion, we know that $\S_{k_{\max}}$ is unmatched in $M'$ or matched to some color from $\ccs \setminus \ccc_{\leq i^*}$.
%Hence, $k_{\max} \geq |\ccc_{\leq i^*}| + 1$.
%Note that, due to $\S_{k_{\max}} = M(C^{(y^*)})$, $C^{(y^*)} \in \fE$, and the definition of $\fE$, there is some set $\Q_{k_{\max}} \in \S_{k_{\max}}$ such that $\U_{i^* + 1} \notin \Q_{k_{\max}}$.

Hence, we have essentially transformed the setting of Case II to the setting of Case I: by simply setting the $\B_k$ as in Case I (where we use matching $M'$ (instead of matching $M$), the new indexing, and the new definition of $k_{\max}$) and using the same proof as in Case I, we obtain $\S_1 \s \dots \s \S_\Delta \notin \nodeconst_{\rere}$, as before.
(The only difference is that we do not have to deal with the case $i^* = \beta$ (as we cannot have $i^* = \beta$ in Case II), which only works in our favor.)
We conclude that in either case we have $\S_1 \s \dots \s \S_\Delta \notin \nodeconst_{\rere}$, which shows that any configuration with labels from $2^{\Sigma_{\re}} \setminus \{\emptyset\}$ that cannot be relaxed to any configuration from $\nodeconst'$ is not contained in $\nodeconst_{\rere}$, and proves \Cref{lem:keytec}.

\subsection{Relaxing $\rere(\re(\Pi_{\Delta}(z)))$}\label{sec:relaxrerere}
Recall that we want to show that, if $\Pi_{\Delta}(z)$ can be solved in some number $T$ of rounds, then $\Pi_{\Delta}(z')$, where $z' = \prefix(z)$, can be solved in $T - 1$ rounds.
To this end, we show that $\rere(\re(\Pi_{\Delta}(z)))$ can be relaxed to $\Pi_{\Delta}(z')$, using \Cref{lem:keytec}, and then make use of the fact that $\rere(\re(\Pi_{\Delta}(z)))$ can be solved one round faster than $\Pi_{\Delta}(z)$.
Throughout this section, we assume that $|z| \le \Delta$ if $\len(z) = 0$, and $|z| \le \Delta-1$ otherwise (which is a requirement for applying \Cref{lem:keytec}).

Recall that $\nodeconst'$ denotes the set of configurations given in \Cref{lem:keytec}, and $\edgeconst_{\re}$ the edge constraint of $\re(\Pi_{\Delta}(z))$.
We start by defining a useful ``intermediate'' problem $\Pi^*$.
We define the label set $\Sigma^*$ of $\Pi^*$ as the set containing all labels that appear in at least one of the node configurations in $\nodeconst'$, and we set the node constraint of $\Pi^*$ to be $\nodeconst^* := \nodeconst'$.
Moreover, we define the edge constraint $\edgeconst^*$ to be the set of all cardinality-$2$ multisets $\D_1 \s \D_2$ with $\D_1, \D_2 \in \Sigma^*$ such that there exist labels $\B_1 \in \D_1$ and $\B_2 \in \D_2$ satisfying $\B_1 \s \B_2 \in \edgeconst_{\re}$.
Note that, by definition, the labels in $\Sigma^*$ are sets of sets of labels from $\Sigma_\Delta(z)$. We prove that $\Pi^*$ is at least $1$ round easier than $\Pi_\Delta(z)$, unless $\Pi_\Delta(z)$ is already $0$-round solvable.
  
\begin{lemma}\label{lem:relaxingworks}
	Assume that $|z| \le \Delta$ if $\len(z) = 0$, and $|z| \le \Delta-1$ otherwise, and let $T$ be some positive integer.
	If $\Pi_\Delta(z)$ can be solved in $T$ rounds in the deterministic port numbering model, then $\Pi^*$ can be solved in $T - 1$ rounds.
\end{lemma}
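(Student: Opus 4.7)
The plan is to apply the standard round elimination theorem (\Cref{thm:rethm}) to get rid of one round, and then to argue that $\Pi^*$ is a relaxation of $\rere(\re(\Pi_\Delta(z)))$, so that any algorithm for the latter yields one for the former without any additional communication. Concretely, since $T \geq 1$, \Cref{thm:rethm} immediately gives an algorithm for $\rere(\re(\Pi_\Delta(z)))$ running in $\max\{T-1,0\} = T-1$ rounds in the port numbering model.

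It remains to exhibit a local ($0$-round) reduction from $\Pi^*$ to $\rere(\re(\Pi_\Delta(z)))$. I will do this by showing that $\Pi^*$ is a relaxation of $\rere(\re(\Pi_\Delta(z)))$: every node configuration of $\rere(\re(\Pi_\Delta(z)))$ can be relaxed (in the sense of \Cref{def:noderelax}) to a configuration in $\nodeconst^* = \nodeconst'$, and every edge configuration of $\rere(\re(\Pi_\Delta(z)))$ can be relaxed to an edge configuration in $\edgeconst^*$. The node side is exactly the content of \Cref{lem:keytec}: for any configuration in $\nodeconst_{\rere(\re(\Pi_\Delta(z)))}$ there is a permutation of its labels that is coordinate-wise a subset of some configuration in $\nodeconst'$.

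For the edge side, I would invoke directly the mechanical description of $\rere(\cdot)$: the edge constraint $\edgeconst_{\rere(\re(\Pi_\Delta(z)))}$ consists of exactly those multisets $\D_1 \s \D_2$ (of sets of labels from $\Sigma_{\re}$) such that there exist $\B_1 \in \D_1$ and $\B_2 \in \D_2$ with $\B_1 \s \B_2 \in \edgeconst_{\re}$, which is precisely the definition of $\edgeconst^*$. Crucially, this existential condition is monotone under taking supersets: enlarging $\D_1$ to $\D'_1 \supseteq \D_1$ and $\D_2$ to $\D'_2 \supseteq \D_2$ preserves the same witnesses $\B_1, \B_2$. Consequently, when every node independently replaces each of its port labels by the corresponding superset guaranteed by \Cref{lem:keytec}, the (bigger) labels still satisfy $\edgeconst^*$ on every edge, and by construction the node outputs now satisfy $\nodeconst^*$ and use labels from $\Sigma^*$.

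Combining these pieces, the final algorithm for $\Pi^*$ is: each node runs the $(T-1)$-round algorithm for $\rere(\re(\Pi_\Delta(z)))$ and then, in zero rounds, locally relaxes its configuration to a configuration in $\nodeconst'$. The only subtlety to verify is that the relaxation step is genuinely local and that the independent choices of supersets made by different nodes cannot conflict on edges; both follow from the monotonicity of $\edgeconst^*$ under taking supersets noted above. Since the heavy technical lifting has already been done in \Cref{lem:keytec}, I do not expect any real obstacle beyond this bookkeeping.
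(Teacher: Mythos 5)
Your proposal is correct and follows essentially the same route as the paper's proof: apply \Cref{thm:rethm} to obtain a $(T-1)$-round algorithm for $\rere(\re(\Pi_\Delta(z)))$, then let each node locally replace its output configuration by a superset-wise larger one in $\nodeconst' = \nodeconst^*$ (which exists by \Cref{lem:keytec}), with edge validity preserved because the existential witnesses $\B_1 \in \D_1$, $\B_2 \in \D_2$ with $\B_1 \s \B_2 \in \edgeconst_{\re}$ survive when the sets only grow. This monotonicity observation is exactly the argument the paper uses, so no further bookkeeping is needed.
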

\begin{proof}
	Assume that $\Pi_\Delta(z)$ can be solved in $T$ rounds.
	Then, by \Cref{thm:rethm}, there is some $(T-1)$-round algorithm $\fA$ solving $\rere(\re(\Pi_{\Delta}(z)))$.
	For each node $v$, denote the cardinality-$\Delta$ multiset of labels that $\fA$ outputs at the half-edges incident to $v$ by $\fA(v) = \{\D_1(v), \dots, \D_{\Delta}(v)\}$.
	We define a new $(T-1)$-round algorithm $\fA'$ solving $\Pi^*$ as follows.
	Each node $v$ first executes $\fA$ (in $T - 1$ rounds), and then (without any further communication) chooses a node configuration $\D'_1 \s \dots \s \D'_{\Delta} \in \nodeconst^*$ satisfying $\D_k \subseteq \D'_k$ for each $1 \leq k \leq \Delta$, and replaces $\D_k$ by $\D'_k$, for each $1 \leq k \leq \Delta$.
	Such choices exist due to \Cref{lem:keytec} and the fact that $\nodeconst^* = \nodeconst'$.
	Observe that, for each half-edge $h$, the label (i.e., set) $\fA$ outputs at $h$ is a subset of the label $\fA'$ outputs at $h$.
	It follows, by the correctness of $\fA$ and the definitions of both $\edgeconst^*$ and the edge constraint of $\rere(\re(\Pi_{\Delta}(z)))$, that, for each edge $e$, the cardinality-$2$ multiset of labels that $\fA'$ outputs at the half-edges belonging to $e$ is contained in $\edgeconst^*$.
	Hence, $\fA'$ produces a correct output, which concludes the proof.  
\end{proof}

We now explicitly characterize the configurations allowed by the edge constraint of $\Pi^*$.

\begin{lemma}\label{lem:newedgestuff}
	The edge constraint $\edgeconst^*$ of $\Pi^*$ contains precisely the configurations
	\begin{itemize}
		\item $\gen{\gen{\U_i, \ell(\ccc)}, \gen{\P_j}} \s  \gen{\gen{\U_{i'}, \ell(\ccc')}, \gen{\P_{j'}}}$ for all $\ccc, \ccc' \in 2^{\ccs} \setminus \{ \emptyset \}, 0 \leq i, i', j, j' \leq \beta$ satisfying $\level(\ccc) \leq j, \level(\ccc') \leq j', i \in \{j, j+1\},  i' \in \{j', j'+1\}$, and
			\begin{enumerate}
				\item $\ccc \cap \ccc' = \emptyset$, or
				\item $i < j'$, or
				\item $i' < j$,
			\end{enumerate}
		\item $\gen{\gen{\U_i, \ell(\ccc)}, \gen{\P_j}} \s \gen{\gen{\U_{i'}}}$ for all $\ccc \in 2^{\ccs} \setminus \{ \emptyset \}, 0 \leq i, j \leq \beta, 1 \leq i' \leq \beta$ satisfying $\level(\ccc) \leq j, i \in \{j, j+1\}$
		\item $\gen{\gen{\U_i, \ell(\ccc)}, \gen{\P_j}} \s \gen{\gen{\P_{j'}}}$ for all $\ccc \in 2^{\ccs} \setminus \{ \emptyset \}, 0 \leq i, j \leq \beta, 1 \leq j' \leq \beta$ satisfying $\level(\ccc) \leq j, i \in \{j, j+1\}$, and $i < j'$,
		\item $\gen{\gen{U_i}} \s \gen{\gen{U_{i'}}}$ for all $1\leq i, i' \leq \beta$
		\item $\gen{\gen{U_i}} \s \gen{\gen{P_{j}}}$ for all $1\leq i < j \leq \beta$
		\item $\gen{\gen{\X}} \s \L$, for each $\L \in \Sigma^*$,
	\end{itemize}
	where, as before, we set $\U_0 := \X$ and $\gen{\gen{\U_i, \ell(\ccc)}, \gen{\P_0}} := \gen{\gen{\U_i, \ell(\ccc)}}$.
\end{lemma}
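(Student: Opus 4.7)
By the definition of $\edgeconst^*$, the pair $\D_1 \s \D_2$ lies in $\edgeconst^*$ iff there exist $\B_1 \in \D_1$ and $\B_2 \in \D_2$ with $\B_1 \s \B_2 \in \edgeconst_{\re}$, and Lemma~\ref{lem:repi} describes $\edgeconst_{\re}$ explicitly: every such configuration is either of \emph{pointer-type} $\gen{\P_i, \ell(\ccc)} \s \gen{\U_{i-1}, \ell(\ccs \setminus \ccc)}$ (under $\level(\ccs \setminus \ccc) \leq i-1$) or of \emph{coloring-type} $\gen{\U_\beta, \ell(\ccc)} \s \gen{\U_\beta, \ell(\ccs \setminus \ccc)}$. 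For each pair $\D_1, \D_2 \in \Sigma^*$, the task thus reduces to searching this explicit list for a witness lying in $\D_1 \times \D_2$.

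The first preparatory step is to describe, for each $\D \in \Sigma^*$, which labels of $\Sigma_{\re}$ it contains. By the definition of $\gen{\cdot}$ together with Observation~\ref{obs:subsetarrow} and Lemma~\ref{lem:abc123} (which show that in $\Sigma_{\re}$ the strength relation with respect to $\nodeconst_{\re}$ coincides with set containment), a label $\gen{X_1, \dots, X_p}$ with $X_i \in \Sigma_{\re}$ equals $\bigcup_i \{\W \in \Sigma_{\re} \mid X_i \subseteq \W\}$. Combining this with Observation~\ref{obs:rcs} (each $\W \in \Sigma_{\re}$ is right-closed w.r.t.\ $\edgeconst_{\Delta}(z)$) and with the edge diagram of Observation~\ref{lem:edgediagpi}, one reads off a concrete membership criterion for each shape in $\Sigma^*$. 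In particular, $\gen{\gen{\X}} = \Sigma_{\re}$ (since every right-closed set contains $\X$), which immediately yields the last bullet; $\gen{\gen{\P_j}}$ consists of precisely those $\P$-type labels $\gen{\P_{i''}, \ell(\ccc'')}$ with $i'' \leq j$; $\gen{\gen{\U_i}}$ consists of all $\P$-type labels together with the $\U$-type labels $\gen{\U_{i''}, \ell(\ccc'')}$ satisfying $i'' \geq i$; and $\gen{\gen{\U_i, \ell(\ccc)}, \gen{\P_j}}$ decomposes into a \emph{color branch} (labels $\W$ with $\gen{\U_i, \ell(\ccc)} \subseteq \W$, which forces $\ccc \subseteq \ccc''$ and, for $\U$-type $\W$, also $i'' \geq i$) and, when $j \geq 1$, a \emph{pointer branch} (labels $\W$ with $\gen{\P_j} \subseteq \W$, analogous to the description of $\gen{\gen{\P_j}}$).

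With these dictionaries, I verify each bullet directly. The bullets for $\gen{\gen{\U_i}} \s \gen{\gen{\U_{i'}}}$ and $\gen{\gen{\X}} \s \L$ follow from the coloring-type witness $\gen{\U_\beta, \ell(\ccs)} \s \gen{\U_\beta, \ell(\emptyset)}$. The bullet for $\gen{\gen{\U_i}} \s \gen{\gen{\P_{j'}}}$ requires a pointer-type witness $\gen{\U_{i''-1}, \ell(\emptyset)} \s \gen{\P_{i''}, \ell(\ccs)}$, which exists iff $i + 1 \leq i'' \leq j'$, recovering the condition $i < j'$. For the first bullet, a coloring-type witness $\gen{\U_\beta, \ell(\ccc_1)} \s \gen{\U_\beta, \ell(\ccs \setminus \ccc_1)}$ with $\ccc_1 := \ccc$ exists (via the color branches on both sides) exactly when $\ccc \cap \ccc' = \emptyset$; a pointer-type witness $\gen{\P_{i''}, \ell(\ccs)} \s \gen{\U_{i''-1}, \ell(\emptyset)}$ exists (using the pointer branch on $\D_1$ and the color branch on $\D_2$) iff one can choose $i''$ with $i'' \leq j$ and $i'' \geq i' + 1$, i.e., iff $i' < j$; and the symmetric pointer-type witness yields $i < j'$. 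Hence a witness exists whenever at least one of the three conditions from the first bullet holds.

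The main obstacle is the converse direction for the first bullet: showing that when none of the three stated conditions holds, no configuration of either type from $\edgeconst_{\re}$ can lie in $\D_1 \times \D_2$. This amounts to a case split on which branch (color vs.\ pointer) witnesses each of the inclusions $\B_1 \in \D_1$ and $\B_2 \in \D_2$, and verifying that every one of the four combinations forces $\ccc \cap \ccc' = \emptyset$, $i < j'$, or $i' < j$. Concretely, a coloring-type witness forces both sides to be $\U$-type, hence use the color branch, which via the containments $\ccc \subseteq \ccc_1$ and $\ccc' \subseteq \ccs \setminus \ccc_1$ implies $\ccc \cap \ccc' = \emptyset$; while a pointer-type witness forces one side to be $\P$-type (pointer branch) with some index $i'' \leq j$ (or $j'$) and the other to be $\U$-type (color branch) with $\U$-index at least $i'$ (or $i$), yielding the required index inequality. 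Once the membership dictionary from the second step is available, each of these combinations reduces to unpacking containments and index inequalities from the characterization of $\Sigma_{\re}$ in Lemma~\ref{lem:repi}, and the remaining bookkeeping is routine.
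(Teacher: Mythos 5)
Your overall plan—reduce membership in $\edgeconst^*$ to the existence of a witness pair in the explicit list of $\edgeconst_{\re}$ from \Cref{lem:repi}, after computing, via \Cref{obs:subsetarrow} and \Cref{lem:abc123}, which labels of $\Sigma_{\re}$ each label of $\Sigma^*$ contains—is a workable reorganization of the paper's argument (the paper instead applies the strength-equals-containment step twice to reduce everything to a Boolean combination of checks in $\edgeconst_\Delta(z)$ and then invokes \Cref{lem:edgediagpi}). However, several of the concrete steps you give are wrong, and one of them sits exactly at the point you yourself call the main obstacle.

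First, two of your witnesses are invalid. The coloring-type witness $\gen{\U_\beta,\ell(\ccs)} \s \gen{\U_\beta,\ell(\emptyset)}$ does not establish $\gen{\gen{\X}} \s \L \in \edgeconst^*$ for all $\L$: since $\gen{\U_\beta,\ell(\emptyset)}=\gen{\U_\beta}$ contains no $\P$-label and no $\ell(\ccc)$ with $\ccc\neq\emptyset$, it is not a member of $\gen{\gen{\P_{j'}}}$ nor of any label of the combined form, so those cases need different witnesses (e.g.\ a pointer-type configuration for $\gen{\gen{\P_{j'}}}$, and $\gen{\U_\beta,\ell(\ccs\setminus\ccc')} \s \gen{\U_\beta,\ell(\ccc')}$ for the combined form). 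The same problem breaks your witness for the conditions $i<j'$ and $i'<j$ in the first bullet: the second component $\gen{\U_{i''-1},\ell(\emptyset)}=\gen{\U_{i''-1}}$ of $\gen{\P_{i''},\ell(\ccs)} \s \gen{\U_{i''-1},\ell(\emptyset)}$ contains neither $\ell(\ccc')$ nor $\P_{j'}$, hence does not lie in $\gen{\gen{\U_{i'},\ell(\ccc')},\gen{\P_{j'}}}$; one must choose $\ccc''$ disjoint from $\ccc'$, e.g.\ the witness $\gen{\P_{i'+1},\ell(\ccs\setminus\ccc')} \s \gen{\U_{i'},\ell(\ccc')}$ when $i'<j$ (note $\level(\ccc')\le j'\le i'$ makes this a legal pointer-type configuration). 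Second, and more seriously, your converse argument for the first bullet rests on the claim that a pointer-type witness forces its $\P$-type component to enter $\D_1$ through the pointer branch, and your membership dictionary asserts that the color branch ``forces $\ccc\subseteq\ccc''$''. Both are false for $\P$-type labels: $\gen{\U_i,\ell(\ccc)}\subseteq\gen{\P_{i''},\ell(\ccc'')}$ also holds when every color of $\ccc$ has level at least $i''$ (then $\ell(\ccc)\ge\P_{i''}$ and $\U_i\ge\P_{i''}$), with no containment between $\ccc$ and $\ccc''$ required. So your four-way case split misses witnesses whose $\P$-type side lies in the color branch of $\D_1$. The lemma survives—if $\ccc\subseteq\ccc''$ the other (necessarily color-branch, $\U$-type) side still gives $\ccc'\subseteq\ccs\setminus\ccc''$ and hence $\ccc\cap\ccc'=\emptyset$, and under the level condition one gets $i''\le\level(\ccc)\le j$ together with $i'\le i''-1$, hence $i'<j$—but as written the completeness direction has a hole precisely where the content of the lemma lies, and it is created by the incomplete dictionary.
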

\begin{proof}
	Consider some arbitrary sets $\B_1, \B_2, \B'_1, \B'_2 \in \Sigma_{\re}$ satisfying $\gen{\B_1, \B_2}, \gen{\B'_1, \B'_2} \in \Sigma^*$.
	We claim that $\gen{\B_1, \B_2} \s \gen{\B'_1, \B'_2} \in \edgeconst^*$ if and only if at least one of the configurations $\B_1 \s \B'_1$, $\B_1 \s \B'_2$, $\B_2 \s \B'_1$, $\B_2 \s \B'_2$ is contained in $\edgeconst_{\re}$.
	The ``if'' part follows from the definitions of $\gen{\cdot}$ and $\edgeconst^*$.
	Now, assume that $\gen{\B_1, \B_2} \s \gen{\B'_1, \B'_2} \in \edgeconst^*$.
	Then, by the definition of $\edgeconst^*$, there is some $(\B_3, \B'_3) \in \gen{\B_1, \B_2} \times \gen{\B'_1, \B'_2}$ satisfying $\B_3 \s \B'_3 \in \edgeconst_{\re}$.
	By the definition of $\gen{\cdot}$, there exist some $q,r \in \{1, 2\}$ such that $\B_q \leq \B_3$ and $\B'_r \leq \B'_3$.
	By \Cref{lem:abc123}, it follows that $\B_q \subseteq \B_3$ and $\B'_r \subseteq \B'_3$.
	Since $\B_3 \s \B'_3 \in \edgeconst_{\re}$, we obtain $\B_q \s \B_r \in \edgeconst_{\re}$, by the definition of $\edgeconst_{\re}$.
	This proves the claim.

	Now, consider arbitrary labels $\A_1, \A_2, \A'_1, \A'_2 \in \Sigma_\Delta(z)$ satisfying $\gen{\A_1, \A_2}, \gen{\A'_1, \A'_2} \in \Sigma_{\re}$.
	By the definitions of $\gen{\cdot}$ and $\edgeconst_{\re}$, we obtain that $\gen{\A_1, \A_2} \s \gen{\A'_1, \A'_2} \in \edgeconst_{\re}$ if and only if all of the configurations $\A_1 \s \A'_1$, $\A_1 \s \A'_2$, $\A_2 \s \A'_1$, $\A_2 \s \A'_2$ are contained in $\edgeconst_{\Delta}(z)$.
	Note that all of the above considerations also hold if the two arguments in $\gen{\cdot, \cdot}$ are identical.

	The above insights provide a simple method for checking whether a given configuration $\D_1 \s \D_2 = \gen{\gen{A_{1,1},A_{1,2}},\gen{A_{2,1},A_{2,2}}} \gen{\gen{A'_{1,1},A'_{1,2}},\gen{A'_{2,1},A'_{2,2}}}$ with labels from $\Sigma^*$ is contained in $\edgeconst^*$: it is contained if and only if the following Boolean formula in disjunctive normal form is satisfied:
	\[
	\bigvee_{q,r \in \{1,2\}} \bigwedge_{s,t \in \{1,2\}} A_{q,s} \s A'_{r,t} \in  \edgeconst_\Delta(z).
	\]
	We illustrate this method for the arguably most complicated situation, namely the one where $\D_1$ and $\D_2$ are of the form $\gen{\gen{\U_i, \ell(\ccc)}, \gen{\P_j}}$.
	More precisely, consider the configuration $\D_1 \s \D_2 = \gen{\gen{\U_i, \ell(\ccc)}, \gen{\P_j}} \s  \gen{\gen{\U_{i'}, \ell(\ccc')}, \gen{\P_{j'}}}$ where $\ccc, \ccc' \in 2^{\ccs} \setminus \{ \emptyset \}, 0 \leq i, i', j, j' \leq \beta$ and $\level(\ccc) \leq j, \level(\ccc') \leq j', i \in \{j, j+1\},  i' \in \{j', j'+1\}$.
	By the above discussion, we obtain that $\D_1 \s \D_2 \in \edgeconst^*$ if and only if at least one of $\gen{\U_i, \ell(\ccc)}\s \gen{\U_{i'}, \ell(\ccc')}$, $\gen{\U_i, \ell(\ccc)} \s \gen{\P_{j'}}$, $\gen{\P_j} \s \gen{\U_{i'}, \ell(\ccc')}$, $\gen{\P_j} \s \gen{\P_{j'}}$ is contained in $\edgeconst_{\re}$, which in turn happens if we have
	\begin{enumerate}
		\item $\U_i \s \U_{i'} \in \edgeconst_\Delta(z)$ and $\U_i \s \ell(\ccc') \in \edgeconst_\Delta(z)$ and $\ell(\ccc) \s \U_{i'} \in \edgeconst_\Delta(z)$ and $\ell(\ccc) \s \ell(\ccc') \in \edgeconst_\Delta(z)$, or
		\item $\U_i \s \P_{j'} \in \edgeconst_\Delta(z)$ and $\ell(\ccc) \s \P_{j'} \in \edgeconst_\Delta(z)$, or
		\item $\P_j \s \U_{i'} \in \edgeconst_\Delta(z)$ and $\P_j \s \ell(\ccc') \in \edgeconst_\Delta(z)$, or
		\item $\P_j \s \P_{j'} \in \edgeconst_\Delta(z)$.
	\end{enumerate}
	If $j = 0$, resp.\ $j' = 0$, then the statements about the configurations containing $\P_j$, resp.\ $\P_{j'}$, do not appear.
	Now, by applying \Cref{lem:edgediagpi} to each of the $9$ substatements, we obtain that $\D_1 \s \D_2 \in \edgeconst^*$ if and only if
	\begin{enumerate}
		\item $\ccc \cap \ccc' = \emptyset$, or
		\item $i < j'$, or
		\item $i' < j$,
	\end{enumerate}
	Note that we used here that $\level(\ccc) \leq i$ and $\level(\ccc') \leq i'$.
	
	In general, since each label in $\D_1 \s \D_2$ is of one of the four forms $\gen{\gen{\P_i}}$, $\gen{\gen{\U_i}}$, $\gen{\gen{\U_i, \ell(\ccc)}, \gen{\P_j}}$, $\gen{\gen{X}}$, we obtain the lemma by applying an analogous argumentation to each pair of forms (which will yield the restrictions for $\ccc, \ccc', i, i', j, j'$ such that $\D_1 \s \D_2 \in \edgeconst^*$ where $\D_1$ and $\D_2$ are of the chosen forms).
\end{proof}

We now prove that $\Pi_\Delta(\prefix(z))$ is a relaxation of $\rere(\re(\Pi_\Delta(z)))$.
\begin{lemma}\label{lem:finallemma}
	Assume that $|z| \le \Delta$ if $\len(z) = 0$, and $|z| \le \Delta-1$ otherwise, and let $T$ be some positive integer.
	If $\Pi_\Delta(z)$ can be solved in $T$ rounds in the deterministic port numbering model, then $\Pi(z')$ can be solved in $T - 1$ rounds, where $z' = \prefix(z)$.
\end{lemma}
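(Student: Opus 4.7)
The plan is to decouple the two steps implicit in the statement. First, invoke \Cref{lem:relaxingworks} to turn any $T$-round algorithm for $\Pi_\Delta(z)$ into a $(T-1)$-round algorithm for $\Pi^*$. Second, exhibit a label map $\Phi\colon \Sigma^* \to \Sigma_\Delta(z')$ under which $\Pi_\Delta(z')$ is a relaxation of $\Pi^*$; applying $\Phi$ locally to any $\Pi^*$ output then produces a correct output for $\Pi_\Delta(z')$ without extra communication.

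To define $\Phi$, I would fix, for each $0 \le j \le \beta$, a bijection $\sigma_j$ from the $z'_j = z_0 + \dots + z_j$ colors of $\ccs$ at level at most $j$ onto the $z'_j$ colors of $\ccs'$ at level exactly $j$, and fix an arbitrary total order on $\ccs$, writing $\min(\ccc)$ for the minimum of a nonempty $\ccc$. I set $\gen{\gen{\P_i}} \mapsto \P_i$, $\gen{\gen{\U_i}} \mapsto \U_i$, $\gen{\gen{\X}} \mapsto \X$, and map $\gen{\gen{\U_i, \ell(\ccc)}, \gen{\P_j}}$ to $\ell(\sigma_j(\ccc))$ when $i = j$ and to $\ell(\sigma_j(\ccc) \cup \{\sigma_{j+1}(\min(\ccc))\})$ when $i = j + 1$. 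The resulting image $\hat{\ccc}$ satisfies $|\hat{\ccc}| = |\ccc| + i - j$ and $\level(\hat{\ccc}) = i$, so each node configuration from \Cref{lem:keytec} translates to a valid one: $\gen{\gen{\P_i}} \s \gen{\gen{\U_i}}^{\Delta - 1}$ becomes $\P_i \s \U_i^{\Delta - 1}$, and $\gen{\gen{\U_i, \ell(\ccc)}, \gen{\P_j}}^{\Delta - |\ccc| - i + j + 1} \s \gen{\gen{\X}}^{|\ccc| + i - j - 1}$ becomes $\ell(\hat{\ccc})^{\Delta - |\hat{\ccc}| + 1} \s \X^{|\hat{\ccc}| - 1}$, with precisely $|\hat{\ccc}| - 1$ copies of $\X$ as required by $\nodeconst_\Delta(z')$.

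The edge configurations from \Cref{lem:newedgestuff} involving $\gen{\gen{\X}}$, $\gen{\gen{\U_i}}$, or $\gen{\gen{\P_i}}$ translate routinely, since $\X$ and $\U$-labels are absorbing and the $\edgeconst^*$ condition $i < j'$ present in $\gen{\gen{\U_i, \ell(\ccc)}, \gen{\P_j}} \s \gen{\gen{\P_{j'}}}$ matches the $\edgeconst_\Delta(z')$ requirement $\level(\hat{\ccc}) < j'$ via $\level(\hat{\ccc}) = i$. The main obstacle is verifying $\hat{\ccc} \cap \hat{\ccc}' = \emptyset$ for every edge $\gen{\gen{\U_i, \ell(\ccc)}, \gen{\P_j}} \s \gen{\gen{\U_{i'}, \ell(\ccc')}, \gen{\P_{j'}}}$ admitted by $\edgeconst^*$, i.e.\ whenever $\ccc \cap \ccc' = \emptyset$, $i < j'$, or $i' < j$. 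In the latter two cases $\level(\hat{\ccc}) = i$ and $\level(\hat{\ccc}') = i'$ place the two images on strictly disjoint levels of $\ccs'$. In the case $\ccc \cap \ccc' = \emptyset$, I would argue level by level: at each shared level the main parts $\sigma_k(\ccc)$ and $\sigma_k(\ccc')$ are disjoint by injectivity of $\sigma_k$, and when $j = j'$ the optional extras $\sigma_{j+1}(\min(\ccc))$ and $\sigma_{j+1}(\min(\ccc'))$ are distinct because $\min(\ccc) \in \ccc$ and $\min(\ccc') \in \ccc'$ lie in disjoint sets. Combined with \Cref{lem:relaxingworks}, this yields the desired $(T-1)$-round algorithm for $\Pi_\Delta(z')$.
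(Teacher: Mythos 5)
Your proposal is correct and follows essentially the same route as the paper's proof: first invoke \Cref{lem:relaxingworks}, then relabel the $\Pi^*$ output locally via an identification of $\ccs'$ with level-shifted copies of $\ccs$, checking node configurations against \Cref{lem:keytec} and edge configurations against \Cref{lem:newedgestuff}; your choice of lifting only $\min(\ccc)$ to level $j+1$ even makes $|\hat{\ccc}|$ exactly $|\ccc|+i-j$, which neatly avoids the paper's extra step of converting surplus colored half-edges to $\X$ (and its nonnegativity check). The only small addition needed is in the disjointness analysis when $\ccc \cap \ccc' = \emptyset$: besides main-vs-main and extra-vs-extra, when $j' = j+1$ the extra color $\sigma_{j+1}(\min(\ccc))$ could in principle meet the main part $\sigma_{j'}(\ccc')$, but since every element of $\hat{\ccc}$ is $\sigma_k(C)$ with $C \in \ccc$ and every element of $\hat{\ccc}'$ is $\sigma_k(C')$ with $C' \in \ccc'$, injectivity of each $\sigma_k$ together with $\ccc \cap \ccc' = \emptyset$ rules out any collision.
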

\begin{proof}
	Let $\ccs$ denote the set of colors in $\Pi_\Delta(z)$.
	We identify the set $\ccs'$ of colors in $\Pi_\Delta(z')$ with the set of pairs $(C,i)$ satisfying $C \in \ccs$ and $\level(C) \leq i \leq \beta$.
	For some color $(C,i) \in \ccs'$, set $\level((C,i)) = i$.
	This definition ensures that, for each $0 \leq j \leq \beta$, there are precisely $z'_j = z_0 + \dots + z_j$ many colors of level $j$ in $\Pi_\Delta(z')$, as required.

	Assume that $\Pi_\Delta(z)$ can be solved in $T$ rounds.
	Then, by \Cref{lem:relaxingworks}, there is some $(T-1)$-round algorithm $\fA$ solving $\Pi^*$.
	We define a new $(T-1)$-round algorithm $\fA'$ as follows.
	Each node $v$ first executes algorithm $\fA$ (in $T-1$ rounds).
	Then (without any further communication) $v$ replaces, for each incident half-edge $h$, the label $\ell$ that $\fA$ produced on $h$ by a new label $\ell'$ according to the following rules.
	
	If $\ell = \gen{\gen{\X}}$, then $\ell' := \X$.
	If $\ell = \gen{\gen{\U_i}}$ for some $1 \leq i \leq \beta$, then $\ell' := \U_i$.
	If $\ell = \gen{\gen{\P_i}}$ for some $1 \leq i \leq \beta$, then $\ell' := \P_i$.
	If $\ell = \gen{\gen{\U_i, \ell(\ccc)}, \gen{\P_j}}$ for some $\ccc \in 2^\ccs \setminus \{\emptyset\}, 0 \leq i,j \leq \beta$ satisfying $\level(\ccc) \leq j$ and $i \in \{ j, j+1 \}$, then $\ell' := \ell(\ccc^{\newy})$ where $\ccc^{\newy} := \{ (C,j) \mid C \in \ccc \} \cup \{ (C,i) \mid C \in \ccc  \}$.
	(Here, as before, we set $\U_0 := \X$ and $\gen{\gen{\U_i, \ell(\ccc)}, \gen{\P_0}} := \gen{\gen{\U_i, \ell(\ccc)}}$.)

	Finally, consider only those nodes $v$ such that there is a half-edge incident to $v$ on which $\ell' = \ell(\ccc^{\newy})$ for some $\ccc^{\newy} \in 2^{\ccs'} \setminus \{\emptyset\}$.
	(We remark that, by the design of the replacement rules and the definition of $\nodeconst^*$, for any such node $v$, each incident half-edge is labeled with either $\ell(\ccc^{\newy})$ or $\X$ at this point.)
	Each such node $v$ selects $|\ccc^{\newy}| - 1 - x$ half-edges currently labeled with $\ell(\ccc^{\newy})$ and replaces the labels on those half-edges with the label $\X$.
	Here $x$ denotes the number of half-edges incident to $v$ that were already labeled with $\X$.
	(We will see a bit later that $|\ccc^{\newy}| - 1 - x$ is always nonnegative.)
	This concludes the description of $\fA'$.

	We claim that $\fA'$ solves $\Pi_\Delta(z')$.
	We first consider the produced node configurations.
	Consider some node $v$, and let $N = \ell_1 \s \dots \s \ell_\Delta$ denote the node configuration on the half-edges incident to $v$ produced by $\fA'$. 
	The replacement rules, together with the definition of $\nodeconst^*$ and the correctness of $\fA$, imply that $N \in \nodeconst_\Delta(z')$, provided that $\fA'$ is well-defined.
	Hence, the only thing that we have to show is that ``$|\ccc^{\newy}| - 1 - x$ is always nonnegative'' as claimed in the algorithm description.

	Thus, assume that there is a half-edge $h$ incident to $v$ that, before the final step of the algorithm, was labeled with $\ell' = \ell(\ccc^{\newy})$ for some $\ccc^{\newy} \in 2^{\ccs'} \setminus \{\emptyset\}$.
	Moreover, let $x$ be as defined above.
	By the design of the replacement rules, we see that there are some $\ccc \in 2^\ccs \setminus \{\emptyset\}, 0 \leq i,j \leq \beta$ such that the output of $\fA$ on $h$ was $\ell = \gen{\gen{\U_i, \ell(\ccc)}, \gen{\P_j}}$.
	Furthermore, by the definition of $\nodeconst^*$, we have $x = |\ccc| + i - j - 1$, i.e., $x = |\ccc| - 1$ if $i = j$, and $x = |\ccc|$ if $i = j + 1$.
	Observe that the definition of $\ccc^{\newy}$ (in the replacement rule) ensures that $|\ccc^{\newy}| \geq |\ccc|$ if $i = j$, and $|\ccc^{\newy}| \geq |\ccc| + 1$ if $i = j + 1$.
	Hence, in either case $|\ccc^{\newy}| - 1 - x \geq 0$, and we obtain that each node configuration produced by $\fA'$ is contained in $\nodeconst_\Delta(z')$, as desired.
	
	Now consider the edge configurations produced by $\fA'$.
	Consider an arbitrary edge $e$ and let $D = \D_1 \s \D_2$ denote the edge configuration produced by $\fA'$ on the two half-edges $h_1, h_2$  belonging to $e$ (where output $\D_1$ is given on half-edge $h_1$).
	The replacement rules, together with \Cref{lem:newedgestuff} clearly imply that, if $\D_1, \D_2 \in \{ \U_1, \dots, \U_\Delta \} \cup \{ \P_1, \dots, \P_\Delta\}$, $\D_1 = \X$, or $\D_2 = \X$, then $D \in \edgeconst_\Delta(z')$.
	Hence, assume w.l.o.g., that $\D_1 = \ccc^{\newy}$ for some $\ccc^{\newy} \in 2^{\ccs'} \setminus \{\emptyset\}$, and that $\D_2$ is some color set from $2^{\ccs'} \setminus \{\emptyset\}$, some $\U_{i'}$ or some $\P_{j'}$.
	Let $\gen{\gen{\U_i, \ell(\ccc)}, \gen{\P_j}}$ be the output of $\fA$ on $h_1$, which implies that $\level(\ccc^{\newy}) = i$.
	If $\D_2 = \U_{i'}$ for some $1 \leq i' \leq \beta$, then $D \in \edgeconst_\Delta(z')$, simply by the definition of $\edgeconst_\Delta(z')$.

	If $\D_2 = \P_{j'}$ for some $1 \leq j' \leq \beta$, then we know that the output label of $\fA$ on $h_2$ is $\gen{\gen{\P_{j'}}}$.
	By \Cref{lem:newedgestuff}, we obtain $i < j'$, which together with the aforementioned $\level(\ccc^{\newy}) = i$ (and the definition of $\edgeconst_\Delta(z')$) implies $D \in \edgeconst_\Delta(z')$.

	Finally, assume that $\D_2 =  \ccc^{\newy}_2$, and let $\gen{\gen{\U_{i'}, \ell(\ccc')}, \gen{\P_{j'}}}$ be the output of $\fA$ on $h_2$.
	By \Cref{lem:newedgestuff}, we have $\ccc \cap \ccc' = \emptyset$, or $i < j'$, or $i' < j$.
	By the definitions of $\ccc^{\newy}$ and $\ccc^{\newy}_2$, we have that, if $\ccc \cap \ccc' = \emptyset$, or $i < j'$, or $i' < j$, then $\ccc^{\newy} \cap \ccc^{\newy}_2 = \emptyset$.
	Hence, $D \in \edgeconst_\Delta(z')$, and we conclude that the output produced by $\fA'$ is indeed a correct solution for $\Pi_\Delta(z)$.
\end{proof}

\subsection{Zero Round Solvability}\label{ssec:zero}
We now characterize for which parameters the problems in our family cannot be solved in $0$ rounds.
\begin{lemma}\label{lem:pnno0}
	Assume that $|z| \le \Delta$. The problem $\Pi_{\Delta}(z)$ is not $0$-round solvable in the deterministic port numbering model if the port numbering is unconstrained.
\end{lemma}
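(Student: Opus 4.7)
I would argue by contradiction. A $0$-round algorithm in the unconstrained port numbering model is just a function $f : [\Delta] \times \{1,2\} \to \Sigma_\Delta(z)$, where a node of degree $\Delta$ outputs $f(p,e)$ on the half-edge with node-port $p$ and edge-port $e$. Since the adversary may choose any $\Delta$-regular (hyper)graph and any port assignment, the algorithm must satisfy: (N) for every $S \subseteq [\Delta]$, the multiset $\{f(p,1) : p \in S\} \cup \{f(p,2) : p \notin S\}$ is a permutation of a configuration in $\nodeconst_\Delta(z)$ (the set $S$ is realized at some node whose incident edges have edge-port $1$ exactly on the ports in $S$), and (E) for every $p,p' \in [\Delta]$, $\{f(p,1), f(p',2)\} \in \edgeconst_\Delta(z)$. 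Applying (N) with $S=[\Delta]$ gives $F_1 := (f(1,1),\dots,f(\Delta,1)) \in \nodeconst_\Delta(z)$, so $F_1$ is either a color configuration $\ell(\ccc)^{\Delta-x}\s\X^x$ with $\ccc\neq\emptyset$ and $x=|\ccc|-1$, or a pointer configuration $\P_i\s\U_i^{\Delta-1}$.

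The heart of the proof is a \emph{single-swap} analysis that forces $f(\cdot,2) = f(\cdot,1)$. For each port $p_0$ I would apply~(N) with $S = [\Delta]\setminus\{p_0\}$: the resulting multiset is $F_1$ with $f(p_0,1)$ replaced by $f(p_0,2)$, and this must still lie in $\nodeconst_\Delta(z)$. The key structural fact is that any configuration in $\nodeconst_\Delta(z)$ uses labels from exactly one of the three disjoint families $\{\ell(\ccc')\}$ (all with the same $\ccc'$), $\{\P_i, \U_i\}$, or $\{\X\}$ as a filler, so any ``foreign'' swap is forbidden.

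\emph{Case A} ($F_1$ a color configuration): writing $A = \{p : f(p,1) = \ell(\ccc)\}$, a case check on $f(p_0,2) \in \{\X, \ell(\cdot), \P_j, \U_j\}$ shows that for $p_0 \in A$ any value other than $\ell(\ccc)$ either mixes colors with pointers, introduces a second color set, or (if $f(p_0,2)=\X$) forces the existence of a color set of size $|\ccc|+1$ labelled $\ell(\ccc)$, all of which are impossible; and symmetrically for $p_0 \notin A$ the only possibility is $f(p_0,2)=\X$. \emph{Case B} ($F_1$ a pointer configuration with $\P_i$ at port $p^*$): the analogous swap analysis forces $f(p^*,2) = \P_i$ and $f(p,2) = \U_i$ for $p \neq p^*$, since any other value destroys the $\P_i \s \U_i^{\Delta-1}$ pattern. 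In either case we obtain $f(\cdot,2) = f(\cdot,1)$.

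The final step is to exhibit a violation of~(E). In Case~A, pick any $p \in A$ and take $p_u = p_v = p$ in~(E): we get $\{\ell(\ccc),\ell(\ccc)\} \in \edgeconst_\Delta(z)$, which by the definition of the edge constraint would require $\ccc \cap \ccc = \emptyset$, contradicting $\ccc \neq \emptyset$. In Case~B, taking $p_u = p_v = p^*$ yields $\{\P_i,\P_i\} \in \edgeconst_\Delta(z)$, but the list defining $\edgeconst_\Delta(z)$ contains no $\P_j \s \P_k$ configuration, again a contradiction.

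\textbf{Main obstacle.} The argument is not hard, but two small details need care. First, one must be sure that \emph{every} $S \subseteq [\Delta]$ is adversarially realizable at some degree-$\Delta$ node in some $\Delta$-regular graph; this is immediate from the freedom to choose the edge orientation independently on each incident edge. Second, the swap case analysis has to be done carefully in the corner subcase $|A|=1$ of Case~A, where $\ccc = \ccs$ and $|z|=\Delta$: there are fewer configurations to compare against, but the same incompatibilities (mixing colors and pointers, forcing a nonexistent color set of size $|\ccc|+1$, etc.) still close the argument, as long as $\Delta \geq 2$.
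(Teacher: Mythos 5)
Your proof is correct and reaches the same core contradiction as the paper --- every allowed node configuration of $\Pi_\Delta(z)$ contains a label that is not edge-compatible with itself ($\ell(\ccc)$, which occurs at least once because $|z|\le\Delta$ gives $x=|\ccc|-1\le\Delta-1$, and $\ccc\cap\ccc\neq\emptyset$; or $\P_i$, since no $\P\s\P$ configuration is allowed) --- but you get there by a genuinely different, finer-grained route. The paper's proof is two lines: in the deterministic port numbering model all nodes have the same information, hence they all output the same configuration with the same assignment of labels to ports, and the adversary then simply wires together two half-edges carrying the self-incompatible label. You instead allow the half-edge output to depend on the local edge-port, encode the node constraint as your condition (N) over all subsets $S$, and use the single-swap analysis to force $f(\cdot,2)=f(\cdot,1)$ before invoking self-incompatibility through (E) with $p=p'$; your case analysis, including the corner case $|\ccc|=\Delta$ where $\X^\Delta$ would require a color set of size $\Delta+1>|\ccs|$, is complete. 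What your approach buys is robustness against algorithms whose label on a half-edge depends on the orientation bit of that edge, which the paper's symmetry argument does not address; what it costs is length, and note that the uniform-output notion is exactly what the lifting machinery (\Cref{lem:zerorounds}) actually consumes, so the paper's shorter argument suffices for its purposes.

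One caveat on your opening step: asserting that a $0$-round algorithm ``is just'' a function $f\colon[\Delta]\times\{1,2\}\to\Sigma_{\Delta}(z)$ is itself a (mild) restriction. In principle the label output at port $p$ could depend on the \emph{entire} vector of edge-ports over all $\Delta$ ports of the node, not only on the pair $(p,e_p)$; in that fully general model the coupling between different views that your condition (N) and the swap argument exploit is no longer automatic, and a further argument would be needed. Since the paper's own proof restricts the algorithm class even more strongly (output independent of the edge-ports altogether), this does not put you behind the paper --- your write-up is in fact strictly more general --- but it is worth acknowledging the assumption explicitly rather than presenting it as the definition of a $0$-round algorithm.
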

\begin{proof}
	Since for a $0$-round algorithm the view of each node is the same, then all nodes must output the same configuration. We show that, for all the allowed configurations of the node constraints it holds that they are not ``self compatible'', meaning that they contain two labels $\ell_1$, $\ell_2$, that are not edge-compatible, that is, $\ell_1 \s \ell_2 \notin \edgeconst_\Delta(z)$, implying that for some port numbering assignment the $0$ round algorithm must fail. 
	
	The allowed node configurations are the following:
	\begin{itemize}
		\item $\ell(\ccc)^{\Delta-x} \s \X^x$, for each $\ccc \in 2^\ccs \setminus  \{\emptyset\}$, where $x = |\ccc|-1$. 
		\item $\P_i \s \U_i^{\Delta-1}$, for each $1 \le i \le \beta$. 
	\end{itemize}
	Consider the configurations of the first type. Since $|z| \le \Delta$, then $|\ccs| \le \Delta$, implying that $x = |\ccc|-1 \le \Delta-1$, and hence that $\ell(\ccc)$ appears at least once in the configuration. Since, for all $\ccc \in 2^\ccs \setminus  \{\emptyset\}$ it holds that $\ccc \cap \ccc \neq \emptyset$, then $\ell(\ccc)$ is not edge-compatible with itself.
	Consider the configurations of the second type. For all $i$, $\P_i$ is not edge-compatible with itself.
\end{proof}

\subsection{A Problem Sequence}\label{ssec:sequence}
We now combine \Cref{lem:pn1step} and \Cref{lem:pnno0} to show that there exists a problem sequence that satisfies some desirable properties. Intuitively, if a problem in the sequence is $\Pi_\Delta(z)$, then the next problem is obtained by taking the prefix sum of $z$. If we apply this procedure many times, and we start from a  ``small'' vector it will take some steps before obtaining a ``large'' vector. The number of steps required to reach a vector $z'$ such that $|z'| \ge \Delta$ (or $> \Delta$ if $\len(z') = 0$) is a lower bound on the length of the sequence.
 
\begin{lemma}\label{lem:pnlowerbound}
	There exists a problem sequence $\Pi_0 \rightarrow \Pi_1 \rightarrow \ldots \rightarrow \Pi_t$, where $\Pi_1 = \Pi_{\Delta}(z)$, such that, for all $0 \le i < t$, the following holds:
	\begin{itemize}
		\item There exists a problem $\Pi'_i$ that is a relaxation of $\re(\Pi_i)$;
		\item $\Pi_{i+1}$ is a relaxation of $\rere(\Pi'_i)$;
		\item The number of labels of $\Pi_i$, and the ones of $\Pi'_i$, are upper bounded by $2^\Delta(1 + \len(z))$.
	\end{itemize}
	Also, $\Pi_t$ has at most $2^\Delta(1 + \len(z))$ labels and is not $0$-round solvable in the deterministic port numbering model if the port numbering assignment is unconstrained. Let $\beta = \len(z)$. The value of $t$ is at least the maximum $i$ such that $|\prefix^i(z)| < \Delta$ if $\len(z) > 0$ and such that $|\prefix^i(z)| \le \Delta$ if $\len(z) = 0$. If there is no maximum, then $t$ can be made arbitrarily large.
\end{lemma}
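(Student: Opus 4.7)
The plan is to build the sequence inductively by setting $\Pi_0 := \Pi_\Delta(z)$ and, for each $i \geq 0$, defining $z^{(i)} := \prefix^i(z)$, $\Pi_i := \Pi_\Delta(z^{(i)})$, and $\Pi_i' := \re(\Pi_i)$ (which is trivially a relaxation of itself). Observe that $\len(\prefix(z^{(i)})) = \len(z^{(i)}) = \len(z) =: \beta$ is preserved throughout the sequence, so the parameter $\beta$ is fixed. To justify each transition step, I would apply \Cref{lem:pn1step} with input vector $z^{(i)}$: the lemma yields that $\Pi_{i+1} = \Pi_\Delta(\prefix(z^{(i)}))$ is a relaxation of $\rere(\Pi_i')$, precisely as required, \emph{provided that} the size condition $|z^{(i+1)}| \leq \Delta$ (when $\beta = 0$) or $|z^{(i+1)}| \leq \Delta - 1$ (when $\beta \geq 1$) is satisfied.

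The label count bound is immediate from the same lemma: each $\Pi_i$ and each $\Pi_i'$ has at most $2^{\Delta}(1 + \len(z))$ labels. I would therefore iterate as long as the size condition holds for the next vector in line, and set $t$ to be the largest index where the construction succeeds, i.e., the largest $i$ such that the size condition is satisfied for all $j \leq i$. By monotonicity of $\prefix$ on each coordinate (only the last coordinate grows when $\beta \geq 1$, while $z_0$ is preserved), the quantity $|z^{(i)}|$ is non-decreasing in $i$, so this maximum is well-defined exactly when the sequence eventually exceeds the threshold; this matches the bound stated in the lemma.

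Finally, to certify that $\Pi_t$ is not $0$-round solvable in the unconstrained port numbering model, I would invoke \Cref{lem:pnno0}, which only requires $|z^{(t)}| \leq \Delta$. Since by choice of $t$ we have $|z^{(t)}| < \Delta$ in the $\beta \geq 1$ case and $|z^{(t)}| \leq \Delta$ in the $\beta = 0$ case, this applies directly. For the edge case where no maximum exists, note that $\beta = 0$ forces $\prefix(z) = z$ (the prefix sum of a length-one vector is itself), so the sequence is stationary at $\Pi_\Delta(z)$; the construction can then be extended indefinitely, making $t$ arbitrarily large, which matches the last clause of the lemma.

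The main obstacle I anticipate is the bookkeeping around the boundary condition in \Cref{lem:pn1step}: the two different regimes ($\beta = 0$ versus $\beta \geq 1$) give slightly different thresholds ($\Delta$ versus $\Delta - 1$), and these must be consistently reconciled with the matching conditions stated in \Cref{lem:pnno0} and in the final clause of the lemma to be proved. Everything else is essentially mechanical iteration of previously established lemmas, so the only real content is ensuring that the transition hypothesis of \Cref{lem:pn1step} matches the termination hypothesis of \Cref{lem:pnno0} in both regimes.
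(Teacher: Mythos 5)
Your proposal is correct and follows essentially the same route as the paper's (very terse) proof: set $\Pi_i = \Pi_\Delta(\prefix^i(z))$, take $\Pi_i'$ to be $\re(\Pi_i)$ itself, invoke \Cref{lem:pn1step} for each transition and the label bound, and invoke \Cref{lem:pnno0} for the $0$-round non-solvability of $\Pi_t$, with the monotonicity of $|\prefix^i(z)|$ handling the choice of $t$ (and the stationary case $\beta=0$ giving arbitrarily large $t$). The only nitpick is the parenthetical "only the last coordinate grows," which is inaccurate (all coordinates except $z_0$ can grow under $\prefix$), but the monotonicity of $|z^{(i)}|$ that you actually use still holds, so nothing in the argument breaks.
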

\begin{proof}
	We set $\Pi_i = \Pi_{\Delta}(\prefix^i(z))$, where $\prefix^i(z)$ is the result obtained by recursively applying the function $\prefix$ for $i$ times, and $\prefix^0(z)=z$. By \Cref{lem:pn1step} the problem sequence satisfies the requirements. We need to prove for which values of $t$, $\Pi_t$ is not $0$-rounds solvable. By \Cref{lem:pnno0} it is sufficient to satisfy $|\prefix^t(z)| \le \Delta$, which is a condition required by the statement.
\end{proof}

\section{Lifting Results to the \textsf{LOCAL} Model}\label{sec:lifting}
In this section we show how to lift the results that we obtained for the deterministic port numbering model to the \LOCAL model. For this purpose, we heavily rely on ideas already used in the literature~\cite{Balliu2019, balliurules, trulytight, binary}, by slightly improving some parts to support our use case. While in this work we only consider graphs, we show a general statement that works on hypergraphs as well.

In the following, we assume w.l.o.g.\ that $\delta$, the rank of the hyperedges, is at most $\Delta$, the degree of the nodes.  If not, we can just switch the roles of nodes and hyperedges, and the same claim still holds (up to an additive constant factor).

\begin{theorem}\label{thm:lifting}
	Let $\Pi_0 \rightarrow \Pi_1 \rightarrow \ldots \rightarrow \Pi_t$ be a sequence of problems.
	Assume that, for all $0 \le i < t$, and for some function $f$, the following holds:
	\begin{itemize}
		\item There exists a problem $\Pi'_i$ that is a relaxation of $\re(\Pi_i)$;
		\item $\Pi_{i+1}$ is a relaxation of $\rere(\Pi'_i)$;
		\item The number of labels of $\Pi_i$, and the ones of $\Pi'_i$, are upper bounded by $f(\Delta)$.
	\end{itemize}
	Also, assume that $\Pi_t$ has at most $f(\Delta)$ labels and is not $0$-round solvable in the deterministic port numbering model, even if the port numbering assignment satisfies some local constraints $C^{\mathrm{port}}$. Then, $\Pi_0$ requires $\Omega(\min\{t, \log_\Delta n - \log_\Delta \log f(\Delta)\})$ rounds in the deterministic \LOCAL model and $\Omega(\min\{t, \log_\Delta \log n - \log_\Delta \log f(\Delta)\})$ rounds in the randomized \LOCAL model, even if the port numbering satisfies some local constraints $C^{\mathrm{port}}$.
\end{theorem}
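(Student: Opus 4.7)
The plan is to combine the standard round-elimination based lifting strategy of~\cite{Brandt2019,Balliu2019,balliurules,binary} with a careful error bookkeeping that tracks the label count $f(\Delta)$.

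First, the hypothesis on the problem sequence, together with \Cref{thm:rethm}, immediately yields a PN lower bound of $t$: on hypergraphs of girth at least $2T+2$, a $T$-round algorithm for $\Pi_i$ implies a $\max\{T-1,0\}$-round algorithm for $\rere(\re(\Pi_i))$, which in turn solves $\Pi_{i+1}$ by relaxation. Iterating $t$ times and contradicting the $0$-round unsolvability of $\Pi_t$ under the port constraints $C^{\mathrm{port}}$ gives the PN bound of $t$ rounds for $\Pi_0$.

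To lift this bound to the \LOCAL{} model, I would use the random-ID reduction. Given a deterministic \LOCAL{} algorithm $\fA$ solving $\Pi_0$ in $T$ rounds with IDs from $[N]$, where $N = \poly(n)$, I would turn $\fA$ into a randomized PN algorithm $\fA'$ by having each node sample a uniform independent ID in $[N]$ and simulate $\fA$. Round elimination is then applied iteratively: each step converts a $(T-i)$-round randomized algorithm for $\Pi_i$ into a $(T-i-1)$-round randomized algorithm for $\Pi_{i+1}$ with some per-step failure probability, where the intermediate step goes via $\re(\Pi_i)$, is relaxed to $\Pi'_i$, then $\rere(\cdot)$ is applied, and a final relaxation to $\Pi_{i+1}$ is performed. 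The hypothesis that both $\Pi_i$ and $\Pi'_i$ have at most $f(\Delta)$ labels ensures that the label count does not blow up into a tower of exponentials across the $t$ iterations. After $t$ such steps, if $T \geq t$, we obtain a randomized $0$-round PN algorithm for $\Pi_t$ with total failure probability $\epsilon$; as long as $\epsilon < 1$ and there exists a port numbering satisfying $C^{\mathrm{port}}$ on which every deterministic $0$-round algorithm for $\Pi_t$ fails, we reach a contradiction. The per-step error analysis follows the lines of~\cite{Balliu2019,balliurules}: the two dominant contributions are (i) the probability of an ID collision in a $T$-neighborhood, bounded by $\Delta^{O(T)}/N$, and (ii) the deviation of $\fA'$'s output distribution from a valid round-eliminated label, which depends on the label count only through $\log f(\Delta)$. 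Summing over the $t$ steps and requiring the total failure probability to be strictly less than~$1$ yields $T = \Omega(\min\{t,\,(\log N - \log\log f(\Delta))/\log\Delta\})$, which with $N = \poly(n)$ becomes the claimed deterministic bound. For the randomized \LOCAL{} version, I would apply the standard reduction from~\cite{chang16exponential} that replaces $n$ by $\log n$ in the dependency.

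The main obstacle will be the quantitative per-step error analysis. The arguments of~\cite{Balliu2019,balliurules,binary} were originally written under an implicit assumption that the label count is at most polynomial in $\Delta$, whereas our problem family can produce $f(\Delta) = 2^{O(\Delta)}$ labels. I would need to retrace those arguments to verify that the dependence on $f(\Delta)$ in the final bound is indeed only through $\log f(\Delta)$ (so that the $\log_\Delta\log f(\Delta)$ term appears rather than $\log_\Delta f(\Delta)$), and that the probabilistic fixing argument for extracting a $0$-round contradiction goes through under arbitrary port constraints $C^{\mathrm{port}}$ rather than just unconstrained ones. Both are mild strengthenings of existing results, but the bookkeeping is delicate and constitutes the bulk of the technical work of the proof.
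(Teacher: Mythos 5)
There is a genuine gap, and it sits exactly where you flag "the bulk of the technical work": your error bookkeeping for the deterministic bound is wrong in a way that changes the final answer. In randomized round elimination the per-step failure probabilities do not sum; each half-step turns a local failure probability $p$ into roughly $2\Delta f(\Delta)\, p^{1/(\Delta+1)}$ (cf.\ \Cref{lem:singlestep}), so after $t$ steps the sped-up algorithm's failure probability is of the form $(2\Delta f(\Delta))^2\, p^{1/(\Delta+1)^{2t}}$, and it must be compared against a quantitative $0$-round threshold such as the $\Delta^{-3\Delta^2} f(\Delta)^{-\Delta^2}$ bound of \Cref{lem:zerorounds} (not merely "total failure probability $<1$" against the existence of one bad port numbering --- the port numbering has to be random for the randomized elimination to work at all). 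Plugging your starting point $p\approx \Delta^{O(T)}/N$ with $N=\poly(n)$ into the correct recursion forces $\log N \gtrsim (\Delta+1)^{2T}\log f(\Delta)$, i.e.\ the direct "sample random IDs and eliminate" route can only certify $T=\Omega(\min\{t,\log_\Delta\log n-\log_\Delta\log f(\Delta)\})$ for \emph{deterministic} algorithms --- exponentially weaker than the claimed $\Omega(\min\{t,\log_\Delta n-\log_\Delta\log f(\Delta)\})$. Your claimed conclusion $T=\Omega((\log N-\log\log f(\Delta))/\log\Delta)$ only follows from the (incorrect) additive accounting.

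The missing idea is the two-stage structure the paper uses, in the opposite order from yours: first establish the \emph{randomized} \LOCAL lower bound at the $\log_\Delta\log n$ scale (randomized round elimination over a random port numbering consistent with $C^{\mathrm{port}}$, the $0$-round failure-probability bound of \Cref{lem:zerorounds}, the threshold comparison of \Cref{lem:randomfail}, and an indistinguishability step transferring the bound from infinite to finite trees, \Cref{lem:randlocal}); then boost it to the deterministic $\log_\Delta n$ bound by a "lie about $n$" argument (\Cref{lem:detlocal}): a deterministic algorithm running in $o(\log_\Delta n)$ rounds is rerun pretending the graph has $N=\log n$ nodes, with consistent IDs from a space of size $<N$ produced by Linial's coloring of $G^{2T+2}$, contradicting the already-proved bound at scale $N$. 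Your plan reverses this --- deterministic first, randomized afterwards via the derandomization of \cite{chang16exponential} --- and while that last derandomization step would be directionally fine if the deterministic bound were in hand, it is built on the flawed deterministic step, so the chain as proposed delivers only $\log_\Delta\log n$ (deterministic) and $\log_\Delta\log\log n$ (randomized), not the stated theorem.
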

Since all the techniques required to prove \Cref{thm:lifting} are already present in the literature, and we only need to improve the analysis to support an arbitrary large number of labels, we defer the proof of \Cref{thm:lifting} to \Cref{apx:lifting}.

Note that, by proving that a problem can be relaxed to a fixed point, then we obtain an arbitrary long sequence of problems that satisfy the requirements. Hence, if the number of labels used in the fixed point problem is not too large, then we get that this problem requires $\Omega(\log_\Delta n)$ rounds for deterministic algorithms, and $\Omega(\log_\Delta \log n)$ rounds for randomized ones.

\section{Lower Bound Results}\label{sec:corollaries}
All the results that we present in this section hold already on $\Delta$-regular trees. We start by showing a general result, and then we instantiate it to obtain results for a number of \emph{natural} problems. By combining \Cref{lem:pnlowerbound} with \Cref{thm:lifting}, we obtain the following theorem.
\begin{restatable}{theorem}{lbgeneral}\label{thm:general}
	Let $z$ be a vector $[z_0, \ldots, z_\beta]$ such that $\beta \le 2^\Delta$. Let $t$ be the maximum value such that $|\prefix^t(z)| \le \Delta$ if $\len(z) = 0$, and such that  $|\prefix^t(z)| < \Delta$ if $\len(z) > 0$. If there is no maximum, let $t = \infty$. 
	Then, $\Pi_{\Delta}(z)$ requires $\Omega(\min\{t, \log_\Delta n\})$ rounds in the deterministic \LOCAL model and $\Omega(\min\{t, \log_\Delta \log n\})$ rounds in the randomized \LOCAL model.
\end{restatable}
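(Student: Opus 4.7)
The plan is to combine \Cref{lem:pnlowerbound} with \Cref{thm:lifting} in an essentially mechanical way; the only piece of genuine verification is that the label-count bound $f(\Delta) = 2^\Delta(1+\beta)$ absorbs into the final $\Omega(\cdot)$.

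First, I will invoke \Cref{lem:pnlowerbound} applied to the given vector $z$. This immediately produces a problem sequence $\Pi_0 \to \Pi_1 \to \dots \to \Pi_t$ with $\Pi_0 = \Pi_\Delta(z)$ such that (i) each $\Pi_i$ admits an intermediate $\Pi'_i$ relaxing $\re(\Pi_i)$ with $\Pi_{i+1}$ a relaxation of $\rere(\Pi'_i)$, (ii) the number of labels of every $\Pi_i$ and $\Pi'_i$ is bounded by $f(\Delta) := 2^\Delta(1+\beta)$, (iii) $\Pi_t$ is not $0$-round solvable in the deterministic PN model even for an unconstrained port numbering, and (iv) the length $t$ is at least the value specified in the theorem statement. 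Since the PN hardness of $\Pi_t$ holds without any constraint on the ports, it also holds \emph{a fortiori} under any local port constraint $C^{\mathrm{port}}$, so the hypothesis of \Cref{thm:lifting} is met.

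Next, I will feed this sequence into \Cref{thm:lifting}. That theorem directly yields lower bounds of $\Omega(\min\{t,\log_\Delta n - \log_\Delta \log f(\Delta)\})$ in the deterministic \LOCAL model and $\Omega(\min\{t,\log_\Delta \log n - \log_\Delta \log f(\Delta)\})$ in the randomized \LOCAL model for solving $\Pi_0 = \Pi_\Delta(z)$. It then remains to verify that the correction term $\log_\Delta \log f(\Delta)$ can be absorbed into the $\Omega(\cdot)$. Using the hypothesis $\beta \le 2^\Delta$, one computes
\[
  \log f(\Delta) \;\le\; \Delta + \log(1 + 2^\Delta) \;\le\; 2\Delta + 1,
\]
hence $\log_\Delta \log f(\Delta) \le \log_\Delta(2\Delta + 1) = O(1)$ for $\Delta \ge 2$ (the case of constant $\Delta$ reduces to a trivial constant-round lower bound). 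Consequently $\log_\Delta n - \log_\Delta\log f(\Delta) = \Omega(\log_\Delta n)$, and similarly for $\log_\Delta \log n$, yielding the claimed asymptotic bounds.

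I do not anticipate any real technical obstacle: the heavy lifting (the round-elimination analysis of our family and the PN-to-\LOCAL lifting with many labels) has already been carried out in \Cref{lem:pnlowerbound} and \Cref{thm:lifting}, respectively. The most delicate step is the final accounting, where the assumption $\beta \le 2^\Delta$ is used precisely to keep $\log f(\Delta)$ polynomial in $\Delta$ and thereby guarantee that the lifting step introduces only an additive $O(1)$ loss on the $\log_\Delta$-scale. Without this assumption, $\len(z)$ could blow up faster than $2^\Delta$ and swallow the bound, which is why the hypothesis is stated exactly this way.
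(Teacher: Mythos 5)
Your proposal matches the paper's own proof essentially verbatim: the paper obtains \Cref{thm:general} precisely by combining \Cref{lem:pnlowerbound} with \Cref{thm:lifting} and remarking that the hypothesis $\beta \le 2^\Delta$ serves only to guarantee $\log_\Delta \log f(\Delta) = O(1)$ for $f(\Delta) = 2^\Delta(1+\len(z))$, which is exactly your final accounting. One small correction: your ``a fortiori'' claim is backwards --- hardness under an \emph{unconstrained} port numbering does not imply hardness under an arbitrary constraint $C^{\mathrm{port}}$ (constraints encode extra input and can only help a $0$-round algorithm) --- but this is harmless here, since you simply instantiate \Cref{thm:lifting} with the unconstrained $C^{\mathrm{port}}$, for which \Cref{lem:pnlowerbound} supplies the required non-$0$-round-solvability directly.
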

Notice that the assumption on $\beta$ is only required in order to make the total number of labels small enough so that $\log_\Delta\log f(\Delta)=O(1)$, where $f(\Delta)$ is an upper bound on the number of labels of the problems in the sequence.

\paragraph{Arbdefective Colorings.}
We start by proving that $\Pi_{\Delta}([\Delta])$ is a relaxation of some variants of arbdefective coloring, including the $\Delta$-coloring problem itself.
\begin{restatable}{theorem}{lbgeneralizedarbdef}\label{cor:generalizedarbdef}
	Let $\vec{d} = (d_1, \ldots, d_c)$ be an arbdefect vector that is not $1$-relaxed. Then, the $\vec{d}$-arbdefective $c$-coloring problem requires $\Omega(\log_\Delta n)$ rounds in the deterministic \LOCAL model and $\Omega(\log_\Delta \log n)$ rounds in the randomized \LOCAL model.
\end{restatable}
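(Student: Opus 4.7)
The plan is to exhibit a $0$-round local transformation that reduces $\Pi_\Delta([\Delta])$ to any $\vec{d}$-arbdefective $c$-coloring instance with $\vec{d}$ not $1$-relaxed, and then to invoke \Cref{thm:general} on the vector $z = [\Delta]$ to import its lower bound.

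First, I would apply \Cref{thm:general} with $z = [\Delta]$. Here $\len(z) = 0$, so $\beta = 0 \leq 2^\Delta$, and $\prefix$ acts as the identity on a length-$1$ vector, giving $|\prefix^i(z)| = \Delta \leq \Delta$ for all $i \geq 0$ and hence $t = \infty$. The theorem then yields an $\Omega(\log_\Delta n)$ deterministic and $\Omega(\log_\Delta \log n)$ randomized lower bound for $\Pi_\Delta([\Delta])$ on $\Delta$-regular trees.

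Second, I would encode any $\vec{d}$-arbdefective $c$-coloring inside $\Pi_\Delta([\Delta])$. Since $\vec{d}$ is not $1$-relaxed, $\capa(\vec{d}) = \sum_{i=1}^c (d_i+1) \leq \Delta$, so I can fix pairwise disjoint subsets $\ccc_1, \ldots, \ccc_c$ of the $\Delta$-color palette $\ccs = \{C_{0,1}, \ldots, C_{0,\Delta}\}$ of $\Pi_\Delta([\Delta])$ with $|\ccc_i| = d_i + 1$. Given a $\vec{d}$-arbdefective $c$-coloring together with its orientation, each node $v$ of arbdefective color $i$ outputs the label $\X$ on exactly $d_i$ incident ports — choosing in particular every port that carries a monochromatic outgoing edge, and padding to $d_i$ ports arbitrarily if fewer are needed — and outputs $\ell(\ccc_i)$ on the remaining $\Delta - d_i$ ports. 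This transformation is purely local and uses no communication.

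Verification is then immediate: the node constraint $\ell(\ccc_i)^{\Delta-d_i} \s \X^{d_i}$ with $|\ccc_i|-1 = d_i$ holds by construction; on an edge between nodes of distinct arbdefective colors $i \neq j$, disjointness of $\ccc_i$ and $\ccc_j$ makes $\ell(\ccc_i) \s \ell(\ccc_j) \in \edgeconst_\Delta([\Delta])$, while any configuration involving $\X$ is trivially allowed; and on a monochromatic edge, the input orientation forces at least one endpoint to output $\X$ on that port. The only point worth flagging — rather than a genuine obstacle — is that the hypothesis $\capa(\vec{d}) \leq \Delta$ is \emph{precisely} what lets the $c$ disjoint color groups fit into the palette of $\Pi_\Delta([\Delta])$; this is also the sharp threshold past which a greedy algorithm solves the problem in $O(\Delta + \log^* n)$ rounds, so the reduction is exactly as tight as it needs to be.
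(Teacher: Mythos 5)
Your proposal is correct and follows essentially the same route as the paper: observe that $\prefix([\Delta]) = [\Delta]$ so \Cref{thm:general} gives the lower bound for $\Pi_\Delta([\Delta])$ with $t=\infty$, then use $\capa(\vec{d})\le\Delta$ to pack $c$ pairwise disjoint color groups of sizes $d_i+1$ into the palette and perform the same $0$-round relabeling (marking monochromatic outgoing edges with $\X$ and padding to exactly $d_i$ ports). The verification details you give, including the padding to hit the exact count of $\X$-ports, match the paper's argument.
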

\begin{proof}
	First of all, notice that $\prefix([\Delta]) = [\Delta]$, implying that for $\Pi_{\Delta}([\Delta])$ we can apply \Cref{thm:general} with $t= \infty$. We show that, if $\vec{d}$ is not $1$-relaxed, then the $\vec{d}$-arbdefective $c$-coloring problem can be relaxed to $\Pi_{\Delta}([\Delta])$.
	Since by assumption $\vec{d}$ is not $1$-relaxed, by definition we have that:
	\[
	\capa(\vec{d})=\sum_{i=1}^c (d_i+1) \le \Delta.
	\]
	Consider a color space $S$ of size $\sum_{i=1}^c (d_i+1)$. We partition the colors of $S$ into $c$ sets, where each set $\ccc_i$ contains $d_i+1$ colors. For each set $\ccc_i$, the problem $\Pi_{\Delta}([\Delta])$ contains the configuration $\ell(\ccc_i)^{\Delta-x} \s \X^x$, where $x = |\ccc_i|-1 = d_i$. Also, since all sets $\ccc_i$ are pairwise disjoint, then $\ell(\ccc_i) \s \ell(\ccc_j)$ is contained in the edge constraint, for all $i \neq j$. Moreover, $\X$ is compatible with every other label. Hence, given a solution for arbdefective coloring, nodes of color $i$ can label $\ell(\ccc_i)$ their edges, and then mark their arbdefective edges with $\X$. The number of arbdefective edges is at most $d_i$. Nodes can then mark other arbitrary edges to make the number of $\X$ exactly $d_i$. In this way, they obtain a solution for  $\Pi_{\Delta}([\Delta])$.
\end{proof}
By allowing the same arbdefect to each color, we obtain the following corollary.
\begin{corollary}\label{cor:arbdef}
	The $d$-arbdefective $c$-coloring problem requires $\Omega(\log_\Delta n)$ rounds in the deterministic \LOCAL model and $\Omega(\log_\Delta \log n)$ rounds in the randomized \LOCAL model, if $(d+1)c \le \Delta$.
\end{corollary}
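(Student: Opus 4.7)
The plan is to derive this as a direct specialization of \Cref{cor:generalizedarbdef}. Concretely, I would set $\vec{d} = (d_1, \ldots, d_c)$ with $d_i = d$ for every $i \in \{1,\ldots,c\}$, so that a $d$-arbdefective $c$-coloring is by definition the same problem as a $\vec{d}$-arbdefective $c$-coloring.

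Next, I would unpack the capacity of this vector: by definition
\[
\capa(\vec{d}) = \sum_{i=1}^{c}(d_i + 1) = c(d+1).
\]
Under the hypothesis $(d+1)c \leq \Delta$, this gives $\capa(\vec{d}) \leq \Delta$, which means $\vec{d}$ fails the relaxation threshold $\capa(\vec{d}) > \Delta$, i.e., $\vec{d}$ is not $1$-relaxed.

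With this in hand, the hypothesis of \Cref{cor:generalizedarbdef} is satisfied, and its conclusion gives exactly the two claimed lower bounds: $\Omega(\log_\Delta n)$ rounds for deterministic algorithms and $\Omega(\log_\Delta \log n)$ rounds for randomized algorithms in the \LOCAL model. There is no genuine obstacle here; the entire content of the corollary is the observation that the uniform-arbdefect condition $(d+1)c \le \Delta$ is the special case of the capacity condition for the uniform vector $(d,\ldots,d)$, so no new combinatorial argument is required beyond invoking the previously established theorem.
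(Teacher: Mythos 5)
Your proposal is correct and matches the paper's (implicit) derivation exactly: the paper obtains \Cref{cor:arbdef} from \Cref{cor:generalizedarbdef} by taking the uniform vector $\vec{d}=(d,\dots,d)$, noting $\capa(\vec{d})=c(d+1)\le\Delta$ means $\vec{d}$ is not $1$-relaxed. Nothing further is needed.
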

Also, by setting $c=\Delta$ and $d=0$, we obtain the following.
\begin{corollary}\label{cor:lbdeltacol}
	The $\Delta$-coloring problem requires $\Omega(\log_\Delta n)$ rounds in the deterministic \LOCAL model and $\Omega(\log_\Delta \log n)$ rounds in the randomized \LOCAL model.
\end{corollary}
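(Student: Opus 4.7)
The plan is to derive Corollary~\ref{cor:lbdeltacol} as an immediate instantiation of the already-proved Corollary~\ref{cor:arbdef} (equivalently, of Theorem~\ref{cor:generalizedarbdef}). Concretely, a proper $\Delta$-coloring of a graph $G$ is, by definition, exactly a $0$-arbdefective $\Delta$-coloring of $G$ (under any orientation of the edges): requiring that every node has $0$ monochromatic outneighbors is the same as requiring that no two adjacent nodes share a color. So any algorithm solving $\Delta$-coloring in $T$ rounds also solves the $0$-arbdefective $\Delta$-coloring problem in $T$ rounds (it outputs the coloring, and any orientation of the edges—for instance, the one induced by node identifiers—works).

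The hypothesis of Corollary~\ref{cor:arbdef} is $(d+1)c \le \Delta$. Setting $d=0$ and $c=\Delta$ gives $(0+1)\Delta = \Delta \le \Delta$, so the hypothesis is satisfied. Applying Corollary~\ref{cor:arbdef} therefore yields that $0$-arbdefective $\Delta$-coloring requires $\Omega(\log_\Delta n)$ deterministic rounds and $\Omega(\log_\Delta \log n)$ randomized rounds on $\Delta$-regular trees in the \LOCAL model, and hence the same lower bound holds for $\Delta$-coloring.

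If one prefers to invoke Theorem~\ref{cor:generalizedarbdef} directly, take the arbdefect vector $\vec{d}=(0,\dots,0)\in\ARB^\Delta$ of length $\Delta$. Then $\capa(\vec{d})=\sum_{i=1}^\Delta (0+1)=\Delta$, so $\vec{d}$ is not $1$-relaxed (since $\capa(\vec{d})=\Delta \not> \Delta$). Theorem~\ref{cor:generalizedarbdef} then immediately gives the claimed bounds.

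The only thing to verify here is the reduction from $\Delta$-coloring to $0$-arbdefective $\Delta$-coloring, which is essentially the observation that the orientation of edges is a vacuous requirement when the arbdefect bound is $0$; there is no real obstacle. The heavy lifting—namely, that the relaxed $\Delta$-coloring problem $\Pi_\Delta([\Delta])$ is a non-trivial round-elimination fixed point and that arbdefective colorings of capacity at most $\Delta$ can be relaxed to it—has already been carried out in Section~\ref{sec:deltacoloring} and in the proof of Theorem~\ref{cor:generalizedarbdef}. The corollary is thus a one-line specialization.
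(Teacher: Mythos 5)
Your proposal is correct and matches the paper's own derivation exactly: the paper obtains Corollary~\ref{cor:lbdeltacol} by instantiating Corollary~\ref{cor:arbdef} with $c=\Delta$ and $d=0$ (equivalently, applying Theorem~\ref{cor:generalizedarbdef} to the all-zeros arbdefect vector of length $\Delta$, which has capacity exactly $\Delta$ and hence is not $1$-relaxed). Your additional remark that the edge-orientation requirement is vacuous when the arbdefect is $0$ is a fine, if routine, elaboration of the same one-line specialization.
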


\paragraph{Arbdefective Colored Ruling Sets.}
We now consider $\alpha$-arbdefective $c$-colored $\beta$-ruling sets. Before showing our results, we prove some useful lemmas.
We start by showing a lemma proved in~\cite{balliurules}.

\begin{lemma}[Lemma 16 of \cite{balliurules}, arXiv version]\label{lem:prefix}
	Assume $z = [1,0,\ldots,0]$ and $k \le \beta = \len(z)$. For all $j \ge 1$, $\prefix^j(v)_k = {j+k-1 \choose k}$.
\end{lemma}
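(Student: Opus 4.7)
The plan is to prove the statement by induction on $j$, since the prefix function is defined recursively and the claimed formula has a clean recursive structure matching the hockey stick identity.

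For the base case $j=1$, I would just unfold the definition: since $z = [1, 0, \ldots, 0]$, the inclusive prefix sum satisfies $\prefix(z)_k = \sum_{i \leq k} z_i = 1$ for every $0 \leq k \leq \beta$, and indeed $\binom{1+k-1}{k} = \binom{k}{k} = 1$.

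For the inductive step, assume that $\prefix^j(z)_k = \binom{j+k-1}{k}$ for all $0 \leq k \leq \beta$. Then by the definition of $\prefix$,
\[
\prefix^{j+1}(z)_k \;=\; \sum_{i=0}^{k} \prefix^j(z)_i \;=\; \sum_{i=0}^{k} \binom{j+i-1}{i}.
\]
Now I would invoke the hockey stick identity in the form $\sum_{i=0}^{k} \binom{r+i}{i} = \binom{r+k+1}{k}$ with $r = j-1$, which yields $\prefix^{j+1}(z)_k = \binom{j+k}{k} = \binom{(j+1)+k-1}{k}$, completing the induction.

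There is no real obstacle here; the only mild point to be careful about is that the prefix sum used in the paper is the \emph{inclusive} one (as stated explicitly in \Cref{sec:family}), so the sum in the inductive step indeed ranges up to $i = k$ rather than $i = k-1$. Everything else is a direct application of Pascal's rule / the hockey stick identity.
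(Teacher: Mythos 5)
Your proof is correct: the base case matches the inclusive prefix-sum definition, and the inductive step is a clean application of the hockey stick identity $\sum_{i=0}^{k}\binom{j-1+i}{i}=\binom{j+k}{k}$, which is exactly $\binom{(j+1)+k-1}{k}$. Note that the paper itself does not prove this statement but imports it as Lemma 16 of the cited arXiv version of \cite{balliurules}; the standard induction you give is precisely the expected argument, so there is nothing to add (beyond observing that the ``$\prefix^j(v)_k$'' in the statement is a typo for $\prefix^j(z)_k$, which you implicitly corrected).
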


We now show a lower bound for $\Pi_\Delta(z)$ for the case in which only the first element of $z$ is non-zero.
\begin{lemma}\label{lem:firstnonzero}
	Assume $z = [k,0,\ldots,0]$, where $k\ge 1$, and $\beta = \len(z)\le 2^\Delta$. 
	Let $t$ be the maximum value such that ${t + \beta \choose \beta} < \Delta / k$.
	The problem $\Pi_{\Delta}(z)$ requires $\Omega(\min\{t, \log_\Delta n\})$ rounds in the deterministic \LOCAL model and $\Omega(\min\{t, \log_\Delta \log n\})$ rounds in the randomized \LOCAL model.
\end{lemma}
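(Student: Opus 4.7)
The plan is to invoke \Cref{thm:general} directly, so the only real work is to evaluate $|\prefix^j(z)|$ for $z = [k, 0, \ldots, 0]$ and match the threshold in the lemma statement with the threshold in \Cref{thm:general}. Since $\len(z) = \beta \ge 1$ in the interesting case, \Cref{thm:general} yields a lower bound of $\Omega(\min\{t^*, \log_\Delta n\})$ deterministic rounds and $\Omega(\min\{t^*, \log_\Delta\log n\})$ randomized rounds, where $t^*$ is the maximum index $j$ with $|\prefix^j(z)| < \Delta$. It therefore suffices to show $t^* = t$.

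The key observation is that $\prefix$ is a linear operator on vectors of a fixed length, so $\prefix^j$ commutes with scalar multiplication. In particular,
\[
  \prefix^j([k, 0, \ldots, 0]) \;=\; k \cdot \prefix^j([1, 0, \ldots, 0]).
\]
By \Cref{lem:prefix}, $\prefix^j([1,0,\ldots,0])_i = \binom{j+i-1}{i}$ for $1 \le i \le \beta$; the $i = 0$ coordinate is simply the preserved leading entry $1 = \binom{j-1}{0}$, so the formula extends to $i = 0$.

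Summing over all coordinates and applying the hockey-stick identity,
\[
  |\prefix^j(z)| \;=\; k \sum_{i=0}^{\beta} \binom{j+i-1}{i} \;=\; k \binom{j+\beta}{\beta}.
\]
Hence the condition $|\prefix^j(z)| < \Delta$ is equivalent to $\binom{j+\beta}{\beta} < \Delta/k$, which is exactly the condition defining $t$ in the statement. Thus $t^* = t$, and the bound follows from \Cref{thm:general}; the hypothesis $\beta \le 2^\Delta$ is precisely what is needed to ensure the label-count assumption of that theorem. There is no real obstacle here: the lemma reduces to a combinatorial identity together with the linearity of $\prefix$, and the case $\beta = 0$ is degenerate (either $t$ is vacuously infinite if $k \le \Delta$, or the statement is trivial).
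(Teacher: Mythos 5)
Your proposal is correct and follows essentially the same route as the paper: both reduce to \Cref{thm:general} via linearity of $\prefix$ and \Cref{lem:prefix}, arriving at $|\prefix^j(z)| = k\binom{j+\beta}{\beta}$. The only cosmetic difference is that you evaluate this sum coordinate-wise with the hockey-stick identity, whereas the paper observes $|\prefix^j(z)| = \prefix^{j+1}(z)_\beta$ and reads off the last coordinate of one further prefix application.
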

\begin{proof}
	Note that $|\prefix^t(z)| = \prefix^{t+1}(z)_\beta$, and $\prefix^{t+1}([k,0,\ldots,0]) = k \cdot \prefix^{t+1}([1,0,\ldots,0])$. Hence, $\prefix^{t+1}(z)_\beta  = \prefix^{t+1}([k,0,\ldots,0])_\beta  = k \cdot \prefix^{t+1}([1,0,\ldots,0])_\beta =  k {t + \beta \choose \beta}$, where the last equality holds by \Cref{lem:prefix}. Hence, since by assumption $t$ is the maximum value such that $k {t + \beta \choose \beta} < \Delta$, then, by \Cref{thm:general}, the claim follows.
\end{proof}
Note that $\Pi_\Delta([1,0,\ldots,0])$ is a relaxation of a variant of $(2,\beta)$-ruling sets where nodes also need to know an upper bound on the distance, and the direction, towards a ruling set node.
For this variant of ruling sets, in \cite[Lemma 15 (arXiv version)]{balliurules} it has been shown that it can be solved in $t$ rounds, where $t$ is the minimum value such that ${t + \beta \choose \beta} \ge c$, if nodes are provided a $c$-coloring in input. \Cref{lem:firstnonzero} implies that this result is tight, at least in the case in which a $(\Delta + 1)$-coloring is given in input to the nodes: it is not possible to solve this variant of ruling sets in $t$ rounds, if ${t + \beta \choose \beta} < \Delta$. In fact, from \Cref{lem:pnlowerbound} we actually get a lower bound of $t+1$ rounds, where $t$ is the maximum value such that ${t + \beta \choose \beta} < \Delta$.
 Hence, our result is tight in the case in which a $(\Delta+1)$-coloring is given to the nodes. Note that, improving this result for the case in which a larger coloring is given in input would imply that a $(\Delta+1)$-vertex coloring cannot be computed fast, but knowing whether $(\Delta+1)$-vertex coloring requires  $\omega(\log^* n)$ is a long standing open question.

 We now use the result of \Cref{lem:firstnonzero} to show lower bounds for arbdefective colored ruling sets. In particular, we show that a lower bound of $t$ rounds for $\Pi([c(1+\alpha),0,\ldots,0])$ implies a lower bound of $t-\beta$ rounds for $\alpha$-arbdefective $c$-colored $\beta$-ruling sets.
 
% \begin{restatable}{theorem}{lbcoldom}\label{thm:lbarbcolrs}
% 	Let $t$ be the maximum value such that ${t + \beta \choose \beta} \le \frac{\Delta}{c(1+\alpha)}$, where $\beta\le2^\Delta$.
% 	The $\alpha$-arbdefective $c$-colored $\beta$-ruling set problem requires $\Omega(\min\{t, \log_\Delta n\} - \beta)$ rounds in the deterministic \LOCAL model and $\Omega(\min\{t, \log_\Delta \log n\} - \beta)$ rounds in the randomized \LOCAL model.
% \end{restatable}
\lbcoldom*
 \begin{proof}
	Assume $z = [c(1+\alpha),0,\ldots,0]$ and $\len(z)=\beta$. 
	By \Cref{lem:firstnonzero}, the problem $\Pi_\Delta(z)$ requires $\Omega(\min\{t, \log_\Delta n\})$ rounds in the deterministic \LOCAL model and $\Omega(\min\{t, \log_\Delta \log n\})$ rounds in the randomized \LOCAL model, where $t$ is the maximum value such that $c(1+\alpha) {t + \beta \choose \beta} < \Delta$. 
	
	We show that, given a solution for the $\alpha$-arbdefective $c$-colored $\beta$-ruling set problem, nodes can spend $\beta$ rounds to solve $\Pi_{\Delta}(z)$. Hence, a lower bound of $t$ rounds for $\Pi_{\Delta}(z)$ implies a lower bound of $t-\beta$ for the $\alpha$-arbdefective $c$-colored $\beta$-ruling set problem.	
	In $\beta$ rounds, uncolored nodes can find the nearest colored node $v$, and nodes at distance $i$ from $v$ can output $\P_i$ on their port towards $v$ and $\U_i$ on all the other ports. Then, since the color space contains $c(1+\alpha)$ colors, then there must exist $c$ disjoint sets $\ccc_i$ of colors of size $1+\alpha$. For each set $\ccc_i$, there must exist an allowed configuration of the form $\ell(\ccc_i^{\Delta-x}) \s \X^x$, for $x = \alpha$. Nodes of color $i$ can label their edges $\ell(\ccc_i)$, and then mark their edges oriented towards other nodes of the same color with $\X$. The number of edges marked $\X$ is at most $\alpha$. Nodes can then mark other arbitrary edges to make the number of $\X$ exactly $\alpha$. It is easy to check that this labeling satisfies the constraints of $\Pi_{\Delta}(z)$.
\end{proof}
We now provide some more human-friendly versions of \Cref{thm:lbarbcolrs}. We start by making an observation.
\begin{observation}\label{obs:larget}
	For all $t \le  \frac{\beta}{2e} x^{1/\beta}$ and $\beta \le \frac{1}{2e} \log x$, it holds that ${t + \beta \choose \beta} \le x$. 
\end{observation}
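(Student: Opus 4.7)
The plan is to reduce the claim to a routine application of the standard binomial estimate $\binom{n}{k}\le (en/k)^k$, so that $\binom{t+\beta}{\beta}\le\bigl(e(t+\beta)/\beta\bigr)^{\beta}$. Raising to the $\beta$-th power, it therefore suffices to show that, under the two hypotheses, $e(t+\beta)/\beta\le x^{1/\beta}$. I split the left-hand side as $et/\beta + e$ and bound the two summands separately.

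For the first summand, the hypothesis $t\le \frac{\beta}{2e}x^{1/\beta}$ gives immediately $et/\beta\le x^{1/\beta}/2$. For the second summand, I need $e\le x^{1/\beta}/2$, i.e.\ $x^{1/\beta}\ge 2e$, which is equivalent to $\beta\log(2e)\le \log x$. Here the hypothesis $\beta\le \frac{1}{2e}\log x$ enters: it gives $2e\beta\le \log x$, and since $\log(2e)<2e$, the required inequality $\beta\log(2e)\le \log x$ follows with slack. Adding the two bounds yields $e(t+\beta)/\beta\le x^{1/\beta}/2+x^{1/\beta}/2=x^{1/\beta}$, which completes the argument.

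There is no real obstacle in this proof; the only thing to verify is that the constant $\tfrac{1}{2e}$ chosen for the hypothesis on $\beta$ is large enough to absorb the additive $+e$ term that arises when splitting $e(t+\beta)/\beta$. This is comfortably the case, since in fact $e^{2e}\approx 233\gg 2e$, so the observation holds with considerable slack. Since the statement is only used in \Cref{sec:corollaries} to convert general lower bounds of the form ``$t$ is the largest integer with $\binom{t+\beta}{\beta}<\Delta/c$'' into the cleaner form ``$\Omega(\beta(\Delta/c)^{1/\beta})$'' within an admissible regime of $\beta$, the precise constants are not delicate; what matters is simply producing a closed-form range for $t$ and $\beta$.
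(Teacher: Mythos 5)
Your proof is correct and takes essentially the same route as the paper: both start from the standard estimate $\binom{t+\beta}{\beta}\le\bigl(e(t+\beta)/\beta\bigr)^{\beta}$ and then absorb the two hypotheses by direct substitution. The only cosmetic difference is how the contribution of $\beta$ is handled—you derive $x^{1/\beta}\ge 2e$ from the hypothesis on $\beta$ and split $e(t+\beta)/\beta$ into two halves, while the paper substitutes $\beta\le\frac{1}{2e}\log x$ and uses $\log x\le\beta x^{1/\beta}$; both are immediate consequences of the same assumptions.
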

\begin{proof}
	\[{t + \beta \choose \beta} \le \left(\frac{e(t + \beta)}{\beta}\right)^\beta \le  \left(\frac{e(\frac{\beta}{2e} x^{1/\beta} + \frac{1}{2e} \log x)}{\beta}\right)^\beta \le  \left(\frac{e(\frac{\beta}{e} x^{1/\beta} )}{\beta}\right)^\beta = x
	\]
\end{proof}

We are now ready to provide an explicit lower bound for $\alpha$-arbdefective $c$-colored $\beta$-ruling sets as a function of $\Delta$ and $n$.
\begin{corollary}\label{cor:arbdefrsdelta}
	There exists a constant $\eps_0>0$ such that the $\alpha$-arbdefective $c$-colored $\beta$-ruling set problem requires $\Omega\big(\min\big\{\beta (\frac{\Delta}{c(1+\alpha)})^{1/\beta}, \log_\Delta n\big\}\big)$ rounds, for $\beta \le \eps_0 \min\big\{\log \frac{\Delta}{c(1+\alpha)}, \log_\Delta n\big\}$, in the deterministic \LOCAL model, and it requires $\Omega\big(\min\big\{\beta (\frac{\Delta}{c(1+\alpha)})^{1/\beta}, \log_\Delta \log n\big\}\big)$ rounds, for $\beta \le \eps_0 \min\big\{\log \frac{\Delta}{c(1+\alpha)}, \log_\Delta \log n\big\}$, in the randomized \LOCAL model.
\end{corollary}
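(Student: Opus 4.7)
The plan is to derive the corollary directly from \Cref{thm:lbarbcolrs} by substituting a lower bound on $t$ coming from \Cref{obs:larget}. Concretely, let $x := \Delta/(c(1+\alpha))$, so that $t$ in the statement of \Cref{thm:lbarbcolrs} is the largest integer satisfying $\binom{t+\beta}{\beta} < x$. The observation tells us that whenever $t \le \frac{\beta}{2e}\, x^{1/\beta}$ and $\beta \le \frac{1}{2e}\log x$, we have $\binom{t+\beta}{\beta} \le x$. Choosing $\eps_0 \le \frac{1}{2e}$, the assumption $\beta \le \eps_0 \log x$ guarantees the second precondition of \Cref{obs:larget}. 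Then taking $t_0 := \lfloor \frac{\beta}{2e} x^{1/\beta}\rfloor - 1$ and applying the observation yields $\binom{t_0+\beta}{\beta} < x$, so the maximum $t$ of \Cref{thm:lbarbcolrs} satisfies $t \ge t_0 = \Omega\big(\beta (\Delta/(c(1+\alpha)))^{1/\beta}\big)$.

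Plugging this lower bound on $t$ into \Cref{thm:lbarbcolrs} gives a deterministic lower bound of
\[
\Omega\!\left(\min\!\left\{\beta\Big(\tfrac{\Delta}{c(1+\alpha)}\Big)^{1/\beta},\ \log_\Delta n\right\} - \beta\right)
\]
and an analogous randomized bound with $\log_\Delta\log n$ in place of $\log_\Delta n$. The remaining step is to absorb the additive $-\beta$ term into the $\min$, which is where the two conditions on $\beta$ in the corollary statement come in. For the first term, the condition $\beta \le \eps_0 \log x$ (with $\eps_0$ small enough) ensures $x^{1/\beta} \ge 2^{1/\eps_0} \ge 4e$, so $\beta x^{1/\beta} \ge 4e\beta$ and hence $\beta \le \tfrac{1}{2}\cdot \beta x^{1/\beta}$. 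For the second term, the condition $\beta \le \eps_0 \log_\Delta n$ (respectively $\beta \le \eps_0 \log_\Delta \log n$) with $\eps_0 \le 1/2$ directly gives $\beta \le \tfrac{1}{2}\log_\Delta n$ (resp.\ $\tfrac{1}{2}\log_\Delta\log n$). Together these imply $\beta \le \tfrac{1}{2}\min\{\beta x^{1/\beta}, \log_\Delta n\}$, so the $-\beta$ shrinks the $\min$ by at most a constant factor and can be absorbed into the $\Omega(\cdot)$.

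There is no real obstacle here beyond bookkeeping: the whole corollary is a routine translation of the binomial-style bound of \Cref{thm:lbarbcolrs} into a closed-form expression in $\Delta$, $c$, $\alpha$, $\beta$, via the elementary estimate $\binom{t+\beta}{\beta} \le (e(t+\beta)/\beta)^\beta$ already encapsulated in \Cref{obs:larget}. The only subtle point is choosing $\eps_0$ uniformly small enough so that a single constant makes both the precondition of \Cref{obs:larget} hold and the $-\beta$ absorbable; selecting, e.g., $\eps_0 = \min\{1/(2e),\, 1/2\}$ and noting that $\Delta/(c(1+\alpha)) \ge 2$ (otherwise the claimed bound is trivial) suffices.
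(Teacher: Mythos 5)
Your proposal is correct and follows essentially the same route as the paper's own proof: lower-bound the maximum $t$ of \Cref{thm:lbarbcolrs} via \Cref{obs:larget} (using $\beta \le \eps_0\log\frac{\Delta}{c(1+\alpha)}$), then use the assumptions on $\beta$ to show the additive $-\beta$ is absorbed, i.e.\ $t-\beta=\Omega(t)$ and $\log_\Delta n-\beta=\Omega(\log_\Delta n)$ (resp.\ $\log_\Delta\log n$). The only extra care you take — the $-1$ in $t_0$ together with monotonicity of $\binom{t+\beta}{\beta}$ to turn the observation's $\le$ into the strict $<$ required by the theorem — is a harmless refinement of the same bookkeeping the paper leaves implicit.
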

\begin{proof}
	Let $t$ be the maximum value such that ${t + \beta \choose \beta} < \frac{\Delta}{c(1+\alpha)}$. By \Cref{obs:larget} and the assumption on $\beta$, $t = \Omega(\beta (\frac{\Delta}{c(1+\alpha)})^{1/\beta})$. By the assumption on $\beta$, $t-\beta = \Omega(t)$. Also, by the assumption on $\beta$ in the case of deterministic algorithms, $\log_\Delta n - \beta = \Omega(\log_\Delta n)$, and in the case of randomized algorithms, $\log_\Delta \log n - \beta = \Omega(\log_\Delta \log n)$. By \Cref{thm:lbarbcolrs}, the claim follows.
\end{proof}
\begin{corollary}\label{cor:arbdefrsn}
	There exists some constant $\varepsilon>0$ such that the $\alpha$-arbdefective $c$-colored $\beta$-ruling set problem requires $\Omega(\frac{\log n}{\log (1+\alpha) + \log c + \beta \log \log n})$ rounds for $\beta \le \varepsilon \sqrt{\frac{\log n}{\log (1+\alpha) + \log c + \log \log n}}$ in the deterministic \LOCAL model, and $\Omega(\frac{\log \log n}{\log (1+\alpha) + \log c +\beta \log \log \log n})$ rounds for $\beta \le \varepsilon \sqrt{\frac{\log \log n}{\log (1+\alpha) + \log c +\log \log \log n}}$ in the randomized \LOCAL model.
\end{corollary}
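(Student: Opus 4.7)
The plan is to derive the $n$-dependent bound by invoking \Cref{thm:lbarbcolrs} on a $\Delta$-regular tree, optimizing over the choice of $\Delta$. Since \Cref{thm:lbarbcolrs} yields $\Omega(\min\{t,\log_\Delta n\} - \beta)$ rounds deterministically (and $\Omega(\min\{t,\log_\Delta \log n\} - \beta)$ with randomization), where $t$ is the largest integer with $\binom{t+\beta}{\beta} < \Delta/(c(\alpha+1))$, and since this bound holds for every $\Delta$, I would pick $\Delta$ as a function of $n$, $\alpha$, $c$, and $\beta$ so as to balance the two terms inside the minimum.

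For the deterministic case, I would set $\Delta := \lceil (\alpha+1)\,c\,(\log n)^{\beta}\rceil$ (the case $\beta = 0$ reduces directly to \Cref{cor:arbdef} with $\Delta = \Theta((\alpha+1)c)$). With this choice, $\log\Delta = \Theta(\log(1+\alpha) + \log c + \beta\log\log n)$, so $\log_\Delta n = \Theta(T)$ with $T$ equal to the claimed lower bound. To show also $t \ge \Omega(T)$, I would apply the standard estimate $\binom{T+\beta}{\beta} \le (e(T+\beta)/\beta)^{\beta}$: the required inequality $\binom{T+\beta}{\beta} < \Delta/(c(\alpha+1)) = (\log n)^{\beta}$ reduces to $e(T+\beta) \le \beta\log n$, which holds comfortably because $T \le \log n/\log\log n \ll \beta\log n$ for $\beta\ge 1$ and $n$ sufficiently large.

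It then remains to absorb the ``$-\beta$'' term into a constant factor. Writing $S := \log(1+\alpha) + \log c + \log\log n$ and $S' := \log(1+\alpha) + \log c + \beta\log\log n$, the hypothesis $\beta \le \epsilon\sqrt{\log n/S}$ is equivalent to $\beta^2 S \le \epsilon^2\log n$; since $S' \le \beta S$ when $\beta\ge 1$, we get $T = \log n / S' \ge \log n/(\beta S) \ge \beta/\epsilon^2$, which is $\ge 2\beta$ for $\epsilon \le 1/\sqrt{2}$. Hence $\min\{t,\log_\Delta n\} - \beta = \Omega(T)$, which is the stated lower bound.

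For the randomized case, the argument is identical after replacing every occurrence of $\log n$ by $\log\log n$ (so $\Delta := \lceil(\alpha+1)\,c\,(\log\log n)^{\beta}\rceil$) and using the randomized half of \Cref{thm:lbarbcolrs}, yielding the claimed $\Omega(\log\log n/(\log(1+\alpha)+\log c+\beta\log\log\log n))$ bound. The main technical effort lies only in the bookkeeping needed to verify the two inequalities $\binom{T+\beta}{\beta} < \Delta/(c(\alpha+1))$ and $T \ge 2\beta$ under the stated hypothesis on $\beta$; these are routine substitutions into the binomial bound and do not present a real conceptual obstacle once the right $\Delta$ has been identified.
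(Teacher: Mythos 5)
Your proposal is correct and follows essentially the same route as the paper: the paper proves this corollary by plugging $\Delta \approx c(1+\alpha)\bigl(\tfrac{\log n}{\log\log n}\bigr)^{\beta}$ (resp.\ the $\log\log n$ analogue) into the $\Delta$-parameterized bound of \Cref{cor:arbdefrsdelta}, which itself comes from \Cref{thm:lbarbcolrs} via the binomial estimate of \Cref{obs:larget} and the same absorption of the $-\beta$ term. Your version merely inlines that intermediate corollary and uses the marginally different choice $\Delta=\lceil(\alpha+1)c(\log n)^{\beta}\rceil$, which changes nothing asymptotically.
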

\begin{proof}
	Apply \Cref{cor:arbdefrsdelta} by picking $\Delta \approx c(1+\alpha)(\frac{\log n}{\log \log n})^\beta$ in the deterministic case, and $\Delta \approx c(1+\alpha)(\frac{\log \log n}{\log \log \log n})^\beta$ in the randomized one.
\end{proof}

By setting $\alpha = 0$ and $c=1$, from \Cref{thm:lbarbcolrs} and \Cref{cor:arbdefrsdelta,cor:arbdefrsn} we obtain lower bounds for $(2,\beta)$-ruling sets.
\begin{corollary}\label{cor:rulingsets}
	Let $t$ be the maximum value such that ${t + \beta \choose \beta} < \Delta$, where $\beta\le2^\Delta$.
	The $(2,\beta)$-ruling set problem requires $\Omega(\min\{t, \log_\Delta n\} - \beta)$ rounds in the deterministic \LOCAL model and $\Omega(\min\{t, \log_\Delta \log n\} - \beta)$ rounds in the randomized \LOCAL model.
	
	In particular, there exists some constant $\varepsilon>0$ such that
	the $(2,\beta)$-ruling set problem requires  $\Omega(\min\{\beta \Delta^{1/\beta}, \log_\Delta n\})$ rounds for $\beta \le \varepsilon \min\{\log \Delta, \log_\Delta n\}$ in the deterministic \LOCAL model, and $\Omega(\min\{\beta \Delta^{1/\beta}, \log_\Delta \log n\})$ rounds for $\beta \le \varepsilon \min\{\log \Delta, \log_\Delta \log n\}$ in the randomized one.
	
	If we express the complexity of the problem solely as a function of $n$, then it requires $\Omega(\frac{\log n}{\beta \log \log n})$ rounds for $\beta \le \varepsilon \sqrt{\frac{\log n}{\log \log n}}$ for deterministic algorithms, and $\Omega(\frac{\log \log n}{\beta \log \log \log n})$ rounds for $\beta \le \varepsilon \sqrt{\frac{\log \log n}{\log \log \log n}}$ for randomized ones.
\end{corollary}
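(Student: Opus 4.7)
The plan is to derive this corollary as a direct specialization of the more general result on arbdefective colored ruling sets, namely \Cref{thm:lbarbcolrs} together with its quantitative consequences \Cref{cor:arbdefrsdelta} and \Cref{cor:arbdefrsn}. The key observation is that a $(2,\beta)$-ruling set is by definition exactly a $0$-arbdefective $1$-colored $\beta$-ruling set: setting $\alpha=0$ forces the ``colored'' part to be a proper coloring (no monochromatic outgoing edges), setting $c=1$ makes that coloring trivial, so the only constraint on the chosen set $S$ is independence, which is precisely the independence requirement of a $(2,\beta)$-ruling set.

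Concretely, I would first instantiate \Cref{thm:lbarbcolrs} with $\alpha=0$ and $c=1$. The expression $\frac{\Delta}{c(\alpha+1)}$ collapses to $\Delta$, so the defining condition for $t$ in \Cref{thm:lbarbcolrs} becomes $\binom{t+\beta}{\beta}<\Delta$, which exactly matches the statement. The lower bounds $\Omega(\min\{t,\log_\Delta n\}-\beta)$ and $\Omega(\min\{t,\log_\Delta\log n\}-\beta)$ then transfer verbatim, yielding the first part of the corollary. The requirement $\beta\le 2^\Delta$ is inherited from the hypothesis in \Cref{thm:general} on which \Cref{thm:lbarbcolrs} is built.

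Next, for the ``in particular'' clause, I would specialize \Cref{cor:arbdefrsdelta} with $\alpha=0$ and $c=1$. With these values, $(\Delta/(c(1+\alpha)))^{1/\beta}=\Delta^{1/\beta}$ and the side condition $\beta\le\varepsilon_0\log(\Delta/(c(1+\alpha)))$ reduces to $\beta\le\varepsilon_0\log\Delta$. Thus the lower bound becomes $\Omega(\min\{\beta\Delta^{1/\beta},\log_\Delta n\})$ in the deterministic case and $\Omega(\min\{\beta\Delta^{1/\beta},\log_\Delta\log n\})$ in the randomized case, each valid in the stated regime $\beta\le\varepsilon\min\{\log\Delta,\log_\Delta n\}$ (respectively with $\log_\Delta\log n$). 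Taking $\varepsilon:=\varepsilon_0$ from \Cref{cor:arbdefrsdelta} completes this part.

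Finally, for the purely $n$-dependent form, I would apply \Cref{cor:arbdefrsn} with $\alpha=0,c=1$, so that $\log(1+\alpha)+\log c=0$ and the bounds simplify to $\Omega\bigl(\frac{\log n}{\beta\log\log n}\bigr)$ and $\Omega\bigl(\frac{\log\log n}{\beta\log\log\log n}\bigr)$ under the stated constraints on $\beta$. There is no real technical obstacle here: the work was already done in proving \Cref{thm:lbarbcolrs} and its corollaries; the only thing to verify is that the specialized conditions match cleanly, which they do. The main thing to double-check is that the reduction direction is correct, i.e.\ that any $(2,\beta)$-ruling set algorithm yields a $0$-arbdefective $1$-colored $\beta$-ruling set algorithm with no additional rounds, which is immediate since the two solution concepts coincide under $\alpha=0,c=1$.
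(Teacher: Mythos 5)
Your proposal is correct and follows exactly the paper's own route: the paper obtains this corollary by setting $\alpha=0$ and $c=1$ in \Cref{thm:lbarbcolrs} and \Cref{cor:arbdefrsdelta,cor:arbdefrsn}, using precisely the identification of $(2,\beta)$-ruling sets with $0$-arbdefective $1$-colored $\beta$-ruling sets that you note. Nothing further is needed.
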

Then, by setting $\beta=1$, we obtain a lower bound for MIS.
\begin{restatable}{theorem}{lbmis}
	The maximal independent set problem requires  $\Omega(\min\{\Delta, \log_\Delta n\})$ rounds in the deterministic \LOCAL model and $\Omega(\min\{\Delta, \log_\Delta \log n\})$ rounds in the randomized \LOCAL model.
	If we express the complexity of the problem solely as a function of $n$, then it requires $\Omega(\frac{\log n}{\log \log n})$ rounds for deterministic algorithms and $\Omega(\frac{\log \log n}{\log \log \log n})$ rounds for randomized ones.
\end{restatable}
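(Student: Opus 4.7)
The plan is to derive this MIS lower bound as a direct consequence of \Cref{cor:rulingsets}, by observing that a maximal independent set is exactly a $(2,1)$-ruling set (any node not in the set must have a neighbor in the set, i.e., be at distance at most $1$). Therefore every lower bound for $(2,\beta)$-ruling sets with $\beta=1$ immediately transfers to MIS, and the entire theorem reduces to instantiating \Cref{cor:rulingsets} with $\beta=1$.

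For the bound as a function of $\Delta$, I would compute the threshold $t$ appearing in \Cref{cor:rulingsets}: with $\beta=1$, the inequality $\binom{t+\beta}{\beta}<\Delta$ becomes $t+1<\Delta$, hence the maximum such $t$ is $\Delta-2$. Substituting into the deterministic statement $\Omega(\min\{t,\log_\Delta n\}-\beta)$ yields $\Omega(\min\{\Delta-2,\log_\Delta n\}-1)=\Omega(\min\{\Delta,\log_\Delta n\})$, where the simplification is valid because for any constant-size instance the bound is trivial, and for larger instances the additive constants are absorbed in the $\Omega$. The analogous substitution into the randomized statement yields $\Omega(\min\{\Delta,\log_\Delta\log n\})$. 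Note that no side condition on $\beta$ is needed here since $\beta=1$ is well below every threshold imposed in \Cref{cor:rulingsets}.

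For the bound expressed purely in terms of $n$, I would invoke the second half of \Cref{cor:rulingsets}: with $\beta=1$, the deterministic bound $\Omega\bigl(\tfrac{\log n}{\beta\log\log n}\bigr)$ becomes $\Omega\bigl(\tfrac{\log n}{\log\log n}\bigr)$, and the randomized bound $\Omega\bigl(\tfrac{\log\log n}{\beta\log\log\log n}\bigr)$ becomes $\Omega\bigl(\tfrac{\log\log n}{\log\log\log n}\bigr)$. The admissibility condition $\beta\le\eps\sqrt{\log n/\log\log n}$ (respectively $\beta\le\eps\sqrt{\log\log n/\log\log\log n}$) is trivially met for $\beta=1$ and any sufficiently large $n$.

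There is really no obstacle to overcome in this proof: the entire technical machinery has been set up in the previous sections, and the argument is a one-line specialization of \Cref{cor:rulingsets}. The only care needed is in bookkeeping the additive $-\beta$ term so that the $\min$-expression simplifies cleanly, which is straightforward for the constant value $\beta=1$.
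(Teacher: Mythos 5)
Your proposal is correct and follows essentially the same route as the paper: the paper also obtains this theorem by setting $\beta=1$ in \Cref{cor:rulingsets} (itself the $\alpha=0$, $c=1$ specialization of \Cref{thm:lbarbcolrs} and its corollaries), with the same computation of $t=\Delta-2$ and absorption of the additive $-\beta$ term. No gap to report.
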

In particular, we get that the maximal independent set problem requires $\Omega(\log n / \log \log n)$ deterministic rounds \emph{on trees}, which is tight, as in \cite{BarenboimE10} it has been shown that MIS on trees can be solved in $O(\log n / \log\log n)$ deterministic rounds.

\paragraph{More Results.}
Until now, we only considered cases where the vector $z$ has non-zero values only in the first position.
We conclude the section by showing that \Cref{thm:general} can be used to prove lower bounds for even more variants of ruling sets and arbdefective colorings.

Consider the following problem. We have two disjoint color spaces $\ccs^1$ and $\ccs^2$, one containing $4$ colors and the other containing $5$ colors. Nodes can either be colored, or uncolored.
Uncolored nodes must have either at least one node colored from the space  $\ccs^1$  at distance at most $3$, or at least one node colored from the space  $\ccs^2$ at distance at most $2$, or both.
Nodes colored from the space  $\ccs^1$  must have arbdefect at most $1$, while nodes colored from the space $\ccs^2$ must have arbdefect at most $2$. We show how to use \Cref{thm:general} to prove that this problem requires $\Omega(\Delta^{1/3})$ rounds.
\begin{corollary}
	The example problem requires $\Omega(\min\{\Delta^{1/3}, \log_\Delta n\})$ rounds in the deterministic \LOCAL model and $\Omega(\min\{\Delta^{1/3}, \log_\Delta \log n\})$ rounds in the randomized \LOCAL model.
\end{corollary}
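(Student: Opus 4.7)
The plan is to realize the example problem as a relaxation of a carefully chosen $\Pi_\Delta(z)$ from the family defined in Section~\ref{sec:family} and then invoke Theorem~\ref{thm:general}. I would set $z = [8, 15, 0, 0]$, so that $\len(z) = \beta = 3$. The level-$0$ slot carries $8 = 4 \cdot 2$ colors arranged as four pairwise disjoint size-$2$ groups, one per color of $\ccs^1$, matching the arbdefect-$1$ budget via the family's convention that a set of $|\ccc|$ colors allows $|\ccc|-1$ wildcard ports. The level-$1$ slot carries $15 = 5 \cdot 3$ colors arranged as five pairwise disjoint size-$3$ groups, one per color of $\ccs^2$, matching arbdefect $2$. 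With this identification, the family's chain-length bound $\beta - \level$ coincides with the reachability radii $3$ for $\ccs^1$ and $2$ for $\ccs^2$.

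I would then describe an $O(1)$-round translation from any solution of the example problem to a solution of $\Pi_\Delta(z)$. Each colored node outputs its assigned group $\ell(\ccc)$ on $\Delta - (|\ccc|-1)$ ports and $\X$ on the remaining $|\ccc|-1$ ports, choosing the $\X$-ports to cover its outgoing same-color edges (possibly padded with arbitrary ports, since $\X$ is compatible with every label). Each uncolored node $v$ spends $\beta = 3$ rounds to find a feasible target $w_v$ minimizing $s(v) := \level(w_v) + \mathrm{dist}(v, w_v)$, and outputs $\P_{s(v)}$ on the port toward the first hop of a shortest path to $w_v$ and $\U_{s(v)}$ on all other ports.

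The edge-constraint check is the part I expect to require most care, and it is the main obstacle. Constraints involving at least one colored endpoint or at least one $\X$ label follow directly from inspecting $\edgeconst_\Delta(z)$; the critical case is an edge $\{u, v\}$ between two uncolored nodes. The key observation is that if $u$ is the first hop of $v$'s shortest path to $w_v$, then $w_v$ is also a feasible target for $u$ and witnesses $s(u) \le s(v) - 1$; by symmetry, both endpoints cannot simultaneously point toward each other, so no $\P_{s(v)} \s \P_{s(u)}$ pair arises. In the remaining mixed case $\P_{s(v)} \s \U_{s(u)}$, the same distance argument yields the strict inequality $s(u) < s(v)$ required by $\edgeconst_\Delta(z)$. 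This is where the tie-breaking rule matters: the naive ``minimize distance'' rule can produce $\P \s \U$ pairs with equal indices (which are forbidden), while ``minimize $\level + \mathrm{dist}$'' is what restores strict monotonicity along chains.

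It then remains to invoke Theorem~\ref{thm:general}. By linearity of $\prefix$ and Lemma~\ref{lem:prefix},
\[
  |\prefix^t(z)| \;=\; 8 \binom{t+3}{3} + 15 \binom{t+2}{2} \;=\; \Theta(t^3),
\]
so the largest $t$ with $|\prefix^t(z)| < \Delta$ satisfies $t = \Theta(\Delta^{1/3})$. Since $\len(z) = 3 \le 2^\Delta$ for every $\Delta \ge 2$, Theorem~\ref{thm:general} yields the lower bound $\Omega(\min\{\Delta^{1/3}, \log_\Delta n\})$ in the deterministic \LOCAL model and $\Omega(\min\{\Delta^{1/3}, \log_\Delta \log n\})$ in the randomized one for $\Pi_\Delta(z)$, and the additive $O(1)$ translation cost is absorbed into the asymptotic notation to give the same bound for the example problem.
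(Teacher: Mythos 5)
Your proposal is correct and follows essentially the same route as the paper's proof: encode the example problem as $\Pi_\Delta([8,15,0,0])$ with the level-$0$ colors grouped in pairs and the level-$1$ colors in triples, translate a given solution in $O(\beta)=O(1)$ rounds (colored nodes use $\X$ for their outgoing same-color edges, uncolored nodes output pointer chains), and apply \Cref{thm:general} with $|\prefix^t(z)|=\Theta(t^3)$. The only differences are cosmetic and both valid: you break ties by minimizing $\level+\mathrm{dist}$ where the paper takes the nearest colored node and prioritizes $\ccs^1$ on ties (both make the pointer potential strictly decrease along chains), and you compute $|\prefix^t(z)|=8\binom{t+3}{3}+15\binom{t+2}{2}$ exactly where the paper just dominates it by $15\cdot\prefix^{t+1}([1,0,0,0])$.
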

\begin{proof}
	Consider the problem $\Pi_{\Delta}(z)$ where $z = [8,15,0,0]$. We show that, given a solution for the example problem, nodes can solve in $O(1)$ rounds the problem $\Pi_{\Delta}(z)$.
	
	Uncolored nodes can spend $3$ rounds to see the color of the nearest colored node. Nodes at distance $i$, can output $\P_i$ on the port towards the colored node if the colored node has a color from $\ccs^1$, or $\P_{i+1}$ if the colored node has a color from $\ccs^2$, and if both cases apply, nodes from $\ccs^1$ are prioritized. Note that the assumptions guarantee that only labels from $\{\P_1,\P_2,\P_3\}$ are required. Uncolored nodes that output $\P_i$ on one port, then output $\U_i$ on all the other ports.
	
	Colored nodes proceed as follows. There are $8$ colors in level $0$, and $15$ colors in level $1$. We partition the $8$ colors in $4$ groups of $2$ colors, $\{\ccc^1_{i} ~|~ 1 \le i \le 4\}$, and we partition the $15$ colors in $5$ groups of $3$ colors, $\{\ccc^2_{i} ~|~ 1 \le i \le 5\}$. In $\Pi_{\Delta}(z)$ there must exist the following configurations:
	$\ell(\ccc^1_i)^{\Delta-1} \s \X$ such that $\level(\ccc^1_i) = 0$ for $i \in \{1,2,3,4\}$, and $\ell(\ccc^2_i)^{\Delta-2} \s \X^2$ such that $\level(\ccc^2_i) = 1$ for $i \in \{1,2,3,4,5\}$, such that the intersection between each distinct pair of color sets is disjoint. As in the arbdefective colored ruling set case, nodes can map their colors and put $\X$ on the right edges, such that the problem is solved correctly. 
	
	For the time lower bound, note that:
	\[
	\prefix^t([8,15,0,0]) \le \prefix^t([15,15,15,15]) \le 15 \cdot \prefix^t([1,1,1,1]) \le 15 \cdot \prefix^{t+1}([1,0,0,0]).
	\]
	Hence, by \Cref{lem:prefix}, \Cref{obs:larget}, and \Cref{thm:general}, the claim follows.
\end{proof}

\section{Upper Bounds}
\label{sec:algorithms}

We first discuss the problem of computing generalized arbdefective
colorings as defined in \Cref{def:generalarbdefect}. We start with a
simple lemma, which states that if we are given a proper vertex $C$
coloring of a graph, as long as a given arbdefective coloring instance
is $1$-relaxed, we can compute a valid solution by iterating over
the $C$ color classes.

\begin{lemma}\label{lemma:simpletightarbcoloring}
  Let $G=(V,E)$ be a properly $m$-colored graph. Any $1$-relaxed generalized
  arbdefective coloring problem on $G$ can be solved in $O(m)$ rounds
  deterministically.
\end{lemma}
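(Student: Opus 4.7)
The plan is to process the $m$ color classes of the given proper coloring sequentially, spending one round per class, and to let each node in class $j$ both commit to an output color and orient its incident edges to already-processed neighbors.

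Concretely, the algorithm proceeds in $m$ phases, where phase $j \in [m]$ takes one round. In phase $j$, every node $v$ whose input color is $j$ inspects the output colors that have been committed to in phases $1, \ldots, j-1$ by its neighbors (this information can be collected in a single round, since those neighbors have already terminated). Let $n_x$ denote the number of already-processed neighbors of $v$ whose output color is $x$, for each $x \in [C]$. Node $v$ then picks any $x \in [C]$ satisfying $n_x \leq d_x$, assigns itself output color $x$, and orients every edge from $v$ to an already-processed neighbor as outgoing from $v$. Edges to not-yet-processed neighbors are left for those later nodes to orient. Since the input coloring is proper, no two nodes in the same color class are adjacent, so no edge is ever contested between two simultaneously-acting nodes.

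The key step is verifying that a valid choice of $x$ always exists. Suppose for contradiction that $n_x \geq d_x + 1$ for every $x \in [C]$. Summing,
\[
\deg(v) \;\geq\; \sum_{x=1}^{C} n_x \;\geq\; \sum_{x=1}^{C} (d_x + 1) \;=\; \capa(\vec{d}) \;>\; \Delta,
\]
contradicting $\deg(v) \leq \Delta$; here we used the assumption that $\vec{d}$ is $1$-relaxed. Hence some $x$ with $n_x \leq d_x$ exists, and $v$'s choice is well-defined.

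For correctness, observe that by construction every outgoing edge of $v$ goes to a neighbor already colored in an earlier phase, so the number of outneighbors of $v$ of color $x$ (its chosen color) is exactly $n_x \leq d_x$, as required by \Cref{def:generalarbdefect}. The round complexity is $O(m)$: each of the $m$ phases requires a single round of communication to collect freshly committed colors from neighbors. I do not expect any significant obstacle here; the pigeonhole step above is essentially the entire content, and no cleverness beyond the capacity inequality is needed.
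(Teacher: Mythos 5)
Your proposal is correct and matches the paper's argument essentially verbatim: both iterate over the $m$ input color classes one round at a time, orient each edge toward the endpoint processed earlier (equivalently, toward the smaller input color), and use the capacity inequality $\capa(\vec{d})>\Delta$ together with pigeonhole to guarantee that some color $x$ with at most $d_x$ same-colored outneighbors is available. No gap; the dynamic phrasing of the orientation is just a restatement of the paper's static orientation by input color.
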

\begin{proof}
  Let $C$ be the number of colors of the given generalized
  arbdefective coloring instance and let $\vec{d}=(d_1,\dots,d_C)$ be
  the arbdefect vector.  The coloring of the nodes $V$ and the
  orientation of the edges $E$ is computed greedily in $m$ phases by
  iterating over the $m$ colors of the given proper $m$-coloring of
  $G$. Assume that the proper initial coloring assigns color
  $\gamma(v)\in [m]$ to some node $v\in V$. The edge orientation is
  defined as follows. For every edge $\set{u,v}\in E$, the edge is
  oriented from $u$ to $v$ if $\gamma(u)>\gamma(v)$ and it is oriented
  from $v$ to $u$ otherwise.  In phase $p\in\set{1,\dots,m}$, the
  nodes $v$ of initial color $\gamma(v)=p$ are colored. Note that the
  given edge orientation implies that when coloring a node $v$, all
  its outneighbors have already been colored in earlier phases and
  because the initial $m$-coloring is proper, we also never color two
  adjacent nodes in the same phase. Each node $v\in V$ that picks a
  color can now just pick a color $x\in [C]$ for which the number of
  outneighbors of color $x$ is at most $d_x$. Note that such a color
  must exist because $v$ has at most $\Delta$ outneighbors and because
  the arbdefect vector $\vec{d}$ is $1$-relaxed and therefore
  $\sum_{i=1}^C(d_i+1)>\Delta$.
\end{proof}

\Cref{thm:ubarbdef} now follows directly from the above lemma.

\begin{theorem}\label{thm:ubarbdef}
	Let $G=(V,E)$ be an $n$-node graph. Any $1$-relaxed
	generalized arbdefective coloring problem on $G$ can be solved in
	$O(\Delta+\log^* n)$ rounds deterministically.
\end{theorem}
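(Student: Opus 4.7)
The plan is to combine \Cref{lemma:simpletightarbcoloring} with a standard fast algorithm for computing a proper $(\Delta+1)$-coloring. First I would invoke the result of Barenboim~\cite{barenboim14distributed} (or, equivalently, any of the known $O(\Delta+\log^* n)$ deterministic $(\Delta+1)$-coloring algorithms) to compute a proper $(\Delta+1)$-coloring of the input graph $G$ in $O(\Delta+\log^* n)$ rounds. Call this coloring $\gamma$.

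With $\gamma$ in hand, I would then apply \Cref{lemma:simpletightarbcoloring} using $m = \Delta+1$. Since the arbdefect vector of the given generalized arbdefective coloring instance is $1$-relaxed by hypothesis, the lemma guarantees that a valid $\vec{d}$-arbdefective $C$-coloring (together with the required edge orientation) can be computed in $O(m) = O(\Delta)$ additional rounds by the greedy-by-color-class procedure: orient every edge from the higher $\gamma$-endpoint to the lower one, and iterate over color classes $p=1,\dots,\Delta+1$, letting each node of $\gamma$-color $p$ pick any output color $x\in[C]$ whose current number of outneighbors of color $x$ does not exceed $d_x$. Such an $x$ always exists because the outdegree is at most $\Delta$ and $\capa(\vec{d})>\Delta$.

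Summing the two phases gives a total round complexity of $O(\Delta+\log^*n)+O(\Delta) = O(\Delta+\log^*n)$, which is exactly the claimed bound. There is no real technical obstacle here: the theorem is essentially a one-line corollary of \Cref{lemma:simpletightarbcoloring} combined with the existing $(\Delta+1)$-coloring algorithm, so the only thing to be careful about is to note explicitly that the $(\Delta+1)$-coloring we compute plays the role of the proper $m$-coloring required as input in \Cref{lemma:simpletightarbcoloring}, and that $m=\Delta+1$ contributes only an additive $O(\Delta)$ term that is absorbed by the final bound.
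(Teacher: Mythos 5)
Your proposal is correct and matches the paper's proof exactly: it computes a proper $(\Delta+1)$-coloring in $O(\Delta+\log^* n)$ rounds via \cite{barenboim14distributed} and then applies \Cref{lemma:simpletightarbcoloring} with $m=\Delta+1$. Nothing further is needed.
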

\begin{proof}
  We can first compute a proper $(\Delta+1)$-coloring of $G$ in time
  $O(\Delta+\log^* n)$ (e.g., by using the algorithm of
  \cite{barenboim14distributed}) and then apply
  \Cref{lemma:simpletightarbcoloring}.
\end{proof}

The following lemma provides a simple generalization the $O(\Delta/d + \log^* n)$-round  $d$-arbdefective
$O(\Delta/d)$-coloring algorithm of \cite{BEG18}.

\begin{lemma}\label{lemma:relaxedarbcoloring}
  For every $\alpha\geq 0$ and every $\eta\geq 1$, an $\alpha$-arbdefective
  $O(\eta\Delta/\alpha)$-coloring of an $n$-node graph $G$ with
  maximum degree $\Delta$ can be computed deterministically in
  $O\big(\frac{\Delta}{\eta\alpha}+\log^* n\big)$ rounds.
\end{lemma}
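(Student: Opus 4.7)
The plan is to extend the $\alpha$-arbdefective $O(\Delta/\alpha)$-coloring algorithm of \cite{BEG18} (which corresponds to the case $\eta = 1$ of the statement) by trading a factor-$\eta$ blowup in the number of output colors for an $\eta$-fold speedup.

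The first step is to invoke the \cite{BEG18} algorithm with the arbdefect parameter set to $\alpha' := \lceil\eta\alpha\rceil$ instead of $\alpha$. The result is an $(\eta\alpha)$-arbdefective coloring of $G$ with $K = O(\Delta/(\eta\alpha))$ color classes, together with an orientation of the edges, and runs in $O(\Delta/(\eta\alpha) + \log^* n)$ rounds. The round complexity already matches the target, but the arbdefect is a factor $\eta$ too large.

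The second step refines the coloring inside each outer class in parallel to bring the arbdefect down to $\alpha$. Within any color class $C_i$, the induced oriented subgraph has maximum outdegree at most $\eta\alpha$, so it suffices to $\alpha$-arbdefectively split each $C_i$ into $O(\eta)$ subclasses. Combining the $K$ outer classes with the $O(\eta)$ subclasses produced inside each of them yields $O(\eta K) = O(\Delta/\alpha) \subseteq O(\eta\Delta/\alpha)$ colors in total with arbdefect $\alpha$, as desired.

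The main obstacle is that the \cite{BEG18} algorithm is phrased for bounded maximum degree, whereas the subgraph $G[C_i]$ is only guaranteed to have bounded maximum \emph{outdegree}. To handle this, I would use an orientation-aware variant of the \cite{BEG18} procedure: given an initial Linial $O(\Delta^2)$-coloring of $G$ (restricted to $C_i$), one processes nodes in batches according to this coloring and lets each node only inspect its outneighbors when choosing a subclass. Such a variant computes an $\alpha$-arbdefective $O(d/\alpha)$-coloring of an orientation with maximum outdegree $d$ in $O(d/\alpha + \log^* n)$ rounds, which for $d = \eta\alpha$ is $O(\eta + \log^* n)$. This fits into the total budget whenever $\eta \leq \Delta/(\eta\alpha)$, i.e., $\eta^2 \alpha \leq \Delta$. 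In the complementary regime $\eta^2 \alpha > \Delta$, the outer step already produces only $K = O(1)$ classes, so it suffices to directly compute, starting from a Linial $O(\Delta^2)$-coloring in $O(\log^* n)$ rounds, a palette choice where each node picks from $C = O(\eta\Delta/\alpha)$ colors one that minimizes coincidence with its outneighbors; a pigeonhole argument guarantees arbdefect at most $\lceil\Delta/C\rceil \leq \alpha$ and a single pass over the Linial coloring (coarsened into $O(1)$ super-batches) performs this within $O(\log^* n)$ rounds. Verifying that the refinement is simultaneously consistent across all edges and that both regimes combine cleanly into the claimed bound is the technical crux.
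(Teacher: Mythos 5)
Your overall route is genuinely different from the paper's, and it contains a real gap at exactly the step you flag as the crux. Everything hinges on the asserted ``orientation-aware variant'' of \cite{BEG18}: an algorithm that, given an orientation with maximum \emph{outdegree} $d$ (but maximum degree possibly $\Delta$), computes an $\alpha$-arbdefective $O(d/\alpha)$-coloring in $O(d/\alpha+\log^* n)$ rounds. This does not follow from a simple modification of \cite{BEG18}, and the one-pass construction you sketch fails. If each node, when its batch is processed, inspects only its already-colored outneighbors, then an outneighbor $u$ of $v$ processed \emph{later} never looks at $v$ (since $v$ is an in-neighbor of $u$) and may adopt $v$'s subclass color; the edge $v\to u$ is oriented out of $v$, so $v$ can accumulate up to $d=\eta\alpha$ monochromatic out-edges rather than $\alpha$. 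Reorienting monochromatic edges toward the earlier-processed endpoint does not help either: then a node is charged for coincidences with its already-colored \emph{in-neighbors}, of which there can be up to $\Delta$, and pigeonhole over $O(\eta)$ subclasses only yields arbdefect $O(\Delta/\eta)$. There is also a running-time problem: the Linial coloring of $G[C_i]$ has $O(\Delta^2)$ classes and $G[C_i]$ has maximum degree up to $\Delta$, so ``processing in batches'' within the budget $O(\eta+\log^* n)$ would require precisely the kind of outdegree-based speedup you are trying to establish; the known $O(\Delta/\alpha+\log^* n)$ bounds are stated in terms of maximum degree, not outdegree. The same simultaneity/pigeonhole issue recurs in your second regime, where $O(1)$ super-batches force many adjacent nodes to choose colors at the same time without seeing each other's choices.

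The paper sidesteps all of this by reversing the order of the two reductions: it first reduces the \emph{degree} and only then runs the arbdefective coloring. Concretely, it computes a $\Delta/\eta$-defective $O(\eta^2)$-coloring in $O(\log^* n)$ rounds using the defective-coloring algorithm of \cite{Kuhn2009}; each defective class induces a subgraph of maximum degree at most $\Delta/\eta$, on which an $\alpha$-arbdefective $O(\Delta/(\eta\alpha))$-coloring can be computed in $O(\Delta/(\eta\alpha)+\log^* n)$ rounds (all classes in parallel), and the product of the two colorings gives $O(\eta^2)\cdot O(\Delta/(\eta\alpha))=O(\eta\Delta/\alpha)$ colors with arbdefect $\alpha$. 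If you want to salvage your outline, the cleanest fix is to adopt this order: trade the extra factor $\eta$ in the number of colors for a defective (degree-reducing) coloring up front, so that the arbdefective subroutine only ever needs a maximum-degree bound, which is what \cite{BEG18} actually provides.
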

\begin{proof}
  We first compute $\Delta/\eta$-defective $O(\eta^2)$-coloring of $G$
  in time $O(\log^* n)$ by using the distributed defective coloring
  algorithm of \cite{Kuhn2009}. Each of the $O(\eta^2)$ color classes
  now has a maximum degree of $\leq \Delta/\eta$ and we can therefore
  compute an $\alpha$-arbdefective $O(\Delta/(\eta\alpha))$-coloring
  in time $O(\Delta/(\eta\alpha)+\log^* n)$. In combination with the
  $O(\eta^2)$ colors of the initial $\Delta/\eta$-defective coloring,
  we obtain the desired $\alpha$-arbdefective coloring of $G$ with
  $O(\eta\Delta/\alpha)$ colors.
\end{proof}

\subsection{Algorithms for Arbdefective Colored Ruling Sets}

We first show that if we are already given an appropriate arbdefective coloring, an $\alpha$-arbdefective $c$-colored $\beta$-ruling set can be computed in time matching the lower bound given in \Cref{thm:lbarbcolrs}.

\ubcoldom*
\begin{proof}
  We partition the $C$ colors of the given $\alpha$-arbdefective $C$-coloring into $K:=\lceil C/c\rceil$ groups of size at most $c$. Within each of the $K$ groups, we rename the at most $c$ colors from $1$ to $c$. For every $i\in [K]$ let $V_i$ be the set of nodes in color group $i$. Note that the given coloring proviced an $\alpha$-arbdefective $c$-coloring for each of the induced subgraphs graphs $G[V_i]$. Let $E'\subseteq E$ be the set of edges between two nodes of the same color group $i\in [K]$ and consider the subgraph $H=(V,E\setminus E')$ of $G$. Note that each set $V_i$, $i\in [K]$ is an independent set of $H$ and therefore the partitioning into $K$ color groups provides a proper $K$-coloring of $H$. We now compute a $(2,\beta)$-ruling set $S\subseteq V$ of $H$. By using the algorithm of \cite{balliurules} (Lemma 15 in the full version of \cite{balliurules}), a $(2,\beta)$-ruling set $S$ of a $K$-colored graph can be computed in the minimum time $t$ such that ${t+\beta \choose \beta}\geq K$. The set $S$ is an independent set of $H$ and therefore the only edges of the subgraph $G[S]$ of $G$ induced by the nodes in $S$ are edges between two nodes $u,v\in V_i$ for some $i$ (i.e., between nodes of the same color group). We therefore know that the initial $\alpha$-arbdefective coloring provides an $\alpha$-arbdefective $c$-coloring of $G[S]$ and $S$ therefore is an $\alpha$-arbdefective $c$-colored $\beta$-ruling set of $G$.
\end{proof}

By using the fact that ${a\choose b}\geq (a/b)^b$, we directly obtain the following corollary.
\begin{corollary}\label{cor:basicarbruling}
  Let $G=(V,E)$ be a graph and let $\alpha\geq 0$, $c\geq 1$, $\beta\geq 1$, and $C\geq c$. If one is given an $\alpha$-arbdefective $C$-coloring of $G$, then an $\alpha$-arbdefective $c$-colored $\beta$-ruling set of $G$ can deterministically be computed $O\big(\beta\cdot(C/c)^{1/\beta}\big)$ rounds in the \LOCAL model.
\end{corollary}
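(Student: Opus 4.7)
The plan is to obtain this corollary as a direct consequence of \Cref{thm:ubarbcolrs} by replacing the implicit binomial-coefficient bound on the round complexity by an explicit $O(\beta(C/c)^{1/\beta})$ bound. \Cref{thm:ubarbcolrs} already gives a deterministic algorithm whose round complexity is the minimum integer $t$ satisfying ${t + \beta \choose \beta} \geq \lceil C/c \rceil$, so the only work left is a routine arithmetic inversion.

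Concretely, I would use the elementary inequality ${a \choose b} \geq (a/b)^b$ with $a = t+\beta$ and $b = \beta$ (hinted at in the sentence preceding the corollary). This gives ${t+\beta \choose \beta} \geq \bigl((t+\beta)/\beta\bigr)^\beta$, so the binomial constraint ${t+\beta \choose \beta} \geq \lceil C/c \rceil$ is certainly satisfied as soon as $(t+\beta)/\beta \geq \lceil C/c \rceil^{1/\beta}$. Hence it suffices to choose $t := \bigl\lceil \beta\bigl(\lceil C/c\rceil^{1/\beta} - 1\bigr) \bigr\rceil$, and since $\lceil C/c \rceil \leq C/c + 1$ and $C \geq c$, this choice is $O(\beta(C/c)^{1/\beta})$, yielding the claimed round complexity.

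There is essentially no obstacle here: the whole content of the corollary is contained in \Cref{thm:ubarbcolrs}, and the rest is a one-line calculation. The only minor care needed is in handling the ceiling $\lceil C/c\rceil$ in the asymptotic simplification and in checking the trivial edge case $C = c$ (where the minimum $t$ equals $0$, matching the $O(\beta)$ output of the formula since $(C/c)^{1/\beta} = 1$). Thus my plan is simply to cite \Cref{thm:ubarbcolrs} and carry out the displayed inversion.
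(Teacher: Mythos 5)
Your proposal is correct and follows exactly the paper's route: the paper also deduces this corollary directly from \Cref{thm:ubarbcolrs} via the inequality ${a \choose b} \geq (a/b)^b$, and your inversion to $t = O\big(\beta (C/c)^{1/\beta}\big)$ (including the ceiling and $C=c$ edge-case bookkeeping) is the same one-line calculation carried out with slightly more detail.
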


In order to compute the initial arbdefective coloring that is required to apply \Cref{cor:basicarbruling}, we can use \Cref{lemma:relaxedarbcoloring}. This gives the following complexity for computing an $\alpha$-arbdefective $c$-colored $\beta$-ruling set.

\begin{corollary}\label{cor:basicarbruling2}
Let $G=(V,E)$ be a graph and let $\alpha\geq 0$, $c\geq 1$, $\beta\geq 1$, and $C\geq c$. If one is given an $\alpha$-arbdefective $C$-coloring of $G$, then an $\alpha$-arbdefective $c$-colored $\beta$-ruling set of $G$ can deterministically be computed $O\big(\beta\big(\frac{\Delta}{(\alpha+1)\sqrt{c}}\big)^{2/(\beta+1)}\big)$ rounds in the \LOCAL model.
 \end{corollary}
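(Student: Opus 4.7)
The plan is to compose two ingredients already established in the paper: Lemma~\ref{lemma:relaxedarbcoloring}, which produces an $\alpha$-arbdefective coloring with a tunable number of colors, and Corollary~\ref{cor:basicarbruling}, which converts such a coloring into the desired $\alpha$-arbdefective $c$-colored $\beta$-ruling set. The given initial $\alpha$-arbdefective $C$-coloring of $G$ plays a dual role: it is what Lemma~\ref{lemma:relaxedarbcoloring} is applied to, and, more importantly, it is what lets the $+\log^*n$ overhead inside Lemma~\ref{lemma:relaxedarbcoloring} be absorbed, because the defective-coloring subroutine inside that lemma needs the $\log^*n$ term only to perform symmetry breaking from scratch, which is unnecessary once any coloring of $G$ is already provided.

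First, I would apply Lemma~\ref{lemma:relaxedarbcoloring} with a parameter $\eta\geq 1$ to compute an $\alpha$-arbdefective $k$-coloring of $G$ with $k=\Theta(\eta\Delta/(\alpha+1))$ in $T_1=O(\Delta/(\eta(\alpha+1)))$ rounds; equivalently, $T_1=O(\Delta^2/(k(\alpha+1)^2))$. Then I would apply Corollary~\ref{cor:basicarbruling} to the resulting $k$-coloring to produce an $\alpha$-arbdefective $c$-colored $\beta$-ruling set in $T_2=O(\beta(k/c)^{1/\beta})$ rounds. Thus the total complexity is
\[
T_1+T_2 \;=\; O\!\left(\frac{\Delta^2}{k(\alpha+1)^2} \;+\; \beta(k/c)^{1/\beta}\right).
\]

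Next, I would choose $k$ so as to balance the two summands. Writing $x=(k/c)^{1/\beta}$, the bound becomes $O\!\left(\Delta^2/(cx^\beta(\alpha+1)^2)+\beta x\right)$; differentiating in $x$ yields the optimum $x^{\beta+1}=\Theta(\Delta^2/(c(\alpha+1)^2))$, i.e.\ $x=\Theta\!\left((\Delta/((\alpha+1)\sqrt c))^{2/(\beta+1)}\right)$. Substituting back, both $T_1$ and $T_2$ collapse to $O\!\left(\beta(\Delta/((\alpha+1)\sqrt c))^{2/(\beta+1)}\right)$, and the claim follows. The corresponding choice of $k$ (and hence of $\eta$) is $k=\Theta\!\left(c\cdot(\Delta/((\alpha+1)\sqrt c))^{2\beta/(\beta+1)}\right)$, which is consistent with $k\geq c$ whenever $\Delta\geq(\alpha+1)\sqrt c$; if this fails, the claimed bound is $\Omega(\beta)$ and Corollary~\ref{cor:basicarbruling} applied directly to the input $C$-coloring already suffices.

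The main obstacle is verifying that the $\log^*n$ overhead inside Lemma~\ref{lemma:relaxedarbcoloring} truly disappears when the input $\alpha$-arbdefective $C$-coloring is available; this is the only nontrivial use of the hypothesis and is where care is required, since for large $C$ one cannot simply appeal to Linial-style speedups from a proper coloring. The rest of the argument is a routine univariate optimization. A secondary technical point is the case $\alpha=0$, where one reads $(\alpha+1)$ in the usual way, and the boundary case $c=1$, $\beta=1$, which simply recovers the MIS-style bound $O(\Delta)$.
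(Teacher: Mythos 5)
Your proposal follows essentially the same route as the paper's own proof: compute an $\alpha$-arbdefective $O(\eta\Delta/\alpha)$-coloring via \Cref{lemma:relaxedarbcoloring}, feed it into \Cref{cor:basicarbruling}, and choose the free parameter ($\eta$ in the paper, your $k$) so that the two terms balance, which yields exactly $O\big(\beta(\Delta/((\alpha+1)\sqrt{c}))^{2/(\beta+1)}\big)$; your optimization in $x=(k/c)^{1/\beta}$ is just a reparametrization of the paper's choice $\eta=(\Delta/\alpha)^{\frac{\beta-1}{\beta+1}}c^{\frac{1}{\beta+1}}$. The one place you diverge is the effort to make the $\log^* n$ term of \Cref{lemma:relaxedarbcoloring} disappear by exploiting the given $\alpha$-arbdefective $C$-coloring: the paper's proof does not attempt this at all (it simply composes the two results, and the corresponding entry in \Cref{tab:results} carries the additive $\log^* n$ with no given-coloring hypothesis), and your proposed absorption is indeed shaky, since an arbdefective coloring is not a proper coloring and cannot directly replace the Linial-style symmetry breaking inside the defective-coloring subroutine. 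So you can safely drop that part; it is not needed for the argument the paper actually makes, and the rest of your proof matches it.
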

 \begin{proof}
   We first compute an $\alpha$-arbdefective $C:=O(\eta\Delta/\alpha)$-coloring. By \Cref{lemma:relaxedarbcoloring}, this can be done in $O(\Delta/(\eta\alpha)+\log^* n)$ rounds. We choose $\eta$ such that $\Delta/(\eta\alpha)=(C/c)^{1/\beta}=(\eta\Delta/(\alpha c))^{1/\beta}$. We obtain $\eta=(\Delta/\alpha)^{\frac{\beta-1}{\beta+1}}\cdot c^{\frac{1}{\beta+1}}$, which together with \Cref{cor:basicarbruling} gives the required round complexity.
 \end{proof}

%%% Local Variables:
%%% mode: latex
%%% TeX-master: "document"
%%% End:

\section{Open Problems}\label{sec:open}
There is a natural distinction between locally checkable problems: those who are solvable in $O(f(\Delta) + \log^* n)$, that is, \emph{easy} problems, and those who require $\Omega(\log_\Delta n)$ for deterministic algorithms and $\Omega(\log_\Delta \log n)$ for randomized ones, that is, \emph{hard} problems. We also know from prior work that, for bounded-degree graphs, there cannot be problems in between \cite{chang16exponential,ChangP19}.
In this work, we proved lower bounds for many easy natural problems, by exploiting the hardness of hard problems, namely, $c$-coloring for $c \le \Delta$. 
For this purpose, we exploited the fact that there exists a fixed point proof for the hardness of $\Delta$-coloring. But what if there were no such proof? It would have been much more difficult to prove a tight lower bound for ruling sets and other problems. Hence, a first question that we ask is whether all hard problems admit a proof based on fixed points.
\begin{restatable}{oq}{oqfixedpoint}
	Prove or disprove that, for all locally checkable problems with deterministic complexity $\Omega(\log_\Delta n)$ in the \LOCAL model, there exists a relaxation that is a non-trivial fixed point.
\end{restatable}

Another interesting question is how broadly our technique can be applied. By applying the round elimination technique to $O(f(\Delta)+\log^*n)$-round solvable problems, we typically obtain problem sequences where the size of the problems grows really fast. Often, it seems that \emph{most} of the configurations allowed in the problems of the sequence do not actually make the problem easier; they just make the description of the problem more complicated, similarly to what happens with ruling sets, where these configurations are related to $c$-coloring. It would be very interesting to find other cases where we can get rid of these ``useless'' allowed configurations by constructing some fixed point problem that we can embed into the description of the problem to make these configurations disappear.
\begin{oq}
	Can we find other interesting problems that can be solved in $O(f(\Delta) + \log^* n)$ rounds for which we can prove lower bounds by exploiting the hardness of problems that require $\Omega(\log_\Delta n)$?
\end{oq}

Also, we now have three different lower bound proofs based on round elimination for the maximal independent set problem on trees. The first one appeared in \cite{balliurules}; in that proof the number of labels of the problems in the lower bound sequence grows linearly and is always upper bounded by $O(\Delta)$, and with similar techniques the authors also obtain lower bounds for ruling sets. The second one appeared in \cite{BBKOmis}, and in that case the number of labels is always $O(1)$, but the proof does not extend to ruling sets. In both cases, the obtained lower bounds are exponentially worse than the lower bounds presented in this work, but here we needed to use a number of labels that is exponential in $\Delta$. We would like to understand this phenomenon, and better understand the relation between the number of labels and the achievable lower bounds. For example, can we prove the same lower bound presented in our work by using a subexponential number of labels?
\begin{oq}
	Understand the tradeoff between the number of labels and the maximum lower bound achievable with a round elimination problem sequence where each problem does not exceed this number of labels.
\end{oq}

In our work we characterized exactly which variants of arbdefective coloring are easy, and which variants of arbdefective coloring are hard. The same question is open for another important subroutine in efficient coloring algorithms---defective coloring.
\begin{oq}
	Characterize which variants of the defective coloring problem can be solved in $O(f(\Delta) + \log^* n)$ rounds, for some function $f$, and which of them require $\Omega(\log_\Delta n)$ rounds.
\end{oq}

In order to apply the round elimination technique, we need to encode problems in the node-edge formalism. Some natural problems, for example $(2,\beta)$-ruling sets, cannot be encoded directly, and we need to require nodes to produce a node-edge checkable proof that they solved the problem. We call this variant of ruling sets an \emph{encoded} version of the problem. In the encoded variant of $(2,\beta)$-ruling sets, we require nodes to output pointer chains that point towards a ruling set node, such that these chains are of length at most $\beta$. 
 Unfortunately, this means that if we obtain a lower bound of $T$ rounds for the encoded problem, then we only obtain a $(T-\beta)$-round lower bound for the natural problem. Is the encoded version really $\beta$ rounds harder than the natural one?
For the encoded version of $(2,\beta)$-ruling sets we obtained truly tight bounds, whereas for the natural version we only obtained asymptotically tight bounds, and only for $\beta$ small enough.
While we think that encoded versions are interesting per se, we would also like to know the right complexity of the natural versions. In particular, for the natural versions, our lower bounds only hold for $\beta \le \varepsilon \log \Delta$ and some small enough $\varepsilon$, and we would like to know if they also hold for larger values of $\beta$.
We conjecture that $(2,O(\log \Delta))$-ruling sets require $\Omega(\log \Delta)$ rounds.
\begin{oq}
	Understand the difference between the complexity of encoded problems and natural problems.
\end{oq}

A limit of the current version of round elimination is that the best achievable lower bound, when $\Delta \ge 3$, is $\Omega(\log_\Delta n)$ rounds for deterministic algorithms and $\Omega(\log_\Delta \log n)$ rounds for randomized ones. We would like to understand whether this is really a limit of the technique or just of our current use of the general idea behind it.
\begin{oq}
	Can we use the round elimination technique to prove lower bounds of $\omega(\log_\Delta n)$ rounds for deterministic algorithms, or of $\omega(\log_\Delta \log n)$ rounds for randomized ones?
\end{oq}
In particular, in bounded degree graphs, we do not have any example of a problem that is known to be solvable in $O(\poly \log \log n)$ rounds with randomized algorithms, but also known to require $\omega(\log n)$ rounds with deterministic ones.
\begin{oq}\label{oq:morethanlogn}
	Prove or disprove that there exists a locally checkable problem that, in bounded degree graphs, is solvable in $O(\poly \log \log n)$ rounds with randomized algorithms in the \LOCAL model, but that requires $\omega(\log n)$ rounds for deterministic algorithms.
\end{oq}
We would like to mention that a related question has been asked in 2017 by Chang and Pettie~\cite{ChangP19}. There, the authors asked whether the distributed constructive Lov\'{a}sz Local Lemma problem can be solved in $O(\log \log n)$ rounds on bounded degree graphs. It is known that answering this question in the affirmative would imply that there are no locally checkable problems in bounded degree graphs with randomized complexity between $\omega(\log \log n)$ and $o(\log n)$ \cite{ChangP19}, and it is also known that, for locally checkable problems on bounded degree graphs, an $O(\log \log n)$ randomized algorithm implies an $O(\log n)$ deterministic one \cite{chang16exponential}.

One of the few natural problems that is currently a candidate for answering \Cref{oq:morethanlogn} is $\Delta$-coloring. This is a problem for which randomness helps, and it can in fact be solved in $O(\log^2 \log n)$ rounds with randomized algorithms in bounded degree graphs~\cite{GhaffariHKM18}. For deterministic algorithms, a lower bound of $\Omega(\log n)$ rounds is known, but the best known upper bound is $O(\log^2 n)$ rounds~\cite{GhaffariHKM18,panconesi95delta}. The currently best known algorithm uses ruling sets as a subroutine, and $O(\log^2 n)$ is in fact the time that is used to compute a $(2,\log n)$-ruling set in $G^{O(\log n)}$, the $O(\log n)$-th power of $G$. While it may be possible to improve the complexity of $\Delta$-coloring by exploiting the fact that the ruling set is computed on a power graph (where, for some reason, it may be easier to compute ruling sets), we would like to understand whether computing a ruling set is really necessary for solving $\Delta$-coloring fast, or if some other (easier) symmetry breaking primitive is sufficient (which, given our lower bounds for ruling sets, might provide a more promising way of improving the complexity of $\Delta$-coloring). 
\begin{oq}
	The best known algorithm for $\Delta$-coloring has a deterministic runtime of $O(\log^2 n)$ rounds in bounded degree graphs, and uses ruling sets as a subroutine. Can we find a genuinely different algorithm for $\Delta$-coloring, not based on ruling sets, with a runtime that is at least as good?
\end{oq}

Proofs based on the round elimination technique have been getting progressively more complicated. We have automatic tools to efficiently check them for specific values of $\Delta$ \cite{Olivetti2019}, but generalizing these proofs for any value of $\Delta$ usually requires a complex analysis (see e.g.\ the proof of \Cref{lem:pn1step}). It would be nice to find ways to prove such statements in some easier way, or to entirely automate the process. Also, it would be interesting to have machine-checked versions of existing proofs.
\begin{oq}
	Can we find an automatic way to prove statements that are based on round elimination?
\end{oq}

\urlstyle{same}
\bibliographystyle{alpha}
\bibliography{fixed-point}

\appendix
\section{Proof of Theorem \ref{thm:lifting}}\label{apx:lifting}
All proofs present in this section are already present in the literature \cite{Balliu2019, balliurules, trulytight, binary}. We report them here for completeness, but we also make them slightly more general: while the existing proofs only tolerate problem sequences containing $O(\Delta^2)$ labels, we remove this limitation and we assume that the number of labels is upper bounded by $f(\Delta)$, for some arbitrary function $f$.

The standard way for lifting lower bounds, obtained through round elimination, from the port numbering model to the \LOCAL model is the following:
\begin{enumerate}
	\item Show that the round elimination technique also works for randomized algorithms for the port numbering model. In particular, show that if $\Pi$ can be solved in $T$ rounds with local failure probability at most $p$ (where the local failure probability of a node $v$ is the probability that the constraints of some problem $\Pi$ are not satisfied on $v$), then $\rere(\re(\Pi))$ can be solved in $T-1$ rounds with some local failure probability at most $p'$, for $p'$ that is not too large compared to $p$.\label{point:randomre}
	\item Show that if a problem is not $0$ rounds solvable in the deterministic port numbering model, then any $0$ round algorithm for the problem must fail with large probability in the randomized port numbering model.\label{point:zerorounds}
	\item Combine the two parts above to show that any algorithm that runs ``too fast'', that is, faster than the deterministic port numbering model lower bound, must fail with large probability.\label{point:combine}
	\item Show for which values of $T$, as a function of $n$, the obtained local failure probability is too large. This gives a lower bound on the local failure probability of randomized algorithms that run in infinite $\Delta$-regular trees in the port numbering model, as a function of their running time.\label{point:withn}
	\item Show that the last point implies a lower bound in a finite $\Delta$-regular tree in the randomized port numbering model. This directly gives a lower bound in the randomized \LOCAL model as well. \label{point:randlocal}
	\item Show that a randomized lower bound implies an even stronger deterministic lower bound.\label{point:detlocal}
\end{enumerate}

While this was not emphasized in previous works, the standard techniques used to lift lower bounds from the deterministic port numbering model to the \LOCAL model do not work if there are inputs that can be adversarially assigned. In our setting, the inputs of the nodes are
\begin{itemize}
	\item the size of the graph,
	\item the random bits,
	\item the port numbering.
\end{itemize}
In order to make the lifting work, we need to assume that the port numbering is random among the ones allowed by the port numbering constraints.

We start by discussing \Cref{point:randomre}. In \cite{binary}, the authors show that, for a random port numbering assignment, the round elimination technique also works for randomized algorithms, and they provide a generic analysis, that does not depend on a specific problem, that shows how the local failure probability evolves when applying the round elimination technique. The same result holds also if the random port number assignment comes from a locally constrained port numbering assignment. This result holds also for hypergraphs, and in that case $\Delta$ should be an upper bound on the degree of the nodes and on the rank of the hyperedges.
\begin{lemma}[Lemma 41 of \cite{binary}]\label{lem:singlestep}
	Let $A$ be a randomized $t$-round algorithm for $\Pi$ with local failure probability at most $p$ (where $t>0$). Let $\Pi'$ be a relaxation of $\re(\Pi)$. Then there exists a randomized $(t-1)$-round algorithm $A'$ for $\rere(\Pi')$ with local failure probability  $p'' \le 2^\frac{1}{\Delta+1} (\Delta |\Sigma'|)^\frac{\Delta}{\Delta+1} {p'}^\frac{1}{\Delta+1} + p'$, where $p' \le 2^\frac{1}{\Delta+1} (\Delta |\Sigma|)^\frac{\Delta}{\Delta+1} p^\frac{1}{\Delta+1} + p$ and $\Sigma'$ is the label set of $\Pi'$.
\end{lemma}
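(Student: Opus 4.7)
The plan is to prove the lemma via the two-step reduction underlying $\rere(\re(\cdot))$, treating each operator as a separate randomized reduction and chaining the resulting failure-probability bounds. Starting from $A$, I would first construct an auxiliary randomized $(t-1)$-round algorithm $A^*$ solving $\Pi'$ with failure probability at most $p'$; then, an analogous construction applied to $A^*$ produces the claimed $A'$ for $\rere(\Pi')$ with failure probability $p''$.

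For the construction of $A^*$, each node $v$, using only its $(t-1)$-hop view, samples many fresh independent extensions of the randomness and port numbering (consistent with $C^{\mathrm{port}}$) at the boundary and beyond, and computes, for each incident half-edge $h$, the distribution of $A$'s output on $h$. Set $S_h$ to be the collection of labels produced with probability at least a threshold $q$, augmented upward to be right-closed w.r.t.\ $\edgeconst_\Pi$ so that it is a valid label of $\re(\Pi)$, hence of $\Pi'$, by \Cref{obs:rcs}. The failure probability at a hyperedge $e$ decomposes into two terms: (i) the event that some $S_h$ contains a ``false'' label of probability less than $q$, controlled by a union bound over the $\leq \Delta$ half-edges and $|\Sigma|$ labels, costing at most $\Delta |\Sigma| q$; and (ii) conditional on every label in each $S_h$ occurring with probability at least $q$, the event that some tuple on $e$ violates $\edgeconst_\Pi$ --- by independence of the freshly resampled randomness on the $\leq \Delta$ sides of $e$, any such bad tuple is realized by the true $A$ with probability at least $q^\Delta$, so its contribution is at most $p/q^\Delta$. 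Balancing these two terms via $q^{\Delta+1} = p/(\Delta |\Sigma|)$ and accounting for the residual $+p$ from direct failures of $A$ at $e$ gives $p' \le 2^{1/(\Delta+1)} (\Delta |\Sigma|)^{\Delta/(\Delta+1)} p^{1/(\Delta+1)} + p$.

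The construction of $A'$ from $A^*$ is then completely symmetric, with the roles of node and hyperedge constraints swapped: each node samples fresh extensions beyond its $(t-1)$-hop view and simulates $A^*$; the output of $A'$ on each half-edge is the set of labels in $\Sigma'$ that $A^*$ produces with probability at least some $q'$, closed up under $\nodeconst_{\Pi'}$. The identical two-part analysis, applied at nodes of degree $\Delta$ rather than at hyperedges, delivers the matching bound $p'' \le 2^{1/(\Delta+1)} (\Delta |\Sigma'|)^{\Delta/(\Delta+1)} (p')^{1/(\Delta+1)} + p'$, as required.

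The main obstacle is establishing the independence needed for the $q^\Delta$ lower bound on the probability of a specific bad tuple. Independence requires both that the randomness on the $\le \Delta$ sides of the central hyperedge be refreshed from disjoint pools (guaranteed by the girth $>2t$ of the underlying graph, which ensures that the $(t-1)$-hop views of the half-edges of $e$ are disjoint beyond $e$ itself) and that the distribution of port numberings imposed by $C^{\mathrm{port}}$ factorizes across these sides (which holds because $C^{\mathrm{port}}$ is a purely local constraint at each node and hyperedge). A secondary technical point is verifying that closing up each $S_h$ under the appropriate strength order is free of charge: adding a stronger label $\ell'$ to a set already containing some $\ell \le \ell'$ cannot enable new bad configurations, since any bad tuple through $\ell'$ yields a bad tuple through $\ell$ by the very definition of label strength.
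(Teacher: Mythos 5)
First, note that the paper itself does not prove this statement: it is imported verbatim as Lemma~41 of \cite{binary}, and the appendix only reproves the subsequent lemmas that build on it. So your proposal has to be measured against the proof in the cited work, whose overall strategy (threshold the conditional output distribution, union-bound the ``missed label'' event, use independence of the unseen regions to convert a bad tuple into a failure of $A$ with probability at least $q^{\Delta}$, then balance the two terms and add the residual $p$) you have correctly reconstructed; the two-sided structure (universal quantifier analyzed at hyperedges, existential at nodes) and the observation that closing the sets upward under the strength order cannot create violating tuples are also right.

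There is, however, a genuine gap in your first half-step. You let each node $v$ compute the set on a half-edge $(v,e)$ from $v$'s own $(t-1)$-hop view, and you justify the $q^{\Delta}$ bound at a hyperedge $e$ by claiming that ``the $(t-1)$-hop views of the half-edges of $e$ are disjoint beyond $e$ itself.'' That is false even in a tree: a node at distance exactly $t-1$ from $e$ through a third endpoint lies outside $B_{t-1}(v_1)$ and outside $B_{t-1}(v_2)$ but inside both $B_t(v_1)$ and $B_t(v_2)$, so the unseen randomness/ports determining $A$'s outputs at $(v_1,e)$ and $(v_2,e)$ overlap, and conditioned on the endpoints' own views the outputs are not independent; the inequality ``bad tuple realized with probability at least $q^{\Delta}$'' does not follow. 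The proof of Lemma~41 of \cite{binary} avoids this by centering the intermediate algorithm at the hyperedge (black node): all $\delta$ sets around $e$ are thresholded with respect to the \emph{same} view $B_{t-1}(e)$, and independence then comes from the disjointness of the subtrees hanging strictly behind the distinct endpoints of $e$, which is exactly what the girth bound guarantees. Your second half-step (nodes resampling beyond their $(t-1)$-hop view to simulate the hyperedge-centered algorithm) is the correct mirror image, but it only works once the first step is set up this way. Two smaller points: right-closedness (\Cref{obs:rcs}) is necessary but not sufficient for the thresholded sets to form an allowed configuration of $\re(\Pi)$ --- one must extend them to a maximal configuration of the constraint (harmless, since $\Pi'$ is a relaxation); and your balancing $q^{\Delta+1}=p/(\Delta|\Sigma|)$ yields $2(\Delta|\Sigma|)^{\Delta/(\Delta+1)}p^{1/(\Delta+1)}+p$, which is slightly weaker than the stated $2^{1/(\Delta+1)}(\Delta|\Sigma|)^{\Delta/(\Delta+1)}p^{1/(\Delta+1)}+p$ (sufficient for the paper's applications, but not literally the claimed bound).
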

This lemma essentially states that, if there exists an algorithm $A$ that solves $\Pi$ with some local failure probability, then we can make $A$ running one round faster and solve (a relaxation of) $\rere(\re(\Pi))$ with a local failure probability that is larger, but not too large.

We can apply \Cref{lem:singlestep} multiple times, to show an upper bound on the failure probability of an algorithm that solves (a relaxation of) the problem obtained by applying $\rere(\re(\cdot))$ for $j$ times, as a function of the failure probability of an algorithm solving $\Pi$ in $t$ rounds. This lemma (and its proof) already appeared in previous works \cite{Balliu2019, balliurules, trulytight, binary}, but we now make it parametric on the number of labels $f(\Delta)$ of the problem family.

\begin{lemma}\label{lem:multiplesteps}
	Let $\Pi_0 \rightarrow \Pi_1 \rightarrow \dots \rightarrow \Pi_t$ be a sequence of problems satisfying the conditions of \Cref{thm:lifting}.
	Let $A$ be a randomized $t$-round algorithm for $\Pi_0$ with local failure probability at most $p$. Then there exists a randomized $(t-j)$-round algorithm $A'$ for $\Pi_{j}$ with local failure probability at most $(2\Delta f(\Delta))^2 p^{1/(\Delta+1)^{2j}}$, for all $0<j \le t$.
\end{lemma}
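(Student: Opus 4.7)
The plan is to proceed by induction on $j$, repeatedly invoking \Cref{lem:singlestep} and tracking how the failure probability evolves. Throughout, I will use the assumption from \Cref{thm:lifting} that the label sets of every $\Pi_i$ and every intermediate problem $\Pi'_i$ have size at most $f(\Delta)$.

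For the base case $j = 1$, I apply \Cref{lem:singlestep} directly with $\Pi = \Pi_0$ and $\Pi' = \Pi'_0$. This yields a $(t-1)$-round randomized algorithm for $\rere(\Pi'_0)$, and since $\Pi_1$ is by hypothesis a relaxation of $\rere(\Pi'_0)$, the same algorithm solves $\Pi_1$. For the inductive step, I apply \Cref{lem:singlestep} to the algorithm guaranteed by the inductive hypothesis (a $(t-j+1)$-round algorithm for $\Pi_{j-1}$ with failure probability at most $q_{j-1} := (2\Delta f(\Delta))^2 p^{1/(\Delta+1)^{2(j-1)}}$), using that $\Pi_j$ is a relaxation of $\rere(\Pi'_{j-1})$, and this gives a $(t-j)$-round algorithm whose failure probability can be bounded purely algebraically in terms of $q_{j-1}$.

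The core of the argument is the algebraic simplification of the two-step update in \Cref{lem:singlestep}. Writing $K := (\Delta f(\Delta))^{\Delta/(\Delta+1)}$ and noting $|\Sigma|, |\Sigma'| \leq f(\Delta)$, a single application of \Cref{lem:singlestep} transforms a failure probability $q$ into something bounded by
\[
  q' \;\leq\; \bigl(2^{1/(\Delta+1)} K\bigr)\, q''^{1/(\Delta+1)} + q'', \qquad q'' \;\leq\; \bigl(2^{1/(\Delta+1)} K\bigr)\, q^{1/(\Delta+1)} + q.
\]
Assuming $q$ is small enough that $q \leq q^{1/(\Delta+1)}$, each of the two substeps is absorbed into a factor of the form $(2K)\, q^{1/(\Delta+1)}$, so one full step multiplies the exponent of $q$ by $1/(\Delta+1)^2$ and introduces a multiplicative factor of at most $(2K)^{1 + 1/(\Delta+1)}$. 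Unrolling across $j$ steps produces a cumulative exponent of $1/(\Delta+1)^{2j}$ on $p$, together with a multiplicative factor of $(2K)^S$ where
\[
  S \;=\; \sum_{i=0}^{2j-1} \frac{1}{(\Delta+1)^{i}} \;<\; \frac{1}{1 - 1/(\Delta+1)} \;=\; \frac{\Delta+1}{\Delta} \;\leq\; 2.
\]
Hence the final bound is at most $(2K)^2 \, p^{1/(\Delta+1)^{2j}} \leq (2\Delta f(\Delta))^2 \, p^{1/(\Delta+1)^{2j}}$, as claimed.

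The main (mild) obstacle is handling the additive terms $q''$ and $q$ that appear in \Cref{lem:singlestep}, which a priori prevent a clean recursion of the form $q_j \leq C \, q_{j-1}^{1/(\Delta+1)^2}$. I will dispatch this by first ruling out the trivial regime where the claimed bound exceeds $1$ (in which case the statement holds vacuously), and then, in the nontrivial regime, using that the relevant probabilities are at most $1$ so that $q \leq q^{1/(\Delta+1)}$, which lets me absorb the additive term into the multiplicative constant at the cost of a factor of $2$ per substep. The bookkeeping on the geometric-series exponent $S$ shown above then yields the stated constant $(2\Delta f(\Delta))^2$, completing the induction.
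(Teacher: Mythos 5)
Your proposal is correct and follows essentially the same route as the paper's proof: simplify each application of \Cref{lem:singlestep} by absorbing the additive term via $q \le q^{1/(\Delta+1)}$ (valid since failure probabilities are at most $1$) and the $2^{1/(\Delta+1)}$ factor into the constant, then iterate, with the cumulative constant controlled by a geometric series in $1/(\Delta+1)$ bounded by $2$. The paper phrases the final bookkeeping as an induction over half-steps with constant $2\Delta f(\Delta)$ rather than your unrolled sum with $2K$, but this is only a presentational difference.
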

\begin{proof}
	By assumption, $|\Sigma|$ and $|\Sigma'|$ are upper bounded by $f(\Delta)$. Hence, by \Cref{lem:singlestep}, we obtain: 
	\begin{align*}
	p'' &\le 2^\frac{1}{\Delta+1} (\Delta |\Sigma'|)^\frac{\Delta}{\Delta+1} {p'}^\frac{1}{\Delta+1} + p' \\
	&\le (\Delta |\Sigma'|)^\frac{1}{\Delta+1} (\Delta |\Sigma'|)^\frac{\Delta}{\Delta+1} {p'}^\frac{1}{\Delta+1} + p'\\
	&= (\Delta |\Sigma'|) {p'}^\frac{1}{\Delta+1} + p'\\
	&\le (\Delta f(\Delta)) {p'}^\frac{1}{\Delta+1} + p'\\
	&\le (2 \Delta f(\Delta))  {p'}^\frac{1}{\Delta+1}.
	\end{align*}
	Similarly,
	\begin{align*}
	p' &\le 2^\frac{1}{\Delta+1} (\Delta |\Sigma|)^\frac{\Delta}{\Delta+1} {p}^\frac{1}{\Delta+1} + p \\
	&\le(\Delta |\Sigma|) {p}^\frac{1}{\Delta+1} + p \\
	&\le (2\Delta f(\Delta))  {p}^\frac{1}{\Delta+1}.
	\end{align*}
	Summarizing, we get that:
	\begin{align*}
	p''&\le (2\Delta f(\Delta)) p'^{\frac{1}{\Delta+1}}, \text{ where }\\
	p' &\le (2\Delta f(\Delta)) p^{\frac{1}{\Delta+1}}.
	\end{align*}
	By recursively applying \Cref{lem:singlestep}, we get the following:
	\[
	p_j \le  (2\Delta f(\Delta)) p_{j-1}^{\frac{1}{\Delta+1}},
	\]
	where $p_0=p$ and $p_{2j}$, are, respectively, the local failure probability bounds for $\Pi_{0}$ and $\Pi_{j}$. We prove by induction that for all $j>0$,
	\[
	p_j \le (2\Delta f(\Delta))^2 p^{\frac{1}{(\Delta+1)^j}}.
	\]
	For the base case where $j=1$, we get that $p_1 \le(2\Delta f(\Delta))^2 p^{\frac{1}{\Delta+1}}$, which holds, as we showed above. Let us assume that the claim holds for $j$, and let us prove it for $j+1$. We obtain the following, where the second inequality holds by the inductive hypothesis.
	\begin{align*}
	p_{j+1} &\le 2\Delta f(\Delta) p_{j}^{\frac{1}{\Delta+1}} \le 2\Delta f(\Delta) \left(  (2\Delta f(\Delta))^2 p^{\frac{1}{(\Delta+1)^j}} \right)^{\frac{1}{\Delta+1}} \\
	&\le (2\Delta f(\Delta))^{1+\frac{2}{\Delta+1}} p^{\frac{1}{(\Delta+1)^{j+1}}} \le  (2\Delta f(\Delta))^2 p^{\frac{1}{(\Delta+1)^{j+1}}}.
	\end{align*}
\end{proof}

We now handle \Cref{point:zerorounds}. We show that, the mere fact that a problem $\Pi$ cannot be solved in $0$ rounds in the deterministic port numbering model implies a lower bound on the local failure probability of any algorithm that tries to solve $\Pi$ in the randomized port numbering model. Also in this case, we consider hypergraphs, and we assume $\Delta$ to be an upper bound on the degree of the nodes and on the rank of the hyperedges.
\begin{lemma}\label{lem:zerorounds}
	Let $\Pi$ be a problem that cannot be solved in $0$ rounds with deterministic algorithms in the port numbering model given a port numbering assignment that satisfies some constraints $C^{\mathrm{port}}$, such that $|\Sigma| \le f(\Delta)$, for some function $f$. Then any randomized $0$-round algorithm solving $\Pi$ must fail with probability at least $\frac{1}{\Delta^{3\Delta^2} f(\Delta)^{\Delta^2}}$, if the port numbering is chosen uniformly at random among the ones allowed by $C^{\mathrm{port}}$.
\end{lemma}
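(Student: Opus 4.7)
The plan is to argue by contradiction. Suppose some randomized $0$-round algorithm $A$ had local failure probability strictly less than $p^* := 1/(\Delta^{3\Delta^2} f(\Delta)^{\Delta^2})$ in the PN model with port numbering drawn uniformly from assignments allowed by $C^{\mathrm{port}}$. A $0$-round algorithm sees nothing beyond its own port configuration and private random bits, so $A$ is fully described by a family $(D_\pi)_{\pi \in C^n}$ of distributions over label tuples in $\Sigma^\Delta$. I would derandomize greedily: let $g(\pi) \in \Sigma^\Delta$ be a most likely output under $D_\pi$, so that $D_\pi(g(\pi)) \ge 1/|\Sigma|^\Delta \ge 1/f(\Delta)^\Delta$. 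The map $g : C^n \to \Sigma^\Delta$ is itself a deterministic $0$-round algorithm; by hypothesis it must fail on some port assignment, and I would convert that deterministic failure into a lower bound on the failure probability of $A$ that contradicts our assumption.

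Two cases arise, according to whether $g$ violates a node or a hyperedge constraint. If $g$ violates the node constraint at some $\pi \in C^n$, then under $D_\pi$ the event ``$A$ outputs exactly $g(\pi)$'' has probability at least $1/f(\Delta)^\Delta$ and lies entirely inside the failure event; since this particular $\pi$ is realized under the random port numbering with probability at least $1/|C^n| \ge 1/\Delta^\Delta$, the local failure probability at such a node is at least $1/(\Delta^\Delta f(\Delta)^\Delta) > p^*$, a contradiction. The genuinely interesting case is when the violated constraint sits at a hyperedge $h$ with some specific local setup $\sigma = (\pi_h, \pi_{v_1}, \dots, \pi_{v_\delta})$ combining the hyperedge's own port tuple with the port tuples of its $\delta$ incident nodes. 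Because the private random bits of the $\delta$ incident nodes are independent, the probability (conditional on $\sigma$) that $A$ reproduces $g(\pi_{v_i})$ at each $v_i$ simultaneously is at least $\prod_{i=1}^{\delta} 1/f(\Delta)^\Delta \ge 1/f(\Delta)^{\Delta^2}$, and whenever this joint event occurs the labels on the $\delta$ half-edges of $h$ coincide with those prescribed by $g$ and therefore violate $\edgeconst_\Pi$. Since $\sigma$ itself is realized under random port numbering with probability at least $1/(|C^h| \cdot |C^n|^\delta) \ge 1/\Delta^{2\Delta^2}$, the resulting local failure probability of $A$ is at least $1/(\Delta^{2\Delta^2} f(\Delta)^{\Delta^2}) > p^*$, again a contradiction.

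The main obstacle is making the independence step rigorous: I need that, conditional on the local setup $\sigma$ at hyperedge $h$, the private random bits of the $\delta$ incident nodes are still jointly independent with their original marginals. This holds essentially by construction of the PN model, because distinct nodes draw randomness independently and the port-numbering constraints $C^n$ and $C^h$ are local, so conditioning on $\sigma$ at a single hyperedge factors the joint distribution into a product over the $\delta$ incident nodes (and leaves each node's own randomness untouched). Granted this, the two case analyses above yield the claimed lower bound with comfortable slack: the exponent $3\Delta^2$ in $p^*$ easily absorbs the $2\Delta^2$ overhead coming from counting local hyperedge setups, so the chosen $p^*$ witnesses the lemma with room to spare.
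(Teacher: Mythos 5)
Your overall strategy is essentially the paper's: pigeonhole the $0$-round randomized algorithm onto a likely deterministic output, invoke the deterministic $0$-round unsolvability to exhibit a violated hyperedge constraint under some port assignment allowed by $C^{\mathrm{port}}$, and then multiply the independent probabilities that the $\delta$ nodes around that hyperedge reproduce the deterministic output by a $1/\Delta^{2\Delta^2}$ counting bound on the local port setup. Your constants even come out slightly better, since the paper loses an extra factor $\Delta^{\Delta^2}$ by pigeonholing over a node configuration together with a permutation rather than over ordered tuples in $\Sigma^\Delta$.

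However, the derandomization step does not go through as written. You model a $0$-round algorithm as a family $(D_\pi)_{\pi\in C^n}$, i.e., you let a node's output distribution depend on which hyperedge ports its own ports are attached to. In the port numbering model as used here, that information is not part of a node's $0$-round view: learning $p(h_i,v)$ takes one round of communication, and the deterministic impossibility that the lemma assumes (and that \Cref{lem:pnno0} actually establishes for the problems in the sequence) only concerns algorithms whose $0$-round view is trivial, i.e., constant outputs. Hence your derandomized map $g : C^n \to \Sigma^\Delta$ is not an object covered by the hypothesis, and the assertion ``by hypothesis it must fail on some port assignment'' is unsupported. This is not merely cosmetic: for constrained port numberings such as the $\Delta$-edge-coloring encoding $C^h=\{(i,i)\mid 1\le i\le\Delta\}$, a rule that may read $\pi$ can $0$-round-solve problems (e.g., output a proper $\Delta$-edge coloring) that genuine $0$-round algorithms provably cannot, so unsolvability in the correct model does not transfer to $\pi$-dependent rules. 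The repair is immediate: in the correct model $D_\pi$ is one fixed distribution $D$, so $g$ is a single tuple in $\Sigma^\Delta$ attained with probability at least $1/f(\Delta)^\Delta$; if its multiset is not in $\nodeconst_\Pi$ you already get local failure probability at least $1/f(\Delta)^\Delta$ with no port-numbering factor, and otherwise your hyperedge case applies verbatim --- at which point the argument coincides with the paper's proof. (Your remaining assumption, that a specific allowed local setup around one hyperedge occurs with probability at least one over the number of allowed local setups, is the same counting step the paper performs, so it adds no gap beyond what the paper already accepts.)
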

\begin{proof}
	Since in $0$ rounds of communication all nodes have the same information, we can see any $0$-round algorithm as a probability assignment to each node configuration. That is, for each $\c_i \in \nodeconst$, the algorithm outputs $\c_i$ with probability $p_i$, such that $\sum p_i = 1$. Since the number of labels is at most $f(\Delta)$, then the number of possible configurations appearing in $\nodeconst$ is at most $f(\Delta)^\Delta$. 
	Hence, by the pigeonhole principle, there exists some configuration $\bar{\c}$ that all nodes output with probability at least $\frac{1}{f(\Delta)^\Delta}$.  
	
	Also, since there are at most $\Delta!$ different permutations of $\bar{c}$, a node can map the labels of $\bar{\c}$ to its $\Delta$ ports in at most $\Delta! \le \Delta^\Delta$ different ways. Hence, conditioned on a node to output $\bar{c}$, there is at least one permutation $\sigma$ that is chosen with probability at least $1/\Delta^\Delta$. Hence, there is at least one configuration $\bar{\c}$ that is output using some permutation $\sigma$ with probability at least $\frac{1}{\Delta^\Delta f(\Delta)^\Delta}$.
	The probability that all the nodes incident to the same hyperedge output the configuration $\bar{\c}$ using the same permutation $\sigma$ is at least $(\frac{1}{\Delta^\Delta f(\Delta)^\Delta})^\Delta = \frac{1}{\Delta^{\Delta^2} f(\Delta)^{\Delta^2}}$.
	
	Since the problem is not $0$-round solvable in the deterministic setting, we know that there is a port numbering assignment allowed by $C^{\mathrm{port}}$, such that, if all nodes around the same hyperedge output $\bar{c}$ with the same permutation $\sigma$, then the constraint on the hyperedge is not satisfied. If we consider a hyperedge $h$, and $\delta$ nodes incident to it, $v_1, \ldots v_\delta$, there are $2 \Delta \delta \le 2 \Delta^2$ ports in total: for each $v_i$ there are $2 \Delta$ ports, the ones connecting them to their incident hyperedges, and the ports connecting the hyperedges to them. Since the number of port assignments allowed by $C^{\mathrm{port}}$ is upper bounded by the total number of port assignments, that is at most $\Delta^{2\Delta^2}$, then the probability that all nodes incident to the same hyperedge output the same configuration using the same permutation such that the port numbering assignment makes this output fail on the hyperedge is at least $\frac{1}{\Delta^{2\Delta^2} \Delta^{\Delta^2} f(\Delta)^{\Delta^2}} = \frac{1}{\Delta^{3\Delta^2} f(\Delta)^{\Delta^2}}$.
\end{proof}

We now handle \Cref{point:combine}. We show that any algorithm that runs in strictly less than $t$ rounds, where $t$ is a lower bound for the deterministic port numbering model, must fail with large probability. This lemma also appeared in previous works \cite{Balliu2019, balliurules, trulytight, binary}, and here we just make it parametric on $f(\Delta)$.
\begin{lemma}\label{lem:randomizedpnlb}
	Let $\Pi_0 \rightarrow \Pi_1 \rightarrow \dots \rightarrow \Pi_t$ be a sequence of problems satisfying the conditions of \Cref{thm:lifting}. Any algorithm for $\Pi_0$ running in strictly less than $t$ rounds must fail with probability at least $(\frac{1}{f(\Delta)^{\Delta^{10t}}})$.
\end{lemma}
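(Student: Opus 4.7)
Assume we have a randomized algorithm $A$ for $\Pi_0$ that runs in $T \le t-1$ rounds and has local failure probability at most $p$. The goal is to derive $p \ge 1/f(\Delta)^{\Delta^{10t}}$. The plan is to push $A$ through the given round elimination sequence all the way down to a $0$-round algorithm for $\Pi_t$, then use the zero-round hardness of $\Pi_t$ to force a lower bound on $p$.

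\textbf{Reduction to a 0-round algorithm for $\Pi_t$.} First, observe that any algorithm running in $T$ rounds is also a valid algorithm running in $t$ rounds (extra rounds can simply be ignored), with the same local failure probability $p$. Now apply \Cref{lem:multiplesteps} to the given sequence $\Pi_0 \to \Pi_1 \to \dots \to \Pi_t$ with $j = t$: this produces a $0$-round randomized algorithm $A'$ for $\Pi_t$ whose local failure probability is at most
\[
q \;:=\; (2\Delta f(\Delta))^{2} \cdot p^{\,1/(\Delta+1)^{2t}}.
\]
Since by assumption $\Pi_t$ is not $0$-round solvable in the deterministic port numbering model, even under the constraints $C^{\mathrm{port}}$, and since $\Pi_t$ has at most $f(\Delta)$ labels, \Cref{lem:zerorounds} applies to $A'$ and yields the unconditional lower bound
\[
q \;\ge\; \frac{1}{\Delta^{3\Delta^{2}} f(\Delta)^{\Delta^{2}}}.
\]

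\textbf{Solving for $p$.} Combining the two inequalities,
\[
p \;\ge\; \Bigl( (2\Delta f(\Delta))^{-2} \cdot \Delta^{-3\Delta^{2}} \cdot f(\Delta)^{-\Delta^{2}} \Bigr)^{(\Delta+1)^{2t}}.
\]
The main (purely arithmetic) task is to bound the expression in parentheses from below by $f(\Delta)^{-\Delta^{O(1)}}$ and the exponent $(\Delta+1)^{2t}$ from above by $\Delta^{O(t)}$. Using $\Delta \le f(\Delta)$ (WLOG, the problem family has at least $\Delta$ labels) we can bound the denominator by $f(\Delta)^{c\Delta^{2}}$ for a small absolute constant $c$, and clearly $(\Delta+1)^{2t}\le \Delta^{c't}$ for another small constant $c'$. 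Putting these together yields $p \ge f(\Delta)^{-c\Delta^{2}\cdot \Delta^{c't}}$, and a direct check that $c\,\Delta^{2 + c't} \le \Delta^{10t}$ for $t\ge 1$ and $\Delta \ge 2$ finishes the proof.

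\textbf{Main obstacle.} The conceptual content is essentially already in \Cref{lem:multiplesteps} and \Cref{lem:zerorounds}; the only subtlety is bookkeeping the constants so that the compounded blow-up $(2\Delta f(\Delta))^{2}$ from each single round elimination step (raised to the power $(\Delta+1)^{2t}$) is absorbed into the clean exponent $\Delta^{10t}$, which forces the slightly loose constants $3\Delta^{2}$ and $10t$ rather than the tighter values that would come out of a direct calculation. No additional ideas are needed beyond careful inequalities.
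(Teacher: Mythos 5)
Your proposal is correct and follows essentially the same route as the paper: combine \Cref{lem:multiplesteps} with \Cref{lem:zerorounds} and solve for $p$, with the only (harmless, in fact slightly cleaner) deviation being that you pad the algorithm to exactly $t$ rounds and descend all the way to $\Pi_t$, whereas the paper descends only $t'<t$ steps and invokes the $0$-round hardness of the intermediate problem $\Pi_{t'}$. One cosmetic caveat: ``WLOG $\Delta\le f(\Delta)$'' is not literally a WLOG (enlarging $f$ weakens the claimed bound), but the same arithmetic goes through assuming only $f(\Delta)\ge 2$ by writing $\Delta^{a}\le f(\Delta)^{a\log\Delta}$ and absorbing the $\log\Delta$ factor, exactly as the paper does.
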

\begin{proof}
	By applying \Cref{lem:multiplesteps} we get that an algorithm solving $\Pi_0$ in $t' < t$ rounds with local failure probability at most $p$ implies an algorithm solving $\Pi_{t'}$ in $0$ rounds with local failure probability at most $(2\Delta f(\Delta))^2 p^{1/(\Delta+1)^{2t'}}$. Then, since  $\Pi_{t'}$ is not $0$-round solvable in the port numbering model using deterministic algorithms by assumption, by applying \Cref{lem:zerorounds} we get the following:
	\[
	(2\Delta f(\Delta))^2 p^{1/(\Delta+1)^{2t'}} \ge \frac{1}{\Delta^{3\Delta^2} f(\Delta)^{\Delta^2}},
	\]
	that implies the following:
	\begin{align*}
	p &\ge \left(\frac{1}{\Delta^{3\Delta^2} f(\Delta)^{\Delta^2} (2\Delta f(\Delta))^2 }\right)^{(\Delta+1)^{2t'}}\\ 
	&\ge \left(\frac{1}{4 \Delta^{3\Delta^2 + 2} f(\Delta)^{\Delta^2+2}}\right)^{(\Delta+1)^{2t'}}\\ 
	&\ge \left(\frac{1}{f(\Delta)^{\Delta^2 + 2 + (3\Delta^2+4)\log\Delta}}\right)^{(\Delta+1)^{2t'}}\\ 
	&\ge \left(\frac{1}{f(\Delta)^{6\Delta^3}}\right)^{(\Delta+1)^{2t'}}\\ 
	&\ge \left(\frac{1}{f(\Delta)^{6\Delta^3}}\right)^{\Delta^{4t'}}\\ 
	&\ge \frac{1}{f(\Delta)^{6\Delta^3 \Delta^{4t'}}}\\ 
	&\ge \frac{1}{f(\Delta)^{\Delta^{10t'}}}\\
	&\ge \frac{1}{f(\Delta)^{\Delta^{10t}}}.
	\end{align*}
\end{proof}

We now handle \Cref{point:withn}, and show that if an algorithm runs ``too fast'' then it must fail with probability strictly larger than $1/n$. This lemma essentially says that, if an algorithm, on infinite $\Delta$-regular trees, runs in $o(\min\{t,\log_\Delta \log n\})$, for some given $n$, where $t$ is a lower bound for the deterministic port numbering model, then it cannot locally succeed with high probability.
\begin{lemma}\label{lem:randomfail}
	Let $\Pi_0 \rightarrow \Pi_1 \rightarrow \dots \rightarrow \Pi_t$ be a sequence of problems satisfying the conditions of \Cref{thm:lifting}. Any randomized algorithm running in strictly less than $\min\{t, \frac{1}{10} (\log_\Delta \log n -  \log_\Delta \log f(\Delta) )\}$ rounds must fail with probability $> 1/n$.
\end{lemma}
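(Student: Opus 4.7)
The plan is to combine the local failure-probability bound from \Cref{lem:randomizedpnlb} with the trivial observation that the global failure probability of an algorithm is at least the local failure probability on any single hyperedge. Concretely, suppose for contradiction that a randomized algorithm $A$ solves $\Pi_0$ in $T$ rounds, where $T < \min\{t, \tfrac{1}{10}(\log_\Delta \log n - \log_\Delta \log f(\Delta))\}$, and succeeds globally with probability at least $1 - 1/n$. In particular, $T < t$, so \Cref{lem:randomizedpnlb} applies and guarantees that, on an infinite $\Delta$-regular tree with uniformly random port numbering (subject to $C^{\mathrm{port}}$), the local failure probability of $A$ is at least $1/f(\Delta)^{\Delta^{10T}}$.

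Next, I would verify the arithmetic inequality that turns the running-time hypothesis into a bound on $n$. The assumption $T < \tfrac{1}{10}(\log_\Delta \log n - \log_\Delta \log f(\Delta))$ is equivalent, by taking $\log_\Delta$ backwards, to $\Delta^{10T} < \log n / \log f(\Delta)$, i.e.\ $\Delta^{10T} \cdot \log f(\Delta) < \log n$, which in turn is equivalent to $f(\Delta)^{\Delta^{10T}} < n$, or $1/f(\Delta)^{\Delta^{10T}} > 1/n$. Hence the local failure probability strictly exceeds $1/n$.

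Finally, I would translate this local bound into a global one. The local failure probability is the probability that the constraint on a fixed hyperedge $h$ is violated by the output of $A$, and this event is a subset of the event that $A$ fails globally. Therefore the global failure probability is at least $1/f(\Delta)^{\Delta^{10T}} > 1/n$, contradicting the assumption. Since the lower bound on local failure produced by \Cref{lem:randomizedpnlb} already holds in the port-numbering model with constraints $C^{\mathrm{port}}$, the conclusion transfers unchanged to the (at least as strong) \LOCAL setting one eventually wants to apply it to.

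The main obstacle is essentially bookkeeping rather than mathematics: one needs to double-check that the local-failure-probability framework from \Cref{lem:randomizedpnlb}, whose analysis is performed on an (infinite) $\Delta$-regular tree with uniformly random port numbering, really does yield a lower bound on the global failure probability applicable to Monte Carlo algorithms on $n$-vertex graphs. This is standard and follows from embedding a single hyperedge of interest inside a sufficiently deep $\Delta$-regular tree neighborhood (so that the $T$-round view used by \Cref{lem:randomizedpnlb} is realized), but it should be made explicit so that the ``probability $> 1/n$'' in the statement is correctly interpreted as ruling out high-probability Monte Carlo algorithms of the assumed running time.
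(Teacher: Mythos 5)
Your proposal is correct and follows essentially the same route as the paper: invoke \Cref{lem:randomizedpnlb} for the local failure bound $1/f(\Delta)^{\Delta^{10T}}$ and then unwind the logarithms to show this exceeds $1/n$ under the running-time hypothesis. The only difference is your extra discussion of passing from local to global failure and to finite $n$-node instances, which the paper defers to the separate \Cref{lem:randlocal} (here the stated failure probability is the local one on infinite trees), so nothing is missing from the core argument.
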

\begin{proof}
	By \Cref{lem:randomizedpnlb}, any algorithm running in strictly less than $t$ rounds must fail with probability $p \ge \frac{1}{f(\Delta)^{\Delta^{10t}}}$. We show that, for all $t < \frac{1}{10} (\log_\Delta \log n -  \log_\Delta \log f(\Delta) )$, it holds that $p > 1/n$.
	
	\begin{align*}
	& t < \frac{1}{10} (\log_\Delta \log n -  \log_\Delta \log f(\Delta) ) & \Rightarrow \\
	& 10t\log\Delta <  \log \log n - \log \log f(\Delta) & \Rightarrow \\
	& 10t\log\Delta + \log \log f(\Delta)  < \log \log n & \Rightarrow \\
	& \Delta^{10t} \log f(\Delta) < \log n & \Rightarrow \\
	& f(\Delta)^{\Delta^{10t}} < n  & \Rightarrow \\
	& \frac{1}{f(\Delta)^{\Delta^{10t}}} > 1/n &
	\end{align*}
\end{proof}

We now handle \Cref{point:randlocal}. We show that if an algorithm fails with large probability in infinite trees, then it also fails with large probability in finite trees. Again, this lemma also appeared in previous works \cite{Balliu2019, balliurules, trulytight, binary}, and here we just make it parametric on $f(\Delta)$.
\begin{lemma}\label{lem:randlocal}
	Let $\Pi_0 \rightarrow \Pi_1 \rightarrow \dots \rightarrow \Pi_t$ be a sequence of problems satisfying the conditions of \Cref{thm:lifting}. Any randomized algorithm running in the \LOCAL model, in $\Delta$-regular balanced trees of $n$ nodes, that fails with probability at most $1/n$, requires $\Omega(\min\{t, \log_\Delta \log n -  \log_\Delta \log f(\Delta)\})$ rounds.
\end{lemma}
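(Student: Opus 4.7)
I will prove the lemma by reducing to \Cref{lem:randomfail}. Suppose for contradiction that there is a randomized \LOCAL algorithm $\mathcal{A}$ for $\Pi_0$ running on $\Delta$-regular balanced trees of $n$ nodes in $T$ rounds with local failure probability at most $1/n$, where $T < \min\{t,\frac{1}{10}(\log_\Delta \log n - \log_\Delta \log f(\Delta))\}$. From $\mathcal{A}$ I will build a randomized port-numbering algorithm $\mathcal{A}'$ for $\Pi_0$ on the infinite $\Delta$-regular tree that runs in $T$ rounds and has local failure probability at most $2/n$. The latter contradicts \Cref{lem:randomfail}, up to a factor of $2$ in the probability, which can be absorbed into the $\Omega(\cdot)$ of the claim by a slight constant adjustment (e.g., applying the whole argument with $n/2$ in place of $n$).

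The construction of $\mathcal{A}'$ is standard. In $\mathcal{A}'$ the port numbering is drawn uniformly from the set of assignments satisfying $C^{\mathrm{port}}$, as required by \Cref{lem:randomfail}. Every node then uses its private random bits to sample a uniform identifier from $\{1,\dots,N\}$ with $N := n^c$ for a sufficiently large constant $c$, and simulates $\mathcal{A}$ for $T$ rounds using these identifiers and its port numbering, returning the same output. To compare the behaviours of $\mathcal{A}$ and $\mathcal{A}'$, I fix an arbitrary node $v$ of the infinite tree and let $v^\star$ be an interior node of the finite balanced tree at graph-distance at least $T+1$ from every leaf; such a node exists because the depth of the balanced tree is $\Theta(\log_\Delta n) \gg T$. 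The $T$-hop view of $v$ in the infinite tree is isomorphic to the $T$-hop view of $v^\star$ in the finite tree, so both nodes see exactly the same graph structure.

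I now bound the failure probability of $\mathcal{A}'$ at $v$. First, the $T$-hop view contains at most $\Delta^{T+1}$ nodes, so by a union bound the probability that two of their sampled identifiers coincide is at most $\Delta^{2T+2}/N$, and for $c$ large enough this is at most $1/n$ (using that $T<\tfrac{1}{10}\log_\Delta\log n$ makes $\Delta^{2T+2}$ sub-polynomial in $n$). Second, conditioned on the identifiers in the view being pairwise distinct, the joint distribution of $(\text{identifiers}, \text{port numbering})$ presented to $v$ under $\mathcal{A}'$ is identical to the corresponding distribution inside the $T$-hop view of $v^\star$ when $\mathcal{A}$ is run on the finite tree equipped with uniformly random injective identifiers from $\{1,\dots,N\}$ and a uniformly random $C^{\mathrm{port}}$-valid port numbering. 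Since $\mathcal{A}$ must succeed at $v^\star$ with probability at least $1-1/n$ on \emph{every} valid input (adversarial identifiers and adversarial valid port numbering), the same bound survives after averaging over these two random inputs. Combining the two estimates, $\mathcal{A}'$ fails at $v$ with probability at most $1/n + 1/n = 2/n$, and applying \Cref{lem:randomfail} (with the constant adjustment mentioned above) yields the desired contradiction.

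The only genuinely delicate step is the coupling between the inputs to $\mathcal{A}$ at $v^\star$ and the inputs to $\mathcal{A}'$ at $v$: one must verify that conditioning on the no-collision event in the infinite-tree simulation is equivalent to conditioning on the identifiers inside $v^\star$'s $T$-hop view being injective, and that the uniform distribution on $C^{\mathrm{port}}$-valid port numberings in $\mathcal{A}'$ restricts correctly to a uniform distribution on valid local port numberings around $v^\star$. Two minor points deserve attention: ensuring $c$ is chosen so that the collision term is strictly less than the $1/n$ guaranteed by $\mathcal{A}$, and noting that the (non-full-degree) leaves of the finite balanced tree carry no $\Pi_0$ constraint and are irrelevant here because, by our choice of $v^\star$, the $T$-hop view of $v^\star$ does not reach any leaf.
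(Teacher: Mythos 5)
Your proof is correct and follows essentially the same route as the paper: simulate the \LOCAL algorithm in the randomized port-numbering model on the infinite tree with randomly generated identifiers, and use the fact that the running time is far below the depth $\Theta(\log_\Delta n)$ of the balanced tree so that $T$-hop views in the infinite tree are indistinguishable from those of interior nodes of the finite instance, contradicting \Cref{lem:randomfail}. Your explicit handling of the identifier-collision term and the resulting $2/n$ slack is just a more careful version of the paper's ``essentially the same failure probability'' step, and your coupling of a fixed node $v$ with an interior node $v^\star$ is the same indistinguishability argument the paper phrases via cutting out the neighborhood of a failing node.
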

\begin{proof}
	We prove the result by contradiction. Assume that there is an algorithm for the \LOCAL model that contradicts our claim. Note that this algorithm can be simulated in the randomized port numbering model with essentially the same failure probability, since an algorithm in the randomized port numbering model can first generate unique IDs with high probability of success, and then simulate the \LOCAL algorithm. This algorithm is for finite $\Delta$-regular balanced trees of $n$ nodes. We show that we can use such an algorithm in infinite  $\Delta$-regular trees and obtain the same local failure probability, contradicting \Cref{lem:randomfail}.
	
	The claim follows by a standard argument: we can run this algorithm on infinite trees, and if it fails with too large local failure probability, then we can consider the finite tree of $n$ nodes obtained by considering the subgraph induced by the $O(\log_\Delta n)$ neighborhood of the failing node. By running the algorithm on the obtained tree of $n$ nodes, the view of the failing node itself, and its neighbors, is the same as in the original instance, since the running time is $o(\log_\Delta n)$. Hence, by a standard indistinguishability argument, the algorithm fails with probability strictly larger than $1/n$, contradicting the correctness of the given algorithm.
\end{proof}

We can finally handle \Cref{point:detlocal}, and show that a strong lower bound for the randomized \LOCAL model implies an even stronger lower bound for the deterministic \LOCAL model. The proof uses a standard ``lie about $n$'' argument, and again, it has been used in the context of round elimination in~\cite{Balliu2019, balliurules, trulytight, binary}, and we report the same proof here modified to handle an arbitrary number of labels~$f(\Delta)$.
\begin{lemma}\label{lem:detlocal}
	Let $\Pi_0 \rightarrow \Pi_1 \rightarrow \dots \rightarrow \Pi_t$ be a sequence of problems satisfying the conditions of \Cref{thm:lifting}. Any deterministic algorithm running in the \LOCAL model, in $\Delta$-regular balanced trees of $n$ nodes, requires $\Omega(\min\{t, \log_\Delta n -  \log_\Delta \log f(\Delta)\})$ rounds.
\end{lemma}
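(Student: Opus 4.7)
The plan is the standard ``lie about $n$'' reduction: convert a putatively fast deterministic \LOCAL algorithm into a randomized \LOCAL algorithm on an exponentially larger tree and then invoke \Cref{lem:randlocal}. Suppose for contradiction there is a deterministic algorithm $\fA$ solving $\Pi_0$ on every $\Delta$-regular balanced tree of $n$ nodes in $T < \min\bigl\{t,\, \eps(\log_\Delta n - \log_\Delta \log f(\Delta))\bigr\}$ rounds, for a suitably small constant $\eps > 0$ to be fixed later.

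Next, I would construct a randomized algorithm $\fA'$ on $\Delta$-regular balanced trees of $N := 2^{c_0 n}$ nodes, for a suitable constant $c_0 > 0$, as follows: every node independently samples a uniformly random identifier from $\{1,\dotsc,N^c\}$ (for a sufficiently large constant $c$), and then simulates $\fA$ for $T$ rounds, passing $\fA$ the value $n$ as the claimed graph size. Since $\fA$ is $T$-round, each node's output depends only on its labelled $T$-view, which contains at most $\Delta^T$ nodes. The key structural point is that any such $T$-view in a $\Delta$-regular balanced tree is isomorphic, as an unlabelled rooted graph (possibly with a truncated boundary), to a $T$-view that appears at some node of an $n$-node $\Delta$-regular balanced tree, provided $\Delta^T \le n$ --- which holds by the assumption $T = o(\log_\Delta n)$. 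Hence, whenever the random IDs in a node's $T$-view are pairwise distinct, the view is indistinguishable from a valid view of an $n$-node instance, and by the correctness of $\fA$ the node produces a label that locally satisfies the constraints of $\Pi_0$. Since $\Pi_0$ is locally checkable, this suffices for correctness at that node.

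The local failure probability of $\fA'$ at each node is then at most $\binom{\Delta^T}{2}/N^c \le \Delta^{2T}/N^c$. Using $N = 2^{c_0 n}$ together with $T < \eps(\log_\Delta n - \log_\Delta \log f(\Delta))$, choosing $\eps$ small and $c,c_0$ large enough makes this at most $1/N$ by a direct calculation. Thus $\fA'$ is a randomized \LOCAL algorithm on $N$-node $\Delta$-regular trees with local failure probability $\le 1/N$ and round complexity $T < t$. Applying \Cref{lem:randlocal} yields
\[
  T \;=\; \Omega\bigl(\log_\Delta \log N - \log_\Delta \log f(\Delta)\bigr) \;=\; \Omega\bigl(\log_\Delta n - \log_\Delta \log f(\Delta)\bigr),
\]
contradicting our hypothesis on $T$. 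The extension to locally constrained port numberings $C^{\mathrm{port}}$ is handled identically by additionally sampling, at each node, a uniformly random admissible port numbering. The main technical point that will require care is the bookkeeping for view isomorphisms --- ensuring that the correctness guarantee of $\fA$ on every admissible $n$-node balanced tree really does transfer to local correctness at every node of $G$, including boundary nodes --- after which the probability estimate and the invocation of \Cref{lem:randlocal} are routine.
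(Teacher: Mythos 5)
Your reduction has a genuine gap in the way the identifiers are produced. The algorithm $\fA$ you start from is a deterministic \LOCAL algorithm for $n$-node trees, so its correctness (and, strictly speaking, even its behaviour) is only guaranteed when nodes carry unique identifiers from $\{1,\dots,\operatorname{poly}(n)\}$. In your simulation on the $N$-node tree with $N=2^{c_0 n}$ you sample identifiers uniformly from $\{1,\dots,N^{c}\}$; with overwhelming probability these lie far outside $\{1,\dots,\operatorname{poly}(n)\}$, so a node's view is \emph{not} indistinguishable from a view occurring in any legal $n$-node instance even when all identifiers in it are distinct, and nothing can be concluded about the labels $\fA$ outputs. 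The obvious repair---sampling identifiers from the range $\{1,\dots,\operatorname{poly}(n)\}$ that $\fA$ actually accepts---destroys the probability bound: two fixed nodes in a view then collide with probability $1/\operatorname{poly}(n)$, so the local failure probability of $\fA'$ is inverse-polynomial in $n$, whereas \Cref{lem:randlocal} (via \Cref{lem:randomfail}) requires failure probability at most $1/N=2^{-c_0 n}$ on $N$-node trees. These two requirements---identifiers realizable in a valid $n$-node instance, and failure probability exponentially small in $n$---are irreconcilable for any random-identifier scheme, so the gap cannot be closed by tuning $c$, $c_0$, or $\eps$. (A smaller choice such as $N=\operatorname{poly}(n)$ makes the failure bound achievable but only yields a $\log_\Delta\log n$-type bound, not the claimed $\log_\Delta n$.)

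The paper's proof avoids randomness at this point entirely: it keeps the given $n$-node graph, fixes the fake size $N=\log n$, and in $O(T)$ extra rounds computes identifiers that are unique within each $(T+1)$-radius neighbourhood and bounded by $N$, by running three iterations of Linial's colouring algorithm on the power graph $G^{2T+2}$. The ``lie about $n$'' theorem of \citet{chang16exponential} then guarantees that $\fA$, fed the size $N$ and these small, locally unique identifiers, remains correct, which yields a \emph{deterministic} algorithm with complexity $o(\min\{t,\log_\Delta\log n-\log_\Delta\log f(\Delta)\})$; this contradicts \Cref{lem:randlocal}, since a deterministic algorithm is a randomized one with failure probability $0\le 1/n$. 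If you prefer your framing of blowing the instance up to $N=2^{c_0 n}$ nodes and claiming size $n$, you would still have to generate the small locally unique identifiers deterministically via such a colouring step rather than randomly---and at that point your argument coincides with the paper's, up to renaming of parameters.
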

\begin{proof}
	The proof follows by a standard ``lie about $n$'' argument. If we run an algorithm $A$ and lie about the size of the graph, if the algorithm cannot notice the lie, then it must work correctly, with a running time that now depends on the value of $n$ provided to the algorithm.
	In particular, in \cite[Theorem 6]{chang16exponential}, it is shown that, given an algorithm that runs in $T(n)$ rounds, if we lie about the size of the graph and say that it has size $N$, then the obtained algorithm is correct and runs in $T = T(N)$ rounds if the following conditions are satisfied:
	\begin{itemize}
		\item The IDs in each $(T+1)$-radius neighborhood are unique and from $\{1,2,\dotsc, N\}$;
		\item Each $(T+1)$-radius neighborhood contains at most $N$ nodes.
	\end{itemize}
	
	Assume for a contradiction that we have a deterministic algorithm $A$ that contradicts the claim. In particular, this means that it terminates in $o(\log_\Delta n)$ rounds. Let us fix $N = \log n$, and let us construct a new algorithm $A'$ that runs $A$ by lying about the size of the graph, and claims that it has size $N$. In order to satisfy the first requirement, we need to create a new ID assignment from a smaller space. For this purpose, we can compute an $N$-coloring of $G^{2T+2}$, the $(2T+2)$th power of $G$, using three iterations of Linial's coloring algorithm~\cite[Corollary 4.1]{Linial1992}, and use the obtained coloring as a new ID assignment. This algorithm can be used to color a $k$-colored graph of maximum degree $\bar \Delta$ with e.g.
	\[
	O\bigl(\bar \Delta^2 \bigl(\log \log \log k + \log \bar \Delta\bigr)\bigr)
	\]
	colors in three rounds. The power graph $G^{2T+2}$ has maximum degree $\bar \Delta \leq \Delta^{2T+2}$ and thus we get a coloring of $G^{2T+2}$ with
	\[
	O\bigl(\Delta^{2(2T+2)} \bigl(\log \log N + \log \Delta^{2T+2}\bigr)\bigr)
	\]
	colors. Since $T(N) \in o(\log_\Delta N)$, then each $(T+1)$-radius neighborhood contains $O(\Delta^{T+1}) < N$ nodes (and hence the second requirement is satisfied) and that the number of colors is less than $N$ (and hence the first requirement is satisfied). The coloring can be computed in time $O(T)$ in the original graph $G$. Hence, if $A$ is correct, then $A'$ is also correct and terminates in $o(\min\{t, \log_\Delta N -  \log_\Delta \log f(\Delta)\}) = o(\min\{t, \log_\Delta \log n -  \log_\Delta \log f(\Delta)\})$ rounds, but this contradicts the claim of \Cref{lem:randlocal}.
\end{proof}
By combining \Cref{lem:randlocal} and \Cref{lem:detlocal}, we obtain \Cref{thm:lifting}.

\end{document}